\newcommand{\C}{\mathbb{C}}
\newcommand{\DMIN}{{\delta_\text{\rm min}}}
\newcommand{\DMAX}{{\delta_\text{\rm max}}}
\newcommand{\E}{{e_\text{\rm max}}}
\newcommand{\F}{\mathbb{F}}
\newcommand{\FL}{{\F_{L}}}
\newcommand{\FLK}{{\F_{L,K}}}
\newcommand{\G}{\mathbb{G}}
\newcommand{\GG}{{\cal G}}
\newcommand{\GLKE}{{\G_{L,K,\E}}}
\newcommand{\GO}{{\cal O}}
\newcommand{\N}{\mathbb{N}}
\newcommand{\Q}{\mathbb{Q}}
\newcommand{\R}{\mathbb{R}}
\newcommand{\Z}{\mathbb{Z}}
\newcommand{\LG}{{L_\text{\rm grid}}}
\newcommand{\LS}{{L_\text{\rm safe}}}
\newcommand{\LOG}{\log_2}
\newcommand{\NE}{{N_\text{\rm E}}}
\newcommand{\NP}{{N_\text{\rm P}}}
\newcommand{\NT}{{N_\text{\rm T}}}
\newcommand{\pgrid}{{p_\text{\rm grid}}}
\newcommand{\pinf}{{p_\text{\rm inf}}}
\newcommand{\psup}{{p_\text{\rm sup}}}
\newcommand{\varphiinf}{\varphi_{\inf f}}
\newcommand{\varphisup}{\varphi_{\sup f}}
\newcommand{\AUG}{{\text{\rm aug}}}
\newcommand{\RAUG}{R_\AUG}%
\newcommand{\RF}{\!|_\F} %
\newcommand{\RFL}{\!|_{\F_L}}
\newcommand{\RFLK}{\!|_\FLK}
\newcommand{\RG}{\!|_\G} %
\newcommand{\RGL}{\!|_{\G_L}}
\newcommand{\RGLK}{\!|_{\G_{L,K}}}
\newcommand{\RGLKE}{\!|_{\G_{L,K,\E}}}
\newcommand{\RQ}{\!|_\Q} %
\newcommand{\Rplus}{\R_{>0}}
\newcommand{\Sinf}{S_{\inf f}}
\newcommand{\Ssup}{S_{\sup f}}
\newcommand{\IND}{\text{\rm ind}}
\newcommand{\SUP}{\text{\rm sup}}
\newcommand{\RE}{\text{\rm Re}}
\newcommand{\Lex}{{\prec}}%
\newcommand{\Perm}{{\cal P}}%
\newcommand{\Ind}{{\cal I}}%
\newcommand{\IM}{{\Ind_{\rm max}}}%
\newcommand{\CM}{{c_{\text{\rm m}}}}
\newcommand{\CU}{{c_{\text{\rm u}}}}
\newcommand{\DOM}{{\rm dom}}
\newcommand{\FSI}{{f^{*i}}}%
\newcommand{\FSIX}{{f^{*i}_\xi}}%
\newcommand{\GAB}{{\Gamma\text{\rm-box}}}
\newcommand{\GABG}{{\Gamma\text{\rm-box\,}_{\hat\gamma}}}
\newcommand{\GAL}{{\Gamma\text{\rm-line}}}
\newcommand{\GALG}{{\Gamma\text{\rm-line\,}_{\hat\gamma}}}
\newcommand{\GARG}{{\Gamma\text{\rm-region\,}_{\gamma}}}
\newcommand{\GASG}{{\Gamma\text{\rm-safe\,}_{\gamma}}}
\newcommand{\PIG}{{\pi_{>i}}}
\newcommand{\PII}{{\pi_{i}}}
\newcommand{\PIL}{{\pi_{<i}}}
\newcommand{\PIN}{{\pi_{\neq i}}}
\newcommand{\REP}{{\text{\rm rep}}}
\newcommand{\X}{{X}}
\newcommand{\Alg}{{\cal A}}
\newcommand{\AG}{{\cal A}_\text{\rm G}}
\newcommand{\ACP}{{\cal A}_\text{\rm CP}}
\newcommand{\BACP}{\text{\rm basic-}{\cal A}_\text{\rm CP}}
\newcommand{\inbox}{\text{\rm in\_box}}
\newcommand{\incirc}{\text{\rm in\_circle}}
\newcommand{\fmin}{f_\text{\rm min}}
\newcommand{\fmax}{f_\text{\rm max}}
\newcommand{\sign}{\text{\rm sign}}
\newcommand{\PR}{\text{\rm pr}}
\newcommand{\U}{\bar{U}}
\newcommand{\UU}{{\cal U}}
\newcommand{\UUU}{{\overline{\cal U}}}
\newcommand{\PRED}{{(f, k, A, \delta, \E)}}
\newcommand{\PREDG}{{(f, k, A, \delta, \E, \Gamma)}}
\newcommand{\PREDGt}{{(f, k, A, \delta, \E, \Gamma, t)}}
\newcommand{\PREDB}{{(f, k, A, \delta, \E, \GAB, t)}}
\newcommand{\PREDL}{{(f, k, A, \delta, \E, \GAL, t)}}
\newcommand{\x}{\bar{x}}
\newcommand{\y}{\bar{y}}
\newcommand{\ST}{\,:\,}%
\newcommand{\CARDI}[1]{{\left|\,#1\,\right|}}
\newcommand{\FORMSEP}{\quad}
\newcommand{\tabrule}{\rule[-2.0ex]{0ex}{5ex}}%
\newcommand{\keywords}[1]{\par\addvspace\baselineskip%
  \noindent\keywordname\enspace\ignorespaces#1}
\begin{document}

\mainmatter
\title{General Analysis Tool Box for\\Controlled Perturbation}
\titlerunning{General Analysis Tool Box for Controlled Perturbation}
\author{Ralf Osbild}
\institute{Saarbr{\"u}cken, March 29, 2012}
\maketitle

\begin{abstract}
  The implementation of reliable and efficient geometric algorithms
  is a challenging task.
  The reason is the following conflict:
  On the one hand,
  computing with rounded arithmetic
  may question the reliability of programs
  while, on the other hand,
  computing with exact arithmetic may be too expensive and hence inefficient.
  One solution is the implementation of
  {controlled perturbation algorithms} which
  combine the speed of floating-point arithmetic with
  a protection mechanism that guarantees reliability,
  nonetheless.

  This paper is concerned with
  the performance analysis of controlled perturbation algorithms
  in theory.
  We answer this question with the presentation of a
    \emph{general analysis tool box} for controlled perturbation algorithms.
  This tool box is separated into independent components
  which are presented individually with their interfaces.
  This way, the tool box supports alternative approaches
  for the derivation of the most crucial bounds.
  We present three approaches for this task.
  Furthermore,
  we have thoroughly reworked the concept of controlled perturbation
  in order to include rational function based predicates into the theory;
  polynomial based predicates are included anyway.
  Even more we introduce object-preserving perturbations.
  Moreover,
  the tool box is designed such that it reflects the actual behavior
  of the controlled perturbation algorithm at hand
  without any simplifying assumptions.
\keywords{
controlled perturbation,
reliable geometric computing,\\
floating-point computation,
numerical robustness problems.
}
\end{abstract}

\section{Introduction}

\subsection{Robust Geometric Computing}

  It is a notoriously difficult task to cope with
  rounding errors in computing~\cite{F70,DH02}.
  In computational geometry,
  predicates are decided on the sign of mathematical expressions.
  If rounding errors cause a wrong decision of the predicate,
  geometric algorithms may fail in various ways:
  inconsistency of the data (e.g., contradictory topology),
  loops that do not terminate or
  loops that terminate unexpectedly~\cite{KMPSY08}.
  In addition,
  the thoughtful processing of degenerate cases
  makes the implementation of geometric algorithms laborious~\cite{BKOS00}.
  The meaning of degeneracy always depends on the context
  (e.g., three points on a line, four points on a circle).
  There are several ways to overcome the numerical robustness issues
  and to deal with degenerate inputs.

  The \emph{exact computation paradigm}~\cite{JRZ91,KLN91,MN94,FW96,Y97,LEDA99}
  suggests an implementation of an exact arithmetic.
  This is established by a number representation of variable precision
  (i.e., variable bit length)
  or the use of symbolic values which are not evaluated
  (e.g., roots of integers).
  There are several implementations of such
  number types~\cite{CGAL,CORE02,GMPFR11,MPFI,LEDA99}.
  Each program must be developed carefully such that
  it can deal with all possible degenerate cases.
  The software libraries {\sc Leda} and {\sc Cgal}
  follow the exact computation paradigm~\cite{LEDA99,KN04,FGK00}.
  The paradigm was also taken as a basis in~\cite{BEH05,Sh97,HK05}.

  As opposed to that,
  the \emph{topology oriented approach}~\cite{SI92,Im96,SII00}
  is based on an arithmetic of finite precision.
  To avoid numerical robustness issues,
  the main guideline is the maintenance of the topology.
  This objective requires individual alterations of the algorithm at hand
  and it seems that it cannot be turned into an easy-to-use general framework.
  Furthermore,
  this approach must also cope with degenerate inputs.
  However,
  the speed of floating-point arithmetic may be worth the trouble;
  in addition with other accelerations,
  Held~\cite{He01} has implemented a very fast computation
  of the Voronoi diagram of line segments.

  There are also \emph{problem-oriented} solutions.
  In computational geometry,
  the sign of determinants decides an interesting class of predicates.
  For example,
  the side-of-line or the in-circle predicate in the plane
  belong to this class
  and are used in the computation of Delaunay diagrams.
  Some publications attack the numerical issues in
  the evaluation of determinants directly~\cite{ABD97,BM00}.

  The previous approaches have in common
  that they primarily focus on the numerical issues.
  Other approaches are originated from the degeneracy issue.
  A slight perturbation of the input seems to solve this problem.
  There are different approaches which are based on perturbation.
  The \emph{symbolic perturbation},
  see for example \cite{EM90,Y90a,Y90b,EC95,ECS97,S98,M95},
  provides a general way to distort inputs such that
  degeneracies do not occur.
  This definitely provides a shorter route
  for the presentation of geometric algorithms.
  Practically this approach requires exact arithmetic
  to avoid robustness issues.
  Therefore the pitfall in this approach is that,
  if the concept requires very small perturbations,
  it implicates a high precision and possibly a slow implementation.
  
  In this paper we focus on \emph{controlled perturbation}.
  This variant was introduced
  by Halperin et al.~\cite{HS98}
  for the computation of spherical arrangements.
  There a perturbed input is a random point
  in the neighborhood of the initial input.
  It is unlikely, but not impossible, that the input is degenerate.
  Therefore the algorithm has a repeating perturbation process
  with two objectives:
  Finding an input
  that does not contain degeneracies and
  that leads to numerically robust floating-point evaluations.
  Halperin et al.~have presented
  mechanisms to respond to inappropriate perturbations.
  Moreover,
  they have argued formally
  under which conditions there is a chance for a successful termination
  of their algorithm.
  \emph{Controlled perturbation leads to numerically robust implementations
  of algorithms
  which use non-exact arithmetic and
  which do not need to process degenerate cases.}

  This idea of controlled perturbation
  was applied to further geometric problems afterwards:
  The arrangement of polyhedral surfaces~\cite{HR99},
  the arrangement of circles~~\cite{HL04},
  Voronoi diagrams and Delaunay triangulations~\cite{K04,FKMS05}.
  However, the presentation of each specific algorithm
  has required a specific analysis of its performance.
  This broaches the subject of
  a \emph{general method} to analyze controlled perturbation algorithms.

  We remark that
  controlled perturbation has also a shady side:
  Although it solves the problem for the perturbed input exactly,
  it does not solve it for the initial input.
  Furthermore,
  it is non-obvious how to receive a solution for the initial input in general.
  In case the input is highly degenerated,
  the running time of the algorithm may increase significantly
  after the permutation~\cite{BMS94,ABS97}.
  In this case,
  the specialized treatment of degeneracies may be much faster.

\subsection{Our contribution}

  The study of a {general method}
  to analyze controlled perturbation algorithms
  is a joint work with Kurt Mehlhorn and Michael Sagraloff.
  We have firstly presented the idea
  in \cite{MOS06}.
  Then Caroli~\cite{C07} studied the applicability of the method
  for predicates which are used for
  the computation of arrangements of circles (according to~\cite{HL04})
  and the computation of Voronoi diagrams of line segments
  (according to~\cite{Bu96,Se96}).
  Our significantly improved journal article
  contains, furthermore, a detailed discussion of
  the analysis of multivariate polynomials~\cite{MOS11}.

  Independent of former publications,
  the author has redeveloped the topic from scratch
  to design a sophisticated tool box for the analysis
  of controlled perturbation algorithms.
  The tool box is valid for floating-point arithmetic,
  guides step by step through the analysis and
  allows alternative components.
  Furthermore,
  the solutions of two open problems
  are integrated into the theory.
  We briefly present our achievements below.

  \emph{We present a general tool box to
  analyze algorithms and their predicates.}
  The tool box is subdivided into independent components and their interfaces.
  Step-by-step instructions for the analysis
  are associated with each component.
  Interfaces represent bounds that are used in the analysis.
  The result is a precision function or a probability function.
  Furthermore,
  necessary conditions for the analysis
  are derived from the interfaces
  (e.g., the notion of \emph{criticality} differs from former publications).

  \emph{We present alternative approaches to derive necessary bounds.}
  Because we have subdivided the tool box into
  independent components and their interfaces,
  it is possible to make alternative components available
  in the most crucial step of the analysis.
  The \emph{direct approach} is based on the geometric meaning of predicates,
  the \emph{bottom-up approach} is based on the composition of functions,
  and the \emph{top-down approach} is a coordinate-wise analysis of functions.
  Similar direct and top-down approaches are presented in~\cite{MOS06,MOS11}.
  This is the first time that a bottom-up approach is presented
  for this task.

  \emph{The result of the analysis is valid
  for floating-point arithmetic.}
  A random floating-point number generator
  that guarantees a uniform distribution
  was introduced in~\cite{MOS11}.
  But, so far, the result of the analysis was never proven to be valid
  for the finite set of floating-point numbers
  since the Lebesgue measure cannot take sets of measure zero
  into account.
  To overcome this issue,
  we define a specialized perturbation generator and
  pay attention to the finiteness in the analysis, namely,
  in the success probability,
  in the (non-)exclusion of points and
  in the usage of the Lebesgue measure.

  \emph{We present an alternative analysis of multivariate polynomials.}
  An analysis of multivariate polynomials,
  which resembles the top-down approach,
  is presented in~\cite{MOS11}.
  Here we present an alternative analysis
  which makes use of the bottom-up approach.

  \emph{We solve the open problem of analyzing rational functions.}
  We include poles of rational functions into the theory and
  describe the treatment of floating-point range errors in the analysis.
  We suggest a general way to guard rational functions in practice and
  we show how to analyze the behavior of these guards in theory.

  \emph{We solve the open problem of object-preserving perturbations.}
  We introduce a perturbation generator that makes it possible
  to perturb the location of input objects without deforming the objects itself.
  To achieve this goal,
  we have designed the perturbation
  such that
  the relative floating-point input specifications of the objects
  are preserved despite of the usage of rounded arithmetic.

  \emph{We suggest an implementation
  that is in accordance with the analysis tool box.}
  We define a fixed-precision perturbation generator and
  extend it to be object-preserving.
  We explain the particularities
  in the practical treatment of range errors
  that occur especially in the case of rational functions.
  Finally, we show how to realize guards for rational functions.

\subsection{Content}

  In this paper we present a tool box
  for a general analysis of controlled perturbation algorithms.
  In Section~\ref{sec-cp-algo},
  we present the basic design principles of controlled perturbation
  from a practical point of view.
  Fundamental quantities and definitions of the analysis
  are introduced in Section~\ref{sec-fund-quant-def}.
  The \emph{general analysis tool box}
  and all of its components
  are briefly introduced
  in Section~\ref{sec-ana-tool-box}.
  Its detailed presentation is structured in two parts:
  The \emph{function analysis} and the \emph{algorithm analysis.}

  Geometric algorithms base their decisions on geometric predicates
  which are decided by signs of real-valued functions.
  Therefore the analysis of algorithms
  requires a general analysis of such functions.
  The \emph{function analysis}
  is visualized
  in Figure~\ref{fig-illu-ana-func} on Page~\pageref{fig-illu-ana-func}.
  Since the analysis is performed with real arithmetic,
  we must also prove its validation for actual floating-point inputs.
  This validation is anchored in Section~\ref{sec-validation}.
  The function analysis itself works in two stages.
  The required bounds form the interface between the stages
  and are presented
  in Section~\ref{sec-nec-con-func}.
  The \emph{method of quantified relations}
  represents the actual analysis in the second stage and is introduced
  in Section~\ref{sec-meth-quan-rela}.
  The derivation of the bounds in the first stage uses
  the \emph{direct approach} of Section~\ref{sec-direct-approach},
  the \emph{bottom-up approach}
  of Section~\ref{sec-bottom-up}, or
  the \emph{top-down approach}
  of Section~\ref{sec-top-down},
  together with an \emph{error analysis}
  which is introduced in Section~\ref{sec-guards-safetybounds}.
  In Section~\ref{sec-rational-function}
  we extend the analysis and the implementation
  such that both properly deal with floating-point range errors.
  As examples,
  we present the analysis of
  \emph{multivariate polynomials} in Section~\ref{sec-bottom-up}
  and the analysis of
  \emph{rational functions} in Section~\ref{sec-ana-rational-func}.

  The \emph{algorithm analysis}
  is visualized
  in Figure~\ref{fig-illu-ana-algo} on Page~\pageref{fig-illu-ana-algo}.
  The algorithm analysis works also in two stages.
  In the first stage, we perform the function analyses and
  derive some algorithm specific bounds.
  The analysis itself in the second stage is represented by the
  \emph{method of distributed probability}.
  The algorithm analysis is entirely presented in Section~\ref{sec-ana-algo}.

  Furthermore, we present a general way
  to \emph{implement} controlled perturbation algorithms
  in Section~\ref{sec-gen-cp-imple}
  such that
  our analysis tool box can be applied to them.
  Even more,
  we suggest a way to implement \emph{object-preserving perturbations}
  in Section~\ref{sec-pertub-policy}.

  A \emph{quick reference}
  to the most important definitions of this paper
  can be found in the appendix in Section~\ref{sec-append-identifiers}.

\section{Controlled Perturbation Algorithms}
\label{sec-cp-algo}

  This section contains an introduction
  to the basic principles for controlled perturbation algorithms.
  We have already mentioned
  that implementations of geometric algorithms
  must address degeneracy issues and numerical robustness issues.
  We review floating-point arithmetic
  in Section~\ref{sec-intro-floating-point}
  and present the basic design principles
  of controlled perturbation algorithms
  in Section~\ref{sec-intro-cp-algo}.

\subsection{Floating-point Arithmetic}
\label{sec-intro-floating-point}

  Variable precision arithmetic is necessary
  for a general implementation of controlled perturbation algorithms.
  We explain this statement with the following thought experiment\footnote{%
    This consideration is absolutely conform to
    Halperin et~al.~\cite{HL04}:
    If the augmented perturbation parameter $\delta$
    exceeds a given threshold $\Delta$,
    the precision is augmented and $\delta$ is reset.}
  that can be skipped during first reading:
  Assume we compute an arrangement of $n$ circles incrementally
  with a fixed precision arithmetic.
  Let us further assume that there is an upper bound
  on the radius of the circles.
  Then, because of the fixed precision,
  the number of distinguishable intersections per circle must be limited.
  Hence the computation of a dense arrangement
  gets stuck after a certain amount of insertions
  unless we allow circles to be moved (perturbed) further away
  from their initial location.
  Asymptotically,
  this policy transforms a very dense arrangement
  into an arrangement of almost uniformly distributed circles.
  Therefore we demand that the precision of the arithmetic can be chosen
  arbitrarily large.
  
  A \emph{floating-point number} is given by
  a sign, a mantissa, a radix and a signed exponent.
  In the regular case,
  its value is defined as
  \begin{eqnarray*}
    \text{value} 
      &:=&
    \text{sign} \cdot \text{mantissa} \cdot \text{radix}^\text{exponent}.
  \end{eqnarray*}
  Without loss of generality,
  we assume the radix to be 2.
  The bit length
  of the mantissa is called \emph{precision}
  $L\label{def-L-inline}$.
  We denote the \emph{bit length of the exponent} by
  $K\label{def-K-inline}$.
  The discrete set of regular floating-point numbers is a subset of
  the rational numbers.
  Furthermore,
  this set is finite
  for fixed $L$ and $K$.

  A \emph{floating-point arithmetic}
  defines the number representation (the radix, $L$ and $K$),
  the operations,
  the rounding policy
  and the exception handling for floating-point numbers
  (see Goldberg~\cite{G91}).
  A technical standard for
  fixed precision floating-point arithmetic
  is IEEE 754-2008 (see \cite{IEEE08}).
  Nowadays,
  the built-in types single, double and quadruple precision
  are usual for radix 2.

  There are several software libraries that offer
  \emph{variable\footnote{%
    With variable we subsume all types of arithmetic
    that support arbitrarily large precisions.
    Some are called variable precision, multiple precision
    or arbitrary precision.}
  precision floating-point arithmetic}.
  {\sc Cgal} provides the multi-precision floating-point number type
  {\tt MP\_Float} (see the {\sc Cgal} manual~\cite{CGAL}).
  {\sc Core} provides
  the variable precision floating-point number type
  {\tt CORE::BigFloat} (see~\cite{CORE02}).
  And
  {\sc Leda} provides
  the variable precision floating-point number type
  {\tt leda\_bigfloat} (see the {\sc Leda} book~\cite{LEDA99}).
  Be aware that
  the rounding policy and exception handling of certain libraries
  may differ from the IEEE standard.
  Since our analysis partially presumes\footnote{%
    A standardized behavior of floating-point operations
    is presumed in Section~\ref{sec-guards-safetybounds}.}
  this standard,
  we must ensure that the arithmetic in use is appropriate.
  The {\sc Gnu} Multiple Precision Floating-Point Reliable Library,
  for example,
  ``provides the four rounding modes from the IEEE 754-1985 standard,
  plus away-from-zero, as well as for basic operations as for other
  mathematical functions'' (see the {\sc Gnu Mpfr} manual~\cite{GMPFR11}).
  Moreover,
  {\sc Gnu Mpfr} is used for
  the multiple precision interval arithmetic
  which is provided by the
  Multiple Precision Floating-point Interval library
  (see the {\sc Gnu Mpfi} manual~\cite{MPFI}).
  
  Variable precision arithmetic
  is more expensive than built-in fixed precision arithmetic.
  We remark that, in practice, we try to solve the problem at hand
  with built-in arithmetic first
  and, in addition, try to make use of floating-point filters.
  Throughout the paper we use the following notations.

\begin{definition}[floating-point]\label{def-fp-numbers}
  Let $L,K\in\N$.
  By $\FLK$ we denote: \smallskip \\
\indent
    1. The set of floating-point numbers
    with radix 2, precision $L$
    and $K$-bit exponent. \\
\indent
    2. The floating-point arithmetic
    that is induced by the set characterized in 1. \smallskip \\
Furthermore, we define the suffix $\RF$ for sets and expressions: \smallskip \\
\indent
    1. Let $k\in\N$ and let $X\subset\R^k$.
    Then $X\RF := X \cap \F^k$. \\
\indent
    2. $f(x)\RF$
    denotes the floating-point value of $f(x)$
    evaluated with arithmetic $\F$.
\end{definition}
\noindent
  That means,
  by $X\RF$ we denote the restriction of $X$ to
  its subset that can be represented with floating-point numbers in $\F$.
  To simplify the notation
  we omit the indices $L$ or $K$ of $\FLK$
  whenever they are given by the context.
  For the same reason
  we have already skipped the dimension $k$ in the suffix $\RF$.

\subsection{Basic Controlled Perturbation Implementations}
\label{sec-intro-cp-algo}

  Rounding errors of floating-point arithmetic
  may influence the result of predicate evaluations.
  Wrong predicate evaluations may cause
  erroneous results of the algorithm
  and even lead to non-robust implementations
  (see Kettner et al.~\cite{KMPSY08}).
  In order to get correct and robust implementations,
  we introduce guards which testify the reliability of predicate evaluations
  (see~\cite{F97,BFS01,MOS11}).
\begin{definition}[guard]\label{def-guard}
  Let $\F$ be a floating-point arithmetic
  and let $f:X\to\R$ be a function
  with $X\subset\R^k$.
  We call a predicate $\GG_{f}:X\to\text{\{true, false\}}$
  a \emph{guard for $f\!$ on $X$} if
  \begin{eqnarray*}
    \text{$\GG_{f}(x)$ is true}
      \FORMSEP &\Rightarrow& \FORMSEP
    \sign(f(x)\RF) = \sign(f(x))
  \end{eqnarray*}
  for all $x\in X\RF$.
  Presumed that there is such a predicate $\GG_f$,
  we say that an input $x\in X\RF$ is \emph{guarded}
  if $\GG_{f}(x)$ is true
  and \emph{unguarded}
  if $\GG_{f}(x)$ is false.
\end{definition}
  That means, guards testify the sign of function evaluations.
  A design of guards is presented in Section~\ref{sec-guards-safetybounds}.
  By means of guards
  we can implement geometric algorithms
  such that they can either verify or disprove their result.
\begin{definition}[guarded algorithm]\label{def-guarded-algo}
  We call an algorithm $\AG$ a \emph{guarded algorithm}
  if there is a guard for each predicate evaluation
  and if the algorithm halts either with the correct combinatorial result
  or with the information that a guard has failed.
  If $\AG$ halts with the correct result,
  we also say that $\AG$ is \emph{successful}, and
  we say that $\AG$ has \emph{failed}
  if a guard has failed.
\end{definition}
  Let $\y$ be an input of $\AG$.
  In case $\AG(\y)$ is successful,
  we obtain the desired result
  for input $\y$.
  Of course,
  the situation is unsatisfying
  if $\AG$ fails.
  Therefore we introduce controlled perturbation
  (see Halperin et al.~\cite{HL04}):
  We execute $\AG$ for randomly perturbed inputs $y$
  (i.e., random points in the neighborhood of $\y$)
  \emph{until} $\AG$ terminates successfully.
  Furthermore,
  we increase the precision $L$
  of the floating-point arithmetic $\F$
  after each failure
  in the hope to improve the chance to succeed.
  (It is the task of the analysis to give evidence.)
  We summarize this idea
  in the provisional controlled perturbation algorithm \mbox{$\BACP$}
  which is shown in Algorithm~\ref{algo-plaincp}.
  The general controlled perturbation algorithm
  is presented on page~\pageref{algo-cp} in Section~\ref{sec-ana-algo}.
  \begin{algorithm}
    \caption{: $\BACP(\AG, \y, \UU_\delta)$}
    \label{algo-plaincp}
    \begin{algorithmic}
      \STATE \emph{/* initialization */}
      \STATE $L \leftarrow$ precision of built-in floating-point arithmetic
     \medskip
     \REPEAT
      \STATE \emph{/* run guarded algorithm */}
      \STATE $y \leftarrow$ random point in $\UUU_\delta(\y)\RFL$
      \STATE $\omega \leftarrow \AG(y,\FL)$
      \medskip
      \STATE \emph{/* adjust parameters */}
      \IF{$\AG$ failed}
        \STATE $L \leftarrow 2L$
      \ENDIF
     \UNTIL{$\AG$ succeeded}
     \medskip
      \STATE \emph{/* return perturbed input $y$ and result $\omega$ */}
      \RETURN $(y,\omega)$
    \end{algorithmic}
  \end{algorithm}

  We see that there is an implementation of $\BACP(\AG)$
  for every guarded algorithm $\AG$,
  or to say it in other words,
  for every algorithm that is only based on geometric predicates
  that can be guarded.
  It is important to note that
  this does not necessarily imply that $\BACP$ performs well.
  It is the main objective of this paper
  to develop a general method to analyze the performance
  of controlled perturbation algorithms $\ACP$.

\section{Fundamental Quantities and Definitions}
\label{sec-fund-quant-def}

  Our main aim is the derivation of a general method
  to analyze controlled perturbation algorithms.
  In order to achieve this,
  we introduce fundamental quantities first.
  In Section~\ref{sec-basic-quantities}
  we define the quantities that describe the situation
  which we want to analyze.
  We encounter and discuss many issues
  during the definition of the success probability
  in Section~\ref{sec-succ-prob}.
  \emph{This is the first presentation of a
  detailed modelling of the floating-point success probability.}
  Controlled perturbation specific quantities are introduced
  in Section~\ref{sec-further-quantities}.
  (Further analysis specific bounds
  are defined in the presentation of the analysis later on.)
  The overview in Section~\ref{sec-over-func-argu}
  summarizes the classification of inputs
  in practice and in the analysis.
  In Section~\ref{sec-veri-succ}
  we present conditions
  under which we may \emph{apply} controlled perturbation to a predicate
  in practice
  and under which we can actually \emph{justify} its application
  in theory.

\subsection{Perturbation, Predicate, Function}
\label{sec-basic-quantities}

  Here we define the quantities
  that are needed to describe the initial situation:
   the original input,
   the perturbation area,
   the perturbation parameter,
   the perturbed input,
   the input value bound,
   functions that realize geometric predicates,
   and
   predicate descriptions.

  In the analysis
  we assume that the \emph{original input $\y$\label{def-ybar-inline}}
  of a controlled-perturbation algorithm $\ACP$
  consists of $n\label{def-n-inline}$
  floating-point numbers,
  that means, $\y\in\F^n$ or, as we prefer to say, $\y\in\R^n\RF$.
  At this point
  we do not care for a geometrical interpretation
  of the input of $\ACP$.
  We remark that this is no restriction:
  a complex number can be represented by two numbers;
  a vector can be represented by the sequence of its components;
  geometric objects can be represented by their coordinates and measures;
  and so on.
  A circle in the plain, for example,
  can be represented by a 6-tuple
  (the coordinates of three distinct points in the circle)
  or a 3-tuple (the coordinates of the center and the radius).
  And, to carry the example on,
  an input of $m$ circles can be interpreted as a tuple
  $\y\in\R^{n}\RF$
  with $n:=6m$ if we choose the first variant.

  We define the \emph{perturbation of $\y$}
  as a random additive distortion of its components.\footnote{%
    There is no unique definition of perturbation in geometry
    (see the introduction in~\cite{S98}).}
  We call $\UU_\delta(\y)\subset\R^n$
  a \emph{perturbation area} with
  \emph{perturbation parameter $\delta$\label{def-pert-para-inline}}
  if
  \begin{quote}
    1. $\delta\in\R_{>0}^n$, \\
    2. $y\in\UU_\delta(\y)$ implies $|y_i-\y_i|\le\delta_i$ for $1\le i\le n$ and \\
    3. $\UU_\delta(\y)$ contains an (open) neighborhood of $\y$.
  \end{quote}
  Note that $\UU_\delta(\y)$ is not a discrete set
  whereas $\UU_\delta(\y)\RF$ is finite.
  In our example,
  if we allow a circular perturbation of the $3m$ points which
  define the $m$ input circles,
  the perturbation area is the Cartesian product of $3m$ planar discs.
  We make the observation that
  even if we consider the input as a plain sequence of numbers,
  the perturbation area may look very special---we cannot
  neglect the geometrical interpretation here!
  In this context,
  we define an
  \emph{axis-parallel perturbation area}
  $U_\delta(\y)\label{def-U-inline}$
  as a box which is centered in $\y$
  and has edge length $2\delta_i$ parallel to the $i$-th main axis
  (and always denote it by the latin letter $U$ instead of $\UU$).
  This definition significantly simplifies the shape of the perturbation area.

  Naturally,
  the perturbed input must also be a vector of floating-point numbers.
  For now,
  we denote the \emph{perturbed input} by
  $y\in\UU(\y)\RF\label{def-y-inline}$.
  (We remark that we refine this definition
  on page~\pageref{inline-random-grid}).

  The analysis of $\ACP$
  depends on the analysis of $\AG$ and its predicates
  (see Section~\ref{sec-ana-algo}).
  We remember that
  a \emph{geometric predicate},
  which is true or false,
  is decided by the sign of a
  \emph{real-valued function} $f$.\label{def-f-inline}
  Therefore we introduce further quantities to describe such functions.
  We assume that $f$ is a $k$-ary real-valued function and
  that $k\ll n$.
  We further assume that we evaluate $f$ at $k$ distinct perturbed input values,
  that means,
  we evaluate $f(y_{\sigma(1)},\ldots,y_{\sigma(k)})$
  where $\sigma:\{1,\ldots,k\}\to\{1,\ldots,n\}$ is injective.
  The mapping $\sigma$ is injective to guarantee
  that the variables in the formula of $f$ are independent of each other.
  To not get the indices mixed up in the analysis,
  we rename the argument list of $f$ into
  $x_i := y_{\sigma(i)}\label{def-x-inline}$
  for $1\le i\le k$.
  In the same way we also rename the affected input values
  $\x_i := \y_{\sigma(i)}\label{def-xbar-inline}$.
  We denote the set of \emph{valid arguments for $f$} by $A\label{def-A-inline}$.

  In the analysis,
  $\E$ implicitly describes an upper-bound
  on the absolute value of perturbed input values
  in the way
  \begin{eqnarray}\label{def-e}\label{for-e-min}
    \E &:=& \min
      \left\{
        e'\in\N  \ST
	  \text{$|\y_i| + \delta_i \le 2^{e'}$ for all $1\le i \le n$}
      \right\}.
  \end{eqnarray}
  We call $\E$ the \emph{input value parameter}.
  Be aware that this is just a bound on the arguments of $f$
  and not a bound on the absolute value of $f$.
  At the moment we assume that
  the absolute value of $f$ is bounded on $A$
  and that the size $K$ of the exponent of the floating-point arithmetic $\FLK$
  is sufficiently large to avoid overflow errors
  during the evaluation of $f$.
  In Section~\ref{sec-rational-function},
  we drop this assumption and discuss the treatment of range issues.

  Below we summarize the basic quantities
  which are needed for the analysis
  of a function $f$.
\begin{definition}\label{def-predi-con}
    We call $\PRED$
    a \emph{predicate description} if:
    \begin{quote}
      1. $k\in\N$, \\
      2. $A\subset\R^k$, \\
      3. $\delta\in\R_{>0}^k$, \\
      4. $\E$ is as it is defined in Formula~(\ref{def-e}), \\
      5. $\U_\delta(A)\subset[-2^{\E},2^{\E}]^k$ and \\ %
      6. $f:\U_\delta(A)\to\R$.
    \end{quote}
\end{definition}
\noindent
  Predicate descriptions are used on and on.
  We extend the notion
  in Definition~\ref{def-predi-con-2} on page~\pageref{def-predi-con-2}
  and in Definition~\ref{def-predi-con-3} on page~\pageref{def-predi-con-3}.

\subsection{Success Probability, Grid Points}
\label{sec-succ-prob}

  The controlled-perturbation algorithm $\ACP$ terminates eventually
  if there is a positive probability that $\AG$ terminates successfully.
  The latter condition is fulfilled if $f$ has the property:
  The probability of a successful evaluation of $f$
  gets arbitrarily close to the certain event
  just by increasing the precision $L$.
  We call this property \emph{applicability}
  and specify it
  in Section~\ref{sec-veri-succ}.

  In this section we derive a definition for the success probability
  that is appropriate for the analysis
  and that is valid for floating-point evaluations.
  We begin with the question:
  What is the least probability that
  a guarded evaluation of $f$ is successful in a run of $\AG$
  under the arithmetic $\F$?
  We assume
  that each random point is chosen with the same probability.
  Then the answer is
  \begin{eqnarray*}
    \PR (f\RF)
      &:=&
	\min_{\x\in A} \;
        \frac
	  {\CARDI{
	    \left\{
	      x\in\U_\delta(\x)\RF
	      \ST
	      \text{$\GG(x)$ is true}
	    \right\}
	  }}
          {\CARDI{\U_\delta(\x)\RF}}.
  \end{eqnarray*}
  The definition really reflects the actual behavior of $f$.
  The probability is the number of guarded (floating-point) inputs
  divided by the total number of inputs
  and considers the worst-case for all perturbation areas.

\subsection*{Issue~1: Floating-point arithmetic is hard to be analyzed directly}

  Because floating-point arithmetic and its rounding policy
  can hardly be analyzed directly,
  we aim at deriving a corresponding formula for real arithmetic.
  In real space,
  we use the Lebesgue measure\footnote{%
    Measure Theory: The Lebesgue measure is defined in Forster~\cite{F11}.}
  $\mu\label{def-mu-inline}$
  to determine the volume of areas.
  Therefore we are looking for a formula like
  \begin{eqnarray}\label{for-prob-guess}
    \PR (\text{$f$})
      &:=&
	\min_{\x\in A} \;
        \frac
	  {\mu\left(
	    \left\{
	      x\in\U_\delta(\x)
	      \ST
	      \text{$\GG'(x)$ is true}
	    \right\}
	  \right)}
          {\mu({\U_\delta(\x)})}
  \end{eqnarray}
  where the predicate $\GG':\U_\delta(A)\to\text{\{true, false\}}$
  equals $\GG$
  at arguments with floating-point representation.
\subsection*{Issue~2: The set of floating-point numbers has measure zero}

  It is well-known that
  the set $\U_\delta(\x)\RF$ is finite and
  that its superset $\U_\delta(\x)\RQ$ is a set of measure zero.
  Be aware that
  the fraction in Formula~(\ref{for-prob-guess})
  does not change
  if we redefine $f$ on a set of measure zero.
  This implies some bizarre situations.
  For example,\footnote{%
    Note that
    there are finite sets of exceptional points that
    lead to similar counter-examples
    since every exception influences the practical behavior of the function
    (and $L$ is finite).}
\newcommand{\FT}{{f_\text{true}}}%
\newcommand{\FF}{{f_\text{false}}}%
  let $\FF:\U_\delta(A)\to\R$ be
  \begin{eqnarray*}
    \FF(x)&:=&\left\{
      \begin{array}{r@{\quad:\quad}l}
        f(x) & x\not\in\U_\delta(A)\RQ \\
        0 & otherwise
      \end{array} \right.
  \end{eqnarray*}
  and let $\FT:\U_\delta(A)\to\R$ be
  \begin{eqnarray*}
    \FT(x)&:=&\left\{
      \begin{array}{r@{\quad:\quad}l}
        f(x) & x\not\in\U_\delta(A)\RQ \\
        B & otherwise
      \end{array} \right.
  \end{eqnarray*}
  where $B\in\R_{>0}$ is large enough to guarantee
  that the guard $\GG$ evaluates to true in the latter case.
  Be aware that
  $\PR(\text{$\FF$}) = \PR(\text{$\FT$})$
  due to Formula~(\ref{for-prob-guess})
  whereas both implementations ``$\AG$ with $\FT$'' and ``$\AG$ with $\FF$''
  behave most conflictive:
  The former is always successful whereas
  the latter never succeeds.
  We remark that
  the assumption ``$f$ is (upper) continuous almost everywhere''
  does not solve the issue because
  ``almost everywhere'' means
  ``with the exception of a set of measure zero.''
  We have to introduce several restrictions
  to get able to deal with situations like that.

\subsection*{Issue~3: There is no general relation between $\PR(f\RF)$ and $\PR(f)$}

  This problem gets already visible in the 1-dimensional case.
\begin{example}\label{ex-density-1}
  Let $\F=\F_{2,3}$ be the floating-point arithmetic with
  $L=2$ and $K=3$.
  In addition
  let $U=[0,2]$, $R_1=[0,1]$ and $R_2=[1,2]$ be intervals.
  The situation is depicted in Figure~\ref{fig-fp-ratio-f}.
  \begin{figure}[h]\centering
    \includegraphics[width=.95\columnwidth]{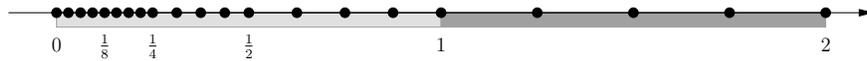}
    \caption{Distribution of the discrete set $\F_{2,3}$
      within the interval $[0,2]$.}
    \label{fig-fp-ratio-f}
  \end{figure}

\noindent
  What is the probability that a randomly chosen point $x\in U$
  lies inside of $R_1$,
  respectively $R_2$,
  for points in $U$ or $U\RF$?
  Note that $R_1$ and $R_2$ have the same length.
  For $R_1=[0,1]$ we have
  \begin{eqnarray*}
    \PR(R_1) = \frac{1}{2} \FORMSEP {<} \FORMSEP \PR(R_1\RF) = \frac{17}{21},
  \end{eqnarray*}
  that means,
  the probability is higher for floating-point arithmetic.
  On the other hand,
  for $R_2=[1,2]$ we have
  \begin{eqnarray*}
    \PR(R_2) = \frac{1}{2} \FORMSEP {>} \FORMSEP \PR(R_2\RF) = \frac{5}{21},
  \end{eqnarray*}
  that means,
  the probability is higher for real arithmetic.
  \hfill$\bigcirc$
\end{example}
  We derive from Example~\ref{ex-density-1}
  that there is no general relation between $\PR(f\RF)$ and $\PR(f)$
  because of the distribution of $\F$.

\subsection*{Issue~4: Distribution of $\F$ is non-uniform}

  Because the discrete set of floating-point numbers
  is non-uniformly distributed in general,
  we smartly alter the perturbation policy:
  We restrict the random choice of floating-point numbers to selected numbers
  that lie on a regular grid.
\begin{definition}[grid]\label{def-grid-points}
  Let $\E$ be as it is defined in Formula~(\ref{def-e}) and
  let $\FLK$ be a floating-point arithmetic (with $\E \ll 2^{K-1}$).
  We define
  \begin{eqnarray}\label{for-def-tau}
    \tau
      &:=& 2^{\E-L-1}\label{def-tau-inline}.
  \end{eqnarray}
  We call
  \begin{eqnarray}\label{for-def-grid}
    \GLKE
      &:=& \left\{
             \lambda\tau \ST \text{$\lambda\in\Z$ and $\lambda\tau\in[-2^\E,2^\E]$}
	   \right\}
  \end{eqnarray}
  the \emph{grid points induced by $\E$ with respect to $\FLK$}
  and we call $\tau$ the \emph{grid unit of $\GLKE$}.
  Furthermore,
  we denote the grid points $\G$ inside of a set $X\subset\R^k$ by
  \begin{eqnarray*}
    X\RG &:=& X\cap \G^k.
  \end{eqnarray*}
\end{definition}
  Again we omit the indices
  whenever they do not deserve special attention.
  We observe that
  the grid unit $\tau$ is the maximum distance between two adjacent
  points in $\F \cap [-2^\E,2^\E]$.
  We observe further that
  the grid points $\G$ form a subset of $\F$.
  Be aware that the symbol $\F$ represents a set or an arithmetic
  whereas the symbol $\G$ always represents a set.
  It is important to see that the underlying arithmetic is still $\F$.
  We have introduced $\G$
  only to change the definition of the \emph{original perturbation area}
  into $\UUU_\delta(\y)\RG\label{inline-random-grid}$.
  This leads to the \emph{final version of the success probability of $f$}:
  The least probability that
  a guarded evaluation of $f$ is successful for inputs in $\G$
  under the arithmetic $\F$ is
  \begin{eqnarray}\label{for-prob-rest-G}
    \PR (f\RG)
      &:=&
	\min_{\x\in A} \;
        \frac
	  {\CARDI{
	    \left\{
	      x\in\U_\delta(\x)\RG
	      \ST
	      \text{$\GG(x)$ is true}
	    \right\}
	  }}
          {\CARDI{\U_\delta(\x)\RG}}.
  \end{eqnarray}
  Before we continue this consideration,
  we add a remark on the implementation of the perturbation area
  $\UUU_\delta(\y)\RG$.
\begin{remark}\label{perturb-implem-inline}
  Because the points in $\G$ are uniformly distributed,
  the implementation of the perturbation
  is significantly simplified
  to the random choice of integer $\lambda$ in Formula~(\ref{for-def-grid}).
  This functionality is made available by
  basically all higher programming languages.
  Apart from that
  we generate floating-point numbers with the largest possible number
  of trailing zeros.
  {This possibly reduces the rounding error in practice.}
\end{remark}

\subsection*{Issue~5: Projection of $\UUU_\delta(\y)\RG$ is non-uniform}

  The \emph{original perturbation area}
  $\UUU_\delta(\y)\RG$
  is a discrete set of uniformly distributed points
  of which every point is chosen with the same probability.
  As a consequence,
  the \emph{predicate perturbation area} $\U_\delta(\x)\RG$
  is also uniformly distributed.
  But it is important to see that
  this does not imply
  that all points in the projected grid
  appear with the same probability!
  We illustrate, explain and solve this issue in Section~\ref{sec-ana-algo}.
  For now
  we continue our consideration
  under the assumption
  that all points in $\U_\delta(\x)\RG$ are uniformly distributed and
  randomly chosen with the same probability.

\subsection*{Issue~6: Analyses for various perturbation areas may differ}

  In the determination of $\PR(f\RG)$ in Formula~(\ref{for-prob-rest-G}),
  we encounter the difficulty
  to find the minimum ratio
  between the {guarded} and {all possible} inputs
  \emph{for all possible perturbation areas},
  that means, for all $\bar{x}\in A$.
  We can address this problem with a simple worst-case consideration
  if we cannot gain (or do not want to gain)
  further insight into the behavior of $f$:
  We just expect that,
  whatever could negatively affect the analysis of $f$ within
  the total predicate perturbation area $\U_\delta(A)$,
  affects the perturbation area $\U_\delta(\bar{x})$
  under consideration.
  This way, we safely obtain a lower bound on the minimum.

\subsection*{Issue~7: There is no general relation between $\PR(f\RG)$ and $\PR(f)$}

\begin{example}\label{ex-density-2}
  We continue Example~\ref{ex-density-1}.
  In addition
  let $R_3=[\frac{1}{10},\frac{9}{10}]$ be an interval.
  Because $U\subseteq[-2^1,2^1]$,
  we have $\E=1$ and $\tau=2^{\E-L-1}=\frac{1}{4}$.
  The situation is depicted in Figure~\ref{fig-fp-ratio-g}.
  \begin{figure}[h]\centering
    \includegraphics[width=.95\columnwidth]{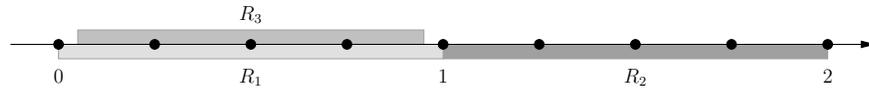}
    \caption{The distribution of the grid points $\G_{2,3,1}$
      within the interval $[0,2]$.}
    \label{fig-fp-ratio-g}
  \end{figure}

\noindent
  Again we compare the continuous and the discrete case:
  What is the probability that a randomly chosen point $x\in U$
  lies inside of $R_1$ ($R_2$ or $R_3$, respectively)?
  The probability is now higher
  for $R_1$ and $R_2$ in the discrete case
  \begin{eqnarray*}
    \PR(R_1) =
    \PR(R_2) = \frac{1}{2}
    \quad < \quad
    \PR(R_1\RG) =
    \PR(R_2\RG) = \frac{5}{9},
  \end{eqnarray*}
  and higher for $R_3$
  \begin{eqnarray*}
    \PR(R_3) = \frac{2}{5} \quad > \quad \PR(R_3\RG) = \frac{1}{3}
  \end{eqnarray*}
  in the real case.
  \hfill$\bigcirc$
\end{example}
  We make the observation that
  the restriction to points in $\G$
  does not entirely solve the initial problem:
  We still cannot relate the probability $\PR(f)$ with $\PR(f\RG)$ in general.
  To improve the estimate,
  we need another trick that we indicate in Example~\ref{ex-density-3}:
  \emph{If we make the interval
  slightly larger,
  we can safely determine the inequality.}
\begin{example}\label{ex-density-3}
  Let $\tau$ be the grid unit of $\G$.
  We define three intervals $R\subset\RAUG\subset U$.
  Let $U\subset\R$
  be a closed interval of length $\lambda_0\tau$
  with $\lambda_0\in\N$.
  Let $\RAUG\subset U$
  be an interval of length at least $\tau$
  that has the limits $\RAUG:=[a-\frac{\tau}{2},b+\frac{\tau}{2}]$
  for $a,b\in\R$.
  Finally, we define $R:=[a,b]$.
  In addition let $\lambda\in\N$ be such that
  \begin{eqnarray*}
    \lambda\tau \;\; \le \;\; \mu(\RAUG) \;\; < \;\; (\lambda+1)\tau.
  \end{eqnarray*}
  We observe that the number of grid points in
  $R\RG$ and $\RAUG\RG$
  is bounded by
  \begin{eqnarray*}
    \lambda-1 \;\;\le\;\; \CARDI{R\RG} \;\;\le\;\;
    \lambda \;\;\le\;\; \CARDI{\RAUG\RG} \;\;\le\;\;
    \lambda+1.
  \end{eqnarray*}
  Moreover, we make the important observation that
  \begin{eqnarray*}
    \frac{\CARDI{R\RG}}{\CARDI{U\RG}}
      \;\; \le \;\; \frac{\lambda}{\lambda_0+1}
      \;\; \le \;\; \frac{\lambda}{\lambda_0}
      \;\; =   \;\; \frac{\lambda\tau}{\lambda_0\tau}
      \;\; \le   \;\; \frac{\mu(\RAUG)}{\mu(U)}.
  \end{eqnarray*}
  That means,
  it is more likely that a random point in $U$ lies inside of $\RAUG$
  than a random point in $U\RG$ lies inside of $R\RG$.
  The inequality
  \begin{eqnarray*}
    \PR(R\RG) &\le& \PR(\RAUG)
  \end{eqnarray*}
  is valid independently of the actual choice or location of $R$.
\hfill$\bigcirc$
\end{example}

\subsection*{Issue~8: There is still no general relation between $\PR(f\RG)$ and $\PR(f)$}

  The probability $\PR(f)$
  is defined as the ratio of volumes.
  The definition is, in particular,
  independent of the location and shape of the involved sets.
  As an example,
  we consider the three different (shaded) regions
  in Figure~\ref{fig-volume-location-shape}
  which all have the same volume.
  \begin{figure}[h]\centering
    \includegraphics[width=.95\columnwidth]{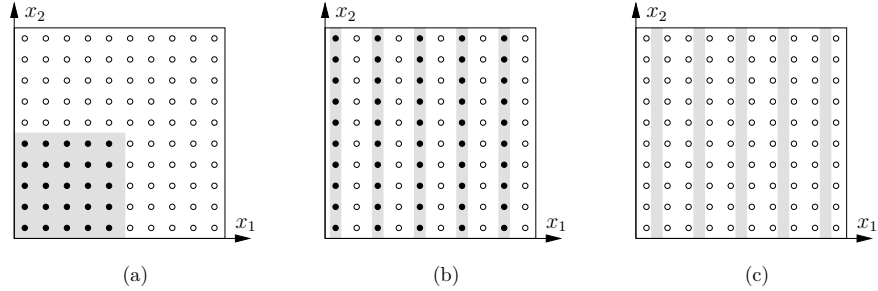}
    \caption[The shape of the region of uncertainty matters.]
      {The volume of the shaded region $R$ is the same in the three pictures.
      Depending on the shape and location of $R$,
      it covers various fractions of the discrete set $\G$.
      For example:
      (a) a quarter, (b) a half, (c) nothing.}
    \label{fig-volume-location-shape}
  \end{figure}

  We make the important observation that
  the shape and location matter
  if we derive the induced ratio for points in $\G$.
  The discrepancy between the ratios
  is caused by the implicit assumption
  that the grid unit $\tau$ is sufficiently small.
  (Asymptotically, the ratios approach the same limit
  in the three illustrated examples for $\tau\to 0$.)
  Be aware that making this assumption explicit
  leads to a second constraint on the precision $L$
  which we call the \emph{grid unit condition}.
  To solve this issue,
  we need a way to adjust the grid unit $\tau$ to the shape of $R$.
  We address this issue in general in Section~\ref{sec-relate-tau-gamma}.
  For now
  we continue our consideration under the assumption that
  this problem is solved.

\subsection*{Summary and validation of $\PR(f\RG)$}

  We summarize our considerations so far.
  The analysis of a guarded algorithm
  {must} reflect its actual behavior.
  (What would be the meaning of the analysis, otherwise?)
  Therefore we have defined the success probability
  of a floating-point evaluation of $f$
  in Formula~(\ref{for-prob-rest-G})
  such that it is based on the behavior of guards.
  Furthermore,
  we have studied the interrelationship
  between the success probability
  for floating-point and real arithmetic
  to prepare the analysis in real space.
  Be aware that
  we have introduced a specialized perturbation on a regular grid $\G$
  (in practice and in analysis)
  which is necessary for the derivation of the interrelationship.
  Moreover,
  we now make this relationship explicit for a single interval.
  (The general relationship is formulated
  in Section~\ref{sec-relate-tau-gamma}.)
\begin{example}\label{ex-density-4}
  (Continuation of Example~\ref{ex-density-3}.)
  Let $f:U\to\R$.
  We assume the following property of $R$:
  {If $x\in U\RG$ lies outside of $R$
  then the guard $\GG(x)$ is true.}
  Then we have
  \begin{eqnarray*}
        \PR(f\RG)
      &=&
        \frac
	  {\CARDI{
	    \left\{
	      x\in U\RG
	      \ST
	      \text{$\GG(x)$ is true}
	    \right\}
	  }}
          {\CARDI{U\RG}} \\
      &\ge&
        1 \,-\, \frac{\CARDI{R\RG}}{\CARDI{U\RG}} \\
      &\ge&
        1 \,-\, \frac{\mu(\RAUG)}{\mu(U)}.
  \end{eqnarray*}
  We conclude:
  \emph{If we prove
  by means of abstract mathematics
  that
  \begin{eqnarray*}
        1 \,-\, \frac{\mu(\RAUG)}{\mu(U)} &\ge& p
  \end{eqnarray*}
  for a probability $p\in(0,1)$,
  we have implicitly proven that}
  \begin{eqnarray*}
        \PR(f\RG) &\ge& p
  \end{eqnarray*}
  for a randomly chosen grid point in $\G$.
  Be aware that $\PR(f\RG)$
  is defined only by discrete quantities.
\hfill$\bigcirc$
\end{example}

\subsection*{Warning: processing exceptional points}

  We explain in this paragraph
  why it is absolutely non-obvious
  how to process exceptional points in general.
  Assume that we want to exclude the set $D\subset A$
  from the analysis.
  This changes our success probability from Formula~(\ref{for-prob-rest-G})
  into
  \begin{eqnarray*}
    \PR (f\RG)
      &=&
	\min_{\x\in A} \;
        \frac
	  {\CARDI{
	    \left\{
	      x\in\U_\delta(\x)\RG
	      \ST
	      \text{$\GG(x)$ is true}
	    \right\}
	    \setminus D
	  }}
          {\CARDI{\U_\delta(\x)\RG}} \\
	&\ge&
	\min_{\x\in A} \;
	 \frac
	  { \max \left\{
	      0, \;
	      {
	      \left|\left\{
	        x\in\U_\delta(\x)\RG
	        \ST
	        \text{$\GG(x)$ is true}
	      \right\}\right|
	      }
	      -
	      {|D|}
	    \right\}
	  }
          {\CARDI{\U_\delta(\x)\RG}}.
  \end{eqnarray*}
  To obtain a practicable solution,
  it is reasonable to assume that $D$ is finite and, moreover, that
  $|D|\ll\CARDI{\U_\delta(\x)\RG}$.
  This changes the relation in Example~\ref{ex-density-4}
  into:
  \begin{eqnarray*}
        \PR(f\RG)
      &\ge&
        \max \left\{
	  0, \;
        1 - \frac{\mu(\RAUG)}{\mu(U)}
	  -
	  \frac
	  {|D|}
          {\CARDI{\U_\delta(\x)\RG}}
        \right\}.
  \end{eqnarray*}
  It is important to see
  that this estimate still contains two quantities
  that depend on the floating-point arithmetic.
  But our plan was to get rid of this dependency.
  In spite of the simplifying assumptions
  it is non-obvious how to perform the analysis in real space in general.
  \emph{Our suggested solution to this issue is to avoid
  exceptional points.
  Alternatively
  we declare them critical (see next section)
  which triggers an exclusion of their environment.}

\subsection{Fp-safety Bound, Critical Set, Region of Uncertainty}
\label{sec-further-quantities}

\subsection*{The fp-safety bound}

  We introduce a predicate that can certify
  the correct sign of floating-point evaluations.
  The essential part of this predicate is the fp-safety bound.
  We show in Section~\ref{sec-guards-safetybounds}
  that there are fp-safety bounds for a wide class of functions.
\begin{definition}[lower fp-safety bound]\label{def-fpsafetybound}
  Let $\PRED$ be a predicate description.
  Let
  $\Sinf:\N\to\R_{\ge 0}$ be a monotonically decreasing function
  that maps a precision $L$ to a non-negative value.
  We call $\Sinf$ a \emph{(lower) fp-safety bound for $f\!$ on $A$}
  if the statement
  \begin{eqnarray}\label{for-fpsafety-final}
      |f(x)| > \Sinf(L)
      \FORMSEP &\Rightarrow& \FORMSEP
    \sign(f(x)\RFL) = \sign(f(x))
  \end{eqnarray}
  is true for every precision
  $L\in\N$ and for all $x\in \U_\delta(A)\RFL$.
\end{definition}
  For the time being,
  we consider $K$ to be a constant.
  We drop this assumption
  in Section~\ref{sec-rational-function}
  where we introduce \emph{upper} fp-safety bounds.
  Until then
  we only consider \emph{lower} fp-safety bounds.

\subsection*{The critical set}

  Next we introduce a classification of the points in $\U_\delta(A)$
  in dependence on their neighborhood.
  (We refine the definition on Page~\pageref{def-critical-set-second}.)
\begin{definition}[critical]\label{def-critical-set}
  Let $\PRED$ be a predicate description.
  We call a point $c\in\U_\delta(\x)$
  \emph{critical} if
  \begin{eqnarray}\label{for-def-crit}
    \inf_{x\in U_\varepsilon(c)\setminus\{c\}} \; \left|f(x)\right| &=& 0
  \end{eqnarray}
  on a neighborhood $U_\varepsilon(c)$
  for infinitesimal
  small $\varepsilon>0$.
  Furthermore,
  we call zeros of $f$ that are not critical \emph{less-critical}.
  Points that are neither critical nor less-critical
  are called \emph{non-critical}.
  We define
  the \emph{critical set $C_{f,\delta}$
  of $f$ at $\x\in A$ with respect to $\delta$}
  as the union of critical and less-critical points within $\U_\delta(\x)$.
\end{definition}
  In other words,
  we call $c$ critical
  if there is a Cauchy sequence\footnote{%
    Analysis: Cauchy sequence is defined in Forster~\cite{F06}.}
  $(a_i)_{i\in \N}$ in $\bar{U}_\delta(\x)\setminus\{c\}$
  where $\lim_{i\to\infty} a_i=c$
  and $\lim_{i\to\infty} f(a_i)=0$.
  We remember that
  the metric space\footnote{%
    Topology: Metric space and completeness are defined in Jänich~\cite{J01}.}
  $\R^k$ is complete,
  that means,
  the limit of the sequence $(a_i)$
  lies inside of the closure
  $\U_\delta(\x)$.
  Sometimes we omit the indices
  of the critical set $C$
  if they are given by the context.
\begin{example}
  We consider the three functions
  that are depicted in Figure~(\ref{fig-crit-set-compare}).
  Let $f_1(x)=x^2$.
  Let $f_2(x)=x^2$ for $x\ne 0$ and $f_2(0)=2$.
  Let $f_3(x)=x^2+1$ for $x\not\in \{-2,1\}$ and $f_3(-2)=0$ and $f_3(1)=0.2$.
  \begin{figure}[h]\centering
    \includegraphics[width=.95\columnwidth]{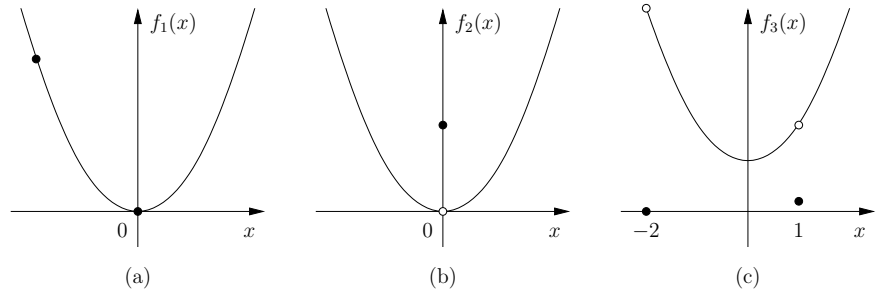}
    \caption{Examples of critical, less-critical and non-critical points.}
    \label{fig-crit-set-compare}
  \end{figure}
  The point $x=0$ in Picture~(a)
  is a zero and a critical point for $f_1$.
  In (a), every argument $x\ne 0$ is non-critical for $f_1$.
  In (b), $f_2$ is non-zero at $x=0$, but $x=0$ is a critical point for $f_2$.
  In (c), the argument $x=-2$ is less-critical for $f_3$ and
  the argument $x=1$ is non-critical for $f_3$.
\hfill$\bigcirc$
\end{example}
  What is the difference of critical and less-critical points?
  We observe
  that the point $c$ is excluded from its neighborhood
  in Formula~(\ref{for-def-crit}).
  Zeros of $f$ would trivially be critical otherwise.
  Furthermore,
  we observe that zeros of continuous functions are always critical.
  For our purpose it is important to see that
  the infimum of $|f|$ is positive
  if we exclude the less-critical points \emph{itself}
  and \emph{neighborhoods} of critical points.
  Be aware
  that we technically could treat both kinds differently in the analysis
  and still ensure that the result of the analysis is
  valid for floating-point arithmetic.
  Only for simplicity we deal with them in the same way
  by adding these points to the critical set.
  Only for simplicity we also add exceptional points to the critical set.

\subsection*{The region of uncertainty}

  The next construction is a certain environment of the critical set.
\begin{definition}[region of uncertainty]\label{def-region-uncertainty}
  Let $\PRED$ be a predicate description.
  In addition
  let $\gamma\in\R_{>0}^k$.
  We call
  \begin{eqnarray}\label{for-region-uncert}
    R_{f,\gamma}(\x)
    &:=&
      \bar{U}_\delta(\x)
      \;\cap\;
      \left( \bigcup_{c\in C_{f,\delta}(\x)} U_\gamma(c) \right)
  \end{eqnarray}
  the \emph{region of uncertainty for $f$
  induced by $\gamma$ with respect to $\x$.}
\end{definition}
  In our presentation
  we use the axis-parallel boxes $U_\gamma(c)$
  to define
  the specific $\gamma$-neighborhood of $C$;
  other shapes require adjustments, see Section~\ref{sec-relate-tau-gamma}.
  The sets $U_\gamma(c)$ are open and
  the complement of $R_{f,\gamma}(\x)$ in $\U_\delta(\x)$
  is closed.
  We omit the indices
  of the region of uncertainty $R$
  if they are given by the context.

  The vector $\gamma\label{def-gamma-inline}$ 
  defines the tuple of componentwise distances to $c$.
  The presentation requires a formal definition of the
  set of all admissible $\gamma$.
  This set is either a box or a line.
  Let $\hat\gamma\in\R_{>0}^k$.
  Then we define the unique open axis-parallel box
  with vertices $0$ and $\hat\gamma$ as
  \begin{eqnarray*}\label{def-GAB-inline}
    \GABG &:=& \left\{ \gamma'=(\gamma'_1,\ldots,\gamma'_k)
      : \text{$\gamma'_i\in(0,\hat\gamma_i)$ for all $i\in I$} \right\}
  \end{eqnarray*}
  and the open diagonal from 0 to $\hat\gamma$ inside of $\GABG$ as
  \begin{eqnarray*}\label{def-GAL-inline}
    \GALG &:=& \left\{ \gamma : \text{$\gamma = \lambda \hat\gamma$ with $\lambda\in(0,1)$} \right\}.
  \end{eqnarray*}
  It is important
  that the $\gamma_i$ can be chosen arbitrarily small whereas
  the upper bounds $\hat\gamma_i$ are only introduced for technical reasons;
  we assume that $\hat\gamma$ is ``sufficiently'' small.\footnote{%
    It is fine to ignore this information during first reading.
    More information and the formal bound is given
    in Remark~\ref{rem-def-region-suit}.2
    on Page~\pageref{rem-def-region-suit}.}
  Occasionally we omit $\hat\gamma$.

  We have already seen
  that there is need to augment the region of uncertainty
  (see Issue~7 and~8
  in Section~\ref{sec-succ-prob}).
  This task is accomplished
  by
  the mapping $\gamma\mapsto\AUG(\gamma):=\frac{\gamma}{t}$
  for $t\in(0,1)$.
  For technical reasons we remark that
  $\gamma\in\GABG$ if $\AUG(\gamma)\in\GABG$, and
  $\gamma\in\GALG$ if $\AUG(\gamma)\in\GALG$.
  We call $R_{f,\AUG(\gamma)}\label{inline-def-aug-rou}$
  the \emph{augmented region of uncertainty for $f$ under} $\AUG(\gamma)$.
  By $\Gamma\label{def-Gamma-inline}$
  we denote the \emph{set of valid augmented $\gamma$}
  and include it in the predicate description.
\begin{definition}\label{def-predi-con-2}
  We extend Definition~\ref{def-predi-con} and
  call $\PREDG$ a \emph{predicate description} if:
  7. $\Gamma=\GALG$ or $\Gamma=\GABG$
       for a sufficiently small $\hat\gamma\in\R_{>0}^k$.
\end{definition}

\subsection{Overview: Classification of the Input}
\label{sec-over-func-argu}

  In practice and in the analysis
  we deal with real-valued functions whose signs decide predicates.
  The arguments of these functions belong to the perturbation area.
  In this section
  we give an overview of the various characteristics for function arguments
  that we have introduced so far.
  We strictly distinguish between terms of practice and terms of the analysis.

  The diagram of the practice-oriented terms
  is shown in Figure~\ref{fig-practice-oriented}.
  We consider the discrete perturbation area $U_{\!\delta}\,\RG$.
  Controlled perturbation algorithms $\ACP$
  are designed with intent
  to avoid the implementation of degenerate cases and
  to compute the combinatorial correct solution.
  Therefore
  the guards in the embedded algorithm $\AG$
  must fail for the zero set and
  for arguments whose evaluations lead to wrong signs.
  The guard is designed such that
  the evaluation is definitely fp-safe
  if the guard does not fail (light shaded region).
  Unfortunately
  there is no convenient way
  to count (or bound) the number of arguments in $U_{\!\delta}\,\RG$
  for which the guard fails.
  That is the reason why we perform the analysis with real arithmetic
  and introduce further terms.
  \begin{figure}[t]\centering
    \includegraphics[width=.85\columnwidth]{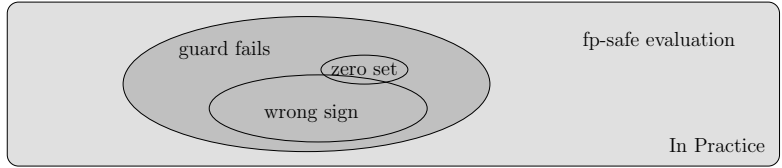}
    \caption{The diagram of the practice-oriented terms.}
    \label{fig-practice-oriented}
  \end{figure}

  The diagram of the analysis-oriented terms
  is shown in Figure~\ref{fig-analysis-oriented}.
  We consider the real perturbation area $U_{\!\delta}$.
     Instead of the zero set,
     we consider the critical set
     (see Definition~\ref{def-critical-set}).
     The critical set is a superset of the zero set.
     Then we choose the region of uncertainty
     as a neighborhood of the critical set
     (see Definition~\ref{def-region-uncertainty}).
     We augment the region of uncertainty to obtain a result
     that is also valid for floating-point evaluations.
     We intent to prove fp-safety outside of the augmented region of uncertainty
     (i.e. on the light shaded region).
     Therefore we design a fp-safety bound that is true
     outside of the region.
     This way we can guarantee that
     the evaluation of a guard (in practice)
     only fails on a subset of the augmented region (in the analysis).
  \begin{figure}[t]\centering
    \includegraphics[width=.85\columnwidth]{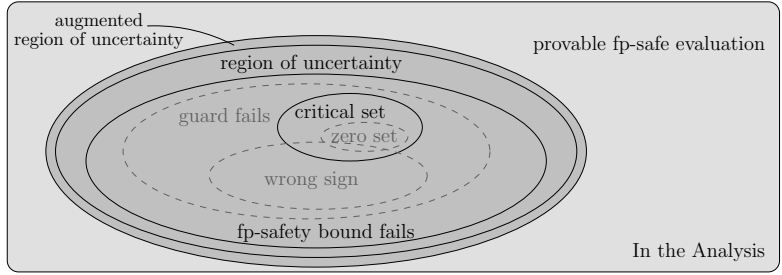}
    \caption{The diagram of the analysis-oriented terms (shown in black).}
    \label{fig-analysis-oriented}
  \end{figure}

\subsection{Applicability and Verifiability of Functions}
\label{sec-veri-succ}

  We study the circumstances
  under which we may \emph{apply} controlled perturbation to a predicate
  in practice
  and under which we can actually \emph{verify} its application
  in theory.
  We stress that we talk about a \emph{qualitative} analysis here;
  the desired \emph{quantitative} analysis is derived in the following sections.

  Furthermore, we want to remark that
  \emph{verifiability} is not necessary
  for the presentation of the analysis tool box.
  However,
  the distinction between applicability, verifiability and analyzability
  was important for the author during the development of the topic.
  We keep it in the presentation
  because it may also be helpful to the reader.
  Anyway, skipping this section is possible and even
  assuming equality between verifiability and analyzability
  will do no harm.

\subsection*{In practice}

  We specify the function property that
  the probability of a successful evaluation of $f$
  gets arbitrarily close to the certain event
  by increasing the precision.
\begin{definition}[applicable]
  \label{def-func-app}
  Let $\PRED$ be a predicate description.
  We call $f$ \emph{applicable}
  if for every $p\in(0,1)$ there is $L_p\in\N$ such that
  the guarded evaluation of $f$
  is successful
  at a randomly perturbed input $x\in \U_\delta(\x)\RGL$
  with probability at least $p$
  for every precision $L\in\N$ with $L\ge L_p$
  and every $\x\in A$.
\end{definition}
  Applicable functions can safely be used
  in guarded algorithms:
  Since the precision $L$ is increased (without limit)
  after a predicate has failed,
  the success probability gets arbitrarily close to 1
  for each predicate evaluation.
  As a consequence,
  the success probability of $\AG$ gets arbitrarily close to 1, too.

\subsection*{In the qualitative analysis}

  Unfortunately
  we cannot check directly
  if $f$ is applicable.
  Therefore we introduce two properties
  that imply applicability.
\begin{definition}\label{def-verifiable}
  Let $\PREDL$ be a predicate description.
  \begin{itemize}
    \item
{(region-condition).}
    For every $p\in(0,1)$ there is $\gamma\in\Rplus^k$ such that
    the geometric failure probability is bounded in the way
      \begin{eqnarray}
      \label{for-def-region-const}
        \frac{\mu(R_\gamma(\x))}{\mu(U_\delta(\x))} &\le& (1-p)
      \end{eqnarray}
    for all $\x\in A$.
    We call this condition the \emph{region-condition}.
    \item
{(safety-condition).}
    \label{def-safety-cond}
    There is a fp-safety bound $\Sinf:\N\to\R_{>0}$
    on $\U_\delta(A)$ with\footnote{%
      Technically,
      the assumption $\Sinf(L)\stackrel{!}{>}0$ is no restriction.}
    \begin{eqnarray}\label{for-safety-cond}
      \lim_{L\to\infty}\Sinf(L) &=& 0.
    \end{eqnarray}
    We call this condition
    the \emph{safety-condition}.
    \item
{(verifiable).}
  We call $f$
  \emph{verifiable on $\U_\delta(A)$ for controlled perturbation}
  if $f$ fulfills the region- and safety-condition.
  \end{itemize}
\end{definition}
\noindent
  The region-condition guarantees the adjustability of the
  volume of the region of uncertainty.
  Note that
  the region-condition is actually a condition on the critical set.
  It states that the critical set is sufficiently ``sparse''.

  The safety-condition guarantees the adjustability of the fp-safety bound.
  It states that
  for every $\varphi>0$
  there is a precision $\LS\in\N$ with the property that
  \begin{eqnarray}\label{for-def-safety-const}
    \Sinf(L)\le\varphi
  \end{eqnarray}
  for all $L\in\N$ with $L\ge \LS$.
  We give an example of a verifiable function.
\begin{example}%
  Let $A\subset\R$ be an interval,
  let $\delta\in\R_{>0}$ and
  let $f:\U_\delta(A)\to\R$
  be a univariate polynomial\footnote{%
    We avoid the usual notation $f\in\R[x]$
    to emphasize that the domain of $f$ \emph{must be bounded.}}
  of degree $d$ with real coefficients, i.e.,
  \begin{eqnarray*}
    f(x) &=& a_d \cdot x^d + a_{d-1} \cdot x^{d-1} +
               \ldots + a_1 \cdot x + a_0.
  \end{eqnarray*}
  We show that $f$ is verifiable.
  Part 1 (region-condition).
  Because of the fundamental theorem of algebra
  (e.g., see Lamprecht~\cite{L93}),
  $f$ has at most $d$ real roots.
  Therefore the size of the critical set $C_f$ is bounded by $d$
  and the volume of the region of uncertainty
  $R_\gamma(\x)$ is upper-bounded by $2d\gamma$.
  For a given $p\in(0,1)$
  we then choose
  \begin{eqnarray*}
    \gamma &:=& \frac{(1-p)\delta}{d}
  \end{eqnarray*}
  which fulfills the region-condition because of
  \begin{eqnarray*}
    \frac{\mu(R_\gamma(\x))}{\mu(U_\delta(\x))}
    &\le& \frac{2\gamma d}{2\delta} \;\; = \;\; 1-p.
  \end{eqnarray*}
Part 2 (safety-condition).
  Corollary~\ref{col-unipolysafety} on page~\pageref{col-unipolysafety}
  provides the fp-safety bound
  \begin{eqnarray*}
    \Sinf(L) &:=& (d+2) \, \max_{1\le i\le d} |a_i| \;\, 2^{\E(d+1)+1-L}
  \end{eqnarray*}
  for univariate polynomials.
  Since $\Sinf(L)$ converges to zero as $L$ approaches infinity,
  the safety-condition is fulfilled.
  Therefore $f$ is verifiable.
  \hfill$\bigcirc$
\end{example}
  We show that,
  if a function is verifiable,
  it has a positive lower bound on its absolute value
  outside of its region of uncertainty.
\begin{lemma}\label{lem-veri-minval}
  Let $\PREDL$ be a predicate description
  and let $f$ be verifiable.
  Then for every $\gamma\in\Rplus^k$, there is $\varphi\in\R_{>0}$ with
    \begin{eqnarray}
      \label{for-def-veri-minval}
      \varphi \le |f(x)|
    \end{eqnarray}
    for all $x\in \bar{U}_\delta(\x)\setminus R_\gamma(\x)$ and
    for all $\x\in A$.
\end{lemma}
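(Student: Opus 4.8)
The statement is equivalent to the claim that
$\varphi^{*}:=\inf\{\,|f(x)| : \x\in A,\ x\in\bar{U}_\delta(\x)\setminus R_{f,\gamma}(\x)\,\}$
is strictly positive (if the index set happens to be empty the claim is vacuous and one takes $\varphi=1$), and the plan is to prove this in two moves: a pointwise non-vanishing statement, followed by a compactness argument that upgrades it to a uniform bound. For the pointwise part I would argue that $f$ has no zero in $\bar{U}_\delta(\x)\setminus R_{f,\gamma}(\x)$: if $x\in\bar{U}_\delta(\x)$ and $f(x)=0$, then $x$ is a zero of $f$, hence by Definition~\ref{def-critical-set} critical or less-critical, so $x\in C_{f,\delta}(\x)$; but then $x\in\bar{U}_\delta(\x)\cap U_\gamma(x)\subseteq R_{f,\gamma}(\x)$ by Formula~(\ref{for-region-uncert}), a contradiction. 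This step uses only the definitions; verifiability enters only to keep us in a meaningful situation, since by the region-condition the critical set is sparse and so $R_{f,\gamma}(\x)$ is a proper subset and the set in question is non-empty.

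For the uniform bound I would argue by contradiction and compactness. Suppose $\varphi^{*}=0$, and choose $\x^{(j)}\in A$ together with $x_j\in\bar{U}_\delta(\x^{(j)})\setminus R_{f,\gamma}(\x^{(j)})$ with $|f(x_j)|\to 0$. Since $\bar{U}_\delta(A)\subseteq[-2^{\E},2^{\E}]^k$ and $A\subseteq\bar{U}_\delta(A)$ are bounded, pass to a subsequence with $x_j\to c$ and $\x^{(j)}\to\x^{*}$; from $|x_{j,i}-\x^{(j)}_i|\le\delta_i$ one gets $c\in\bar{U}_\delta(\x^{*})$, and all $x_j$ and $c$ lie in the domain $\bar{U}_\delta(A)$ of $f$. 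If $x_j=c$ for infinitely many $j$ then $f(c)=0$; otherwise, on the subsequence with $x_j\ne c$, the Cauchy sequence $(x_j)$ witnesses criticality of $c$ in the sense of Formula~(\ref{for-def-crit}), because all $x_j$ close to $c$ lie in the domain of $f$. Either way $c$ is critical or a zero of $f$, hence $c\in C_{f,\delta}(\x^{(j)})$ for all large $j$, since the box $\bar{U}_\delta(\x^{(j)})$ then contains $c$ together with the near-zeros that witness its criticality. But $x_j\to c$ forces $x_j\in U_\gamma(c)$ for all large $j$, and since $x_j\in\bar{U}_\delta(\x^{(j)})$ this gives $x_j\in\bar{U}_\delta(\x^{(j)})\cap U_\gamma(c)\subseteq R_{f,\gamma}(\x^{(j)})$, contradicting the choice of $x_j$. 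Hence $\varphi^{*}>0$ and any $\varphi\in(0,\varphi^{*}]$ proves the lemma.

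The main obstacle is the bookkeeping at the end: the region of uncertainty $R_{f,\gamma}(\x^{(j)})$ and the critical set $C_{f,\delta}(\x^{(j)})$ are tied to the \emph{moving} domain $\bar{U}_\delta(\x^{(j)})$, so one must ensure that the limit point $c$, which is critical ``as seen from $\x^{*}$'', is genuinely covered by $R_{f,\gamma}(\x^{(j)})$. This is immediate when $c$ lies in the interior of $\bar{U}_\delta(\x^{*})$, because criticality is a purely local property of $f$ near $c$ and $c$ together with all nearby near-zeros then lies in $\bar{U}_\delta(\x^{(j)})$ for large $j$; the boundary case $c\in\partial\bar{U}_\delta(\x^{*})$, and the possibility that $\x^{*}\notin A$ when $A$ is not closed, are dealt with by the same local argument carried out inside the common domain $\bar{U}_\delta(A)$. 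I would stress that no regularity of $f$ is needed anywhere: Step~1 is purely set-theoretic, and in Step~2 only the hypothesis $|f(x_j)|\to 0$ is invoked, never continuity.
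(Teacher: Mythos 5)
Your argument for a \emph{fixed} $\x$ is essentially the paper's proof: assume the contrary, extract a bounded sequence in $\bar U_\delta(\x)\setminus R_\gamma(\x)$ with $|f|\to0$, note that this set is closed so an accumulation point $c$ lies in it, observe that $c$ is a zero or critical, hence $c\in C_{f,\delta}(\x)$, hence $c\in R_\gamma(\x)$ --- contradiction. The gap is in the extra uniformity over $\x\in A$ that you build into your reformulation ($\varphi^*$ as an infimum over all $\x\in A$) and try to establish with moving base points $\x^{(j)}$. The step ``$c\in C_{f,\delta}(\x^{(j)})$ for all large $j$, since the box $\bar U_\delta(\x^{(j)})$ then contains $c$'' does not follow: from $x_j\in\bar U_\delta(\x^{(j)})$ and $x_j\to c$ you only get $\mathrm{dist}\bigl(c,\bar U_\delta(\x^{(j)})\bigr)\to0$, and $c$ may lie outside every one of the moving boxes. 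Since $R_{f,\gamma}(\x^{(j)})$ in Definition~\ref{def-region-uncertainty} only collects $\gamma$-neighborhoods of critical points that lie \emph{inside} $\bar U_\delta(\x^{(j)})$, the inclusion $x_j\in U_\gamma(c)$ then yields no contradiction. Your fallback remark that the boundary case is ``dealt with by the same local argument carried out inside the common domain $\bar U_\delta(A)$'' does not address this, because the obstruction is not the domain of $f$ but membership of $c$ in the particular box that defines $C_{f,\delta}(\x^{(j)})$.

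This is not repairable bookkeeping: the uniform statement is false under the paper's definitions. Take $k=1$, $A=[0,1]$, $\delta=1$, $\E=1$ and $f(x)=x+1$ on $\U_\delta(A)=[-1,2]$. This $f$ is verifiable: within any box $\bar U_\delta(\x)$ the critical set is either empty or the single point $-1$, so the region-condition holds, and the safety-condition holds by Corollary~\ref{col-unipolysafety}. Yet for $\x=\varepsilon\in(0,1]$ the box is $[\varepsilon-1,\varepsilon+1]$, it contains no critical point, so $R_\gamma(\x)=\emptyset$ and $\inf_{x\in\bar U_\delta(\x)}|f(x)|=\varepsilon$, which tends to $0$ as $\varepsilon\to0$; no single $\varphi>0$ serves all $\x\in A$. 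So the lemma has to be read with $\varphi$ allowed to depend on $\x$ (the paper's proof indeed keeps $\x$ fixed throughout), and your opening ``equivalent'' reformulation strengthens it to a false claim. Restricted to a fixed $\x$, your Step~2 is correct and coincides with the paper's argument, and your Step~1 is then subsumed by it.
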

\begin{proof}
  We assume the opposite.
  That means,
  in particular,
  for every $i\in\N$
  there is $a_i\in\bar{U}_\delta(\x)\setminus R_\gamma(\x)$
  such that $|f(a_i)| < \frac{1}{i}$.
  Then $(a_i)_{i\in\N}$ is a bounded sequence with accumulation points in
  $\bar{U}_\delta(\x)\setminus R_\gamma(\x)$.
  Those points must be critical
  and hence belong to $R_\gamma(\x)$.
  This is a contradiction.
  \qed
\end{proof}
  Finally we prove that verifiability of functions implies applicability.
\begin{lemma}\label{lem-veri-is-app}
  Let $\PREDL$ be a predicate description
  and let $f$ be verifiable.
  Then $f$ is applicable.
\end{lemma}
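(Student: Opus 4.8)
The plan is to derive applicability (Definition~\ref{def-func-app}) from the two halves of verifiability — the region-condition and the safety-condition — together with the lower bound of Lemma~\ref{lem-veri-minval} and the discrete-versus-continuous comparison of Example~\ref{ex-density-4}. Fix a target probability $p\in(0,1)$. The three things to achieve, in order, are: pin down a region of uncertainty whose (augmented) volume is at most a $(1-p)$-fraction of the perturbation box; show that, for all sufficiently large precisions $L$, every grid point of $\U_\delta(\x)\RGL$ outside that region is guarded; and convert the volume bound into the lower bound $\PR(f\RGL)\ge p$ on the discrete success probability.

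For the first two steps I would argue as follows. Pick an intermediate value $q\in(p,1)$ and apply the region-condition to $q$, obtaining $\gamma\in\Rplus^k$ with $\mu(R_\gamma(\x))\le(1-q)\,\mu(U_\delta(\x))$ for all $\x\in A$. Since the volume of the region of uncertainty varies continuously with $\gamma$ and since $q>p$, the reserved slack lets me fix an augmentation $\AUG(\gamma)=\gamma/t$ (with $t\in(0,1)$ close enough to $1$) such that still $\mu(R_{\AUG(\gamma)}(\x))\le(1-p)\,\mu(U_\delta(\x))$ for all $\x\in A$. Next, Lemma~\ref{lem-veri-minval} supplies $\varphi\in\R_{>0}$ with $\varphi\le|f(x)|$ for every $x\in\U_\delta(\x)\setminus R_\gamma(\x)$ and every $\x\in A$, while the safety-condition supplies $\LS\in\N$ with $\Sinf(L)<\varphi$ for $L\ge\LS$. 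Consequently, for $L\ge\LS$, any $x\in\U_\delta(\x)\RGL$ lying outside $R_\gamma(\x)$ satisfies $|f(x)|\ge\varphi>\Sinf(L)$, so the fp-safety bound of Definition~\ref{def-fpsafetybound} certifies $\sign(f(x)\RFL)=\sign(f(x))$; this is precisely the setting prepared in Section~\ref{sec-over-func-argu}, where a guard may fail only inside the region of uncertainty, so such an $x$ is guarded.

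It remains to pass from real space to the grid. Choosing $\LG\in\N$ large enough that the grid unit $\tau=2^{\E-L-1}$ is small relative to the now-fixed $\gamma$, the comparison of Example~\ref{ex-density-4} — in the multivariate form deferred to Section~\ref{sec-relate-tau-gamma} — applies with $R=R_\gamma(\x)$ and augmented region $R_{\AUG(\gamma)}(\x)$: since every grid point of $\U_\delta(\x)\RGL$ outside $R_\gamma(\x)$ is guarded, we get, for every $L\ge L_p:=\max\{\LS,\LG\}$ and every $\x\in A$,
\begin{eqnarray*}
  \PR(f\RGL) &\ge& 1-\frac{\mu(R_{\AUG(\gamma)}(\x))}{\mu(U_\delta(\x))} \;\; \ge \;\; 1-(1-p) \;\; = \;\; p ,
\end{eqnarray*}
where the bound $1-p$ is uniform in $\x$, so the minimum over $\x\in A$ in the definition of $\PR(f\RGL)$ is also at least $p$; hence $f$ is applicable. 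I expect the genuine obstacle to be this final bridging step rather than anything before it: the region-condition only controls a ratio of Lebesgue measures, whereas applicability concerns the finite grid $\G_L$, and making the passage rigorous requires (i) aligning the region on which the guard is proved to succeed ($\U_\delta\setminus R_\gamma$) with the enlarged region used to bound the probability ($R_{\AUG(\gamma)}$), so that augmentation never consumes more than the reserved slack $q-p$, and (ii) the quantitative fact that, $\gamma$ being fixed, the grid becomes arbitrarily fine as $L\to\infty$ — exactly what Example~\ref{ex-density-4} and Section~\ref{sec-relate-tau-gamma} are there to provide. Everything else — the choices of $q$, $\gamma$, $\varphi$, $\LS$, $\LG$ — is routine, and uniformity over $A$ is inherited from the region- and safety-conditions and Lemma~\ref{lem-veri-minval}, all of which are stated uniformly in $\x$.
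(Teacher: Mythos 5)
Your overall architecture coincides with the paper's proof: the region-condition supplies the volume bound, Theorem~\ref{theo-validation} bridges to the grid, Lemma~\ref{lem-veri-minval} gives the positive lower bound $\varphi$, the safety-condition gives $\LS$, and the final precision is $\max\{\LS,\LG\}$. The one place where you genuinely deviate is how you produce the augmented region, and that step has a gap. You apply the region-condition at an intermediate level $q>p$ to get a normal-sized $R_\gamma$ and then enlarge it to $R_{\gamma/t}$, asserting that because $\mu(R_\gamma(\x))$ ``varies continuously with $\gamma$'' you may pick $t$ close to $1$ with $\mu(R_{\gamma/t}(\x))\le(1-p)\,\mu(U_\delta(\x))$. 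Two problems: (i) continuity of $\gamma\mapsto\mu(R_\gamma(\x))$ is established nowhere in the paper and is not automatic --- the critical set may be infinite, and for a union of boxes around an arbitrary set one only gets monotonicity for free; right-continuity needs an argument (e.g.\ that the boundary of a union of congruent open boxes has measure zero); and (ii) even granting per-$\x$ continuity, your $t$ must be chosen uniformly over all $\x\in A$, whereas the threshold below which the enlargement stays inside the reserved slack $q-p$ depends on the critical set $C_{f,\delta}(\x)$ and hence on $\x$; the region-condition gives a $\gamma$ uniform in $\x$, but no uniform modulus of continuity. So, as written, the inequality $\mu(R_{\AUG(\gamma)}(\x))\le(1-p)\,\mu(U_\delta(\x))$ for all $\x\in A$ is unsupported.

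The repair is to run the augmentation in the opposite direction, which is exactly what the paper does: apply the region-condition directly at $p$ and interpret the resulting region as the \emph{augmented} one, i.e.\ write its parameter as $\AUG(t\gamma)$ for a chosen $t\in(0,1)$. The guard argument is then carried out outside the shrunken region $R_{t\gamma}(\x)$, which is legitimate because Lemma~\ref{lem-veri-minval} is quantified over every $\gamma\in\Rplus^k$ and therefore applies to $t\gamma$ as well; Theorem~\ref{theo-validation} then bounds the grid fraction of $R_{t\gamma}$ by the volume fraction of $R_\gamma$, which is at most $1-p$ by construction. This eliminates the intermediate $q$, the continuity claim, and the uniformity issue altogether. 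Everything else in your write-up --- the comparison $\varphi>\Sinf(L)$ for $L\ge\LS$, the grid unit condition giving $\LG$ for the fixed $\gamma$, and the uniformity over $A$ of the remaining bounds --- is correct and matches the paper's argument.
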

\begin{proof}
  Let $p\in(0,1)$.
  Then the geometric success probability is bounded by $p$.
  Therefore there must be an upper bound on the volume
  of the region of uncertainty
  (see Definition~\ref{def-verifiable}).
  In addition
  there is a precision $\LG$ such that
  we may interpret this region as an augmented region $R_{\AUG(\gamma)}$
  (see Theorem~\ref{theo-validation}).
  Furthermore,
  there must be a positive lower bound on $|f|$ outside of $R_\gamma$
  (see Lemma~\ref{lem-veri-minval}).
  Moreover,
  there must be a precision $\LS$
  for which the fp-safety bound is smaller than the bound on $|f|$.
  Be aware
  that this implies
  that the guarded evaluation of $f$ is successful
  at a randomly perturbed input
  with probability at least $p$
  for every precision $L\ge \max\{\LS,\LG\}$.
  That means,
  $f$ is applicable
  (see Definition~\ref{def-func-app}).
  \qed
\end{proof}

\section[General Analysis Tool Box (Introduction)]{General Analysis Tool Box}
\label{sec-ana-tool-box}

  The general analysis tool box
  to analyze controlled perturbation algorithms
  is presented in the remainder of the paper.
  We call the presentation a {tool box}
  because its components are strictly separated from each other
  and sometimes allow alternative derivations.
  In particular,
  we present three ways
  to analyze functions.
  Here we briefly introduce the tool box
  and refer to the detailed presentation of its components
  in the subsequent sections.
  \emph{The decomposition of the analysis into well-separated components
  and their precise description is an innovation of this presentation.}
  \begin{figure}[h]\centering
    \includegraphics[width=.95\columnwidth]{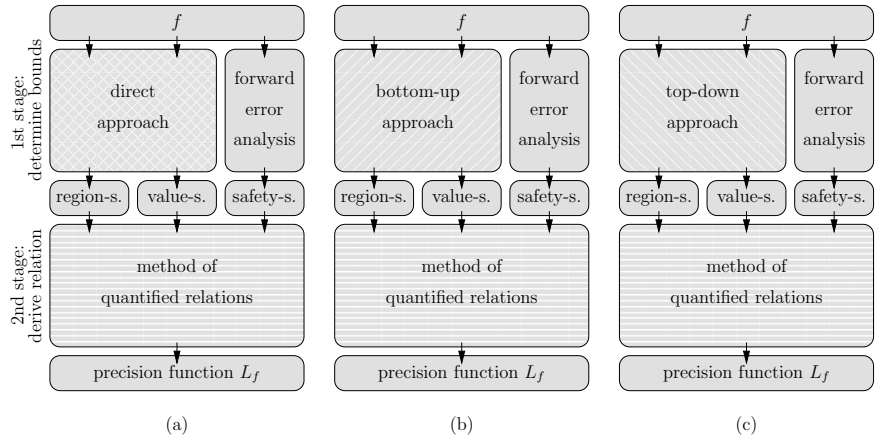}
    \caption{Illustration of the various ways to analyze functions.}
    \label{fig-illu-ana-func}
  \end{figure}

  The tool box is subdivided into components.
  At first we explain the \emph{analysis of functions}.
  The diagram in
  Figure~\ref{fig-illu-ana-func}
  illustrates three ways to analyze functions.
  We subdivide the function analysis in two stages.
  The analysis itself in the second stage requires three necessary bounds,
  also known as the \emph{interface},
  which are defined in Section~\ref{sec-nec-con-func}:
  \emph{region-suitability},
  \emph{value-suitability} and
  \emph{safety-suitability}.
  In Section~\ref{sec-meth-quan-rela}
  we introduce the \emph{method of quantified relations}
  which represents the actual analysis in the second stage.
  In the first stage,
  we pay special attention to the derivation of two bounds of the interface
  and suggest three different ways to solve the task.
  We show in Section~\ref{sec-direct-approach}
  how the bounds can be derived
  in a \emph{direct approach} from geometric measures.
  Furthermore,
  we show how to build-up the bounds for the desired function
  from simpler functions in a
  \emph{bottom-up approach}
  in Section~\ref{sec-bottom-up}.
  Moreover,
  we present a derivation of the bounds
  by means of a ``sequence of bounds'' in a \emph{top-down approach}
  in Section~\ref{sec-top-down}.
  Finally,
  we show how we can derive the third necessary bound of the interface
  with an \emph{error analysis}
  in Section~\ref{sec-guards-safetybounds}
  
  We deal with the \emph{analysis of algorithms}
  in Section~\ref{sec-ana-algo}.
  The idea is illustrated in
  Figure~\ref{fig-illu-ana-algo} on page~\pageref{fig-illu-ana-algo}.
  Again we subdivide the analysis in two stages.
  The actual analysis of algorithms
  is the \emph{method of distributed probability}
  which represents the second stage
  and is explained in Section~\ref{sec-meth-distri-prob}.
  The \emph{interface} between the stages is subdivided in two groups.
  Firstly,
  there are algorithm prerequisites
  (to the left of the dashed line in the figure).
  These bounds are defined and derived
  in Section~\ref{sec-nec-con-algo}:
  \emph{evaluation-suitability},
  \emph{predicate-suitability} and
  \emph{perturbation-suitability}.
  Secondly,
  there are predicate prerequisites
  (to the right of the dashed line in the figure).
  These are determined by means of function analyses.

\section[Justification of the Floating-point Analysis (Validation)]{Justification of Analyses in Real Space}
\label{sec-validation}

  This section addresses the problem
  to derive the success probability for floating-point evaluations
  from the success probability
  which we determine
  in real space.
  Analyses in real space are without meaning for
  controlled perturbation implementations (which use floating-point arithmetic),
  unless we determine a reliable relation
  between floating-point and real arithmetic.
  To achieve this goal,
  we introduce an additional constraint on the precision
  in Section~\ref{sec-relate-tau-gamma}
  and summarize our efforts in the determination of the success probability
  in Section~\ref{sec-over-validation}.
  \emph{This is the first presentation that
  adjusts the precision of the floating-point arithmetic to
  the shape of the region of uncertainty.}

\subsection{The Grid Unit Condition}
\label{sec-relate-tau-gamma}

  Here we adjust the distance of grid points
  (i.e., the grid unit $\tau$)
  to the ``width'' of the region of uncertainty $\gamma$.
  As we have seen in Issue~8
  in Section~\ref{sec-succ-prob},
  the grid unit $\tau$ must be sufficiently small
  (i.e., $L$ must be sufficiently large)
  to derive a reliable probability $\PR(f\RG)$
  from $\PR(f)$.
  The problem is illustrated in Figure~\ref{fig-volume-location-shape}
  on page~\pageref{fig-volume-location-shape}.
  We call this additional constraint on $L$
  the \emph{grid unit condition}
  \begin{eqnarray}\label{for-grid-unit-con}
    L &\ge& \LG
  \end{eqnarray}
  for a certain $\LG\in\N$.
  Informally,
  we demand that $\tau\ll\gamma$.
  Here we show
  how to derive the threshold $\LG$ formally.
  We refine the concept of the augmented region of uncertainty
  which we have mentioned briefly in Section~\ref{sec-succ-prob}.
  The discussion of Issue~7
  suggests an additive augmentation $\gamma=\AUG(\gamma')$
  that fulfills
  \begin{eqnarray*}
    \tau_0
      &\stackrel{(I)}{\le}&
        \gamma'_i
        \;\; \stackrel{(II)}{\le} \;\;
        \gamma_i-\tau_0
  \end{eqnarray*}
  for all $1\le i\le k$
  where $\tau_0$ is an upper bound on the grid unit.
  However,
  in the analysis
  it is easier to handle a multiplicative augmentation
  \begin{eqnarray*}
    \gamma
      &\stackrel{(III)}{:=}&
        \frac{\gamma'}{t}
  \end{eqnarray*}
  for a factor $t\in(0,1)$,
  that means,
  we define $\AUG(\gamma') := \frac{\gamma'}{t}$.
  We call $\frac{1}{t}\label{def-t-inline}$
  the \emph{augmentation factor} for the region of uncertainty.
  Together this leads to the implications
  \begin{eqnarray*}
    \text{$(I)$ and $(III)$}
    \FORMSEP&\Rightarrow&\FORMSEP
    \tau_0 \;\;\le\;\; t \cdot \min_{1\le i\le k} \gamma_i \\
    \text{$(II)$ and $(III)$}
    \FORMSEP&\Rightarrow&\FORMSEP
    \tau_0 \;\;\le\;\; (1-t) \cdot \min_{1\le i\le k} \gamma_i \\
    \text{and consequently}
    \FORMSEP&\Rightarrow&\FORMSEP
    \tau_0 \;\;\stackrel{(IV)}{\le}\;\; \min\left\{t,1-t\right\} \cdot \min_{1\le i\le k} \gamma_i
  \end{eqnarray*}
  Furthermore,
  we demand that $\tau_0$ is a power of $2$
  which turns $(IV)$ into the equality
  \begin{eqnarray*}
    \tau_0
      &\stackrel{(V)}{=}&
      2^{\left\lfloor
           \log_2 
	   \left(
	     \min\left\{t,1-t\right\} \cdot \min_{1\le i\le k} \gamma_i
	   \right)
         \right\rfloor}.
  \end{eqnarray*}
  Due to
  Formula~(\ref{def-tau-inline})
  in Definition~\ref{def-grid-points}
  we also know that
  \begin{eqnarray*}
    \tau_0
      &\stackrel{(VI)}{=}& 2^{\E-\LG-1}.
  \end{eqnarray*}
  Therefore we can deduce $\LG$ from $(V)$ and $(VI)$ as
  \begin{eqnarray}\label{for-def-lgrid}
    \LG(\gamma) &:=&
      \E
      -1
      - \left\lfloor
          \log_2 \left(
		    \min\left\{t,1-t\right\} \cdot \min_{1\le i\le k} \gamma_i
	         \right)
        \right\rfloor.
  \end{eqnarray}
  As an example,
  for $t=\frac{1}{2}$
  we obtain
  $\LG(\gamma) = \E - \lfloor \log_2 \min_{1\le i\le k} \gamma_i \rfloor$.
  We refine the notion of a predicate description.
\begin{definition}\label{def-predi-con-3}
  We extend Definition~\ref{def-predi-con-2} and
  call $\PREDGt$ a \emph{predicate description} if:
  8. $t\in(0,1)$.
\end{definition}
  Now we are able to summarize the construction above.
\begin{theorem}\label{theo-validation}
  Let $\PREDGt$ be a predicate description.
  Then
  \begin{eqnarray}\label{for-grid-impact}
    \frac{\mu\left(R_{\gamma}(\x)\right)}{\mu\left(U_\delta(\x)\right)}
    \ge
    \frac{\CARDI{R_{t\gamma}(\x)\RGL}}{\CARDI{U_\delta(\x)\RGL}}
  \end{eqnarray}
  for all precisions $L\ge\LG(\gamma)$
  where
  $\LG$ is defined in Formula~(\ref{for-def-lgrid}).
\end{theorem}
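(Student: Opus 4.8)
The plan is to reduce~(\ref{for-grid-impact}) to the one–dimensional counting comparison of Example~\ref{ex-density-3}, carried out coordinate by coordinate, once the role of the grid unit condition has been isolated.

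\emph{Step 1: unfold $L\ge\LG(\gamma)$.} Let $\tau=2^{\E-L-1}$ be the grid unit of $\GLKE$ (Formula~(\ref{for-def-tau})) and put $\tau_0=2^{\E-\LG(\gamma)-1}$. From Formula~(\ref{for-def-lgrid}) and the displayed derivation preceding Definition~\ref{def-predi-con-3}, $\tau_0=2^{\lfloor\log_2(\min\{t,1-t\}\min_i\gamma_i)\rfloor}\le\min\{t,1-t\}\min_i\gamma_i$, and $L\ge\LG(\gamma)$ is precisely $\tau\le\tau_0$. Hence $\tau\le t\gamma_i$ and, decisively, $\tau\le(1-t)\gamma_i$ for every $i$, and $\tau$ divides $\tau_0$. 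This is the only way the hypothesis enters.

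\emph{Step 2: the augmentation swallows a grid cell.} For any centre $c$ and any grid point $g\in U_{t\gamma}(c)$, the axis-parallel cell $g+[-\tfrac{\tau}{2},\tfrac{\tau}{2})^k$, of volume $\tau^k$, lies in $U_\gamma(c)$: if $|g_i-c_i|<t\gamma_i$ then any point $x$ of the cell satisfies $|x_i-c_i|\le\tfrac{\tau}{2}+t\gamma_i<\tau+t\gamma_i\le(1-t)\gamma_i+t\gamma_i=\gamma_i$ by Step~1; the same estimate shows the cell may be translated by any vector of sup-norm $\le\tfrac{\tau}{2}$ and still stay inside $U_\gamma(c)$. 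Intersecting with $\bar U_\delta(\x)$ and taking unions over $c\in C_{f,\delta}(\x)$ gives the inclusion $\bigl(\bigcup_c U_{t\gamma}(c)\bigr)\oplus[-\tfrac{\tau}{2},\tfrac{\tau}{2}]^k\subseteq\bigcup_c U_\gamma(c)$, so that, up to the cut-off by $\bar U_\delta(\x)$, the passage $t\gamma\mapsto\gamma$ plays exactly the role of the additive augmentation in Example~\ref{ex-density-3} that widens every coordinate interval by $\tfrac{\tau}{2}$ on each side---with room to spare, and with the union over critical points passing through unharmed.

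\emph{Step 3: coordinate-wise assembly.} With Step~2 in hand I would run Example~\ref{ex-density-3} one axis at a time. Writing $\bar U_\delta(\x)=J_1\times\dots\times J_k$ and slicing along grid hyperplanes $x_k=g_k$ (for $g_k$ a grid value in $J_k$): the $x_k=g_k$ slice of $R_{t\gamma}(\x)$ is assembled from the boxes $U_{t\gamma}(c)$ with $|c_k-g_k|<t\gamma_k$, and for every $x_k$ within $\tfrac{\tau}{2}$ of $g_k$ its $\tfrac{\tau}{2}$-extension in the remaining coordinates sits inside the $x_k$-slice of $R_\gamma(\x)$ (Step~2). Summing grid points over $g_k$ produces the $k$-th factor of $\CARDI{R_{t\gamma}(\x)\RGL}$, integrating over $x_k$ the matching factor of $\mu(R_\gamma(\x))$, and the single-interval inequality $\CARDI{R\RG}/\CARDI{U\RG}\le\mu(\RAUG)/\mu(U)$ of Example~\ref{ex-density-3} applied to the $k$-th factor, combined with the inductive hypothesis for $J_1\times\dots\times J_{k-1}$ and the (plain product) treatment of the denominator $U_\delta(\x)$, yields~(\ref{for-grid-impact}).

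\emph{Where the difficulty sits.} The real work is at the boundary of $\bar U_\delta(\x)$---which is exactly why the earlier sections laboured over finiteness: grid points of $R_{t\gamma}(\x)$ lying within $\tfrac{\tau}{2}$ of a face of the box (whose cell must be shifted inward, using the slack of Step~2), the open-versus-closed convention for $U_\delta(\x)\RGL$, and the passage $\mu(U_\delta(\x))\le\tau^k\,\CARDI{U_\delta(\x)\RGL}$. Just as in Example~\ref{ex-density-3}, these off-by-one effects come out in the safe direction precisely because each edge $2\delta_i$ of the perturbation box is an integer multiple of the grid unit, so that $\mu(J_i)=\lambda_{0,i}\tau$ while $\CARDI{J_i\RGL}\ge\lambda_{0,i}$; I would make this alignment (and the fact that the box is wide compared with $\tau$, which follows from $\tau\le t\gamma_i$) explicit at the outset, citing the construction of the perturbation area. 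The non-product shape of $R_\gamma(\x)$ is the only other annoyance, and it is already disposed of by the Minkowski-sum inclusion of Step~2.
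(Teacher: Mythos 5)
Your Steps 1 and 2 are correct and do capture the mechanism the paper gestures at (note that the paper itself states Theorem~\ref{theo-validation} without a formal proof, as a ``summary'' of the derivation of $\LG$ and of the one-interval Example~\ref{ex-density-3}, so you are trying to supply what is left implicit). The gap is in Step 3, which is exactly where the content of the theorem lives. First, nothing factors: $R_{t\gamma}(\x)$ is not a product set, so there is no ``$k$-th factor of $\CARDI{R_{t\gamma}(\x)\RGL}$''; a slice-wise induction can in principle be set up (the slice of $R_{t\gamma}$ at a grid level $g_k$ is the $(k-1)$-dimensional normal-sized region of the critical points with $|c_k-g_k|<t\gamma_k$, and for $|x_k-g_k|\le\tau/2$ the slice of $R_\gamma$ at $x_k$ contains the corresponding augmented region), but this has to be stated and then recombined by integrating over the $\tau/2$-neighbourhoods of the grid hyperplanes inside $[\x_k-\delta_k,\x_k+\delta_k]$, at which point the one-dimensional boundary problem reappears on axis $k$ untouched. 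Second, your device for that boundary problem fails: shifting the cell of a grid point lying within $\tau/2$ of a face of $\bar U_\delta(\x)$ does keep it inside $U_\gamma(c)$ (your Step 2), but it destroys disjointness from the cells of the neighbouring grid points, so the shifted cells can no longer be summed to lower-bound $\mu(R_\gamma(\x))$. The clipping loss has to be recovered instead from measure of the augmented region that lies outside the union of cells, using the margin $(1-t)\gamma_i\ge\tau$; that is a different and case-heavy argument which your sketch does not contain.

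The second concrete gap is the denominator estimate $\mu(U_\delta(\x))\le\tau^k\,\CARDI{U_\delta(\x)\RGL}$, which you justify by asserting that each edge $2\delta_i$ is an integer multiple of the grid unit. The paper never provides this: $\delta\in\R_{>0}^k$ is an arbitrary positive vector in the predicate description, and $\x_i$ need not lie on $\G$ (only floats in the topmost binade are multiples of $\tau$), so a closed edge of length $2\delta_i$ may contain as few as $\lceil 2\delta_i/\tau\rceil-1<2\delta_i/\tau$ grid values; the length $\lambda_0\tau$ in Example~\ref{ex-density-3} is a hypothesis of that example, not a property of $U_\delta(\x)$. So you must either add this alignment as an explicit hypothesis (which changes the statement) or absorb the resulting per-axis deficit, of order $(1+1/n_i)$ in the count along axis $i$, using the same $(1-t)\gamma_i\ge\tau$ slack or the standing proviso that $\hat\gamma$ is small compared with $\delta$ --- and your argument does neither. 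As written, the proposal establishes the inequality only in the interior, grid-aligned situation in which no cell meets a face of the box and every edge is a multiple of $\tau$; the misaligned and boundary cases, which are precisely what the grid unit condition was introduced to control, remain open.
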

\begin{remark}
  We add some remarks on the grid unit condition.
  
  1.
  Unequation~(\ref{for-grid-impact}) guarantees that
  the success probability for grid points is at least
  the success probability that is derived from the volumes of areas.
  This justifies the analysis in real space at last.

  2.
  Be aware that
  the grid unit condition is a fundamental constraint:
  It does not depend on the function that realize the predicate,
  the dimension of the (projected or full) perturbation area,
  the perturbation parameter or
  the critical set.
  The threshold $\LG$
  mainly depends on
  the augmentation factor $\frac{1}{t}$ and
  $\gamma$.
  In particular
  we observe that an additional bit of the precision is sufficient
  to fulfill the grid unit condition for $\frac{\gamma}{2}$, i.e.
  \begin{eqnarray*}
    \LG\left(\frac{\gamma}{2}\right)
      &=&
        \LG(\gamma) + 1.
  \end{eqnarray*}

  3.
  We have defined the region of uncertainty $R_f$
  by means of axis-parallel boxes $U_\gamma(c)$
  for $c\in C_f$
  in Definition~\ref{def-region-uncertainty}.
  If $R_f$ is defined in a different way,
  we must appropriately adjust
  the derivation of $\LG$ in this section.

  4.
  We observe that the grid unit condition solves
  Issue~8 from Section~\ref{sec-succ-prob}.
  Now we reconsider the example in Figure~\ref{fig-volume-location-shape}
  on page~\pageref{fig-volume-location-shape}.
  We observe
  that the grid unit in Picture~(a) fulfills the grid unit condition
  whereas the condition fails in Pictures~(b) and~(c).
  Obviously, $\tau\gg\gamma$ in the latter cases.
  \hfill$\bigcirc$
\end{remark}

\subsection{Overview: Prerequisites of the Validation}
\label{sec-over-validation}

  It is important to see that the analysis
  \emph{must} reflect the behavior of
  the underlying floating-point implementation
  of a controlled perturbation algorithms
  to gain a meaningful result.
  Only for the purpose to achieve this goal,
  we have introduced some principles
  that we summarize below.
  The items are meant to be reminders, not explanations.
  \begin{itemize}
    \item
      We guarantee that the perturbed input lies on the grid $\G$.
    \item
      We analyze an augmented region of uncertainty.
    \item
      The region of uncertainty
      is a union of axis-parallel boxes and, especially,
      intervals in the 1-dimensional case.
      There are lower bounds on the measures of the box.
    \item
      The grid unit condition is fulfilled.
    \item
      We do not exclude isolated points,
      unless we can prove that their exclusion does not change the
      floating-point probability.
      It is always safe to exclude environments of points.
    \item
      We analyze $\eta$ runs of $\AG$ at a time
      (see Section~\ref{sec-meth-distri-prob}).
  \end{itemize}
  With this principles at hand
  we are able to derive a valid analysis in real space.

\section[Necessary Conditions for the Analysis of Functions (Interface)]
{Necessary Conditions for the Analysis of Functions}
\label{sec-nec-con-func}

  The method of quantified relations,
  which is introduced in the next section,
  actually performs the analysis of real-valued functions.
  Here we prepare its applicability.
  In Section~\ref{sec-analyzability-func}
  we present three necessary conditions:
  the \emph{region-, value- and safety-suitability.}
  Together these conditions are also sufficient to apply the method.
  Because these conditions are deduced in the first stage
  of the function analysis
  (see Section~\ref{sec-direct-approach}--\ref{sec-guards-safetybounds})
  and are used in the second stage
  (see Section~\ref{sec-meth-quan-rela}),
  we also refer to them as the interface between the two stages
  (see Figure~\ref{fig-illu-interface}).
  \emph{This is the first time
  that we precisely define the prerequisites of the function analysis.}
  The definitions are followed by an example.
  In Section~\ref{sec-over-func-prop}
  we summarize all function properties.
\begin{figure}[h]\centering
  \includegraphics[width=.95\columnwidth]{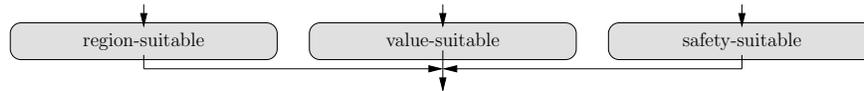}
  \caption{The interface between the two stages
    of the analysis of functions.}
  \label{fig-illu-interface}
\end{figure}

\subsection{Analyzability of Functions}
\label{sec-analyzability-func}

  Here we define and explain the three function properties
  that are necessary for the analysis.
  Their associated bounding functions constitute
  the interface between the two stages.
  Informally, the properties have the following meanings:
\begin{itemize}
  \item
    We can reduce the volume of the region of uncertainty
    to any arbitrarily small value (region-suitability).
  \item
    There are positive and finite limits on the absolute value of $f$
    outside of the region of uncertainty (value-suitability).
  \item
    We can reduce the rounding error in the floating-point evaluation of $f$
    to any arbitrarily small value
    (safety-suitability).
\end{itemize}

\subsection*{The region-suitability}

  The region-suitability is a geometric condition
  on the neighborhood of the critical set.
  We demand that
  we can adjust the volume of the region of uncertainty
  to any arbitrarily small value.
  For technical reasons we need an invertible bound.
\begin{definition}[region-suitable]\label{def-region-suit}
  Let $\PREDL$ be a predicate description.
  We call $f$ \emph{region-suitable}
  if the critical set of $f$ is either empty
  or if there is an invertible upper-bounding function\footnote{%
    Instead of $\nu_f$
    we can also use its complement $\chi_f$.
    See the following Remark~\ref{rem-def-region-suit}.4 for details.}
  \begin{eqnarray*}\label{def-nu-inline}
    \nu_f : \GAL \to\R_{>0}
  \end{eqnarray*}
  on the volume of the region of uncertainty
  that has the property:
  For every $p\in(0,1)$ there is $\gamma\in\GAL$ such that
  \begin{eqnarray}\label{for-defregsuit}
     \frac{\mu(R_\gamma(\x))}{\mu(U_\delta(\x))}
     &\le& \frac{\nu_f(\gamma)}{\mu(U_\delta(\x))}
     \;\;\le\;\; (1-p)
  \end{eqnarray}
  for all $\x\in A$.
\end{definition}
\begin{remark}\label{rem-def-region-suit}
  We add several remarks on the definition above.

  1. Region-suitability is related to the region-condition
    in the following way:
    The criterion for region-suitability results from the
    replacement of $\mu(R_\gamma(\x))$
    in Formula~(\ref{for-def-region-const})
    with a function $\nu_f$.
    This changes the region-condition in Definition~\ref{def-verifiable} %
    into a quantitative bound.

  2. Of course,
    controlled perturbation cannot work
    if the region of uncertainty covers the entire perturbation area of $\x$.
    We have said that
    we consider $\gamma\in\GALG$
    for a ``sufficiently'' small $\hat\gamma\in\R_{>0}^k$.
    That means formally,
    we postulate
    $\nu(\hat\gamma) \ll \mu(U_\delta(\x))$.
    To keep the notation as plain as possible,
    we are aware of this fact
    and do not make this condition explicit in our statements.

  3. The invertibility of the bonding function $\nu_f$ is essential for
    the method of quantified relations
    as we see in the proof of Theorem~\ref{theo-quan-rela}.
    There it is used to deduce
    the parameter $\gamma$
    from the volume of the region of uncertainty---with
    the exception of an empty critical set
    which does not imply any restriction on $\gamma$.

  4a. The function
    $\nu_f$ provides an upper bound on the volume of the
    region of uncertainty within the perturbation area of $\x$.
    Sometimes it is more convenient to consider its complement
    \begin{eqnarray}
      \label{def-chi-region-suit}
      \chi_f(\gamma)
      &:=&
       \mu\left(U_\delta(\x)\right) \, - \, \nu_f(\gamma).
    \end{eqnarray}
    The function $\chi_f(\gamma)$
    provides a lower bound on the volume of
    the \emph{region of provable fp-safe inputs.}

  4b. The special case $\nu_f\equiv 0$
    corresponds to the special case
    $\chi_f\equiv \mu\left(U_\delta(\x)\right)$.
    Then the critical set is empty and
    there is no region of uncertainty.
    This implies that $\varphi_f(\gamma)$
    can also be chosen as a constant function
    (see the value-suitability below).

  4c. Based on Formula~(\ref{def-chi-region-suit}),
    we can demand the existence of an invertible function
    $\chi_f:\GAL\to\R_{>0}$
    instead of $\nu_f$
    in the definition of region-suitability.
    That means,
    either $\chi_f\equiv\mu\left(U_\delta(\x)\right)$
    or $\chi_f:\GAL\to\R_{>0}$ in an invertible function.

  5. We make the following observations about region-suitability:
    (a) If the critical set is finite, $f$ is region-suitable.
    (b) If the critical set contains an open set, $f$ cannot be region-suitable.
    (c) If the critical set is a set of measure zero,
        it does not imply that $f$ is region-suitable.
	Be aware that these properties are not equivalent:
	If $f$ is region-suitable,
	the critical set is a set of measure zero.
	But a critical set of measure zero
	does not necessarily imply that $f$ is region-suitable:
	In topology we learn that $\Q$ is dense\footnote{%
	  Topology: ``$\Q$ is dense in $\R$''
	  means that $\overline{\Q}=\R$.
	  For example, see Jänich~\cite[p.~63]{J01}.}
	in $\R$;
	hence any open $\varepsilon$-neighborhood of $\Q$ equals $\R$.
	In set theory we learn that\footnote{%
	  Set Theory:
	  Cardinalities of (infinite) sets are denoted by $\aleph_i$.
	  For example, see Deiser~\cite[162ff]{D04}.}
	$|\Q|=\aleph_0 < 2^{\aleph_0}=|\R|$;
	hence $f$ cannot be region-suitable if the critical set
	is (locally) ``too dense.''
  \hfill$\bigcirc$
\end{remark}

\subsection*{The inf-value-suitability}

  The inf-value-suitability is a condition on the behavior of
  the function $f$.
  We demand that
  there is a positive lower bound on the absolute value of $f$
  outside of the region of uncertainty.
\begin{definition}[inf-value-suitable]\label{def-value-suit}
  Let $\PREDL$ be a predicate description.
  We call $f$ \emph{(inf-)value-suitable}
  if there is a lower-bounding function
  \begin{eqnarray*}\label{def-varphi-inline}
    \varphiinf : \GAL \to\Rplus
  \end{eqnarray*}
  on the absolute value of $f$
  that has the property:
  For every $\gamma\in\GAL$, we have
  \begin{eqnarray}
      \varphiinf(\gamma)
        &\le& |f(x)|
  \end{eqnarray}
  for all $x\in \bar{U}_\delta(\x)\setminus R_\gamma(\x)$ and
  for all $\x\in A$.
\end{definition}
  We extend this definition
  by an upper bound on the absolute value of $f$
  in Section~\ref{sec-rational-function}
  and call this property sup-value-suitability;
  until then
  we call the inf-value-suitability simply the value-suitability and
  also write $\varphi_f$ instead of $\varphiinf$.
  The criterion for value-suitability results from
  the replacement of the constant $\varphi$
  in Formula~(\ref{for-def-veri-minval})
  with the bounding function $\varphi_f$.
  This changes the existence statement of
  Lemma~\ref{lem-veri-minval} into a quantitative bound.

\subsection*{The inf-safety-suitability}

  The inf-safety-suitability is a condition on the error analysis
  of the floating-point evaluation of $f$.
  We demand that
  we can adjust the rounding error in the evaluation of $f$
  to any arbitrarily small value.
  For technical reasons we demand an invertible bound.\footnote{%
    We leave the extension to non-invertible or discontinuous
    bounds to the reader;
    we do not expect that there is any need in practice.}
\begin{definition}[inf-safety-suitable]\label{def-safety-suit}
  Let $\PRED$ be a predicate description.
  We call $f$ \emph{(inf-)safety-suitable}
  if there is an injective fp-safety bound $\Sinf(L):\N\to\R_{>0}$
  that fulfills the safety-condition
  in Formula~(\ref{for-safety-cond})
  and if
  \begin{eqnarray*}
    \Sinf^{-1} : (0,\Sinf(1)] \to \R_{>0}.
  \end{eqnarray*}
  is a strictly monotonically decreasing real continuation of its inverse.
\end{definition}
  We extend the definition by sup-safety-suitability
  in Section~\ref{sec-rational-function};
  until then
  we call the inf-safety-suitability simply the safety-suitability.

\subsection*{The analyzability}

  Based on the definitions above,
  we next define analyzability,
  relate it to verifiability and
  give an example for the definitions.
\begin{definition}[analyzable]\label{def-analyzable}
  We call $f$ \emph{analyzable}
  if it is region-, value- and safety-suitable.
\end{definition}
\begin{lemma}
  \label{lem-ana-is-veri}
  Let $f$ be analyzable. Then $f$ is verifiable.
\end{lemma}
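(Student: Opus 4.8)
The plan is to show that an analyzable function $f$ satisfies both the region-condition and the safety-condition of Definition~\ref{def-verifiable}, since verifiability is by definition the conjunction of these two conditions. Each of the two implications is essentially a matter of forgetting the extra quantitative information that analyzability provides (the explicit bounding functions $\nu_f$, $\varphi_f$, $\Sinf$) and keeping only the existential statement.

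First I would prove the region-condition. Let $p\in(0,1)$ be given. If the critical set of $f$ is empty, then $R_\gamma(\x)=\emptyset$ for every $\gamma$, so $\mu(R_\gamma(\x))/\mu(U_\delta(\x))=0\le 1-p$ and any $\gamma\in\Rplus^k$ works. Otherwise, region-suitability (Definition~\ref{def-region-suit}) provides an invertible bound $\nu_f:\GAL\to\R_{>0}$ such that for this particular $p$ there is $\gamma\in\GAL\subset\Rplus^k$ with
\begin{eqnarray*}
  \frac{\mu(R_\gamma(\x))}{\mu(U_\delta(\x))}
  \;\le\;
  \frac{\nu_f(\gamma)}{\mu(U_\delta(\x))}
  \;\le\;
  1-p
\end{eqnarray*}
for all $\x\in A$. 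This is exactly Formula~(\ref{for-def-region-const}), so the region-condition holds.

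Next I would prove the safety-condition. By safety-suitability (Definition~\ref{def-safety-suit}), there is an injective fp-safety bound $\Sinf:\N\to\R_{>0}$ that already fulfills the safety-condition of Formula~(\ref{for-safety-cond}), namely $\lim_{L\to\infty}\Sinf(L)=0$. Since $\Sinf$ maps into $\R_{>0}$, this is precisely what Definition~\ref{def-verifiable} demands for the safety-condition. Combining the two parts, $f$ fulfills both the region- and the safety-condition, hence $f$ is verifiable on $\U_\delta(A)$ for controlled perturbation.

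I do not anticipate a genuine obstacle here: the lemma is a bookkeeping statement whose only content is that analyzability is the ``quantitative refinement'' of verifiability, so the proof is a short unwinding of the two definitions. The one point that merits a word of care is the empty-critical-set case in the region-condition, which must be handled separately because Definition~\ref{def-region-suit} only furnishes the bounding function $\nu_f$ when the critical set is nonempty; I would dispatch it with the trivial observation above. The rest is immediate from the definitions cited.
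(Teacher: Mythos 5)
Your proof is correct and follows essentially the same route as the paper's: analyzability entails region-suitability and safety-suitability, and these directly yield the region-condition and safety-condition of Definition~\ref{def-verifiable}, i.e.\ verifiability. Your explicit handling of the empty-critical-set case is a small extra detail the paper's proof glosses over, but it does not change the argument.
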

\begin{proof}
  If $f$ is analyzable, $f$ is especially region-suitable.
  Then the region-condition in Definition~\ref{def-verifiable}
  is fulfilled because of the bounding function $\nu_f$.
  In addition $f$ must also be safety-suitable.
  Then the safety-condition in Definition~\ref{def-verifiable}
  is fulfilled because of the bounding function $\Sinf$.
  Together both conditions imply that $f$ is verifiable.
  \qed
\end{proof}
  We support the definitions above with the example
  of univariate polynomials.
  Because we refer to this example later on,
  we formulate it as a lemma.
\begin{lemma}\label{lem-uni-poly-ana}
  Let $f$ be the univariate polynomial
  \begin{eqnarray}\label{for-unipolyrep}
    f(x) &=& a_d \cdot x^d + a_{d-1} \cdot x^{d-1} +
               \ldots + a_1 \cdot x + a_0
  \end{eqnarray}
  of degree $d$
  and let $\PREDL$ be a predicate description for $f$.
  Then $f$ is analyzable on $\U_\delta(A)$
  with the following bounding functions
  \begin{eqnarray}
    \nu_f(\gamma) &:=& 2d\gamma \label{for-unipolynu} \\
    \varphi_f(\gamma) &:=& |a_d| \cdot \gamma^d \label{for-unipolyphi} \\
    \Sinf(L) &:=& (d+2) \, \max_{1\le i\le d} |a_i| \;\, 2^{\E(d+1)+1-L}. \nonumber
  \end{eqnarray}
\end{lemma}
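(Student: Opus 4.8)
The plan is to check the three defining conditions of analyzability (Definition~\ref{def-analyzable}) in turn --- region-suitability, (inf-)value-suitability, and (inf-)safety-suitability --- using the factorization of $f$ over $\C$. By the fundamental theorem of algebra, $f(x)=a_d\prod_{j=1}^{d}(x-z_j)$ with $a_d\neq 0$ and $z_1,\dots,z_d\in\C$; put $c_j:=\RE(z_j)$. The set of centres I would use to build the region of uncertainty is $C:=\{c_1,\dots,c_d\}$, which coincides with the set of real zeros of $f$ when all roots are real; since a conjugate pair contributes a single centre, $\CARDI{C}\le d$ in every case. For \emph{region-suitability} this is immediate: in dimension one each $U_\gamma(c)$ is an interval of length $2\gamma$, so $\mu(R_\gamma(\x))\le 2\CARDI{C}\gamma\le 2d\gamma=\nu_f(\gamma)$ for all $\x\in A$; the map $\gamma\mapsto 2d\gamma$ is an increasing bijection, hence invertible in the required sense, and tends to $0$ with $\gamma$, so for any $p\in(0,1)$ a small enough $\gamma\in\GAL$ makes $\nu_f(\gamma)/\mu(U_\delta(\x))\le 1-p$ uniformly in $\x$, which is exactly Formula~(\ref{for-defregsuit}). (If $C=\emptyset$ the condition holds vacuously.)

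For \emph{value-suitability}, fix $\gamma\in\GAL$, $\x\in A$, and $x\in\U_\delta(\x)\setminus R_\gamma(\x)$. For each $j$, either $c_j$ is a centre of $R_\gamma(\x)$, in which case $x\notin U_\gamma(c_j)$ forces $|x-c_j|\ge\gamma$, or $c_j$ is so far from $\U_\delta(\x)$ that $|x-c_j|>\hat\gamma\ge\gamma$ anyway; in both cases $|x-c_j|\ge\gamma$. Since $|x-z_j|^{2}=(x-c_j)^{2}+(\mathrm{Im}\,z_j)^{2}\ge(x-c_j)^{2}$, also $|x-z_j|\ge\gamma$. Multiplying, $|f(x)|=|a_d|\prod_{j=1}^{d}|x-z_j|\ge|a_d|\gamma^{d}=\varphi_f(\gamma)$, uniformly in $\x\in A$, which is the bound required by Definition~\ref{def-value-suit}.

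For \emph{safety-suitability} I would invoke Corollary~\ref{col-unipolysafety}, which supplies the fp-safety bound $\Sinf(L)=(d+2)\,\max_{1\le i\le d}|a_i|\;2^{\E(d+1)+1-L}$ (with $d\ge 1$ the maximum runs over a non-empty index set and is positive, so $\Sinf(L)>0$). As a function of $L$ it has the shape $\mathrm{const}\cdot 2^{-L}$, hence is strictly decreasing and injective on $\N$ with $\lim_{L\to\infty}\Sinf(L)=0$, i.e. it satisfies the safety-condition~(\ref{for-safety-cond}); and $\Sinf^{-1}(s)=\E(d+1)+1-\LOG\!\bigl(s/((d+2)\max_{1\le i\le d}|a_i|)\bigr)$ is a strictly decreasing continuous function on $(0,\Sinf(1)]$. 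Thus $f$ is safety-suitable, and together with the two previous items $f$ is analyzable.

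The step I expect to be the real obstacle is the choice of $C$ underlying value-suitability. Taken literally, Definition~\ref{def-critical-set} classifies only the \emph{real} zeros of $f$ as critical; but a complex root $z_j$ with very small imaginary part makes $|f|$ tiny (though nonzero) on the real line near $\RE(z_j)$, that point is then not removed by $R_\gamma(\x)$, and no bound of the form $|a_d|\gamma^{d}$ can hold there. The resolution --- which I believe is exactly what the refined critical set (see Page~\pageref{def-critical-set-second}) is for, and which one can in any event impose by hand --- is to enlarge $C$ so that it contains the real parts of \emph{all} roots, together with any real roots lying within $\hat\gamma$ of $\U_\delta(\x)$. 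This enlargement still leaves $\CARDI{C}\le d$, so the region bound $\nu_f(\gamma)=2d\gamma$ is untouched, while it is precisely what validates $|x-c_j|\ge\gamma$ (hence $|x-z_j|\ge\gamma$) for every root on $\U_\delta(\x)\setminus R_\gamma(\x)$. Making this choice explicit at the outset is the one point I would be careful about; the rest of the argument is the routine computation sketched above.
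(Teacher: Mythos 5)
Your proposal is correct and follows essentially the same route as the paper: factorize $f$ over $\C$, project the roots onto the real axis, take these real parts as the (at most $d$) centres of the region of uncertainty to get $\nu_f(\gamma)=2d\gamma$ and $|f(x)|\ge|a_d|\gamma^d$ outside it, and quote Corollary~\ref{col-unipolysafety} for $\Sinf$. The "obstacle" you flag is resolved in the paper exactly as you propose --- it simply \emph{sets} $C_f(\x):=\{\RE(\zeta_i):1\le i\le d\}\cap\U_\delta(\x)$, enlarging the literal critical set of Definition~\ref{def-critical-set}, which the framework explicitly permits as long as region-suitability is preserved.
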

\begin{proof}
  For a moment
  we consider the complex continuation of the polynomial,
  i.e.\ $f\in\C[z]$.
  Because of the fundamental theorem of algebra
  (e.g., see Lamprecht~\cite{L93}),
  we can factorize $f$ in the way
  \begin{eqnarray*}
    f(z) &=& a_d \cdot \prod_{i=1}^d (z-\zeta_i)
  \end{eqnarray*}
  since $f$ has $d$ (not necessarily distinct) roots $\zeta_i\in\C$.
  Now let $\gamma\in\R_{>0}$.
  Then we can lower bound the absolute value of $f$ by
  \begin{eqnarray*}
    |f(z)| &\ge& |a_d| \cdot \gamma^d
  \end{eqnarray*}
  for all $z\in\C$
  whose distance to every (complex) root of $f(z)$ is at least $\gamma$.
  Naturally, the last estimate
  is especially true for real arguments $x$
  whose distance to the orthogonal projection of the complex roots $\zeta_i$
  onto the real axis is at least $\gamma$.
  So we set the critical set to\footnote{%
    Complex Analysis:
    The function $\RE(z)$ maps a complex number $z$ to its real part.
    For example, see Fischer et al.~\cite{FL05}.}
  $C_f(\x):=\{ \RE(\zeta_i) : 1\le i\le d\} \cap \U_\delta(\x)$.
  This validates the bound $\varphi_f$
  and implies that $f$ is value-suitable.

  Furthermore,
  the size of $C_f$ is upper-bounded by $d$ for all $\x\in A$.
  This validates the bound $\nu_f$.
  Because $\nu_f$ is invertible,
  $f$ is region-suitable.

  The bounding function $\Sinf(L)$ is proven
  in Corollary~\ref{col-unipolysafety}
  in Section~\ref{sec-guards-safetybounds}.
  Because $\Sinf(L)$ is invertible, $f$ is also safety-suitable.
  As a consequence, $f$ is \emph{analyzable} with the given bounds.
  \qed
\end{proof}
  We admit that we have chosen a quite simple example.
  But a more complex example would have been a waste of energy since
  we present
  \emph{three general approaches to derive the bounding functions
  for the region- and value-suitability}
  in Sections~\ref{sec-direct-approach},~\ref{sec-bottom-up}
  and~\ref{sec-top-down}.
  That means, for more complex examples we use more convenient tools.
  A well-known approach to derive the bounding function for the safety-bound
  is given in Section~\ref{sec-guards-safetybounds}.

\subsection{Overview: Function Properties}
\label{sec-over-func-prop}

  At this point,
  we have introduced all properties that are necessary
  to precisely characterize functions
  in the context of the analysis.
  So let us take a short break to see what we have defined and related so far.
  We have summarized the most important implications
  in Figure~\ref{fig-func-prop-implic}.
\begin{figure}[t]\centering
  \includegraphics[width=.95\columnwidth]{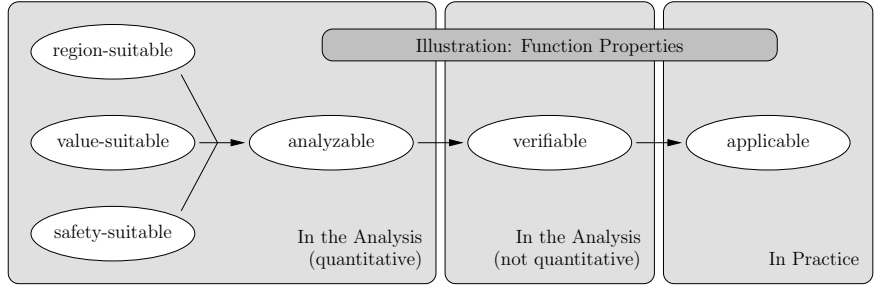}
  \caption{The illustration summarizes the implications of
    the various function properties that we have defined in this paper.
    A function that is region-, value- and safety-suitable
    at the same time is also analyzable (see Definition~\ref{def-analyzable}).
    An analyzable function is also verifiable (see Lemma~\ref{lem-ana-is-veri}).
    And a verifiable function is also applicable
    (see Lemma~\ref{lem-veri-is-app}).}
  \label{fig-func-prop-implic}
\end{figure}
  Controlled perturbation is \emph{applicable}
  to a certain class of functions.
  But only for a subset of those functions,
  we can actually \emph{verify} that
  controlled perturbation {works in practice}---without
  the necessity, or even ability, to analyze their performance.
  We remember
  that no condition on the absolute value is needed
  for verifiability
  because it is not a quantitative property.
  A subset of the verifiable functions
  represents the set of \emph{analyzable} functions in a quantitative sense.
  For those functions there are \emph{suitable bounds} on
  the maximum volume of the region of uncertainty,
  on the minimum absolute value outside of this region
  and on the maximum rounding error.
  In the remaining part of the paper,
  we are only interested in the class of analyzable functions.

\section[The Method of Quantified Relations (2nd Stage)]
{The Method of Quantified Relations}
\label{sec-meth-quan-rela}

  The method of quantified relations actually performs
  the function analysis
  in the second stage.
  The component and its interface are illustrated in Figure~\ref{fig-ana-qr}.
  We introduce the method in Section~\ref{sec-meth-qr-pres}.
  Its input consists of three bounding functions
  that are associated with the three suitability properties
  from the last section.
  The applicability does not depend on any other condition.
  The method provides general instructions
  to relate the three given bounds.
  The prime objective is to derive a relation between
  the probability of a successful floating-point evaluation
  and the precision of the floating-point arithmetic.
  More precisely,
  the method provides a precision function $L(p)$
  or a probability function $p(L)$.
  \emph{This is the first presentation
  of step-by-step instructions for the second stage
  of the function analysis.}
  An example of its application follows
  in Section~\ref{sec-meth-qr-ex}.
  \begin{figure}[t]\centering
    \includegraphics[width=.45\columnwidth]{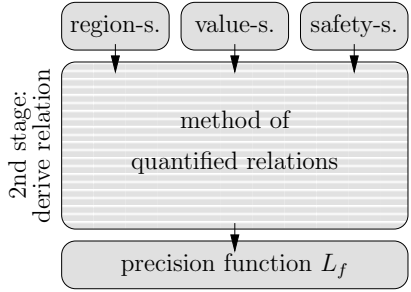}
    \caption{The method of quantified relations
      and its interface.}
    \label{fig-ana-qr}
  \end{figure}

\subsection{Presentation}
\label{sec-meth-qr-pres}

  There are no further prerequisites than
  the three necessary suitability properties
  from the last section.
  Therefore we can immediately state the main theorem
  of this section whose proof contains the
  method of quantified relations.
\begin{theorem}[quantified relations]\label{theo-quan-rela}
  Let $\PREDL$ be a predicate description and
  let $f$ be analyzable.
  Then there is a method
  to determine a precision function $L_f:(0,1)\to\N$
  such that the guarded evaluation of $f$
  at a randomly perturbed input
  is successful with probability at least $p\in(0,1)$
  for every precision $L\in\N$ with $L\ge L_f(p)$.
\end{theorem}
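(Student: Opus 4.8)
The plan is to fold the three suitability bounds of Definition~\ref{def-analyzable} into a single lower estimate for $\PR(f\RG)$ and then to read off the precision function by inverting $\nu_f$ and $\Sinf$. Fix a target probability $p\in(0,1)$. If the critical set of $f$ is empty the statement is almost immediate: the region of uncertainty is then empty, value-suitability supplies a positive constant $\varphi_f(\gamma)$ with $\varphi_f(\gamma)\le|f(x)|$ on all of $\bar{U}_\delta(\x)$, and the safety-condition lets us take $L_f(p)$ to be the least precision at which $\Sinf(L)$ is small enough compared with $\varphi_f(\gamma)$ for the guard to accept on every grid point of $U_\delta(\x)$. So from now on assume the critical set is nonempty.

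First I would use region-suitability to fix the region parameter. The volume $\mu(U_\delta(\x))=\prod_{i=1}^{k}2\delta_i$ does not depend on $\x$; abbreviate it by $\mu(U_\delta)$. By Definition~\ref{def-region-suit} there is $\gamma_0\in\GAL$ with $\nu_f(\gamma_0)\le(1-p)\,\mu(U_\delta)$, and invertibility of $\nu_f$ makes this choice explicit (one may take $\gamma_0=\nu_f^{-1}$ of a suitable value $\le(1-p)\,\mu(U_\delta)$ in its range). Put
\begin{eqnarray*}
  \gamma &:=& t\,\gamma_0 ,
\end{eqnarray*}
which again lies in $\GAL$ because $\gamma_0\in\GAL$ and $t\in(0,1)$, and which satisfies $\AUG(\gamma)=\gamma_0$. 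Value-suitability applied to $\gamma$ now gives $\varphi_f(\gamma)\le|f(x)|$ for all $x\in\bar{U}_\delta(\x)\setminus R_\gamma(\x)$ and all $\x\in A$.

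Next I would pin down two precision thresholds. By the fp-safety bound of Definition~\ref{def-fpsafetybound} together with the guard construction of Section~\ref{sec-guards-safetybounds}, once $\Sinf(L)$ is small enough compared with $\varphi_f(\gamma)$ every grid point $x\in U_\delta(\x)\RGL$ lying outside $R_\gamma(\x)$ has $|f(x)|\ge\varphi_f(\gamma)>\Sinf(L)$, hence $\sign(f(x)\RFL)=\sign(f(x))$ and the guard $\GG(x)$ accepts; since $\Sinf$ is injective and strictly decreasing to $0$, the least such precision $\LS$ is obtained by evaluating the strictly decreasing real continuation $\Sinf^{-1}$ at (a fixed fraction of) $\varphi_f(\gamma)$ and rounding up. The second threshold is the grid unit condition: Theorem~\ref{theo-validation}, applied with its region parameter set to $\gamma_0$ (so that the theorem's $t\gamma_0$ equals our $\gamma$), gives
\begin{eqnarray*}
  \frac{\CARDI{R_\gamma(\x)\RGL}}{\CARDI{U_\delta(\x)\RGL}}
  &\le&
  \frac{\mu\!\left(R_{\gamma_0}(\x)\right)}{\mu\!\left(U_\delta(\x)\right)}
  \qquad\text{for all } L\ge\LG(\gamma_0).
\end{eqnarray*}

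Finally I would set $L_f(p):=\max\{\LS,\LG(\gamma_0)\}$ and assemble the estimate: for $L\ge L_f(p)$ the guard fails on at most the grid points of $R_\gamma(\x)$, so
\begin{eqnarray*}
  \PR(f\RG)
  &\ge&
  \min_{\x\in A}
  \frac{\CARDI{U_\delta(\x)\RGL}-\CARDI{R_\gamma(\x)\RGL}}{\CARDI{U_\delta(\x)\RGL}}
  \;=\;
  1-\max_{\x\in A}\frac{\CARDI{R_\gamma(\x)\RGL}}{\CARDI{U_\delta(\x)\RGL}} \\
  &\ge&
  1-\max_{\x\in A}\frac{\mu(R_{\gamma_0}(\x))}{\mu(U_\delta(\x))}
  \;\ge\;
  1-\frac{\nu_f(\gamma_0)}{\mu(U_\delta)}
  \;\ge\;
  p ,
\end{eqnarray*}
which is the assertion; the ``method'' is exactly this recipe --- invert $\nu_f$ for $\gamma_0$, set $\gamma=t\gamma_0$, invert $\Sinf$ against $\varphi_f(\gamma)$ for $\LS$, evaluate $\LG(\gamma_0)$, and return the maximum --- and it yields a probability function $p(L)$ by reading the chain backwards. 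I expect the main obstacle to be the augmentation bookkeeping: one must keep straight that region-suitability is invoked for the larger parameter $\gamma_0$ while value- and safety-suitability are invoked for the smaller working parameter $\gamma=t\gamma_0$, so that Theorem~\ref{theo-validation} links the discrete count of $R_\gamma$ to the continuous volume of $R_{\gamma_0}$ in the correct direction. The remaining care is in the uniformity over $\x$ (which rests entirely on $\mu(U_\delta(\x))$ being independent of $\x$), the degenerate empty-critical-set case, and the precise form --- borrowed from Section~\ref{sec-guards-safetybounds} --- of the implication ``provable fp-safety of the true value implies that the guard accepts''.
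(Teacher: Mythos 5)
Your proposal is correct and follows essentially the same route as the paper's own proof: invert $\nu_f$ at $(1-p)\mu(U_\delta)$ to fix the augmented parameter, pass to the normal-sized region via the factor $t$, bound $|f|$ there by $\varphi_f(t\cdot\nu_f^{-1}(\cdot))$, invert $\Sinf$ to get $\LS$, combine with the grid-unit threshold $\LG$ from Theorem~\ref{theo-validation}, and take the maximum, with the empty-critical-set case treated separately. The only cosmetic difference is that you write out the final counting chain for $\PR(f\RG)$ explicitly, which the paper leaves to its earlier validation discussion.
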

\begin{proof}
  We show in six steps how we can determine
  a precision function $L_f(p)$ which has the property:
  If we use a floating-point arithmetic with precision $L_f(p)$
  for a given $p\in(0,1)$,
  the evaluation of $f(x)\RG$ is guarded
  with success probability of at least $p$
  for a randomly chosen $x\in\U_\delta(\x)\RG$
  and for any $\x\in A$.
  An overview of the steps is given in Table~\ref{tab-steps-qr}.
  Usually we begin with Step~1.
  However, there is an exception:
  In the special case that $\nu_f\equiv 0$,
  we know that the bounding function $\varphi$ is constant,
  see Remark~\ref{rem-def-region-suit}.4 for details.
  Then we just skip the first four steps and begin with Step~5.

  \begin{table}
  \centerline{\fbox{
  \begin{minipage}{0.9\columnwidth}\medskip
    Step 1: relate probability with volume of region of uncertainty (define ${\varepsilon_{\nu}}$)\\
    Step 2: relate volume of region of uncertainty with distances (define $\gamma$)\\
    Step 3: relate distances with floating-point grid (choose $t$)\\
    Step 4: relate new distances with minimum absolute value (define $\varphi$)\\
    Step 5: relate minimum absolute value with precision (define $\LS$)\\
    Step 6: relate $\LS$ with $\LG$ (define $\LG$ and $L_f$)\medskip
  \end{minipage}}}
    \caption{Instructions for performing the method of quantified relations.}
    \label{tab-steps-qr}
  \end{table}
{Step 1 (define ${\varepsilon_{\nu}}$).}
  We derive
  an upper bounding function ${\varepsilon_{\nu}(p)}$ on the volume of the
  augmented region of uncertainty
  from the success probability $p$
  in the way
  \begin{eqnarray}\label{for-epsilon-nu}
    {\varepsilon_{\nu}(p)}
     &:=& (1-p)\cdot\mu(U_\delta) \\
     &=& (1-p)\cdot\prod_{i=1}^k (2\delta_i). \nonumber
  \end{eqnarray}
  That means,
  a randomly chosen point $x\in U_\delta(\x)$
  lies inside of a given region of volume $\varepsilon_\nu(p)$
  with probability at least $p$.
  Be aware
  that we argue about the \emph{real space} in this step.

{Step 2 (define $\gamma$).}
  We know that
  there is $\gamma\in\R_{>0}^k$ that fulfills
  the region-condition in Definition~\ref{def-verifiable} %
  because $f$ is verifiable.
  Since $f$ is even region-suitable,
  we can also determine such $\gamma\in\GAL$
  by means of the inverse of the bounding function $\nu_f$.
  The existence and invertibility of $\nu_f$
  is guaranteed by Definition~\ref{def-region-suit}. %
  Hence we define the function
  \begin{eqnarray}\label{for-gamma-epsilon-nu}
    \gamma(p) &:=& \nu_f^{-1}({\varepsilon_{\nu}(p)}) \in\GAL.
  \end{eqnarray}
  We remember that there is an alternative definition of the
  region-suitability which we have mentioned
  in Remark~\ref{rem-def-region-suit}.4.
  Surely it is also possible to use the bounding function $\chi_f$
  instead of $\nu_f$ in the method of quantified relations directly;
  the alternative Steps~$1'$ and~$2'$
  are introduced in Remark~\ref{rem-meth-quan-rela}.2.

{Step 3 (choose $t$).}
  We aim at a result
  that is valid for floating-point arithmetic
  although we base the analysis on real arithmetic
  (see Section~\ref{sec-validation}).
  We choose\footnote{%
    The analysis works for any choice.
    However, finding the best choice is an optimization problem.}
  $t\in(0,1)$
  and define $R_{t\gamma}$ as
  the normal sized region of uncertainty.
  Due to Theorem~\ref{theo-validation},
  the probability that
  a random point $x\in U_\delta(\x)\RG$
  lies inside of $R_{t\gamma}(\x)\RG$
  is smaller than
  the probability that a random point $x\in U_\delta(\x)$
  lies inside of $R_\gamma(\x)$.
  Consequently,
  if a randomly chosen point
  lies outside of the augmented region of uncertainty with probability $p$,
  it lies outside of the normal sized region of uncertainty with probability
  at least $p$.
  Our next objective is to guarantee
  a floating-point safe evaluation
  outside of the
  \emph{normal sized} region of uncertainty.

{Step 4 (define $\varphi$).}
  Now we want to determine the minimum absolute value
  outside of the region of uncertainty
  $R_{t\gamma}(\x)$.
  We have proven in Lemma~\ref{lem-veri-minval}
  that a positive minimum exists.
  Because $f$ is value-suitable,
  we can use the bounding function $\varphi_f$
  for its determination
  (see Definition~\ref{def-value-suit}). %
  That means, we consider
  \begin{eqnarray*}
    \varphi(p) &:=& \varphi_f(t\cdot\gamma(p)).
  \end{eqnarray*}

{Step 5 (define $\LS$).}
  So far we have fixed the region of uncertainty and
  have determined the minimum absolute value
  outside of this region.
  Now we can use the safety-condition
  from Definition~\ref{def-verifiable} %
  to determine a precision $\LS$
  which implies fp-safe evaluations outside of $R_{t\gamma}$.
  That means,
  we want that Formula~(\ref{for-def-safety-const})
  is valid for every $L\in\N$ with $L\ge \LS$.
  Again we use the property that $f$ is analyzable
  and use the inverse of the fp-safety bound $\Sinf^{-1}$
  in Definition~\ref{def-safety-suit} %
  to deduce the precision from the minimum absolute value $\varphi(p)$ as
  \begin{eqnarray}
    \LS(p)
    &=&
      \left\lceil
        \Sinf^{-1} \left(
	  \varphi_f \left(
	    t\cdot\nu_f^{-1}\left({\varepsilon_{\nu}\left(p\right)}\right)
	\right)\right)
      \right\rceil.
        \label{for-def-lsafe}
  \end{eqnarray}

{Step 6 (define $\LG$ and $L_f$).}
  We numerate the component functions of $\nu_f^{-1}$
  in the way
  $\nu_f^{-1}(\varepsilon)=(\nu_{1}^{-1}(\varepsilon),\ldots,\nu_{k}^{-1}(\varepsilon))$.
  Then we deduce the bound $\LG$ from
  Formula~(\ref{for-def-lgrid}) and
  Formula~(\ref{for-gamma-epsilon-nu}) in the way
  \begin{eqnarray}\label{for-lgrid-in-proof}
    \LG(p) &:=&
      \E
      -1
      - \left\lfloor
          \log_2 \left(
		    \min\left\{t,1-t\right\}
		    \cdot \min_{1\le i\le k} \nu_i^{-1}(\varepsilon_\nu(p))
	         \right)
        \right\rfloor.
  \end{eqnarray}
  Finally we define the \emph{precision function $L_f(p)$}
  pointwise as
  \begin{eqnarray}\label{def-lf-final}
    L_f(p) &:=& \max\left\{\LS(p), \LG(p)\right\}.
  \end{eqnarray}
  Due to the used estimates,
  any precision $L\in\N$ with $L\ge L_f(p)$ is a solution.
  \qed
\end{proof}
\begin{remark}\label{rem-meth-quan-rela}
  We add some remarks on the theorem above.

  1.
  It is important to see that
  $\LS$ is derived from the \emph{volume} of $R_f$
  and is based on the region- and safety condition in
  Definition~\ref{def-verifiable}
  whereas $\LG$ is derived from the \emph{narrowest width} of $R_f$
  and is based on the grid unit condition in Section~\ref{sec-relate-tau-gamma}.
  Of course,
  $L_f(p)$ must be large enough to fulfill both constraints.

  2.
  As we have seen,
  we can also use the function $\chi_f$ to define the region-suitability
  in Definition~\ref{def-region-suit}.
  Therefore we can modify the first two steps of
  the method of quantified relations as follows:
\newline
  Step $1'$ (define $\varepsilon_{\chi}$).
  Instead of Step~1,
  we define a bounding function ${\varepsilon_{\chi}(p)}$
  on the volume of the complement of $R_f$
  from the given success probability $p$.
  That means, we replace Formula~(\ref{for-epsilon-nu}) with
  \begin{eqnarray*}
    {\varepsilon_{\chi}(p)}
      &:=& p \cdot \mu(U_\delta) \\
      &=& p \cdot \prod_{i=1}^k \left(2\delta_i\right).
  \end{eqnarray*}
  Step $2'$ (define $\gamma$).
  Then we can determine $\gamma(p)$
  with the inverse of the bounding function $\chi_f$.
  That means, we replace Formula~(\ref{for-gamma-epsilon-nu}) with
  \begin{eqnarray*}
    \gamma(p) &:=& \chi_f^{-1}({\varepsilon_{\chi}(p)}) \in\GAL
  \end{eqnarray*}
  which finally changes Formula~(\ref{for-def-lsafe}) into
  \begin{eqnarray*}
    \LS(p)
    &=&
      \left\lceil
        \Sinf^{-1} \left(
	  \varphi_f \left(
	    t\cdot
	    \chi_f^{-1}\left(
	      {\varepsilon_{\chi}\left(p
	      \right)}\right)
	\right)\right)
      \right\rceil.
  \end{eqnarray*}
  We make the observation that
  these changes do not affect the correctness of
  the method of quantified relations.

  3.
  It is important to see
  that the method of quantified relations
  is absolutely independent of the derivation of the bounding functions
  which are associated with the necessary suitability properties.
  Especially in Step~2,
  $\gamma$ is determined solely by means of the function $\nu^{-1}$.
  We illustrate this generality with the examples in
  Figure~\ref{fig-crit-set-nu}.
  \begin{figure}[h]\centering
    \includegraphics[width=.95\columnwidth]{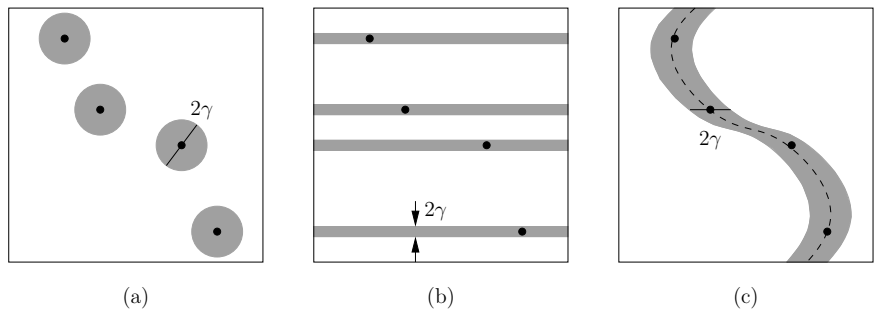}
    \caption{Visualization of $\nu^{-1}(\varepsilon_\nu)$ in Step~2
      of the method of quantified relations.}
    \label{fig-crit-set-nu}
  \end{figure}
  The three pictures show different regions of uncertainty for
  the \emph{same} critical set
  and the \emph{same} volume $\varepsilon_\nu$.
  This is because the region of uncertainties
  result from different functions $\nu^{-1}$.
  We could say that the function $\nu^{-1}$ ``knows'' how to distribute
  the region of uncertainty around the critical set
  because of its definition in the first stage
  of the analysis. For example:
  (a) as local neighborhoods,
  (b) as axis-parallel stripes, or
  (c) as neighborhoods of local minima of
  $f$ (the dashed line).
  (We remark that case (c) presumes that $f$ is continuous.)
  Naturally, different functions $\nu^{-1}$ lead to different values of $\gamma$
  as is illustrated in the pictures.
  Be aware that
  the method of quantified relations
  itself is absolutely independent of the \emph{derivation} of $\nu$
  and especially independent of the \emph{approach}
  by which $\nu$ is derived.
  (We present three different approaches soon.)

  4.
  If $f$ is analyzable and $\varphi_f$ invertible,
  we can also derive the success probability $p$
  from a precision $L$.
  We observe that the function $\varepsilon_\nu$
  in Formula~(\ref{for-epsilon-nu}) is always invertible.
  Therefore
  we can transform Formula~(\ref{for-def-lsafe})
  and~(\ref{for-lgrid-in-proof}) into
  \begin{eqnarray*}
    \pinf(L)\label{for-pinf-inline}
    &:=&
      \varepsilon_\nu^{-1}
        \left( \nu_f
          \left( \frac{1}{t} \cdot \varphi_f^{-1}
            \left( \Sinf(L)
      \right) \right) \right) \\
    \pgrid(L)\label{for-pgrid-inline}
    &:=&
      \varepsilon_\nu^{-1} \left(
        \nu_*\left(
	  \frac{2^{-L+\E-1}}
	    {\min\{t,1-t\}}
	\right)
      \right),
  \end{eqnarray*}
  respectively,
  where $\nu_*^{-1}$
  is the least growing component function of $\nu_f^{-1}$
  and $\nu_*$ is the inversion of $\nu_*^{-1}$.
  This leads to
  the (preliminary) \emph{probability function\label{def-prob-func-inline}}
  $p_f:\N\to(0,1)$,
  \begin{eqnarray*}
    p_f(L) &:=& \min \left\{ \pinf(L), \pgrid(L) \right\}
  \end{eqnarray*}
  for parameter $t\in(0,1)$.
  We develop the final version of the probability function
  in Section~\ref{sec-range-error-analysis}.
  Self-evidently
  we can also derive appropriate bounding functions for
  $\chi$ instead of $\nu$ (see Remark~2).
  \hfill$\bigcirc$
\end{remark}

\subsection{Example}
\label{sec-meth-qr-ex}

  To get familiar with the usage of
  the method of quantified relations,
  we give a detailed application in the proof
  of the following lemma.
\begin{lemma}%
  Let $f$ be a univariate polynomial of degree $d$
  as shown in Formula~(\ref{for-unipolyrep})
  and let $\PREDL$ be a predicate description.
  Then we obtain for $f$:
  \begin{eqnarray}\label{for-theo-uni-lsafe}
    \LS(p) &:=& \left\lceil -d\LOG(1-p) \; + \; \CU \right\rceil
  \end{eqnarray}
  where
  \begin{eqnarray*}
      \CU &:=&
      \LOG\frac
      {(d+2) \cdot \max_{1\le i\le d}|a_i| \cdot 2^{\E (d+1) +1}}
      {|a_d| \cdot (t\delta/d)^d}.
  \end{eqnarray*}
\end{lemma}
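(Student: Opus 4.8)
The plan is to apply the method of quantified relations from Theorem~\ref{theo-quan-rela} directly, using the bounding functions for univariate polynomials that were already established in Lemma~\ref{lem-uni-poly-ana}: there $f$ was shown to be analyzable with $\nu_f(\gamma)=2d\gamma$, $\varphi_f(\gamma)=|a_d|\,\gamma^d$, and $\Sinf(L)=(d+2)\,\max_{1\le i\le d}|a_i|\;2^{\E(d+1)+1-L}$. The quantity $\LS(p)$ is then nothing but the evaluation of Formula~(\ref{for-def-lsafe}),
\begin{eqnarray*}
  \LS(p) &=& \left\lceil \Sinf^{-1}\left( \varphi_f\left( t\cdot\nu_f^{-1}\left(\varepsilon_\nu(p)\right)\right)\right)\right\rceil,
\end{eqnarray*}
worked out from the inside out for this particular triple of bounds.

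First I would assemble the inner expression. Since the polynomial is univariate, we have $k=1$, so Step~1 of the method gives $\varepsilon_\nu(p)=(1-p)\,\mu(U_\delta)=(1-p)\cdot 2\delta$. The bound $\nu_f$ is linear, hence trivially invertible with $\nu_f^{-1}(\varepsilon)=\varepsilon/(2d)$, so Step~2 produces $\gamma(p)=\nu_f^{-1}(\varepsilon_\nu(p))=(1-p)\delta/d$. After fixing $t\in(0,1)$ in Step~3 and feeding $t\gamma(p)$ into $\varphi_f$ in Step~4, the minimum-absolute-value bound becomes $\varphi(p)=\varphi_f(t\gamma(p))=|a_d|\,\bigl(t(1-p)\delta/d\bigr)^d=|a_d|\,(t\delta/d)^d\,(1-p)^d$.

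Next I would invert the fp-safety bound. Solving $\Sinf(L)=\varphi$ for $L$ yields $\Sinf^{-1}(\varphi)=\E(d+1)+1+\LOG\frac{(d+2)\max_{1\le i\le d}|a_i|}{\varphi}$. Substituting $\varphi=\varphi(p)$ and splitting the logarithm of the quotient peels off the factor $(1-p)^d$, which contributes $-d\LOG(1-p)$, while the remaining terms, namely $\E(d+1)+1$ together with $\LOG\frac{(d+2)\max_{1\le i\le d}|a_i|}{|a_d|(t\delta/d)^d}$, recombine into exactly the constant $\CU$ of the statement (using $\LOG 2^{\E(d+1)+1}=\E(d+1)+1$). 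Applying the ceiling prescribed by Formula~(\ref{for-def-lsafe}) then gives $\LS(p)=\lceil -d\LOG(1-p)+\CU\rceil$, as claimed.

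There is no genuine obstacle here: the lemma is essentially a worked example of the machinery of Theorem~\ref{theo-quan-rela}, and the computation is routine. The only points that require care are the bookkeeping of the constants — in particular absorbing the factors $2^{\E(d+1)+1}$ and $(t\delta/d)^{-d}$ into $\CU$ — and keeping in mind that the dimension is $k=1$, so that $\mu(U_\delta)=2\delta$ rather than a product over several coordinates, and that the component functions of $\nu_f^{-1}$ reduce to the single function $\varepsilon\mapsto\varepsilon/(2d)$.
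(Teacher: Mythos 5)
Your proposal is correct and follows essentially the same route as the paper's own proof: both apply the first five steps of the method of quantified relations with the bounds from Lemma~\ref{lem-uni-poly-ana}, compute $\varepsilon_\nu(p)=2\delta(1-p)$, $\gamma(p)=\delta(1-p)/d$, $\varphi(p)=|a_d|(t\delta(1-p)/d)^d$, and then invert $\Sinf$ and split the logarithm to obtain $\LS(p)=\lceil -d\LOG(1-p)+\CU\rceil$. The only cosmetic difference is that you peel off the exponent term $\E(d+1)+1$ before recombining it into $\CU$, whereas the paper keeps it inside the quotient throughout; the computation is identical.
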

\begin{proof}
  The polynomial $f$ is analyzable
  because of Lemma~\ref{lem-uni-poly-ana}.
  Therefore we can determine $\LS$
  with the first 5 steps of the method of quantified relations
  (see Theorem~\ref{theo-quan-rela}).\\
  Step 1:
  Since the perturbation area $U_\delta(\x)$ is an interval of length $2\delta$,
  the region of uncertainty has a volume of at most
  \begin{eqnarray*}
    {\varepsilon_{\nu}(p)} &:=& 2\delta (1-p).
  \end{eqnarray*}
  Step 2:
  Next we deduce $\gamma$ from the inverse of
  the function in Formula~(\ref{for-unipolynu}),
  that means,
  from $\nu_f^{-1}(\varepsilon)=\frac{\varepsilon}{2d}$.
  We obtain
  \begin{eqnarray*}
    \gamma(p) &:=& \nu^{-1}_f({\varepsilon_{\nu}(p)})
      \;\; = \;\; \frac{{\varepsilon_{\nu}(p)}}{2d}
      \;\; = \;\; \frac{\delta(1-p)}{d}.
  \end{eqnarray*}
  Step 3:
  We choose $t\in(0,1)$.
\newline
  Step 4:
  Due to Formula~(\ref{for-unipolyphi}),
  the absolute value of $f$
  outside of the region of uncertainty
  is lower-bounded by the function
  \begin{eqnarray*}
    \varphi(p)
    &:=&
      |a_d| \cdot (t\cdot\gamma(p))^d \\
    &=&
      |a_d| \cdot \left( \frac{t\delta(1-p)}{d} \right)^d.
  \end{eqnarray*}
  Step 5:
  A fp-safety bound $\Sinf$
  is provided by Corollary~\ref{col-unipolysafety}
  in Formula~(\ref{for-sfl-univariate}).
  The inverse of this function
  at $\varphi(p)$ is
  \begin{eqnarray*}
    \Sinf^{-1}(\varphi(p))
    &=& \LOG
      \frac
      {(d+2) \cdot \max_{1\le i\le d}|a_i| \cdot 2^{\E (d+1) +1}}
      {\varphi(p)}.
  \end{eqnarray*}
  Due to Formula~(\ref{for-def-lsafe}),
  this leads to
  \begin{eqnarray*}
    \LS(p)
    &:=& \left\lceil \Sinf^{-1}(\varphi(p)) \right\rceil\\
    &=& \left\lceil \LOG\frac
      {(d+2) \cdot \max_{1\le i\le d}|a_i| \cdot 2^{\E (d+1) +1}}
      {|a_d| \cdot (t\delta(1-p)/d)^d}
      \right\rceil \\
    &=& \left\lceil -d\LOG(1-p) \; + \; \LOG\frac
      {(d+2) \cdot \max_{1\le i\le d}|a_i| \cdot 2^{\E (d+1) +1}}
      {|a_d| \cdot (t\delta/d)^d}
      \right\rceil
  \end{eqnarray*}
  as was claimed in the lemma.
  \qed
\end{proof}
  Since the formula for $\LS(p)$
  in the lemma above looks rather complicated,
  we interpret it here.
  We observe that $c_u$ is a constant
  because it is defined only by constants:
  The degree $d$ and the coefficients $a_i$ are defined by $f$,
  and the parameters $\E$ and $\delta$ are given by the input.
  We make the asymptotic behavior
  $\LS(p) = O\left(-d\LOG(1-p)\right)$
  for $p\to1$
  explicit in the following corollary:
  We show that
  $d$ additional bits of the precision are sufficient
  to halve the failure probability.
\begin{corollary}
  Let $f$ be a univariate polynomial of degree $d$
  and let $\LS:(0,1)\to\N$
  be the precision function in Formula~(\ref{for-theo-uni-lsafe}).
  Then
  \begin{eqnarray*}
    \LS\left(\frac{1+p}{2}\right) &=& \LS(p)+d.
  \end{eqnarray*}
\end{corollary}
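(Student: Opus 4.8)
The plan is to substitute the argument $\frac{1+p}{2}$ directly into the closed form of $\LS$ from Formula~(\ref{for-theo-uni-lsafe}) and simplify the logarithm. First I would observe that
\begin{eqnarray*}
  1 - \frac{1+p}{2} &=& \frac{1-p}{2},
\end{eqnarray*}
so that $-d\LOG\!\left(1-\frac{1+p}{2}\right) = -d\LOG\!\left(\frac{1-p}{2}\right) = -d\LOG(1-p) + d$, using $\LOG 2 = 1$ and the fact that $d$ is the (constant) degree of $f$.

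Next I would plug this into the definition $\LS(q) = \lceil -d\LOG(1-q) + \CU\rceil$ with $q = \frac{1+p}{2}$, noting that $\CU$ does not depend on the probability argument (it is built only from $d$, the coefficients $a_i$, and the input parameters $\E$ and $\delta$), hence is unchanged. This gives
\begin{eqnarray*}
  \LS\!\left(\frac{1+p}{2}\right) &=& \left\lceil -d\LOG(1-p) + d + \CU \right\rceil.
\end{eqnarray*}

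Finally I would pull the integer $d$ out of the ceiling using the elementary identity $\lceil x + m\rceil = \lceil x\rceil + m$ for $m\in\Z$, which holds here since $d\in\N$ by Lemma~\ref{lem-uni-poly-ana}. This yields $\LS\!\left(\frac{1+p}{2}\right) = \lceil -d\LOG(1-p) + \CU\rceil + d = \LS(p) + d$, as claimed.

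There is essentially no real obstacle in this proof; the only point requiring the slightest care is the appeal to the integrality of $d$ so that the additive constant commutes with the ceiling function, and the observation that $\CU$ is genuinely constant in $p$ so it is carried through the substitution untouched. Everything else is a one-line logarithm simplification.
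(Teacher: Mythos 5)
Your proposal is correct and follows exactly the paper's own argument: substitute $\frac{1+p}{2}$, use $1-\frac{1+p}{2}=\frac{1-p}{2}$ to split off $d\LOG 2 = d$, observe $\CU$ is independent of $p$, and pull the integer $d$ out of the ceiling. No differences worth noting.
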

\begin{proof}
  Due to Formula~(\ref{for-theo-uni-lsafe}) we have:
  \begin{eqnarray*}
    \LS\left(\frac{1+p}{2}\right) &=&
      \left\lceil  -d \LOG \left( 1- \left( \frac{1+p}{2} \right) \right) + c_u \right\rceil \\[0.5ex]
    &=&
      \left\lceil  -d \LOG \left( \frac{1-p}{2} \right) + c_u \right\rceil \\[0.5ex]
    &=&
      \left\lceil  -d \left( \LOG (1-p) - \LOG (2) \right) + c_u \right\rceil \\
    &=&
      \left\lceil  -d \LOG (1-p) + d + c_u \right\rceil \\
    &=&
      \left\lceil  -d \LOG (1-p) + c_u \right\rceil + d \\
    &=&
      \LS(p) + d 
  \end{eqnarray*}
  Because $d$ is a natural number,
  we can pull it out of the brackets.
  \qed
\end{proof}

\section[The Direct Approach Using Estimates (1st Stage, rv-suit)]{The Direct Approach Using Estimates}
\label{sec-direct-approach}

  This approach derives the bounding functions
  which are associated with region- and value-suitability
  in the first stage of the analysis
  (see Figure~\ref{fig-ana-direct}).
  It is partially based on
  the geometric interpretation of the function $f$ at hand.
  More precisely, it presumes that the critical set of $f$
  is embedded in geometric objects
  for which we know simple mathematical descriptions
  (e.g., lines, circles, etc.).
  The derivation of bounds from geometric interpretations
  is also presented in~\cite{MOS06,MOS11}.
  In Section~\ref{sec-direct-present}
  we explain the derivation of the bounds.
  In Section~\ref{sec-direct-example}
  we show some examples.
  \begin{figure}[h]\centering
    \includegraphics[width=.45\columnwidth]{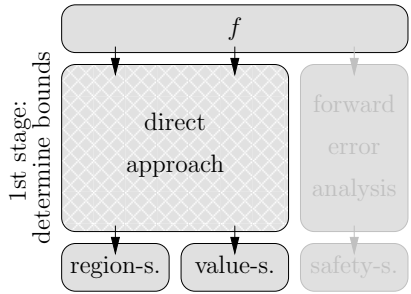}
    \caption{The direct approach and its interface.}
    \label{fig-ana-direct}
  \end{figure}

\subsection{Presentation}
\label{sec-direct-present}

  The steps of the direct approach are summarized
  in Table~\ref{tab-steps-direct}.
  To facilitate the presentation of the geometric interpretation,
  we assume that the function $f$ is continuous everywhere
  and that we do not allow any exceptional points.
  Then the critical set of $f$ equals the zero set of $f$.
  Hence the region of uncertainty is an environment of the zero set
  in this case.
  We define the region of uncertainty $R_\gamma$
  as it is defined in Formula~(\ref{for-region-uncert}).
  In the first step,
  we choose $\GAL$
  which is the domain of $\gamma$.
  Or in other words, we choose $\hat\gamma$.
  Sometimes,
  certain choices of $\GAL$
  may be more useful than others, e.g.,
  cubic environments where $\hat\gamma_i=\hat\gamma_j$
  for all $1\le i,j\le k$.

  Now assume that we have chosen $\GAL$.
  In the second step,
  we estimate
  (an upper bound on)
  the volume of the region of uncertainty $R_{\gamma}$
  by a function $\nu_f(\gamma)$ for $\gamma\in\GAL$.
  In the direct approach,
  we hope that a geometric interpretation of the zero set
  supports the estimation.
  For that purpose it would be helpful
  if the region of uncertainty is embedded in a line, a circle,
  or any other geometric structure that we can easily describe mathematically.

  Assume further that we have fixed the bound $\nu_f$.
  In the third step,
  we need to determine a function $\varphi_f(\gamma)$
  on the minimum absolute value of $f$
  outside of $R_\gamma$.
  This is the most difficult step in the direct approach:
  \emph{Although geometric interpretation may be helpful in the second step,
  mathematical considerations are necessary to derive $\varphi_f$.}
  Therefore we hope that $\varphi_f$ is ``obvious'' enough
  to get guessed.
  If there is no chance to guess $\varphi_f$,
  we need to try one of the alternative approaches
  of the next sections, that means,
  the bottom-up approach or the top-down approach.
\begin{table}[h]
  \centerline{\fbox{
  \begin{minipage}{0.9\columnwidth}\medskip
    Step 1: choose the set $\GAL$ (define $\hat\gamma$) \\
    Step 2: estimate $\nu_f(\gamma)$ in dependence on $\GAL$
      (define $\nu_f$) \\
    Step 3: estimate $\varphi_f(\gamma)$ in dependence on $\nu_f(\gamma)$
      (define $\varphi_f$) \medskip
  \end{minipage}}}
    \caption{Instructions for performing the direct approach.}
    \label{tab-steps-direct}
\end{table}

\subsection{Examples}
\label{sec-direct-example}

  We present two examples that
  use the direct approach to derive the bounds
  for the region-value-suitability.
\begin{example}
\newcommand{\UX}{{u_x}}%
\newcommand{\VX}{{v_x}}%
\newcommand{\QX}{{q_x}}%
\newcommand{\UY}{{u_y}}%
\newcommand{\VY}{{v_y}}%
\newcommand{\QY}{{q_y}}%
\newcommand{\GUX}{\gamma_{u_x}}%
\newcommand{\GVX}{\gamma_{v_x}}%
\newcommand{\GQX}{\gamma_{q_x}}%
\newcommand{\GUY}{\gamma_{u_y}}%
\newcommand{\GVY}{\gamma_{v_y}}%
\newcommand{\GQY}{\gamma_{q_y}}%
  We consider the $\inbox$ predicate in the plane.
  Let $u$ and $v$ be two opposite vertices of the box
  and let $q$ be the query point.
  Then $\inbox(u,v,q)$ is decided by the sign of the function
  \begin{eqnarray}
    f (u,v,q)
    &=&
      f (\UX,\UY, \VX,\VY, \QX,\QY) \nonumber \\
    &:=&
      \max \left\{
        \left(\QX-\UX\right)\left(\QX-\VX\right), \;
        \left(\QY-\UY\right)\left(\QY-\VY\right)
      \right\}. \label{for-inbox-uvq}
  \end{eqnarray}
  The function
  is negative if $x$ lies inside of the box,
  it is zero if $x$ lies in the boundary,
  and it is positive if $x$ lies outside of the box.

  Step~1:
  We choose an arbitrary
  $\hat\gamma=(\hat\gamma_\UX,\hat\gamma_\UY,\hat\gamma_\VX,\hat\gamma_\VY,\hat\gamma_\QX,\hat\gamma_\QY)\in\R_{>0}^6$.

  Step~2:
  The box is defined by $u$ and $v$.
  This fact is true
  independent of the choices for $\GUX$, $\GUY$, $\GVX$ and $\GVY$.
  We observe that
  the largest box inside of the perturbation area $U_\delta$
  is the boundary of $U_\delta$ itself.
  This observation leads to the upper bound
  \begin{eqnarray*}
    \nu_f(\gamma)
      &=&
        \nu_f (\GUX,\GUY, \GVX,\GVY, \GQX,\GQY) \\
      &:=&
       4 \left( \GQX \delta_y \;+\; \GQY \delta_x \right)
  \end{eqnarray*}
  on the volume of the region of uncertainty
  if we take the horizontal distance $\GQX$
  and the vertical distance $\GQY$ from the boundary of the box into account.
  That means,
  $\nu_f$ depends on the distances $\GQX$ and $\GQY$
  of the query point $q$
  from the zero set.

  Step 3:
  The evaluation of Formula~(\ref{for-inbox-uvq})
  at query points where
  $\QX$ has distance $\GQX$ from $\UX$ or $\VX$, and
  $\QY$ has distance $\GQY$ from $\UY$ or $\VY$, leads to
  \begin{eqnarray*}
    \varphi_f(\gamma)
    &:=&
      \min\left\{
        \left| \gamma^2_\QX - \GQX \cdot | \VX-\UX |\right|, \;
        \left| \gamma^2_\QY - \GQY \cdot | \VY-\UY |\right|
      \right\}.
  \end{eqnarray*}
  The derived bounds fulfill the desired properties.
  \hfill$\bigcirc$
\end{example}
\begin{example}
\newcommand{\CX}{{c_x}}%
\newcommand{\CY}{{c_y}}%
\newcommand{\QX}{{q_x}}%
\newcommand{\QY}{{q_y}}%
\newcommand{\GCX}{\gamma_{c_x}}%
\newcommand{\GCY}{\gamma_{c_y}}%
\newcommand{\GR}{\gamma_{r}}%
\newcommand{\GQX}{\gamma_{q_x}}%
\newcommand{\GQY}{\gamma_{q_y}}%
  We consider the $\incirc$ predicate in the plane.
  Let $c$ be the center of the circle,
  let $r>0$ be its radius,
  and let $q$ be the query point.
  Then $\incirc(c,r,q)$ is decided by the sign of the function
  \begin{eqnarray}
    f(c,r,q)
    &=& f(\CX,\CY,r,\QX,\QY) \nonumber \\
    &:=&
      \left(\QX-\CX\right)^2 \;+\;
      \left(\QY-\CY\right)^2 \;-\;
      r^2 \label{for-incirc-crq}
  \end{eqnarray}
  The function is negative if $x$ lies inside of the circle,
  it is zero if $x$ lies on the circle,
  and it is positive if $x$ lies outside of the circle.

  Step~1:
  We choose
  $\hat\gamma=(\hat\gamma_\CX,\hat\gamma_\CY,\hat\gamma_r,\hat\gamma_\QX,\hat\gamma_\QY)\in\R_{>0}^5$
  where $\hat\gamma_\QX=\hat\gamma_\QY$.
  In addition,
  we choose $\hat\gamma_r< r$ for simplicity.

  Step~2:
  The largest circle that fits into the perturbation area $U_\delta$ has radius
  $\min\left\{\delta_x,\delta_y\right\}$.
  If we intersect any larger circle with $U_\delta$,
  the total length of the circular arcs inside of $U_\delta$
  cannot be larger than
  $2\pi \cdot \min\left\{\delta_x,\delta_y\right\}$.
  This bounds the total length of the zero set.

  Now we define the region of uncertainty
  by spherical environments:
  The region of uncertainty is the union of open discs
  of radius $\GQX$ which are located at the zeros.
  Then the width of the region of uncertainty
  is given by the diameter of the discs, i.e., by $2\GQX$.
  As a consequence,
  \begin{eqnarray*}
    \nu_f(\gamma)
      &=& \nu_f(\GCX,\GCY,\GR,\GQX,\GQY) \\
      &:=&
        4\pi\GQX \cdot \min\left\{\delta_x,\delta_y\right\}
  \end{eqnarray*}
  is an upper bound on the volume of $R_\delta$.
  That means,
  $\nu_f$ depends on the distance $\GQX$ of the query point $q$
  from the zero set.

  Step~3:
  The absolute value of Formula~(\ref{for-incirc-crq}) is minimal
  if the query point $q$ lies inside of the circle
  and has distance $\GQX$ from it.
  This leads to
  \begin{eqnarray*}
    \varphi_f(\gamma)
      &:=&
      \left|
      \left( r \;-\; \GQX \right)^2 \;-\; r^2
      \right| \\
      &=&
      \GQX \left( \GQX - 2r \right).
  \end{eqnarray*}
  The derived bounds fulfill the desired properties.
  \hfill$\bigcirc$
\end{example}

\section[The Bottom-up Approach Using Calculation Rules (1st Stage, rv-suit)]
{The Bottom-up Approach Using Calculation Rules}
\label{sec-bottom-up}

  In the first stage of the analysis,
  this approach derives the bounding functions
  which are associated with the region- and value-suitability
  (see Figure~\ref{fig-ana-bu}).
  We can apply this approach to certain composed functions.
  That means,
  if $f$ is composed by $g$ and $h$,
  we can derive the bounds for $f$ from the bounds for $g$ and $h$
  under certain conditions.
  We present some mathematical constructs
  which preserve the region- and value-suitability
  and introduce useful calculation rules for their bounds.
  Namely we introduce
  the \emph{lower-bounding rule} in Section~\ref{sec-rule-lower-bound},
  the \emph{product rule} in Section~\ref{sec-rule-product} and
  the \emph{min rule} and \emph{max rule} in Section~\ref{sec-rule-min-max}.
  We point to a general way to formulate rules
  in Section~\ref{sec-rule-general}.
  The list of rules is by far not complete.
  Nevertheless,
  they are already sufficient to
  derive the bounding functions
  for multivariate polynomials
  as we show in Section~\ref{sec-multivariate-poly}.
  \emph{With the bottom-up approach we present an entirely new approach
  to derive the bounding functions for the region-suitability and
  value-suitability.
  Furthermore we present a new way to analyze multivariate polynomials.}
  \begin{figure}[h]\centering
    \includegraphics[width=.45\columnwidth]{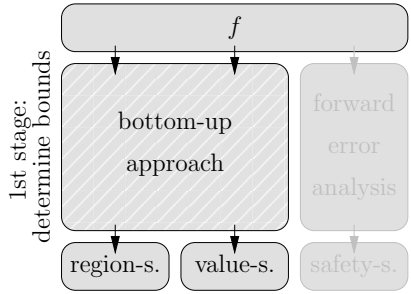}
    \caption{The bottom-up approach and its interface.}
    \label{fig-ana-bu}
  \end{figure}

\subsection{Lower-bounding Rule}
\label{sec-rule-lower-bound}

  Our first rule states
  that every function is region-value-suitable
  if there is a lower bounding function which is region-value-suitable.
  Note that there are no further restrictions on $f$.
\begin{theorem}[lower bound]\label{theo-lower-bound}
  Let $\PREDL$ be a predicate description.
  If there is a region-value-suitable function
  $g:\U_\delta(A)\to\R$ and $c\in\R_{>0}$ where
  \begin{eqnarray}\label{for-ruleboundfun}
    |f(x)| &\ge& c \, |g(x)|,
  \end{eqnarray}
  then $f$ is also region-value-suitable
  with the following bounding functions:
  \begin{eqnarray*}
    \nu_f(\gamma) &:=& \nu_g(\gamma) \\
    \varphi_f(\gamma) &:=& c\varphi_g(\gamma).
  \end{eqnarray*}
  If $f$ is in addition safety-suitable, $f$ is analyzable.
\end{theorem}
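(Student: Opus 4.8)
The plan is to reduce everything to the observation that the critical set of $f$ is contained in the critical set of $g$, and then to \emph{use $C_g$ itself as the critical set of $f$} --- which is permitted by the convention following Definition~\ref{def-critical-set} that the critical set may always be enlarged. Once $R_{f,\gamma}$ and $R_{g,\gamma}$ are made to coincide, both bounding functions transfer almost mechanically.

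First I would check $C_{f,\delta}(\x)\subseteq C_{g,\delta}(\x)$ for every $\x\in A$, splitting into the two cases of Definition~\ref{def-critical-set}. If $c\in\U_\delta(\x)$ is a zero of $f$, then $0=|f(c)|\ge c\,|g(c)|\ge 0$ forces $g(c)=0$, so $c$ is a (critical or less-critical) zero of $g$. If $c$ satisfies the infimum condition~(\ref{for-def-crit}) for $f$, then from $|g(x)|\le\frac{1}{c}\,|f(x)|$ on a punctured neighborhood $U_\varepsilon(c)\setminus\{c\}$ together with $|g|\ge 0$ we get $\inf|g|=0$, so $c$ is critical for $g$ as well. (Any designated exceptional points of $f$ are added to the critical set as usual.) Hence $C_f\subseteq C_g$, and we may harmlessly declare $C_{f,\delta}(\x):=C_{g,\delta}(\x)$; with this choice Formula~(\ref{for-region-uncert}) yields $R_{f,\gamma}(\x)=R_{g,\gamma}(\x)$ for all $\gamma$ and all $\x$.

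Then the two bounds transfer. For region-suitability, $\mu(R_{f,\gamma}(\x))=\mu(R_{g,\gamma}(\x))\le\nu_g(\gamma)$, so $\nu_f:=\nu_g$ is a valid upper-bounding function; it is invertible because $\nu_g$ is, and for the $\gamma\in\GAL$ supplied by the region-suitability of $g$ the chain in Formula~(\ref{for-defregsuit}) holds verbatim for $f$. (If $C_g$ is empty, then so is $C_f$, and $f$ is region-suitable trivially with $\nu_f\equiv 0$.) For value-suitability, fix $\gamma\in\GAL$: for every $x\in\U_\delta(\x)\setminus R_{f,\gamma}(\x)=\U_\delta(\x)\setminus R_{g,\gamma}(\x)$ we have $|g(x)|\ge\varphi_g(\gamma)$, hence $|f(x)|\ge c\,|g(x)|\ge c\,\varphi_g(\gamma)=\varphi_f(\gamma)$, and $\varphi_f=c\,\varphi_g:\GAL\to\Rplus$ since $c>0$. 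This establishes region- and value-suitability with the stated bounding functions; adding safety-suitability then yields analyzability directly from Definition~\ref{def-analyzable}.

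The subtle point --- and essentially the only place the argument could go wrong --- is that the inclusion $R_{f,\gamma}\subseteq R_{g,\gamma}$ points the ``wrong way'' for value-suitability: a genuinely smaller region of uncertainty for $f$ would demand the lower bound on $|f|$ over a \emph{larger} set, where the hypothesis $|f|\ge c\,|g|$ gives no leverage. Keeping $R_{f,\gamma}$ equal to $R_{g,\gamma}$ rather than shrinking it is exactly what makes the hypothesis usable, and it is legitimate only because we are free to over-declare critical sets. Everything else reduces to the one-line inequality $|f|\ge c\,|g|$.
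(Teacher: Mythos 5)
Your proposal is correct and follows essentially the same route as the paper's proof: establish that critical points of $f$ are critical points of $g$, declare $C_f:=C_g$ so that $R_{f,\gamma}=R_{g,\gamma}$, and then transfer both bounds via $|f|\ge c\,|g|$. Your explicit case split on the two clauses of criticality and your closing remark about why the regions must be kept equal (rather than shrunk) are slightly more detailed than the paper's version, but the argument is the same.
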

\begin{proof}
Part 1 (region-suitable).
  Let $(a_i)_{i\in\N}$ be a sequence in the set $U_\delta(\x)$ with
  $\lim_{i\to\infty} f(a_i)=0$.
  Then Formula~(\ref{for-ruleboundfun}) implies that
  $\lim_{i\to\infty} g(a_i)=0$.
  That means,
  critical points of $f$ are critical points of $g$.
  Therefore we set $C_f(\x) :=C_g(\x)$.
  As a consequence the region bound $\nu_f(\gamma) := \nu_g(\gamma)$
  is sufficient for the region-suitability of $f$.

Part 2 (value-suitable).
  Because we set $C_f(\x) = C_g(\x)$,
  we have $R_f(\x) = R_g(\x)$.
  Due to Formula~(\ref{for-ruleboundfun}),
  the minimum absolute value of $f$
  outside of the region of uncertainty $R_f(\x)$
  is bounded by the minimum absolute value of $g$
  outside of the (same) region of uncertainty $R_g(\x)$.
  Hence the bound
  $\varphi_f(\gamma) = c\varphi_g(\gamma)$
  is sufficient for the value-suitability of $f$.

Part 3 (analyzable). Trivial.
  \qed
\end{proof}

\subsection{Product Rule}
\label{sec-rule-product}

  The next rule states
  that the product of region-value-suitable functions
  is also region-value-suitable.
  Furthermore,
  we show how to derive appropriate bounds.
\begin{theorem}[product]\label{theo-product}
  Let $(f, k, A_g\times A_{gh}\times A_h, \delta, \E, \GAL, t)$
  be a predicate description where
  $A_g\subset\R^j$,
  $A_{gh}\subset\R^{\ell-j}$ and
  $A_h\subset\R^{k-\ell}$
  for
  $j\in\N_{0}$ and $\ell,k\in\N$ with $j\le \ell\le k$.
  If there are two region-value-suitable functions
  \begin{eqnarray*}
    g&:&\U_{(\delta_1,\ldots,\delta_\ell)}(A_g\times A_{gh})\to\R \\
    h&:&\U_{(\delta_{j+1},\ldots,\delta_k)}(A_{gh}\times A_h)\to\R
  \end{eqnarray*}
  such that
  \begin{eqnarray*}
    f(x_1,\ldots,x_k) &=& g(x_1,\ldots,x_\ell) \cdot h(x_{j+1},\ldots,x_k),
  \end{eqnarray*}
  then $f$ is also
  region-value-suitable with the following bounding functions:
  \begin{eqnarray}
    \varphi_{f}(\gamma) &:=&
      \varphi_g(\gamma_1,\ldots,\gamma_\ell) \cdot
      \varphi_h(\gamma_{j+1},\ldots,\gamma_k) \label{for-phi-prod-rule} \\
    \nu_{f}(\gamma)
      &:=& \min
      \left\{ \prod_{i=1}^k (2\delta_i), \right. \nonumber \\
      &&\qquad\left.\label{for-nug-plus-nuh}
      \nu_g(\gamma_1,\ldots,\gamma_\ell) \!\!\! \prod_{i=\ell+1}^k \!\!\! (2\delta_i)
      \, + \,
      \nu_h(\gamma_{j+1},\ldots,\gamma_k) \prod_{i=1}^j (2\delta_i)\right\}.
  \end{eqnarray}
  Furthermore, if $j=\ell$,
  we can replace
  the last equation
  by the tighter bound
  \begin{eqnarray}\label{for-beta-product-rule}
    \chi_{f}(\gamma) &:=&
      \chi_g(\gamma_1,\ldots,\gamma_j) \cdot
      \chi_h(\gamma_{j+1},\ldots,\gamma_k).
  \end{eqnarray}
  If $f$ is in addition safety-suitable, $f$ is analyzable
  (independent of $j=\ell$).
\end{theorem}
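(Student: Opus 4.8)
The plan is to reduce everything to a structural description of the critical set of $f=g\cdot h$ in terms of those of $g$ and $h$, and then read off the three bounds. Throughout, write $\x'$ for the projection of $\x$ onto the first $\ell$ coordinates (the variables of $g$) and $\x''$ for its projection onto the last $k-j$ coordinates (the variables of $h$); likewise $x',x''$ for a point $x$.

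First I would prove the \emph{forward containment}: if $c\in C_{f,\delta}(\x)$, then $c'\in C_g(\x')$ or $c''\in C_h(\x'')$. Given a sequence $(a_i)$ in $\U_\delta(\x)$ with $a_i\to c$ and $g(a_i)h(a_i)\to 0$, split $\N$ according to whether $|g(a_i)|\le|h(a_i)|$; one piece is infinite, and along it $|g(a_i)|^2\le|g(a_i)h(a_i)|\to 0$ (respectively $|h(a_i)|^2\to 0$), so $g$ (respectively $h$) tends to $0$ along that subsequence. Distinguishing whether the corresponding projected points are eventually equal to their limit (then the limit is a zero, hence in the critical set) or not (then it is a genuine critical point) yields the claim. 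It follows that
\[
  R_{f,\gamma}(\x)\ \subseteq\ \Bigl(R_{g,\gamma}(\x')\times\textstyle\prod_{i=\ell+1}^{k}[\x_i-\delta_i,\x_i+\delta_i]\Bigr)\ \cup\ \Bigl(\textstyle\prod_{i=1}^{j}[\x_i-\delta_i,\x_i+\delta_i]\times R_{h,\gamma}(\x'')\Bigr),
\]
whose measure is at most $\nu_g(\gamma_1,\dots,\gamma_\ell)\prod_{i=\ell+1}^k(2\delta_i)+\nu_h(\gamma_{j+1},\dots,\gamma_k)\prod_{i=1}^j(2\delta_i)$; intersecting with the trivial bound $\mu(\U_\delta(\x))$ gives exactly the stated $\nu_f$. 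Under the standing convention that $\hat\gamma$ is small enough that $\nu_f(\hat\gamma)\ll\mu(\U_\delta(\x))$ (Remark~\ref{rem-def-region-suit}.2), the minimum is always attained by the sum term, which is a sum of compositions of the invertible $\nu_g,\nu_h$ with affine maps and hence invertible on $\GAL$; this gives region-suitability of $f$.

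Next I would prove the \emph{reverse containment}: if $x'\in R_{g,\gamma}(\x')$ then $x\in R_{f,\gamma}(\x)$, and symmetrically for $h$. Say $x'$ lies within $\gamma$ of $c'\in C_g(\x')$ and put $c:=(c',x_{\ell+1},\dots,x_k)\in\U_\delta(\x)$. If $c'$ is critical for $g$, append the fixed coordinates $x_{\ell+1},\dots,x_k$ to a witnessing sequence; the resulting sequence in $\U_\delta(\x)$ converges to $c$ and has $f\to 0$ because the $h$-factor stays finite ($h$ is bounded on its domain). If instead $c'$ is merely a zero of $g$, then $f(c)=0$, so $c\in C_{f,\delta}(\x)$ directly. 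Either way $x$ lies within $\gamma$ of $c$, so $x\in R_{f,\gamma}(\x)$. Hence on $\U_\delta(\x)\setminus R_{f,\gamma}(\x)$ we have $x'\notin R_{g,\gamma}(\x')$ and $x''\notin R_{h,\gamma}(\x'')$, so $|f(x)|=|g(x')|\,|h(x'')|\ge\varphi_g(\gamma_1,\dots,\gamma_\ell)\,\varphi_h(\gamma_{j+1},\dots,\gamma_k)$, which is the claimed $\varphi_f$ and gives value-suitability. In the special case $j=\ell$ the variable blocks of $g$ and $h$ are disjoint, so $\U_\delta(\x)$ is a genuine product and both containments collapse to the clean statement that the complement of $R_{f,\gamma}(\x)$ in $\U_\delta(\x)$ contains the product of the complements of $R_{g,\gamma}(\x')$ and $R_{h,\gamma}(\x'')$; taking measures and applying $\chi_g,\chi_h$ gives $\mu(\U_\delta(\x))-\mu(R_{f,\gamma}(\x))\ge\chi_g(\gamma_1,\dots,\gamma_j)\,\chi_h(\gamma_{j+1},\dots,\gamma_k)$, and a product of invertible monotone functions is invertible, so $\chi_f$ is admissible. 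Finally, if $f$ is in addition safety-suitable then it is region-, value- and safety-suitable, hence analyzable by Definition~\ref{def-analyzable}, independently of whether $j=\ell$.

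The main obstacle I expect is the bookkeeping in the two critical-set containments — in particular handling the shared block $A_{gh}$ (so that extending a critical point of $g$ to one of $f$ is legitimate even though those coordinates also feed into $h$) and putting less-critical points and plain zeros on exactly the same footing as genuine critical points, so that membership in $C_g$ or $C_h$ is all one ever needs. Once the containments are pinned down, the measure estimates and the invertibility claims are routine.
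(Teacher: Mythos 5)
Your forward containment (every true critical point of $f=g\cdot h$ projects into $C_g$ or $C_h$, via the split $|g(a_i)|\le|h(a_i)|$ and $|g(a_i)|^2\le|g(a_i)h(a_i)|\to 0$) is sound and is in fact more than the paper bothers to verify; it correctly delivers the $\nu_f$ bound for the \emph{true} region of uncertainty. The gap is in the reverse containment, on which your value-suitability argument hinges. In the shared-variable case $j<\ell$, appending the fixed coordinates $x_{\ell+1},\dots,x_k$ to a witnessing sequence for a critical point $c'$ of $g$ gives a sequence along which the $h$-factor varies in the shared coordinates $x_{j+1},\dots,x_\ell$, and your justification that $f\to 0$ along it rests on ``$h$ is bounded on its domain'' --- which is not implied by region-value-suitability (there is no upper bound on $|h|$ anywhere in the hypotheses; sup-value-suitability only enters in Section~\ref{sec-rational-function}). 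Worse, the statement you are trying to prove --- that the bounds hold relative to the region of uncertainty built from the \emph{true} critical set of $f$ --- is actually false: take $k=1$, $j=0$, $\ell=1$, $g(x)=x$ for $x\neq 0$, $g(0)=1$, and $h(x)=1/x$ for $x\notin\{0,x^0\}$, $h(0)=1$, $h(x^0)=\varepsilon/x^0$ for one small $x^0>0$ and tiny $\varepsilon>0$. Both $g$ and $h$ are region-value-suitable (with $\varphi_g(\gamma)=\gamma$, and $C_h=\emptyset$, $\varphi_h$ the positive constant $\min|h|$), yet $f\equiv 1$ except $f(x^0)=\varepsilon$, so the true critical set of $f$ is empty, $R_{f,\gamma}=\emptyset$, and $|f(x^0)|<\varphi_g(\gamma)\varphi_h(\gamma)$ for $\gamma$ not too small. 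So no argument can close your reverse containment in general.

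The paper's proof takes the other convention, which is what the theorem is meant to assert: it \emph{constructs} $R_{f,\gamma}$ as the union of the two cylinders $R_{g}\times\pi_{>\ell}(\bar{U}_\delta(\x))$ and $\pi_{\le j}(\bar{U}_\delta(\x))\times R_{h}$ (equivalently, it enlarges the critical set of $f$, which the framework explicitly allows --- compare the remark ``we may add points to the critical set'' and the step ``we set $C_f(\x):=C_g(\x)$'' in Theorem~\ref{theo-lower-bound}); value- and region-suitability are then read relative to this constructed region, and both bounds, as well as $\chi_f=\chi_g\cdot\chi_h$ in the disjoint case $j=\ell$, follow immediately from the suitability of $g$ and $h$, with no containment analysis needed. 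Your disjoint-case reasoning and the measure bookkeeping are fine, and your forward containment would be a nice supplementary check that the enlargement is genuinely an enlargement; but to make the proof correct you must adopt the constructed-region reading rather than chase the true critical set of $f$. (A smaller point: the invertibility of $\nu_f$ as a sum capped by a constant is asserted rather than argued, but the paper is equally silent there.)
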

\begin{proof}
  Part 1 (value-suitable).
  Let $x\in U_\delta(\x)$ such that
  $(x_1,\ldots,x_\ell)$
  does not lie in the region of uncertainty\footnote{%
    To avoid confusion,
    we occasionally add the function name to the index of
    the region of uncertainty or the perturbation area
    within the proof, e.g.\ $R_{f,\gamma}$ and $U_{f,\delta}$.}
  of $g$, that means
  \begin{eqnarray}\label{for-g-region-condition}
    (x_1,\ldots,x_\ell)
    \not\in
    R_{g,(\gamma_1,\ldots,\gamma_\ell)}(\x_1,\ldots,\x_\ell),
  \end{eqnarray}
  and that $(x_{j+1},\ldots,x_k)$ does not lie in the region of uncertainty
  of $h$, that means
  \begin{eqnarray}\label{for-h-region-condition}
    (x_{j+1},\ldots,x_k)
    \not\in
    R_{h,(\gamma_{j+1},\ldots,\gamma_k)}(\x_{j+1},\ldots,\x_k).
  \end{eqnarray}
  Because $g$ and $h$ are value-suitable,
  we obtain:
  \begin{eqnarray*}
    |f(x)| &=& |g(x_1,\ldots,x_\ell)| \cdot |h(x_{j+1},\ldots,x_k)| \\
    &\ge& \varphi_g(\gamma_1,\ldots,\gamma_\ell) \cdot
    \varphi_h(\gamma_{j+1},\ldots,\gamma_k) \\
    &=& \varphi_f(\gamma)
  \end{eqnarray*}
  on the absolute value of $f$.

  Part 2 (region-suitable).
  Because of the argumentation above,
  we must construct
  the region of uncertainty $R_f$
  such that $x\in\R^k$ lies outside of $R_f$
  only if the conditions in Formula~(\ref{for-g-region-condition})
  and~(\ref{for-h-region-condition}) are fulfilled.

  Case $j=\ell$.
  Then the arguments of $g$ and $h$ are disjoint.
  This case is illustrated in Figure~\ref{fig-region-ag-ah}.
  \begin{figure}[t]\centering
    \includegraphics[width=.71\columnwidth]{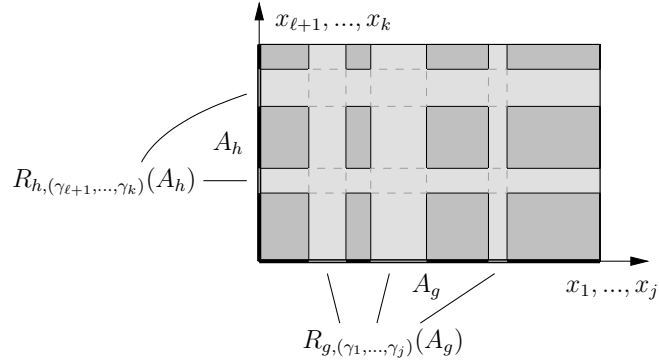}
    \caption{Case $j=\ell$:
      The (dark shaded) complement of $R_f$
      is the Cartesian product of the complement of $R_g$
      and the complement of $R_h$.}
    \label{fig-region-ag-ah}
  \end{figure}
  We observe that for each point $(x_1,\ldots,x_j)$ outside of $R_g$
  and each point $(x_{\ell+1},\ldots,x_k)$ outside of $R_h$
  their concatenation $x$ lies outside of $R_f$.
  Therefore we determine the volume of the complement of $R_f$
  inside of the perturbation area as
  \begin{eqnarray*}
    \mu\left(U_{f,\delta}(\x)\setminus R_{f,\gamma}(\x)\right)
    &=&
    \mu\left(U_{g,(\delta_1,\ldots,\delta_j)}(\x_1,\ldots,\x_j) \right.\\
    &&\qquad\qquad \left.\setminus R_{g,(\gamma_1,\ldots,\gamma_j)}(\x_1,\ldots,x_j)\right) \\
    && 
    \cdot\;\mu\left(U_{h,(\delta_{\ell+1},\ldots,\delta_k)}(\x_{\ell+1},\ldots,\x_k) \right.\\
    &&\qquad\qquad \left.\setminus R_{h,(\gamma_{\ell+1},\ldots,\gamma_k)}(\x_{\ell+1},\ldots,x_k)\right).
  \end{eqnarray*}
  As a consequence Formula~(\ref{for-beta-product-rule}) is true.

  Case $j<\ell$.
  In contrast to the discussion above,
  $g$ and $h$
  share the arguments $x_{j+1},\ldots,x_\ell$.
  This case is illustrated in Figure~\ref{fig-region-ag-agh-ah}.
  \begin{figure}[t]\centering
    \includegraphics[width=.78\columnwidth]{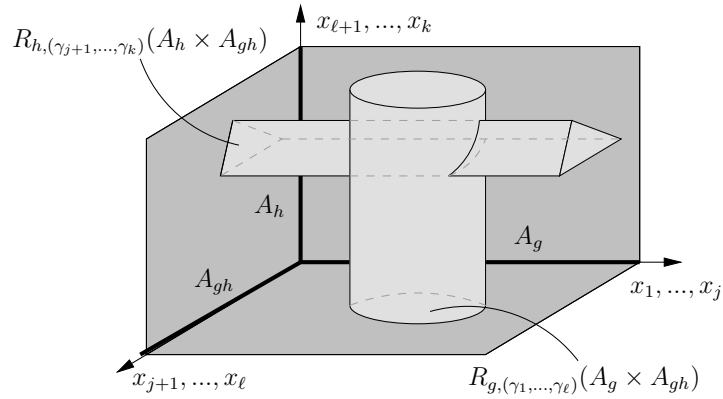}
    \caption{Case $j<\ell$:
      The (light shaded) region of uncertainty $R_f$ is the union of
      two Cartesian products.}
    \label{fig-region-ag-agh-ah}
  \end{figure}
  We denote the projection of the first $j$
  (respectively, the last $k-\ell$)
  coordinates by $\pi_{\le j}$ (respectively, $\pi_{> \ell}$).
  In this case,
  Formula~(\ref{for-beta-product-rule})
  does not have to be true.
  That is why
  we define $R_f$ as
  \begin{eqnarray*}
    R_{f,\gamma}(\x)
    &:=&
    R_{g,(\gamma_1,\ldots,\gamma_\ell)}(\x_1,\ldots,\x_\ell)
      \times
      \pi_{>\ell} (\U_\delta(\x)) \\
    && \cup \;
      \pi_{\le j} (\U_\delta(\x))
      \times
    R_{h,(\gamma_{j+1},\ldots,\gamma_k)}(\x_{j+1},\ldots,\x_k).
  \end{eqnarray*}
  Now we can upper-bound the volume of $R_f$
  by means of $\nu_g$ and $\nu_h$
  which leads immediately to the sum in the last line of
  Formula~(\ref{for-nug-plus-nuh}).
  Of course,
  the volume of the region of uncertainty is
  bounded by the volume of the perturbation area
  which justifies the first line of Formula~(\ref{for-nug-plus-nuh}).
  This finishes the proof.
  \qed
\end{proof}

\subsection{Min Rule, Max Rule}
\label{sec-rule-min-max}

  The next two rules state that the minimum and maximum
  of finitely many region-value-suitable functions
  are also region-value-suitable.
  Furthermore,
  we show how to derive appropriate bounds.
\begin{theorem}[min, max]\label{theo-ruleminmax}
  Let $g$ and $h$ be two region-value-suitable functions
  as defined in Theorem~\ref{theo-product}.
  Then the functions
  \begin{eqnarray*}
    \fmin,\fmax&:&\U_{\delta}(A_g \times A_{gh} \times A_h)\to\R,\\
    \fmin(x_1,\ldots,x_k) &:=& \min \{g(x_1,\ldots,x_\ell), h(x_{j+1},\ldots,x_k)\}\\
    \fmax(x_1,\ldots,x_k) &:=& \max \{g(x_1,\ldots,x_\ell), h(x_{j+1},\ldots,x_k)\}
  \end{eqnarray*}
  are region-value-suitable
  with bounds $\varphi_{\fmin}$ and $\nu_{\fmin}$ for $\fmin$
  and bounds $\varphi_{\fmax}$ and $\nu_{\fmax}$  for $\fmax$
  where
  \begin{eqnarray}
    \nonumber %
    \varphi_{\fmin}(\gamma) &:=&
      \min \{ \varphi_g(\gamma_1,\ldots,\gamma_\ell),
        \varphi_h(\gamma_{j+1},\ldots,\gamma_k)\} \\
    \label{for-phi-max-rule}
    \varphi_{\fmax}(\gamma) &:=&
      \max \{ \varphi_g(\gamma_1,\ldots,\gamma_\ell),
        \varphi_h(\gamma_{j+1},\ldots,\gamma_k)\}\\%
    \nu_{\fmin}(\gamma) := \nu_{\fmax}(\gamma) 
      &:=& \nu_f(\gamma)
      \text{ (see Formula~(\ref{for-nug-plus-nuh}))}. \nonumber
  \end{eqnarray}
  Furthermore, if $j=\ell$,
  we can replace
  $\nu_{\fmin}(\gamma)$ and $\nu_{\fmax}(\gamma)$
  by the tighter bounds
  \begin{eqnarray*}
    \chi_{\fmin}(\gamma) := \chi_{\fmax}(\gamma)
    &:=&
      \chi_g(\gamma_1,\ldots,\gamma_j) \cdot
      \chi_h(\gamma_{j+1},\ldots,\gamma_k).
  \end{eqnarray*}
  If $\fmin$ (respectively $\fmax$) is in addition safety-suitable,
  it is also analyzable
  (independent of $j=\ell$).
\end{theorem}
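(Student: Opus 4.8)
The plan is to mirror the proof of the product rule (Theorem~\ref{theo-product}), since $\fmin$ and $\fmax$ carry almost exactly the same structure as a product $g\cdot h$: in all three cases the region of uncertainty of the composed function can be built from the regions of uncertainty of $g$ and $h$ in precisely the same way, and only the \emph{value} bound changes. Accordingly I would split the argument into a region-suitability part and a value-suitability part, and dispose of analyzability as a triviality at the end.

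For the region part I would first check that every critical (or less-critical) point of $\fmin$ (respectively $\fmax$) projects into a critical or less-critical point of $g$ or of $h$. If $(a_i)$ lies in $\U_\delta(\x)$ with $\fmin(a_i)\to 0$, then $\min\{g(a_i),h(a_i)\}\to 0$, so both sequences are eventually bounded below by a null sequence and the smaller one tends to $0$; hence along a subsequence $g\to 0$ or $h\to 0$ on the corresponding projection. For $\fmax$ the symmetric argument works with ``larger'' in place of ``smaller''; and zeros of $\fmin$ or $\fmax$ are zeros of $g$ or of $h$. Consequently the critical set of $\fmin$ and of $\fmax$ is contained in the critical set used for the product $gh$, so I may take $R_{\fmin,\gamma}=R_{\fmax,\gamma}:=R_{f,\gamma}$ from Theorem~\ref{theo-product}. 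This immediately gives $\nu_{\fmin}=\nu_{\fmax}=\nu_f$, and in the disjoint-argument case $j=\ell$ (Figure~\ref{fig-region-ag-ah}) the very same Cartesian-product decomposition of the complement yields the tighter bound $\chi_{\fmin}=\chi_{\fmax}=\chi_g\cdot\chi_h$.

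The value part is the core. Fix $\gamma\in\GAL$ and $\x\in A$ and take $x\in\U_\delta(\x)\setminus R_{f,\gamma}(\x)$. From the description of $R_f$ above, $x$ lies outside both $R_g$ and $R_h$ (in the $j=\ell$ case because the complement is a product of complements; in the $j<\ell$ case because $R_f$ is the union of two ``stripes''), so value-suitability of $g$ and $h$ gives $|g(x_1,\ldots,x_\ell)|\ge\varphi_g$ and $|h(x_{j+1},\ldots,x_k)|\ge\varphi_h$. It then remains to combine these with the elementary inequalities that for all reals $a,b$ one has $|\min\{a,b\}|\ge\min\{|a|,|b|\}$ and $|\max\{a,b\}|\ge\min\{|a|,|b|\}$, which I would prove by the three-line case split ``$a,b\ge 0$ / $a,b\le 0$ / opposite signs''. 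Applying this with $a=g(x_1,\ldots,x_\ell)$ and $b=h(x_{j+1},\ldots,x_k)$ yields $\varphi_{\fmin}(\gamma)=\min\{\varphi_g,\varphi_h\}\le|\fmin(x)|$, and likewise $\min\{\varphi_g,\varphi_h\}\le|\fmax(x)|$.

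I expect the stated value bound for $\fmax$ to be the main obstacle, since the elementary argument only delivers $|\max\{a,b\}|\ge\min\{|a|,|b|\}$ (the example $a=-5,\ b=3$ shows $|\max\{a,b\}|$ can be strictly smaller than $\max\{|a|,|b|\}$). I would therefore either record $\varphi_{\fmax}=\min\{\varphi_g,\varphi_h\}$ — which is exactly what the direct-approach computation for the $\inbox$ predicate in Section~\ref{sec-direct-approach} also produces — or look for an additional hypothesis on the signs of $g$ and $h$, or a suitably enlarged region of uncertainty, under which $\max\{\varphi_g,\varphi_h\}$ could be justified. Once the value bounds are fixed, analyzability is immediate: region-, value- and (by assumption) safety-suitability together are exactly the definition of analyzable (Definition~\ref{def-analyzable}), independently of whether $j=\ell$.
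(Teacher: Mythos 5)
Your argument follows essentially the same route as the paper: the paper's entire proof of Theorem~\ref{theo-ruleminmax} is the single sentence that ``the line of argumentation follows exactly the proof of Theorem~\ref{theo-product}'', i.e.\ reuse the region of uncertainty $R_{f,\gamma}$ constructed for the product (union of two stripes when $j<\ell$, Cartesian product of complements when $j=\ell$), which is exactly your region part, and your reduction of critical points of $\fmin,\fmax$ to critical points of $g$ or $h$ via subsequences is the right way to make that sketch explicit. Your scruple about the value bound for $\fmax$ is well-founded and is the one point the paper's one-line proof does not address: outside $R_g$ and $R_h$ one only knows $|g|\ge\varphi_g$ and $|h|\ge\varphi_h$ with no sign information, and then the elementary case split gives $|\max\{g,h\}|\ge\min\{\varphi_g,\varphi_h\}$ only (both negative, or mixed signs, defeat the larger bound, as your example $a=-5$, $b=3$ shows even when $\varphi_g=5$, $\varphi_h=3$ are tight). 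So the stated $\varphi_{\fmax}(\gamma)=\max\{\varphi_g,\varphi_h\}$ is not justified by the product-rule argument and fails in general; it is valid only under an extra hypothesis such as $g,h\ge0$ outside their regions of uncertainty (then $\max\{g,h\}\ge g\ge\varphi_g$ and $\ge h\ge\varphi_h$), or with a sign analysis of the particular predicate. Your fallback $\varphi_{\fmax}=\min\{\varphi_g,\varphi_h\}$ is the safe general bound, and it is consistent with the paper's own direct-approach treatment of the $\inbox$ predicate (a max of two expressions) in Section~\ref{sec-direct-example}, where the derived value bound is a minimum, not a maximum. Everything else in your proposal (the $\chi$-bound for $j=\ell$, and analyzability being immediate from Definition~\ref{def-analyzable} once safety-suitability is assumed) matches the paper.
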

\begin{proof}
  The line of argumentation follows exactly
  the proof of Theorem~\ref{theo-product}.
  \qed
\end{proof}

\subsection{General Rule}
\label{sec-rule-general}

  We do not claim that the list of rules is complete.
  On the contrary,
  we suggest that the approach may be extended by further rules.
  We emphasize that the bottom-up approach is constructive:
  We build new region-value-suitable functions from
  already proven region-value-suitable functions.
  The argumentation always follows the proof of the product rule,
  that means,
  the compound of $g$ and $h$ inherits the desired property from $g$ and $h$:
  (a) \emph{outside of the union} of the regions of uncertainty
  for \emph{shared} arguments,
  and (b) \emph{inside of the Cartesian product of the complement}
  of the regions of uncertainty
  for \emph{disjoint} arguments
  (see Figure~\ref{fig-region-ag-ah}).

  We remark that,
  if we want to derive the bounds for a specific function $f$,
  we first need to determine the parse tree of $f$ according to the known rules;
  this may be a non-obvious task in general.
  The instructions
  of the bottom-up approach are summed up in the following table.
\begin{table}[h]
  \centerline{\fbox{
  \begin{minipage}{0.9\columnwidth}\medskip
    Step 1: determine parse tree according to the rules \\
    Step 2: determine bounds bottom-up according to the parse tree \medskip
  \end{minipage}}}
    \caption{Instructions for performing the bottom-up approach.}
    \label{tab-steps-bu}
\end{table}

\subsection{Example: Multivariate Polynomials}
\label{sec-multivariate-poly}

  It is important to see
  that the rules lead to a generic approach
  to construct entire classes of region-value-suitable functions.
  In the following we use this approach to analyze multivariate polynomials.
  (A different way to analyze multivariate polynomials
  was presented before in~\cite{MOS11}.)
  So far we know
  that univariate polynomials are region-value-suitable.
  Now we show
  how we transfer
  the region-value-suitability property of $(k-1)$-variate polynomials
  to $k$-variate polynomials
  by means of the product rule
  and the lower bound rule.
  Moreover,
  we completely analyze $k$-variate polynomials afterwards.

\subsection*{Preparation}

  We prepare the analysis of multivariate polynomials with further definitions.
  Let $k\in\N$.
  For $\beta\in\N_0^k$ and $x\in\R^k$
  we define $x^\beta$
  as the term
  $x^\beta:=x_1^{\beta_1} \cdot \ldots \cdot x_k^{\beta_k}$.

  Next
  we define the reverse lexicographic order\footnote{%
    For lexicographic order see Cormen~et~al.~\cite{CLR90}.}
  on $k$-tuples.
  Let $\alpha,\beta\in\N_0^k$.
  Then we define $\alpha\,\Lex\,\beta\label{def-lex-inline}$
  if and only if
  there is $\ell\in\{1,\ldots,k\}$
  such that
  $\alpha_{j}=\beta_{j}$
  for all $\ell < j \le k$
  and
  $\alpha_{\ell}<\beta_{\ell}$.

  In addition
  we denote by $\Perm(k)$
  the set of bijective functions
  $\sigma:\{1,\ldots,k\} \to \{1,\ldots,k\}$.
  In other words,
  $\Perm(k)$ is
  the set of
  permutations\footnote{%
    Algebra: For permutation see Lamprecht~\cite{L93}.}
  of $\{1,\ldots,k\}$.

  Now let $\alpha,\beta\in\N_0^k$
  and let $\sigma\in\Perm(k)$.
  We define the \emph{permutation $\sigma$ of a tuple
  $\alpha=(\alpha_1,\ldots,\alpha_k)$} by
  $\sigma(\alpha) :=
      \left(\alpha_{\sigma^{-1}(1)}, \ldots, \alpha_{\sigma^{-1}(k)} \right)$.
  Further we define the
  \emph{reverse lexicographic order
  after the permutation\label{def-lexafter-inline} $\sigma$} as
  \begin{eqnarray*}
    \alpha \,\Lex_\sigma\, \beta
    \FORMSEP & :\Longleftrightarrow & \FORMSEP
    \sigma(\alpha)\,\Lex\, \sigma(\beta).
  \end{eqnarray*}
  Let $\Ind\subset\N_0^k$ be finite.
  We denote the set of largest elements in $\Ind$ by
  \begin{eqnarray*}
    \IM &:=& \left\{
      \beta\in\Ind \ST
      \text{there is }
      \sigma\in\Perm(k)
      \text{ such that }
      \alpha\Lex_\sigma\beta
      \text{ for all }
      \alpha\in\Ind, \alpha\ne\beta
      \right\}.
  \end{eqnarray*}
  We observe
  that there may be $\beta\in\Ind$ which do not belong to $\IM$.
  We observe further
  that different permutations may lead to the same local maximum.
  For each $\beta\in\IM$ we collect these permutations in the set
  \begin{eqnarray*}
    \Perm_{\beta}(k) &:=& \left\{
      \sigma\in\Perm(k) \ST
      \beta = \max\nolimits_{\Lex_\sigma}\Ind
      \right\}.
  \end{eqnarray*}

\subsection*{The region- and value-suitability}

  We prove that all multivariate polynomials are region-value-suitable.
\begin{lemma}\label{lem-multi-poly-suit}
  Let $\PREDL$ be a predicate description for
  the $k$-variate polynomial ($k\ge 2$)
  \begin{eqnarray*}%
    f(x) := \sum_{\iota\in\Ind} a_\iota x^\iota
  \end{eqnarray*}
  where $\Ind\subset\N_0^k$ is finite
  and $a_\iota\in\R_{\neq 0}$ for all $\iota\in\Ind$.
  Then $f$ is region-value-suitable.
  There are bounding functions
  for every $\beta\in\IM$:
  \begin{eqnarray*}
    \varphi_f(\gamma) &:=& |a_\beta| \cdot \gamma^\beta\\
    \chi_f(\gamma) &:=& \prod_{i=1}^{k} 2\left(\delta_i-\beta_i\gamma_i\right).
  \end{eqnarray*}
\end{lemma}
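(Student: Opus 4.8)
The plan is to prove region- and value-suitability simultaneously by induction on the number $k$ of variables, mirroring the structure of the product rule (Theorem~\ref{theo-product}): the leading coefficient of $f$ in its top variable plays the role of one factor and the univariate bound of Lemma~\ref{lem-uni-poly-ana} the role of the other. The base case $k=1$ is Lemma~\ref{lem-uni-poly-ana}, where $\IM=\{(d)\}$ with $d$ the degree, $\varphi_f(\gamma)=|a_d|\gamma^d$, and $\nu_f(\gamma)=2d\gamma$, i.e.\ $\chi_f(\gamma)=2(\delta_1-d\gamma_1)$ — exactly the claimed shape. For the inductive step I would fix an arbitrary $\beta\in\IM$ and a permutation $\sigma\in\Perm_\beta(k)$; since permuting variables only relabels coordinates together with $\delta,\gamma,\x,U_\delta$, and both claimed bounding functions are invariant under this relabeling, I may assume $\sigma=\mathrm{id}$, so that $\beta=\max\nolimits_{\Lex}\Ind$. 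Collecting the terms of $f$ by the power of the last variable,
\[
  f(x)\;=\;\sum_{j=0}^{m}g_j(x_1,\ldots,x_{k-1})\,x_k^j,\qquad m:=\beta_k,
\]
with $g_j$ the $(k-1)$-variate polynomial of index set $\{(\iota_1,\ldots,\iota_{k-1}):\iota\in\Ind,\ \iota_k=j\}$ and inherited nonzero coefficients.

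The first step is combinatorial: $g_m\not\equiv 0$, and $\beta':=(\beta_1,\ldots,\beta_{k-1})$ is the reverse-lexicographic maximum of the index set of $g_m$, with associated coefficient $a_\beta$. Indeed, any other index of $g_m$ is $\bar\iota=(\iota_1,\ldots,\iota_{k-1})$ for some $\iota\in\Ind$ with $\iota_k=\beta_k$, $\iota\ne\beta$; then $\iota\Lex\beta$, and since the last coordinates agree the witnessing index is $\ell<k$, which is precisely $\bar\iota\Lex\beta'$. Hence $\beta'\in\IM$ for $g_m$ via the identity permutation, and the induction hypothesis (or Lemma~\ref{lem-uni-poly-ana} if $k-1=1$; the case $g_m\equiv a_\beta$ constant is trivial) gives that $g_m$ is region-value-suitable with $\varphi_{g_m}(\gamma')=|a_\beta|(\gamma')^{\beta'}$ and $\chi_{g_m}(\gamma')=\prod_{i=1}^{k-1}2(\delta_i-\beta_i\gamma_i)$.

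Next I would build the region of uncertainty of $f$ as in the shared-argument case of the product rule. For $x'\notin R_{g_m,\gamma'}(\x')$, value-suitability of $g_m$ gives $|g_m(x')|\ge|a_\beta|(\gamma')^{\beta'}>0$, so the slice $p_{x'}(t):=f(x',t)$ is a univariate polynomial in $t$ of degree exactly $m$ with leading coefficient $g_m(x')$, and Lemma~\ref{lem-uni-poly-ana} gives
\[
  |f(x',x_k)|=|p_{x'}(x_k)|\;\ge\;|g_m(x')|\,\gamma_k^m\;\ge\;|a_\beta|(\gamma')^{\beta'}\gamma_k^m\;=\;|a_\beta|\gamma^\beta
\]
whenever $x_k$ has distance at least $\gamma_k$ from each of the at most $m$ numbers $\RE(\rho)$, $\rho$ a root of $p_{x'}$. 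I take as region of uncertainty the complement in $\U_\delta(\x)$ of
\[
  G:=\{\,x\in\U_\delta(\x):x'\notin R_{g_m,\gamma'}(\x')\text{ and }\mathrm{dist}(x_k,\{\RE(\rho):\rho\text{ root of }p_{x'}\})\ge\gamma_k\,\};
\]
on $G$ the display above shows $|f|\ge|a_\beta|\gamma^\beta=:\varphi_f(\gamma)$, which is value-suitability. For region-suitability I bound $\mu(G)$ below by Fubini in $x'$: each $x_k$-section of $G$ (over $x'\notin R_{g_m,\gamma'}(\x')$) arises from an interval of length $2\delta_k$ by deleting at most $m$ intervals of length at most $2\gamma_k$, hence has measure at least $2(\delta_k-m\gamma_k)>0$ (for $\hat\gamma$ small enough), so
\[
  \mu(G)\;\ge\;2(\delta_k-m\gamma_k)\cdot\mu(\U_\delta(\x')\setminus R_{g_m,\gamma'}(\x'))\;\ge\;2(\delta_k-m\gamma_k)\cdot\chi_{g_m}(\gamma')\;=\;\prod_{i=1}^{k}2(\delta_i-\beta_i\gamma_i)=:\chi_f(\gamma),
\]
using $\mu(\U_\delta(\x')\setminus R_{g_m,\gamma'}(\x'))\ge\mu(U_\delta(\x'))-\nu_{g_m}(\gamma')=\chi_{g_m}(\gamma')$. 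Then $\nu_f:=\mu(U_\delta(\x))-\chi_f$ bounds the volume of the region of uncertainty; unless $f$ is a nonzero constant ($\Ind=\{0\}$, covered by the $\nu_f\equiv0$ clause) we have $\beta\ne0$, so $\chi_f$ is strictly decreasing, hence invertible, along $\GAL$. Undoing $\sigma$ returns the statement for the original $\beta$.

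The hard part will be the region-suitability bookkeeping rather than the value bound: unlike in the product rule, the fp-safe set in the $x_k$-direction depends on $x'$ (the roots of $p_{x'}$ move with $x'$), so $G$ is not a Cartesian product and its volume has to be estimated by the Fubini argument above instead of by multiplying two volumes. The other point that needs care is the combinatorial claim that the reverse-lexicographic maximum of $f$ descends to the leading coefficient $g_m$, which is what lets the induction hypothesis be applied with the correct index $\beta'$; the remaining estimates are routine, Lemma~\ref{lem-uni-poly-ana} being used slice-by-slice in the spirit of the lower-bounding rule.
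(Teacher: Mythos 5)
Your proof is correct and rests on the same two pillars as the paper's: peel off the leading coefficient $g_m=b_{\beta_k}$ of $f$ in the top variable, apply the induction hypothesis to it (your combinatorial observation that $\beta'$ is the reverse-lexicographic maximum of the index set of $g_m$ is exactly what the paper asserts when it says the maximal exponent of $x_{\sigma(j-1)}$ in $b_{\beta_{\sigma(j)}}$ is $\beta_{\sigma(j-1)}$), and control the last variable by the complex factorization of the slice polynomial as in Lemma~\ref{lem-uni-poly-ana}. Where you diverge is the volume bookkeeping. The paper, true to the theme of Section~\ref{sec-bottom-up}, forces the estimate into the mold of Theorem~\ref{theo-lower-bound} and Theorem~\ref{theo-product}: it introduces the artificial constant function $h_j\equiv\gamma_{\sigma(j)}^{\beta_{\sigma(j)}}$, assigns it the region bound $\nu_{h_j}=2\beta_{\sigma(j)}\gamma_{\sigma(j)}$ (which really accounts for the at most $\beta_{\sigma(j)}$ roots of $f_j$ in $x_{\sigma(j)}$), and multiplies $\chi$-bounds via the disjoint-argument case of the product rule. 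You instead bound $\mu(G)$ directly by Fubini, observing that each $x_k$-fiber over a good $x'$ loses at most $2\beta_k\gamma_k$ of measure regardless of where the roots of $p_{x'}$ sit. Your route is more transparent on precisely the point the paper's product-rule invocation glosses over — the excluded set is not a Cartesian product because the roots move with $x'$, and the inequality $|f_j|\ge|g_j|\cdot h_j$ only holds outside a region that the product rule, as literally stated for $f=g\cdot h$, does not see — at the cost of not exercising the calculation rules the section is meant to showcase. Both arguments produce the identical bounds $\varphi_f(\gamma)=|a_\beta|\gamma^\beta$ and $\chi_f(\gamma)=\prod_i 2(\delta_i-\beta_i\gamma_i)$, and your handling of the degenerate cases ($\beta=0$, $g_m$ constant, $k-1=1$) and of the invertibility of $\chi_f$ on $\GAL$ is sound.
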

\begin{proof}
  Preparing consideration.
  Let $\beta\in\IM$ and let $\sigma\in\Perm_\beta(k)$.
  Once chosen, $\beta$ and $\sigma$ are fixed in this proof.
  Because of the reverse lexicographic order,
  the maximal exponent of $x_{\sigma(k)}$ in $f(x)$ is $\beta_{\sigma(k)}$.
  Therefore we can write $f$ as
  \begin{eqnarray*}
  f(x) = b_{\beta_{\sigma(k)}} \cdot x_{\sigma(k)}^{\beta_{\sigma(k)}}
         + b_{\beta_{\sigma(k)}-1} \cdot x_{\sigma(k)}^{\beta_{\sigma(k)}-1}
	 + \ldots
	 + b_1 \cdot x_{\sigma(k)}
	 + b_0
  \end{eqnarray*}
  where the $b_i(x_{\sigma(1)},\ldots,x_{\sigma(k-1}))$
  are $(k-1)$-variate polynomials
  for $0\le i\le\beta_{\sigma(k)}$.
  For a moment
  we consider the complex continuation of the polynomial $f$,
  i.e.\ $f\in\C[z]$.
  Furthermore we \emph{assume}\footnote{%
    We discuss the assumption in Part~2 of the proof.}
  that the value of $b_{\beta_{\sigma(k)}}$ is not zero.
  Then there are $\beta_{\sigma(k)}$ (not necessarily distinct)
  functions $\zeta_i:\C^{k-1}\to\C$ such that
  we can write $f$ in the way
  \begin{eqnarray*}
    f(z) &=& b_{\beta_{\sigma(k)}}(z_{\sigma(1)},\ldots,z_{\sigma(k-1)})
           \cdot
	   \prod_{i=1}^{\beta_{\sigma(k)}}
	   (z_{\sigma(k)}-\zeta_i(z_{\sigma(1)},\ldots,z_{\sigma(k-1)})).
  \end{eqnarray*}
  We remark that if we consider $f$ as a polynomial in $z_{\sigma(k)}$
  with parameterized coefficients $b_i$,
  then the functions $\zeta_i$ define the parameterized roots.
  Even if the location of the roots is variable,
  the total number of the roots is definitely bounded by $\beta_{\sigma(k)}$.
  In case that $z_{\sigma(k)}$ has a distance of at least
  $\gamma_{\sigma(k)}$ to the values $\zeta_i$,
  we can lower bound the absolute value of $f$ by
  \begin{eqnarray}\label{for-factorizef2}
    |f(z)| &\ge&
    {\left|b_{\beta_{\sigma(k)}}\left(z_{\sigma(1)},\ldots,z_{\sigma(k-1)}\right)\right|}
    \cdot
    {\gamma_{\sigma(k)}^{\beta_{\sigma(k)}}}
  \end{eqnarray}
  Therefore this bound is especially true for real arguments.
  Before we end the consideration in the complex space,
  we add a remark.
  Sagraloff et al.~\cite{SY11,MOS11} suggested a way to improve this estimate:
  While preserving the \emph{total} region-bound $\varphi_f$,
  it is possible to redistribute the region of uncertainty
  around the zeros of $f$ in a way where
  the amount of the \emph{individual} region-contribution per zero may differ; 
  they have shown that a certain redistribution improves the estimate
  in Formula~(\ref{for-factorizef2}).
  Next we use mathematical induction to prove
  that $f$ is region-value-suitable.

Part~1 (basis). Let $j=1$.
  Due to Lemma~\ref{lem-uni-poly-ana}
  univariate polynomials are region-value-suitable.

Part~2 (inductive step). Let $1<j\le k$.
  We define the function $g_j$ as
  \begin{eqnarray*}
    g_j\left(z_{\sigma(1)},\ldots,z_{\sigma(j-1)}\right)
    &:=&
    {b_{\beta_{\sigma(j)}}\left(z_{\sigma(1)},\ldots,z_{\sigma(j-1)}\right)}.
  \end{eqnarray*}
  Since $g_j$ is a polynomial in $j-1$ variables,
  $g_j$ is region-value-suitable by induction.
  Because of Theorem~\ref{theo-lower-bound},
  the function $|g_j|$ is region-value-suitable
  with the same bounds.
  Furthermore, we define the functions
  \begin{eqnarray*}
    h_j(z_{\sigma(j)})
      &:=& {\gamma_{\sigma(j)}^{\beta_{\sigma(j)}}} \\
    \varphi_{h_j}(\gamma_{\sigma(j)})
      &:=& \gamma_{\sigma(j)}^{\beta_{\sigma(j)}} \\
    \nu_{h_j}(\gamma_{\sigma(j)})
      &:=& 2\beta_{\sigma(j)}\gamma_{\sigma(j)}.
  \end{eqnarray*}
  Obviously
  $h_j$ is region-value-suitable.
  We have $|f_j|\ge|g_j|\cdot h_j$.
  Then the product $|g_j|\cdot h_j$ is also region-value-suitable
  because of Theorem~\ref{theo-product}.
  Be aware that the construction
  of the estimate in Formula~(\ref{for-factorizef2})
  is based on the assumption that the coefficient
  $b_{\beta_{\sigma(j)}}$ of $f_j$ is not zero.
  We observe that this is only guaranteed outside of
  the region of uncertainty of $g_j$.
  We observe further
  that the construction in the proof of Theorem~\ref{theo-lower-bound}
  preserves the region of uncertainty,
  that means, $R_{g_j}\subset R_{f_j}$.
  Therefore the assumption is justified and
  we can conclude that $f_j$ is region-value-suitable.
  It remains to show that the claimed bounding functions
  $\varphi_f$ and $\nu_f$ are true.
  
Part~3 ($\varphi_f$).
  The basis $j=1$
  follows from Lemma~\ref{lem-uni-poly-ana}:
  \begin{eqnarray*}
    \varphi_{f_1}\left(\gamma_{\sigma(1)}\right)
      &:=&
        \left|a_\beta\right| \cdot \gamma_{\sigma(1)}^{\beta_{\sigma(1)}}
  \end{eqnarray*}
  (Be aware
  that the real coefficient $a_\beta$
  is contained in every $g_j$ for $1< j\le k$.)
  Now let $1<j\le k$.
  For the induction step
  we need the following observation:
  Because of the reverse lexicographic order,
  the maximal exponent of $x_{\sigma(j-1)}$
  in the parameterized coefficient
  $b_{\beta_{\sigma(j)}}(x_{\sigma(1)},\ldots,x_{\sigma(j-1)})$
  is $\beta_{\sigma(j-1)}$.
  We have
  \begin{eqnarray*}
    \varphi_{f_j}\left(\gamma_{\sigma(1)},\ldots,\gamma_{\sigma(j)}\right)
      &:=&
        \left|a_\beta\right|
	\cdot
	\prod_{\ell=1}^j \gamma_{\sigma(\ell)}^{\beta_{\sigma(\ell)}}
  \end{eqnarray*}
  The case $j=k$ proves the claim.

Part~4 ($\chi_f$).
  The basis $j=1$ follows
  from Lemma~\ref{lem-uni-poly-ana}:
  \begin{eqnarray*}
    \chi_{f_1}\left(\gamma_{\sigma(1)}\right)
      &:=&
        2\left(\delta_{\sigma(1)}-\beta_{\sigma(1)}\gamma_{\sigma(1)}\right).
  \end{eqnarray*}
  Now let $1<j\le k$.
  Because the argument list of $g_j$ and $h_j$ are disjoint,
  we apply Formula~(\ref{for-beta-product-rule}) and obtain
  \begin{eqnarray*}
    \chi_{f_j}\left(\gamma_{\sigma(1)},\ldots,\gamma_{\sigma(j)}\right)
      &:=&
        \prod_{\ell=1}^j
        2\left(\delta_{\sigma(\ell)}-\beta_{\sigma(\ell)}\gamma_{\sigma(\ell)}\right).
  \end{eqnarray*}
  The case $j=k$ proves the claim.
  \qed
\end{proof}

\subsection*{The analysis}

  Now we prove the analyzability of multivariate polynomials
  and apply the approach of quantified relations
  to derive a precision function.
\begin{theorem}[multivariate polynomial]\label{theo-multipoly-ana}
  Let $f$ be a $k$-variate polynomial ($k\ge2$) of total degree $d$
  as defined in Lemma~\ref{lem-multi-poly-suit} and
  let $\PREDL$ be a predicate description for $f$
  with cubical neighborhoods
  $\delta_i=\delta_j$ and $\gamma_i=\gamma_j$
  for all $1\le i,j \le k$.
  Then $f$ is analyzable.
  Furthermore,
  we obtain the bounding function
  \begin{eqnarray}\label{for-theo-multi-lsafe}
    \LS(p)
      &\ge&
        \left\lceil
          - \beta^* \log_2 \left(1-\sqrt[k]{p}\right)
	  \;+\; \CM(\beta)
        \right\rceil
  \end{eqnarray}
  where
  \begin{eqnarray*}
      \CM(\beta) &:=&
      \log_2\frac
      {(d+1+\lceil\log_2|\Ind|\rceil)
        \cdot |\Ind|
	\cdot \max_{\iota\in\Ind}|a_\iota|
	\cdot 2^{\E d+{\beta^*}+1}
	\cdot \hat\beta^{\beta^*}}
      {|a_\beta| \cdot
	{(t\delta_1)}^{\beta^*}}.
  \end{eqnarray*}
  for $\beta\in\IM$
  and $\hat\beta:=\max_{1\le i\le k}\beta_i$
  and $\beta^*:=\sum_{1=i}^{k} \beta_i$.
\end{theorem}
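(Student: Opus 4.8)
The plan is to obtain analyzability first and then to run the method of quantified relations (Theorem~\ref{theo-quan-rela}), specialised to the cubical situation $\delta_i=\delta_1$, $\gamma_i=\gamma_1$. Region- and value-suitability are already available: Lemma~\ref{lem-multi-poly-suit} provides, for any fixed $\beta\in\IM$, the bounding functions $\varphi_f(\gamma)=|a_\beta|\,\gamma^\beta$ and $\chi_f(\gamma)=\prod_{i=1}^{k}2(\delta_i-\beta_i\gamma_i)$. What is still missing for Definition~\ref{def-analyzable} is safety-suitability, and for this I would import the error analysis of Section~\ref{sec-guards-safetybounds}: evaluating $f(x)=\sum_{\iota\in\Ind}a_\iota x^\iota$ over $[-2^{\E},2^{\E}]^{k}$ uses at most $d$ multiplications per monomial and a balanced summation tree of depth $\lceil\log_2|\Ind|\rceil$, and a standard forward-error estimate yields an fp-safety bound of the form
\[
  \Sinf(L):=\bigl(d+1+\lceil\log_2|\Ind|\rceil\bigr)\cdot|\Ind|\cdot\max_{\iota\in\Ind}|a_\iota|\cdot 2^{\E d+1-L}.
\]
This function is injective, tends to $0$ as $L\to\infty$, and has the strictly decreasing real inverse $\Sinf^{-1}(s)=\log_2\!\bigl((d+1+\lceil\log_2|\Ind|\rceil)\,|\Ind|\,\max_{\iota\in\Ind}|a_\iota|\cdot 2^{\E d+1}/s\bigr)$; hence $f$ is safety-suitable, and together with Lemma~\ref{lem-multi-poly-suit} it is analyzable.

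Since the lemma delivers $\chi_f$ rather than $\nu_f$, I would use the $\chi$-variant of the method (Steps~$1'$, $2'$ of Remark~\ref{rem-meth-quan-rela}.2). Step~$1'$ gives $\varepsilon_\chi(p):=p\cdot\mu(U_\delta)=p\,(2\delta_1)^{k}$. Step~$2'$ needs $\chi_f^{-1}(\varepsilon_\chi(p))$: on $\GAL$ the map $\gamma_1\mapsto\prod_i 2(\delta_1-\beta_i\gamma_1)$ is strictly decreasing, hence invertible, but to get a closed form I would first pass to the smaller --- and therefore still admissible --- invertible surrogate $\tilde\chi_f(\gamma_1):=(2(\delta_1-\hat\beta\gamma_1))^{k}$ obtained from $\beta_i\le\hat\beta$. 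Solving $\tilde\chi_f(\gamma_1)=\varepsilon_\chi(p)$ gives the cubical choice with all components equal to $\gamma_1(p)=\delta_1(1-\sqrt[k]{p})/\hat\beta$, and since $\chi_f(\gamma(p))\ge\tilde\chi_f(\gamma_1(p))=\varepsilon_\chi(p)$ the region-condition of Definition~\ref{def-verifiable} holds for this $\gamma$.

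Step~3 fixes some $t\in(0,1)$. Step~4 evaluates the value bound on the normal-sized region, using $\gamma^\beta=\gamma_1^{\sum_i\beta_i}=\gamma_1^{\beta^*}$:
\[
  \varphi(p):=\varphi_f\bigl(t\,\gamma(p)\bigr)=|a_\beta|\bigl(t\,\gamma_1(p)\bigr)^{\beta^*}=|a_\beta|\bigl(t\delta_1(1-\sqrt[k]{p})/\hat\beta\bigr)^{\beta^*}.
\]
Step~5 sets $\LS(p)=\lceil\Sinf^{-1}(\varphi(p))\rceil$; substituting the closed form of $\Sinf^{-1}$, I would pull the factor $(1-\sqrt[k]{p})^{\beta^*}$ out of the logarithm as the term $-\beta^{*}\log_2(1-\sqrt[k]{p})$ and collect every remaining, $p$-independent factor --- the error-analysis constants, $2^{\E d+1}$, $\hat\beta^{\beta^*}$, $(t\delta_1)^{-\beta^*}$ and $|a_\beta|^{-1}$ --- into $\CM(\beta)$. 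Up to the small, deliberately conservative amount of slack that is signalled by the inequality in the statement, this reproduces the claimed $\LS(p)$; since analyzability was established above, the proof is then complete.

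The main obstacle is Step~$2'$: once the exponents $\beta_i$ genuinely differ, the region bound $\chi_f(\gamma)=\prod_i 2(\delta_i-\beta_i\gamma_i)$ has no elementary inverse, so one must commit to an invertible surrogate, and the clean choice $(2(\delta_1-\hat\beta\gamma_1))^{k}$ is precisely what produces both the $k$-th root $\sqrt[k]{p}$ and the $\hat\beta^{\beta^*}$ factor in $\CM(\beta)$ (a sharper but unwieldy alternative would retain the individual $\beta_i$). The only other non-routine ingredient is the fp-safety bound for multivariate polynomials imported from Section~\ref{sec-guards-safetybounds}; given its closed form, Step~5 is just the logarithm manipulation already seen in the univariate example.
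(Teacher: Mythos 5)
Your proposal is correct and follows essentially the same route as the paper: analyzability via Lemma~\ref{lem-multi-poly-suit} together with the multivariate fp-safety bound of Corollary~\ref{cor-multipolysafety}, then the $\chi$-variant of the method of quantified relations with the cubical surrogate $2^k(\delta_1-\hat\beta\gamma_1)^k$, yielding the same $\gamma_1(p)$, $\varphi(p)$ and logarithm manipulation in Step~5. Your constant is in fact slightly sharper (it lacks the extra $2^{\beta^*}$, i.e.\ the paper's $(2\hat\beta)^{\beta^*}$ factor), which is exactly the conservative slack absorbed by the inequality in Formula~(\ref{for-theo-multi-lsafe}), so no gap remains.
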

  We observe that $\hat\beta\le d$ and $\beta^*\le d$.
  Note that the choice of $\beta\in\IM$ is an optimization problem:
  We suggest to choose $\beta$
  such that the constant $\beta^*$
  in the asymptotic bound
  $\LS(p) = O\left(-\beta^*\log(1-\sqrt[k]{p})\right)$
  for $p\to1$
  is small.
\begin{proof}
  Part~1 (analyzable).
  Let $\beta\in\IM$.
  Due to Lemma~\ref{lem-multi-poly-suit},
  $f$ is region-value-suitable.
  In addition Corollary~\ref{cor-multipolysafety}
  provides a fp-safety bound $\Sinf(L)$
  for $k$-variate polynomials
  in Formula~(\ref{for-sfl-multivariate}).
  The function $\Sinf(L)$ converges to zero and is invertible.
  It follows that $f$ is safety-suitable and thus analyzable.

  Part~2 (analysis).
  We apply the approach of quantified relations.
  Let $\delta_1,\gamma_1\in\R_{>0}$ and
  $\delta_1=\delta_i$ and $\gamma_1=\gamma_i$
  for all $1\le i\le k$.
  In addition let $\hat\beta := \max_{1\le i\le k} \beta_i$.
  Step~$1'$:
  At first we derive an upper bound $\varepsilon_\chi$ on the volume
  of the complement of the region of uncertainty
  according to the precision $p$.
  Naturally we obtain
  \begin{eqnarray*}
    {\varepsilon_{\chi}}(p)
      &:=& p \prod_{i=1}^{k} 2\delta_i
      \;\; = \;\; p \left(2\delta_1\right)^k.
  \end{eqnarray*}
  Step~$2'$:
  Because of the cubical neighborhood we redefine
  \begin{eqnarray*}
    \chi_f(\gamma) &:=&
      2^k \left( \delta_1 - \hat\beta \gamma_1 \right)^k.
  \end{eqnarray*}
  Then we use $\varepsilon_\chi$ and $\chi_f$ to determine $\gamma_1$:
  \begin{eqnarray*}
       \chi_f(\gamma) &=& {\varepsilon_{\chi}(p)} \\
    \Leftrightarrow
       \makebox[3.0cm]
        {\hfill $2^k \left(\delta_1-\hat\beta\gamma_1\right)^k$}
        &=& p \, 2^k \, \delta_1^k\\
    \Leftrightarrow
       \makebox[3.0cm]
        {\hfill $\left(1-\frac{\hat\beta\gamma_1}{\delta_1}\right)^k$}
        &=& p \\
    \Rightarrow
       \makebox[3.0cm]
       {\hfill $1-\frac{\hat\beta\gamma_1}{\delta_1}$}
        &=& \sqrt[k]{p} \\
    \Leftrightarrow
       \makebox[3.0cm]
      {\hfill $\gamma_1(p)$}
        &:=& \frac{\delta_1\left(1-\sqrt[k]{p}\right)}{\hat\beta}
  \end{eqnarray*}
  Step~3:
  Since $\gamma$ represents the augmented region of uncertainty,
  the normal sized region is induced by $t\gamma$. \\
\noindent
  Step~4:
  Now we fix the bound $\varphi_f$ on the absolute value and set
  \begin{eqnarray*}
    \varphi(p)
      &=& \varphi_f(t\gamma(p)) \\
      &=& |a_\beta| \cdot \left(t\gamma(p)\right)^\beta \\
      &=& |a_\beta| \cdot \prod_{i=1}^k \left(t\gamma_i(p)\right)^{\beta_i} \\
      &=& |a_\beta| \cdot \left(t\gamma_1(p)\right)^{\beta^*} \\
      &=& |a_\beta| \cdot
        \left(\frac{t\delta_1 \left(1-\sqrt[k]{p}\right)}{\hat\beta}\right)^{\beta^*}
  \end{eqnarray*}
  where $\beta^*:=\sum_{i=1}^k \beta_i$. \\
\noindent
  Step~5:
  To derive the bound on the precision,
  we consider the inverse of Formula~(\ref{for-sfl-multivariate})
  which is
  \begin{eqnarray*}
      \Sinf^{-1}(\varphi(p))
    &=&
      \log_2\frac
      {(d+1+\lceil\log_2|\Ind|\rceil)
        \cdot |\Ind| \cdot \max|a_\iota| \cdot 2^{\E d+1}}
      {\varphi(p)}\\
    &=&
      \log_2\frac
      {(d+1+\lceil\log_2|\Ind|\rceil)
        \cdot |\Ind| \cdot \max|a_\iota| \cdot 2^{\E d+1}
        \cdot (2\hat\beta)^{\beta^*} }
      {|a_\beta| \cdot ({t\delta_1\left(1-\sqrt[k]{p}\right)})^{\beta^*}}\\
    &=&
      -{\beta^*} \log_2
      {\left(1-\sqrt[k]{p}\right)}
      \\
    && + \log_2\frac
      {(d+1+\lceil\log_2|\Ind|\rceil)
        \cdot |\Ind| \cdot \max|a_\iota| \cdot 2^{\E d+1}
        \cdot (2\hat\beta)^{\beta^*} }
      {|a_\beta| \cdot {(t\delta_1)}^{\beta^{*}}}.
  \end{eqnarray*}
  Finally the claim follows from
  $\LS(p):=\left\lceil \Sinf^{-1}(\varphi(p))\right\rceil$.
  \qed
\end{proof}
  The formula for $\LS(p)$
  in the lemma above looks rather complicated.
  Therefore we study the asymptotic behavior
  $\LS(p) = O\left(-d\log(1-\sqrt[k]{p})\right)$
  for $p\to1$
  in the following corollary:
  We show that ``slightly'' more than
  $d$ additional bits of the precision are sufficient
  to halve the failure probability.
\begin{corollary}
  Let $f$ be a $k$-variate polynomial ($k\ge 2$) of total degree $d$
  and let $\LS:(0,1)\to\N$
  be the precision function in Formula~(\ref{for-theo-multi-lsafe}).
  Then
  \begin{eqnarray*}
    \LS\left(\frac{1+p}{2}\right)
      &\le&
        \LS(p) + \left\lceil \lambda\beta^* \right\rceil
  \end{eqnarray*}
  where $\beta^*=\sum_{i=1}^{k}\beta_i\le d$ and
  \begin{eqnarray*}
    \lambda &:=&
      \LOG \left(
        \frac{1 - \sqrt[k]{p}}
	  {1 - \sqrt[k]{\frac{1+p}{2}}}
      \right).
  \end{eqnarray*}
\end{corollary}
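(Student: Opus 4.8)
The plan is to read $\LS$ in Formula~(\ref{for-theo-multi-lsafe}) as $\LS(p)=\left\lceil g(p)\right\rceil$, where
\begin{eqnarray*}
  g(p) &:=& -\beta^*\LOG\left(1-\sqrt[k]{p}\right) \;+\; \CM(\beta),
\end{eqnarray*}
and to exploit that the chosen $\beta\in\IM$, and hence the additive term $\CM(\beta)$ and the exponent $\beta^*$, are held fixed throughout, so that $g$ really is a function of $p$ alone. First I would substitute $\frac{1+p}{2}$ for $p$ and isolate the $p$-dependent part. Since only the logarithmic summand changes, the difference is
\begin{eqnarray*}
  g\left(\frac{1+p}{2}\right) - g(p)
  &=& -\beta^*\LOG\left(1-\sqrt[k]{\frac{1+p}{2}}\right) + \beta^*\LOG\left(1-\sqrt[k]{p}\right) \\
  &=& \beta^*\LOG\left(\frac{1-\sqrt[k]{p}}{1-\sqrt[k]{\frac{1+p}{2}}}\right)
  \;\;=\;\; \beta^*\lambda,
\end{eqnarray*}
using the additivity of the logarithm $\LOG$. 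At this point one should note in passing that $\frac{1+p}{2}>p$ for $p\in(0,1)$ forces $\sqrt[k]{\frac{1+p}{2}}>\sqrt[k]{p}$, so that both arguments of $\LOG$ above are positive and their quotient exceeds $1$; hence $\lambda>0$ and the resulting bound is non-trivial.

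The second and final step is to pass from $g$ back to $\LS=\left\lceil g\right\rceil$. Here I would invoke the sub-additivity of the ceiling function, $\left\lceil a+b\right\rceil\le\left\lceil a\right\rceil+\left\lceil b\right\rceil$ for all $a,b\in\R$, applied with $a=g(p)$ and $b=\beta^*\lambda$:
\begin{eqnarray*}
  \LS\left(\frac{1+p}{2}\right)
  &=& \left\lceil g(p)+\beta^*\lambda\right\rceil
  \;\;\le\;\; \left\lceil g(p)\right\rceil + \left\lceil\beta^*\lambda\right\rceil
  \;\;=\;\; \LS(p) + \left\lceil\lambda\beta^*\right\rceil,
\end{eqnarray*}
which is exactly the claimed inequality.

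I expect essentially no obstacle: the argument is a one-line logarithm computation followed by the standard ceiling estimate. The only subtlety worth a sentence is that, in contrast to the univariate corollary, where the gap $d\LOG 2=d$ is an integer and can be pulled cleanly out of the ceiling to give an \emph{equality}, here $\beta^*\lambda$ is in general irrational, so sub-additivity of the ceiling is the best one can do and the statement is necessarily an inequality rather than an identity. The analogous remark to the univariate case then follows: since $\lambda>0$, at most $\left\lceil\lambda\beta^*\right\rceil$ additional bits of precision suffice to halve the failure probability.
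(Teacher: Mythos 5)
Your proposal is correct and follows essentially the same route as the paper: the paper likewise expands $\LS\left(\frac{1+p}{2}\right)$ from Formula~(\ref{for-theo-multi-lsafe}), factors the argument of the logarithm to split off the term $\beta^*\lambda$, and concludes with the sub-additivity of the ceiling, $\left\lceil a+b\right\rceil\le\left\lceil a\right\rceil+\left\lceil b\right\rceil$. Your added observations (that $\lambda>0$ and that, unlike the univariate case, the shift is generally not an integer so one only gets an inequality) are consistent with the paper's treatment.
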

\begin{proof}
  All quantities are as defined in Theorem~\ref{theo-multipoly-ana}.
  We obtain:
  \begin{eqnarray*}
    \LS\left(\frac{1+p}{2}\right)
    &=&
      \left\lceil
      -\beta^*
        \LOG \left(
          1-\sqrt[k]{\frac{1+p}{2}}
	\right)
	\; + \; \CM(\beta)
      \right\rceil \\
    &=&
      \left\lceil
      -\beta^*
        \LOG \left( \left(1-\sqrt[k]{p}\right)\cdot
          \frac{1-\sqrt[k]{\frac{1+p}{2}}}{1-\sqrt[k]{p}}
	\right)
	\; + \; \CM(\beta)
      \right\rceil \\
    &=&
      \left\lceil
      -\beta^*
        \LOG \left(1-\sqrt[k]{p}\right)
      -\beta^*
	\LOG
          \left(\frac{1-\sqrt[k]{\frac{1+p}{2}}}{1-\sqrt[k]{p}}
	\right)
	\; + \; \CM(\beta)
      \right\rceil \\
    &\le&
      \LS(p)
      +\left\lceil
        \beta^*
	\LOG
          \left(\frac{1-\sqrt[k]{p}}{1-\sqrt[k]{\frac{1+p}{2}}}
	\right) \right\rceil.
  \end{eqnarray*}
  This proves the claim.
  \qed
\end{proof}

\section[The Top-down Approach Using Replacements (1st Stage, rv-suit)]{The Top-down Approach Using Replacements}
\label{sec-top-down}

  This approach derives the bounding functions
  which are associated with region- and value-suitability
  in the first stage of the analysis
  (see Figure~\ref{fig-ana-td}).
  In the bottom-up approach
  we consider a sequence of functions
  which is incrementally built-up from simple functions and
  \emph{ends up} at the function $f$ under consideration.
  In contrast to that
  we now construct a sequence of functions top-down that \emph{begins} with $f$
  and leads to a (different) sequence
  by dealing with the arguments of $f$ coordinatewise.
  However,
  the top-down approach works in two phases:
  In the first phase we just derive the auxiliary functions and
  in the second phase
  we determine the bounds for the region- and value-suitability
  bottom-up.
  That is why we call this approach also \emph{pseudo-top-down}.
  \begin{figure}[h]\centering
    \includegraphics[width=.45\columnwidth]{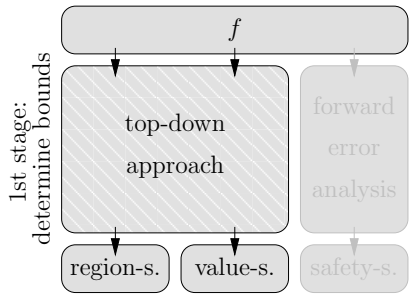}
    \caption{The top-down approach and its interface.}
    \label{fig-ana-td}
  \end{figure}

  We remark that the idea of developing a top-down approach is not new:
  The idea was first introduced by Mehlhorn et al.~\cite{MOS06} and
  their journal article appeared in~\cite{MOS11}.
  \emph{As opposed to previous publications,
  our top-down approach is different for several reasons:
  It is designed to fit to the method of quantified relations
  and
  it is based on our general conditions to analyze functions
  (we do not need auxiliary constructions like exceptional points,
  continuity or a finite zero set).}

  New definitions are introduced in Section~\ref{sec-td-preparation}.
  We define the basic idea
  of a \emph{replacement} in Section~\ref{sec-td-single-rep}.
  Afterwards
  we show how we can apply a \emph{sequence of replacements}
  to the function under consideration
  in Section~\ref{sec-td-multi-rep}.
  We present the top-down approach to derive the bounding functions
  in Section~\ref{sec-td-first-analysis}.
  Next we consider an example in Section~\ref{sec-td-example}.
  For clarity,
  we finally answer selected questions
  in Section~\ref{sec-td-deeper-insight}.

\subsection{Definitions}
\label{sec-td-preparation}

  We prepare the presentation with various definitions
  and begin with a projection.
  Let $\ell,k\in\N$ with $\ell\le k$, let $I:=\{1,\ldots,k\}$
  and let
  \begin{eqnarray*}
    s:\{1,\ldots,\ell\}\to I
  \end{eqnarray*}
  be an injective mapping.
  Then we define the projection
  \begin{eqnarray*}
    \pi_s(x)\label{def-projection-pi}
      &:=&
        \left(x_{s(1)},\ldots,x_{s(\ell)}\right).
  \end{eqnarray*}
  In a natural way we extend the projection to sets $X\subset\R^k$ by
  \begin{eqnarray*}
    \pi_s(X)
      &:=&  \left\{
              \pi_s(x) \ST x\in X
            \right\}.
  \end{eqnarray*}
  Since we often make use of the projection $\pi$
  in the context of an index $i\in I$,
  we define the following abbreviations in their obvious meaning:
  \begin{eqnarray*}
    \PII (x) &:=&
      (x_i), \\
    \PIL (x) &:=&
      (x_1,\ldots,x_{i-1}), \\
    \PIG (x) &:=&
      (x_{i+1},\ldots,x_k), \\
    \PIN (x) &:=&
      (x_1,\ldots,x_{i-1},x_{i+1},\ldots,x_k).
  \end{eqnarray*}
  We remark on this contextual definitions
  that the greatest index $k$
  is always given implicitly by the set $I$ of indices.
  The usage of such orthogonal projections leads to
  the following condition on the set $A$ of projected inputs:
  \emph{It is a necessary condition in the top-down analysis
  that $A$ as well as the perturbation area $\U_\delta(A)$
  are closed axis-parallel boxes without holes.}

  We briefly motivate the next notation:
  Assume that the function $f$ has a $k$-ary argument.
  During the analysis of $f$,
  we often bind $k-1$ of these variables to values
  given in a $(k-1)$-tuple, say $\xi$.
  We do this to study the local behavior of $f$
  in dependence on a single free argument, say $x_i$.
\begin{definition}[free-variable star]
  \label{def-free-var-star}
  Let $\PREDB$ be a predicate description
  where $A$ is an axis-parallel box without holes.
  In addition
  let $I:=\{1,\ldots,k\}$ and
  let $i\in I$.
  For each $(k-1)$-tuple
  $\xi := (\xi_1,\ldots,\xi_{i-1},\xi_{i+1},\ldots,\xi_k) \in \PIN(A)$
  we define the function $\FSIX(x_i)$ as
  \begin{eqnarray*}
    & \FSIX : \PII(A) \to \R,\\
    & x_i \, \mapsto \,
      \FSIX(x_i) =
      f(x_1,\ldots,x_k)|_{x_j=\xi_j \; \forall j\in I\!, \; j\neq i}
      \,=\,
      f(\xi_1,\ldots,\xi_{i-1},x_i,\xi_{i+1},\ldots,\xi_k).
  \end{eqnarray*}
\end{definition}
  In other words,
  we consider $\FSIX$
  as the function $f$
  where $x_i$ is a free variable
  and all remaining variables are bound to the tuple $\xi$.
  We illustrate the definition with an example and
  consider the function $f(x_1,x_2,x_3):=3x_1^2+2x_2^3-4x_3$.
  Then $f^{*2}_{(4,7)}$ is a function in $x_2$ and we have
  \begin{eqnarray*}
    f^{*2}_{(4,7)}(x_2)
    &=& f(x_1,x_2,x_3)|_{x_1=4 \,\wedge\, x_3=7}\\
    &=& 3 \cdot 4^2 + 2 x_2^3 - 4 \cdot 7 = 2 x_2^3-20.
  \end{eqnarray*}
  We sometimes do not attach the tuple $\xi$ to $\FSI$
  to relieve the reading
  if $\xi$ is uniquely defined by the context.

  Once we focus on the function $\FSIX$ in one variable, say $x_i$,
  we are interested in its induced critical set.
  Surely this critical set depends on the choice of $\xi$.
  We have seen that the region-suitability is a necessary condition
  for the analyzability of the function.
  Therefore the next definition is used to mark those $\xi$
  for which $\FSIX$ is or is not region-suitable.
\begin{definition}[region-regularity]
  \label{def-reg-reg}
  Let $\PREDB$ be a predicate description
  where $A$ is an axis-parallel box without holes.
  We call $\xi\in \PIN(A)$ \emph{region-regular}
  if $\FSIX$ is region-suitable on $\PII(A)$.
  Otherwise we call $\xi$ \emph{non-region-regular}.
\end{definition}
  Finally we remark that the region-suitability of $\FSIX$ implies that
  the functions $\nu_{\FSIX}$ and $\chi_{\FSIX}$ exist.
  If $i$ is fixed, there are families of functions $\FSIX$
  (and hence families of functions $\nu_{\FSIX}$ and $\chi_{\FSIX}$)
  that depend on the region-regular $\xi$.
  We examine these families in the next paragraph.

\subsection{Single Replacement}
\label{sec-td-single-rep}

  From now on we consider the following setting:
  \emph{Let $\PREDB$ be a predicate description
  where $A$ is an axis-parallel box without holes
  and let $I:=\{1,\ldots,k\}$.}
  In addition we denote the domain of $f$ by $\DOM(f)$.

  We develop the top-down approach step-by-step.
  For a given index $i\in I$,
  our first aim is to lower-bound the absolute value of $f$
  by a function $g$
  whose argument lists differ solely in the $i$-th position:
  While $f$ depends on $x_i\in \PII(U_\delta(A))$,
  the function $g$ depends on
  a new variable $\gamma_i\in\PII(\GAB)$.
  Hence we say that the construction of $g$
  is motivated by the \emph{replacement of $x_i$ with $\gamma_i$}
  in the argument list of $f$.

  Now we present the construction of the function $g$
  for a fixed index $i\in I$.
  We focus on the functions $\FSIX$
  to study the local behavior of $f$ in its \mbox{$i$-th} argument.
  We are interested in tuples $\xi\in\PIN(\DOM(f))$
  for which $\FSIX$ is region-suitable.
  We collect these points in the set
  \begin{eqnarray*}
    \X_{f,i} &:=& \left\{ \xi\in\PIN(\DOM(f)) : 
    \text{$\xi$ is region-regular}
    \right\}.
  \end{eqnarray*}
  To understand our interest in the set $\X_{f,i}$,
  we remind ourselves about the following fact:
  For region-regular $\xi$,
  open neighborhoods of the critical set $C_{\FSIX}$
  are guaranteed to exist for any given (arbitrarily small) volume.
  This is not true for non-region-regular points
  which therefore must belong to the critical set
  of the objective function.
  Next we define the objective function $g$. Let
  \begin{eqnarray*}
    g :
      \PIL (\DOM(f))
      \times \PII (\GAB)
      \times \PIG (\DOM(f))
      \to \R_{\ge 0},
  \end{eqnarray*}
  be the function with the pointwise definition
  \begin{eqnarray}
    g(\xi_1,\ldots,\xi_{i-1},\gamma_i,\xi_{i+1}\ldots,\xi_k)
      &:=& \left\{
    \begin{array}{l@{\quad:\quad}l}
      0 & \text{$\xi\not\in X_{f,i}$}
      \\[0.5ex]
      \inf\limits_{\text{(C1)}}
      \;
      \inf\limits_{\text{(C2)}}
      \;
      \left|\FSIX(x_i)\right|
      & \text{$\xi\in X_{f,i}$}
    \end{array} \right. \label{for-g-inf-inf} \\[1ex]
    \text{(C1)} &:& {\x_i\in \PII(A)} \nonumber \\
    \text{(C2)} &:& {x_i\in\U_{\FSI\!,\delta_i}(\x_i)
                   \setminus R_{\FSI\!,\gamma_i}(\x_i)} \nonumber
  \end{eqnarray}
  for all $\xi\in\PIN(\DOM(f))$ and all $\gamma_i\in\PII(\GAB)$.
  The domains $\DOM(f)$ and $\DOM(g)$ only differ in the $i$-th coordinate.
  Whenever $\xi$ is non-region-regular,
  we set $g$ to zero.
  (We remark that this is essential for the sequence of replacements
  in Section~\ref{sec-td-multi-rep} since
  this handling triggers the exclusion of
  an open neighborhood of $\xi$---and
  not just the exclusion of the point $\xi$ itself.)
  In case $\xi$ is region-regular,
  we set $g$ to the infimum of the absolute value of $f$
  outside of the region of uncertainty for the various $\x_i$.
  Note that we must consider the infimum
  in the definition of $g$ in Formula~(\ref{for-g-inf-inf})
  because $|\FSIX|$ does not need to have a minimum.
  We do not assume that $f$ is continuous or semi-continuous.
\begin{definition}
  We call the presented construction of the function $g$ the
  \emph{function resulting from the replacement of $f$'s argument
  $x_i$ with $\gamma_i$}.
  We denote the replacement by
  $\REP(f,{x_i} \to \gamma_{i}).$
\end{definition}
  We summarize the steps during the replacement of an argument of $f$
  and emphasize the relation between the quantities:
  Let $f$ be given.
  Then we begin with the consideration of the auxiliary function $f^{*i}_{\xi}$.
  We use it to determine the auxiliary set
  of region-regular points $\X_{f,i}$.
  To determine the function $g$ afterwards,
  we examine $f^{*i}_{\xi}$ again,
  but now only for the points in $\X_{f,i}$.

  In the proof of the analysis in Section~\ref{sec-td-first-analysis},
  we use the statement that the replacement $\REP(f,{x_i} \to \gamma_{i})$
  results in a positive function that lower bounds the absolute value of $f$
  in a certain sense.
  We formalize and prove this statement in the next lemma.
\begin{lemma}\label{lem-sequence-of-lower-bounds}
  Let $\PREDB$ be a predicate description
  where $A$ is an axis-parallel box without holes,
  let $I:=\{1,\ldots,k\}$ and
  let $i\in I$.
  Moreover, let $g:=\REP(f,x_i \to \gamma_i)$.
  Then we have
  \begin{eqnarray}\label{for-f-larger-rep}
    |f(\xi_1,\ldots,\xi_{i-1},x_i,\xi_{i+1}\ldots,\xi_k)|
    \ge g(\xi_1,\ldots,\xi_{i-1},\gamma_i,\xi_{i+1}\ldots,\xi_k)
    > 0
  \end{eqnarray}
  for all region-regular points $\xi\in \X_{f,i}$,
  for all $\gamma_i\in\PII(\GAB)$,
  for all ${\x_i\in \PII(A)}$
  and for all
  ${x_i\in\U_{\FSI\!,\delta_i}(\x_i) \setminus R_{\FSI\!,\gamma_i}(\x_i)}$.
\end{lemma}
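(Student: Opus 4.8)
The plan is to unwind the definition of $g=\REP(f,x_i\to\gamma_i)$ from Formula~(\ref{for-g-inf-inf}) and verify the two inequalities in turn, keeping the index $i$ and the tuple $\xi\in\X_{f,i}$ fixed throughout. Since $\xi$ is region-regular by hypothesis, we are in the second case of the definition of $g$, so
\begin{eqnarray*}
  g(\xi_1,\ldots,\xi_{i-1},\gamma_i,\xi_{i+1},\ldots,\xi_k)
  &=&
  \inf_{\x_i\in\PII(A)}\;
  \inf_{x_i\in\U_{\FSI,\delta_i}(\x_i)\setminus R_{\FSI,\gamma_i}(\x_i)}
  \left|\FSIX(x_i)\right|.
\end{eqnarray*}

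First I would prove the left inequality $|f(\xi_1,\ldots,x_i,\ldots,\xi_k)|\ge g(\ldots,\gamma_i,\ldots)$. Fix any $\x_i\in\PII(A)$ and any $x_i\in\U_{\FSI,\delta_i}(\x_i)\setminus R_{\FSI,\gamma_i}(\x_i)$ as in the statement. By Definition~\ref{def-free-var-star} we have $f(\xi_1,\ldots,\xi_{i-1},x_i,\xi_{i+1},\ldots,\xi_k)=\FSIX(x_i)$, so the left-hand side equals $|\FSIX(x_i)|$. But this $x_i$ (together with this $\x_i$) is one of the points over which the double infimum defining $g$ is taken, hence $|\FSIX(x_i)|\ge g(\ldots,\gamma_i,\ldots)$. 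This is immediate once the notation is in place; there is essentially nothing to compute.

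The genuine content is the strict positivity $g(\ldots,\gamma_i,\ldots)>0$, and here I would invoke region-suitability of $\FSIX$. Since $\xi$ is region-regular, $\FSIX$ is region-suitable on $\PII(A)$, and being region-suitable it is in particular verifiable (by the region-condition together with the fact that region-suitability supplies the bounding function $\nu_{\FSIX}$; see Lemma~\ref{lem-ana-is-veri} and the surrounding discussion). Then Lemma~\ref{lem-veri-minval}, applied to the one-variable function $\FSIX$ with the given $\gamma_i$, yields a constant $\varphi>0$ with $\varphi\le|\FSIX(x_i)|$ for all $\x_i\in\PII(A)$ and all $x_i\in\bar U_{\FSI,\delta_i}(\x_i)\setminus R_{\FSI,\gamma_i}(\x_i)$. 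Taking the double infimum of $|\FSIX(x_i)|$ over exactly this set of $(\x_i,x_i)$ therefore gives a value $\ge\varphi>0$, which is precisely $g(\ldots,\gamma_i,\ldots)>0$.

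The main obstacle is a bookkeeping subtlety rather than a deep argument: Lemma~\ref{lem-veri-minval} is phrased for a predicate description $\PREDL$ with a $k$-ary function, so I need to check that the reduced data $(\FSIX,1,\PII(A),\delta_i,\E,\PII(\Gamma),t)$ genuinely forms a (one-dimensional) predicate description — in particular that $\PII(A)$ and $\U_{\FSI,\delta_i}(\PII(A))$ are closed intervals (which follows from the standing assumption that $A$ and $\U_\delta(A)$ are axis-parallel boxes without holes) and that $\FSIX$ maps into $\R$ on the right domain. Once that verification is recorded, the two estimates combine to give Formula~(\ref{for-f-larger-rep}), and the proof closes with \qed.
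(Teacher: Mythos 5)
Your treatment of the left inequality is fine and coincides with the paper's: for region-regular $\xi$ the pair $(\x_i,x_i)$ ranges over exactly the set indexing the double infimum in Formula~(\ref{for-g-inf-inf}), so $|\FSIX(x_i)|\ge g(\ldots,\gamma_i,\ldots)$ is immediate. The gap is in the positivity step. You justify $g(\ldots,\gamma_i,\ldots)>0$ by citing Lemma~\ref{lem-veri-minval}, on the grounds that region-suitability of $\FSIX$ makes it verifiable. In the paper's terminology this implication is false: verifiability (Definition~\ref{def-verifiable}) requires the region-condition \emph{and} the safety-condition, and Lemma~\ref{lem-ana-is-veri} derives verifiability only from full analyzability, i.e.\ region-, value- and safety-suitability together. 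Region-regularity of $\xi$ hands you only region-suitability of $\FSIX$; no fp-safety bound for $\FSIX$ (nor for $f$) is part of the hypotheses of the present lemma, and none can be manufactured from a predicate description alone. So the appeal to Lemma~\ref{lem-veri-minval} is not licensed as written, and this is precisely the step carrying the mathematical content.

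The repair is short, and it is what the paper actually does: argue directly by contradiction. If the double infimum were $0$, there would be a sequence $(a_j)_{j\in\N}$ in $\U_{\FSI\!,\delta_i}(\x_i)\setminus R_{\FSI\!,\gamma_i}(\x_i)$ with $|\FSIX(a_j)|\to 0$; any accumulation point $a$ of this bounded sequence is then critical for $\FSIX$, so the open $\gamma_i$-neighborhood of $a$ lies inside $R_{\FSI\!,\gamma_i}(\x_i)$, forcing almost all $a_j$ into the region of uncertainty --- a contradiction. Equivalently, you could observe that the proof of Lemma~\ref{lem-veri-minval} never uses the safety-condition, only the definition of criticality and the fact that the complement of the region of uncertainty is closed, and then reproduce that argument for $\FSIX$; but you must say this explicitly rather than invoke the lemma, whose stated hypothesis you have not verified. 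Your bookkeeping remark about the reduced one-dimensional predicate description is sensible but secondary to this point.
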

\begin{proof}
  The left unequation in Formula~(\ref{for-f-larger-rep})
  follows immediately from the construction of
  the function $g=\REP(f,x_i \to \gamma_i)$
  because we only consider points
  lying outside of the region of uncertainty $R_{\FSI,\gamma_i}(\x_i)$.

  To prove the right unequation
  in Formula~(\ref{for-f-larger-rep}),
  we assume that there is a region-regular $\xi\in\X_{f,i}$ 
  and $\gamma_i\in\PII(\GAB)$ such that
  the objective function
  \mbox{$g(\xi_1,\ldots,\xi_{i-1},\gamma_i,\xi_{i+1}\ldots,\xi_k)=0$}.
  This implies, for $\x_i\in\PII(A)$,
  the existence of a sequence $(a_j)_{j\in \N}$
  in the area
  ${\U_{\FSI\!,\delta_i}(\x_i) \setminus R_{\FSI\!,\gamma_i}(\x_i)}$
  for which $\lim_{j\to\infty} \FSIX(a_j) = 0$.
  Consequently $a:=\lim_{j\to\infty} a_j$ must belong to the critical set.
  Since the region of uncertainty $R_{\FSI\!,\gamma_i}$
  guarantees the exclusion of the open $\gamma_i$-neighborhood
  of the critical set---which
  includes the open $\gamma_i$-neighborhood of $a$---almost
  all points of the sequence $(a_j)_{j\in \N}$
  must also lie in $R_{\FSI\!,\gamma_i}$.
  This leads to a contradiction to the assumption
  and proves the claim.
  \qed
\end{proof}
  We add the remark that
  the right unequation in Formula~(\ref{for-f-larger-rep})
  presumes that $\xi$ is region-regular
  as is stated in the lemma.
  We obtain $g\equiv 0$
  if $\X_{f,i}$ is the empty set.
  We continue with a simple example
  that illustrates the method to determine
  $\REP(f,x_i \to \gamma_i)$.
\begin{example}
  Let $f (x_1,x_2) = x_1^2 + x_2^2$.
  Then $I=\{1,2\}$.
  In addition let $i=2$ and
  let $A$ be an axis-parallel rectangle that contains the origin $(0,0)$.
  We consider $f^{*2}_{\xi_1} (x_2) = \xi_1^2 + x_2^2 $.
  Since $f^{*2}$ is region-suitable,
  this leads to $\X_{f,2}=\pi_{\neq 2}(A) = \pi_{1}(A)$.
  We obtain
  \begin{eqnarray*}
    g (\xi_1, \gamma_2)
      &:=& \left\{
      \begin{array}{l@{\quad:\quad}l}
        \gamma_2^2 & \xi_1=0 \\[0.5ex]
	\xi_1^2 & \text{otherwise.}
      \end{array}
      \right.
  \end{eqnarray*}
  The critical set of $g$ contains a single point in the case $\xi_1=0$
  and is empty in the other case.
  \hfill$\bigcirc$
\end{example}
  We end this subsection with two observations.
  Firstly, although $g(\xi_1,\gamma_2)>0$ in the example above,
  the limit
  \begin{eqnarray*}
    \inf_{\xi_1\in\X_{f,2}\,\wedge\,\xi_1\neq0} \; g(\xi_1,\gamma_2) &=& 0.
  \end{eqnarray*}
  Secondly, if the lower-bounding function $g$
  is region-value-suitable,
  the function $f$ is also region-value-suitable
  because of Theorem~\ref{theo-lower-bound}.
  This observation is the driving force of the top-down approach.

\subsection{Sequence of Replacements}
\label{sec-td-multi-rep}
  So far we know how a variable $x_i$
  of the argument list of the function $f$ under consideration
  can be replaced with a new variable $\gamma_i$.
  The advantage of the new variable $\gamma_i$ is
  that it reflects the distance to the critical set, somehow.
  We announce that, opposed to $x_i$,
  the variable $\gamma_i$ is appropriate for the analysis.
  A benefit of $\gamma_i$ is that
  it is not necessary
  to study the precise location of the critical set;
  the knowledge about the ``width'' of the critical set is sufficient.

  The idea behind the top-down approach is to apply
  the replacement procedure
  $k$ times in a row to replace all original arguments $(x_1,\ldots,x_k)$ of $f$
  by the new substitutes $(\gamma_1,\ldots,\gamma_k)\in\GAB$.
  To get the presentation as general as possible,
  we keep the order of the $k$ replacements variable.
  Let $\sigma: I \to I$ be a bijective function
  that defines the order
  in which we replace the arguments of $f$.
  We interpret $\sigma(i)=j$ as
  the replacement of $x_j$ with $\gamma_j$ in the $i$-th step.

  Now we look for a recursive definition
  to derive the sequence $g_1, \ldots, g_k$ of functions
  that result from these replacements.
  We define the basis of the recursion as
  $g_0:=f$ with $g_0:\U_\delta(A)\to\R$
  and $\DOM(g_0)=\U_\delta(A)$.
  We set
  $g_i := \REP(g_{i-1}, x_{\sigma(i)} \to \gamma_{\sigma(i)})$
  for $i\in I$.
  In other words:
  We focus on the replacement of $x_{\sigma(i)}$ in step $i\in I$,
  that means,
  we assume that we have just derived the functions $g_1,\ldots,g_{i-1}$.
  We then
  determine the set
  of region-regular points
  \begin{eqnarray*}
    \X_{g_{i-1},\sigma(i)} &:=& \left\{ \xi\in\pi_{\neq\sigma(i)}(\DOM(g_{i-1})) : 
    \text{$\xi$ is region-regular}
    \right\}\!,
  \end{eqnarray*}
  that means, we check if the function
  \begin{eqnarray*}
   & g_{i-1,\xi}^{*\sigma(i)} :
       \pi_{\sigma(i)} (\DOM(g_{i-1})) \to \R_{\ge 0}, \\
   &  g_{i-1,\xi}^{*\sigma(i)}\!\left(x_{\sigma(i)}\right) \mapsto
      g_{i-1}\!\left(\xi_1,\ldots,\xi_{\sigma(i)-1},x_{\sigma(i)},\xi_{\sigma(i)+1},\ldots,\xi_k\right)%
  \end{eqnarray*}
  is region-suitable for a given $\xi$.
  Thereafter, we
  define the domain of the succeeding function
  $g_i$ as
  \begin{eqnarray*}
    & g_i :
       \pi_{<\sigma(i)} (\DOM(g_{i-1}))
       \times \pi_{\sigma(i)} (\GAB)
       \times \pi_{>\sigma(i)} (\DOM(g_{i-1})) \to \R_{\ge 0}
  \end{eqnarray*}
  and
  use $\X_{g_{i-1},\sigma(i)}$ to
  define
  $g_i(\xi_1,\ldots,\xi_{\sigma(i)-1},\gamma_{\sigma(i)},\xi_{\sigma(i)+1}\ldots,\xi_k)$
  \begin{eqnarray}
      &:=& \left\{
    \begin{array}{l@{\quad:\quad}l}
      0 & \text{$\xi\not\in X_{g_{i-1},\sigma(i)}$}
      \\[0.5ex]
      \inf\limits_{\text{(C1)}}
      \;
      \inf\limits_{\text{(C2)}}%
      \;
      \left|g_{i-1}^{*\sigma(i)}(x_{\sigma(i)})\right|
      & \text{$\xi\in X_{g_{i-1},\sigma(i)}$}
    \end{array} \right. \label{for-gi-inf-inf} \\[1ex]
    \text{(C1)} &:& {\x_{\sigma(i)}\in \pi_{\sigma(i)}(\DOM(g_{i-1}))} \nonumber \\
    \text{(C2)} &:& {x_{\sigma(i)}\in\U_{g_{i-1}^{*\sigma(i)},\delta_{\sigma(i)}}(\x_{\sigma(i)})
                   \setminus R_{g_{i-1}^{*\sigma(i)},\gamma_{\sigma(i)}}(\x_{\sigma(i)})} \nonumber
  \end{eqnarray}
  for all $\xi\in{\pi_{\neq \sigma(i)}}(\DOM(g_{i-1}))$ and all $\gamma_{\sigma(i)}\in\pi_{\sigma(i)}(\GAB)$.
  We summarize the relation between the quantities
  during the $i$-th replacement
  in Figure~\ref{fig-ana-depend-multi-rep}.
  (The striped quantities are introduced later.)
  \begin{figure}[t]\centering
    \includegraphics[width=.95\columnwidth]{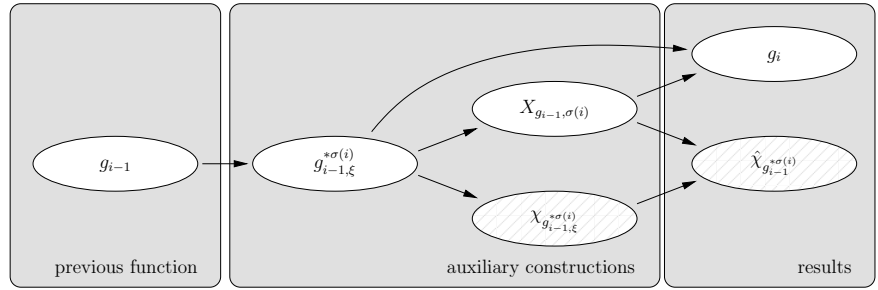}
    \caption{Illustration of the dependencies during the $i$-th replacement.
    The white-colored quantities are defined in Section~\ref{sec-td-multi-rep}
    and the striped quantities in Section~\ref{sec-td-first-analysis}.
    Here ``$\!A\to B$'' means that $B$ is derived from $A$.}
    \label{fig-ana-depend-multi-rep}
  \end{figure}

  The definitions above are chosen such that
  the function $g_i$ exists.
  After the \mbox{$k$-th} step,
  the recursion ends with
  $g_k:\GAB \to \R_{\ge 0}$.
  We remark that,
  if we apply this mechanism to functions
  which are not admissible for controlled perturbation,
  the sequence of replacements will end-up with a function $g_k$
  that fails the analysis from the next section.
\begin{example}
  \label{ex-inbox-gi-dom-crit}
  We get back to the 2-dimensional $\inbox$-predicate.
  For this example it is sufficient to assume
  that the box is fixed somehow and
  that the only argument of the predicate is the query point $q=(x_1,x_2)$.
  This time we consider the various domains and critical sets
  of the functions $g_i$ that result from the sequence of replacements.
  (The order of the replacements is not important for this example.)
  The situation is illustrated in Figure~\ref{fig-crit-set-inbox-1}.
  Picture (a) shows the domain (shaded region)
  of the function $f=g_0$ itself.
  We know that the critical set is the boundary of the query box.

  After the replacement $\REP(g_0,x_1\to\gamma_1)$,
  the first argument belongs to the set $\pi_1(\GAB)$
  resulting in an altered domain (see Picture (b)).
  We make two observations.
  Firstly, the critical set of $g_1$ is formed by two horizontal lines
  that are caused by the top and bottom part of the box $C_{g_0}$.
  What is the reason for that?
  If we consider the absolute value of $g_0$
  while moving its argument
  along a horizontal line that
  passes through the top or bottom line segment of the box
  ($x_2$ is fixed then),
  it leads to a mapping that is zero on an open interval;
  in this case the mapping cannot be region-suitable.
  Secondly, there are no further contributions to the critical set of $g_1$.
  What is the reason?
  If we consider the absolute value of $g_0$ along a horizontal line that
  passes through the interior of the box,
  it leads to a mapping which is region-suitable.%
  \begin{figure}[t]\centering%
    \includegraphics[width=.95\columnwidth]{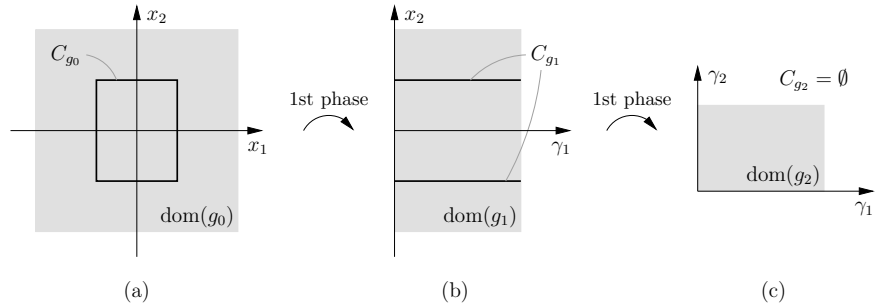}%
    \caption{Illustration of the various domains and critical sets
      that result from the sequence of replacements
      for the 2-dimensional $\inbox$-predicate.}%
    \label{fig-crit-set-inbox-1}%
  \end{figure}%

  Picture (c) shows the situation after the second replacement
  $\REP(g_1,x_2\to\gamma_2)$.
  The function $g_2$ is positive on its entire domain $\GAB$.
  The reason for this is that,
  if we consider the absolute value of $g_1$ along a vertical line
  ($\gamma_1$ is fixed then),
  it leads to a mapping which is region-suitable.
  \hfill$\bigcirc$
\end{example}

\subsection{Derivation and Correctness of the Bounds}
\label{sec-td-first-analysis}

  Although we have replaced each $x_i$ with $\gamma_i$
  in the argument list of $f$ in a top-down manner,
  we are not able to determine the bounds $\nu_f$ and $\varphi_f$ 
  in the same way.
  To achieve this goal,
  we need to go through the collected information bottom-up again.
  The reason is that, at the time we arrive at a function, say $g_{i-1}$,
  we cannot check directly if $g_{i-1}$ is region- and value-suitable.
  Instead of this,
  we want that these properties are inherited from the successor $g_i$
  to the predecessor.
  We will see that,
  once we arrive at $g_k$,
  we can easily check if $g_k$ has the desired properties.
  This way we can possibly show
  that $g_0$, i.e.\ $f\!,$ is also region-value-suitable.

  Therefore we divide the analysis in two phases.
  The first phase consists of the deduction of $g_k$
  via the sequence of replacements
  and is already presented in the last section.
  The second phase consists of the deduction of the bounding functions
  $\varphi_f$ and $\chi_f$ and
  is the subject of this section.
  We begin with an auxiliary statement
  which claims that $g_k$ is non-decreasing in each argument
  under certain circumstances.
\begin{lemma}\label{lem-gk-non-decreasing}
  Let $\PREDB$ be a predicate description
  where $A$ is an axis-parallel box without holes
  and let $I:=\{1,\ldots,k\}$.
  Let $\sigma:I\to I$ be bijective,
  i.e., an order on the elements of $I$.
  Finally,
  let $g_0:=f$ and $g_j:=\REP(g_{j-1},x_{\sigma(j)}\to\gamma_{\sigma(j)})$
  for all $1\le j \le k$,
  i.e.,
  $g_k$ is the resulting function after the $k$ replacements.
  If the function $g_k$ is positive\footnote{%
    That is why we have defined $\GAB$ as an \emph{open} set.},
  it is non-decreasing
  in $\gamma_i$
  on $\PII(\GAB)$
  for all $i\in I$.
\end{lemma}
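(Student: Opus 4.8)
\medskip\noindent\emph{Proof plan.}
The plan is to prove, by induction on the step index $j\in\{0,1,\ldots,k\}$, the slightly stronger statement that $g_j$ is non-decreasing in each of its $\gamma$-arguments $\gamma_{\sigma(1)},\ldots,\gamma_{\sigma(j)}$ (keeping the remaining arguments fixed). The base $j=0$ is vacuous since $g_0=f$ has no $\gamma$-argument, and the case $j=k$ is the assertion of the lemma. Throughout, the hypothesis $g_k>0$ is what lets us ignore the first branch of Formula~(\ref{for-gi-inf-inf}): at every argument at which any $g_j$ is evaluated along the chain of replacements, the corresponding parameter tuple $\xi$ must be region-regular, because otherwise $g_j$, and hence --- pushing the replacements forward --- $g_k$, would vanish somewhere. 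So each $g_j$ is given purely by the double infimum of $|g_{j-1}^{*\sigma(j)}|$, and no constant-zero case ever arises.

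First I would dispose of the \emph{fresh} variable. Fix $j\ge 1$ and look at the dependence of $g_j$ on $\gamma_{\sigma(j)}$, the variable introduced at step~$j$. By Definition~\ref{def-region-uncertainty}, enlarging $\gamma_{\sigma(j)}$ enlarges every box $U_{\gamma_{\sigma(j)}}(c)$, hence enlarges the region of uncertainty $R_{g_{j-1}^{*\sigma(j)},\gamma_{\sigma(j)}}(\x_{\sigma(j)})$ (its critical set does not move), hence shrinks the set $\U_{\ldots}\setminus R_{\ldots}$ over which the inner infimum in Formula~(\ref{for-gi-inf-inf}) runs; an infimum over a smaller set can only be larger, so $g_j$ is non-decreasing in $\gamma_{\sigma(j)}$. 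This part is routine.

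The substantive part is monotonicity in a \emph{spectator} variable $\gamma_{\sigma(m)}$ with $m<j$, i.e.\ one that was already replaced and now sits frozen inside the parameter tuple. Here I would invoke the inductive hypothesis that $g_{j-1}$ is non-decreasing in $\gamma_{\sigma(m)}$ and is non-negative (a replacement output, since $j-1\ge1$): raising $\gamma_{\sigma(m)}$ raises the slice functions $g_{j-1}^{*\sigma(j)}$ pointwise, which can only shrink their zero sets and, more generally, their critical sets --- a point at which a pointwise-larger non-negative function accumulates to $0$ is a point at which the smaller one does too --- hence can only shrink the regions of uncertainty $R_{g_{j-1}^{*\sigma(j)},\gamma_{\sigma(j)}}$, and therefore \emph{enlarge} the infimum domain. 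Thus we are left with a pointwise-larger integrand over a larger domain, and these two effects pull in opposite directions; \textbf{controlling the points newly admitted to the infimum domain is the main obstacle.} The idea I would pursue is that such a newly admitted point lies, by construction, outside the region of uncertainty of $g_{j-1}^{*\sigma(j)}$ at the \emph{larger} value of $\gamma_{\sigma(m)}$, so Lemma~\ref{lem-sequence-of-lower-bounds} bounds $|g_{j-1}^{*\sigma(j)}|$ there below by the value of the replacement at the larger parameter, while on the already-present part of the domain $g_{j-1}^{*\sigma(j)}$ has only grown. The delicate point is that this argument, naively, only gives a bound in terms of $g_j$ at the larger parameter --- the very quantity being compared --- so the induction has to be set up to break this circularity; I would do so either by a well-founded induction on pairs (step index, number of spectator coordinates in which the two evaluation points differ), so that the bound needed for the newly admitted points is available at strictly smaller rank, or by first establishing the auxiliary structural fact that the zero set of each $g_j$ is a cylinder over the $\gamma$-coordinates (it records exactly the $\gamma$-independent ``width-zero'' obstructions coming from non-region-regular slices of $f$), which already pins down the immovable part of the critical sets and leaves only an infimum estimate. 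Once the spectator case is settled, the induction closes and $j=k$ gives the lemma.
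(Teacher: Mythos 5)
Your ``routine'' part is, in fact, the entirety of the paper's own proof: the paper argues that for growing $\gamma_{\sigma(i)}$ the regions of uncertainty in condition (C2) of Formula~(\ref{for-gi-inf-inf}) are nested, so the inner infimum runs over a smaller set and $g_i$ is non-decreasing in its freshly introduced argument $\gamma_{\sigma(i)}$, and it then simply declares that ``the same argumentation is true for each of the $k$ replacements.'' You go further and isolate exactly what that one-liner leaves implicit, namely that monotonicity in a spectator coordinate $\gamma_{\sigma(m)}$, $m<j$, must survive the later replacements; and your diagnosis of the obstacle there is accurate: by the inductive hypothesis the slices $g_{j-1,\xi}^{*\sigma(j)}$ only grow pointwise, hence their critical sets and regions of uncertainty only shrink, so the infimum in Formula~(\ref{for-gi-inf-inf}) is taken over a \emph{larger} set, and on the newly admitted points the only bound Lemma~\ref{lem-sequence-of-lower-bounds} offers is the infimum at the larger parameter value --- which is circular.

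The gap is that the proposal stops at this diagnosis. Neither escape route you sketch is carried out, and neither obviously closes the circle: a well-founded induction on the number of coordinates in which the two parameter tuples differ gives nothing, because the problematic comparison already concerns tuples differing in a single spectator coordinate, so no strictly smaller rank is available for the newly admitted points; and the auxiliary claim that the zero set of each $g_j$ is a cylinder over the $\gamma$-coordinates is unproven and, even if granted, does not settle the matter, since the region of uncertainty is built from the whole critical set of Definition~\ref{def-critical-set}, which also contains non-zero points where small values accumulate, and such points can cease to be critical when the slice increases pointwise, re-exposing arguments where the larger slice is positive but very small --- exactly the configuration your circularity worry describes. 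A smaller but real flaw is your preliminary claim that the hypothesis $g_k>0$ lets you discard the zero branch of Formula~(\ref{for-gi-inf-inf}) outright: a zero of $g_j$ need not propagate to a zero of $g_k$, because at the next replacement it becomes a critical point whose neighborhood is excluded by the region of uncertainty (Example~\ref{ex-inbox-gi-dom-crit}: $C_{g_1}\neq\emptyset$ yet $g_2>0$ on $\GAB$); fortunately this branch is harmless for monotonicity, since $0$ is a trivial lower bound, but the justification as written is unsound. So, as a standalone argument, your proof is incomplete at precisely the point you flag --- the point the paper's proof settles only by assertion.
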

\begin{proof}
  We refer to the explicit definition of $g_i$ in Formula~(\ref{for-gi-inf-inf})
  that reflects the replacement of the $i$-th argument:
  For growing $\gamma_{\sigma(i)}$
  we shrink the domain for $x_{\sigma(i)}$
  due to condition~(C2).
  Formally,
  for $\gamma',\gamma''\in\pi_{\sigma(i)}(\GAB)$
  with $\gamma'<\gamma''$
  the corresponding regions of uncertainty are related in the way
  \begin{eqnarray*}
    & R_{g_{i-1}^{*\sigma(i)},\gamma'}(\x_{\sigma(i)})
      \;\subset\;
      R_{g_{i-1}^{*\sigma(i)},\gamma''}(\x_{\sigma(i)}).
  \end{eqnarray*}
  Because the function value of $g_i$
  is defined by the infimum absolute value,
  the function $g_i$
  must be non-decreasing in its $i$-th argument $\gamma_{\sigma(i)}$
  for region-regular $\xi$
  by construction.

  The same argumentation is true for each of the $k$ replacements and
  is independent of the actual sequence of replacements.
  This finishes the proof.
  \qed
\end{proof}
  The domain of the function $g_k$ is naturally $\GAB$.
  Even if $\GAB$ has the same cardinality than $\R$
  for $k\ge2$,
  it is non-obvious how to define an invertible function $\chi_{g_k}$
  on $\GAB$.
  But such a bound is required to use the method of quantified relations.
  For that purpose we restrict the domain in the analysis
  to $\GAL$:
  It is true that the elements of $\gamma\in\GAL$
  are now interlinked,
  but the important fact is that
  we can still choose them arbitrarily close to zero.

  To further prepare the analysis,
  we have to focus on a peculiarity of the auxiliary function
  $g_{i-1,\xi}^{*\sigma(i)}$ for a given $i\in I$.
  Remember that
  $\nu_{g_{i-1,\xi}^{*\sigma(i)}}$ and $\chi_{g_{i-1,\xi}^{*\sigma(i)}}$
  are families of functions with parameter
  $\xi\in \X_{g_{i-1},\sigma(i)}$.
  Therefore we are facing the following issue:
  For a given $i\in I$,
  how can we deal with these two families of functions?
  The first solution that comes into mind
  is to replace each family with just one bounding function---so
  this is what we do.
  That means, we define the pointwise limits of these families as
  \begin{eqnarray}
    \hat\nu_{g_{i-1}^{*\sigma(i)}}\left(\gamma_{\sigma(i)}\right) &:=&
      \sup_{\xi\in\X_{f,i}} \; \nu_{g_{i-1,\xi}^{*\sigma(i)}}\left(\gamma_{\sigma(i)}\right) \nonumber
  \end{eqnarray}
  and
  \begin{eqnarray}
    \hat\chi_{g_{i-1}^{*\sigma(i)}}\left(\gamma_{\sigma(i)}\right) &:=&
      \inf_{\xi\in\X_{f,i}} \; \chi_{g_{i-1,\xi}^{*\sigma(i)}}\left(\gamma_{\sigma(i)}\right)
      \label{for-def-chi-star}
  \end{eqnarray}
  for $\gamma\in\GAB$
  and make use of these new bounds in the analysis.
  To illustrate this extra work in the analysis,
  we have added
  the two striped quantities in Figure~\ref{fig-ana-depend-multi-rep}.

  Now we are ready to present the top-down approach to analyze
  real-valued functions.
  We claim and prove the results in the following theorem.
\begin{theorem}[top-down approach]\label{theo-top-down-first-version}
  Let $\PREDB$ be a predicate description
  where $A$ is an axis-parallel box without holes
  and let $I:=\{1,\ldots,k\}$.
  Let $\sigma:I\to I$ be bijective,
  i.e., an order on the elements of $I$.
  Finally,
  let $g_0:=f$ and $g_j:=\REP(g_{j-1},x_{\sigma(j)}\to\gamma_{\sigma(j)})$
  for all $1\le j \le k$.
  We define $\varphi_f$ and $\chi_f$ as
  \begin{eqnarray*}
    \varphi_{f}(\gamma)
      &:=& g_k(\gamma) \\
    \chi_{f}(\gamma)
      &:=& \prod_{j=1}^k \,
        \hat\chi_{g_{j-1}^{*\sigma(j)}}\!\left(\gamma_{\sigma(j)}\right).
  \end{eqnarray*}
  If $g_k$ is positive on $\,\GAB$ and
  $\chi_f$ is invertible on\footnote{%
    Remember that $\GAL\subset\GAB$.}
  $\,\GAL$,
  then $f$ is region-value-suitable
  with the bounding functions\footnote{%
    Remember that we can use $\nu_f$
    instead of $\chi_f$
    because of Formula~(\ref{def-chi-region-suit}).}
  $\varphi_f$ and $\chi_f$.
\end{theorem}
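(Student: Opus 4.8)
The plan is to prove both halves of the statement --- value-suitability with $\varphi_f:=g_k$ and region-suitability with $\chi_f$ --- by a backward induction along the sequence $g_0=f,g_1,\ldots,g_k$, transferring region- and value-suitability from $g_i$ to its predecessor $g_{i-1}$ in exactly the way the product rule (Theorem~\ref{theo-product}, case $j=\ell$) transfers these properties across a Cartesian product of disjoint argument lists. The base case is immediate: $g_k$ depends only on $\gamma\in\GAB$, it is positive there by hypothesis, so --- regarded as a function of the (vacuous) set of remaining $x$-coordinates --- it has empty critical set and is trivially region-value-suitable with $\varphi\equiv g_k$. The hypothesis that $A$, hence $\U_\delta(A)$, is an axis-parallel box without holes is what turns the projections $\pi_{<\sigma(i)},\pi_{\sigma(i)},\pi_{>\sigma(i)}$ into honest Cartesian factors, which the induction relies on.

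For the value bound $\varphi_f:=g_k$ I would iterate Lemma~\ref{lem-sequence-of-lower-bounds}, applying it at $\REP(g_0,x_{\sigma(1)}\to\gamma_{\sigma(1)})$, then at $\REP(g_1,x_{\sigma(2)}\to\gamma_{\sigma(2)})$, and so on through all $k$ replacements, which yields the chain
\begin{eqnarray*}
  |f(x)|\;=\;|g_0(x)|\;\ge\;g_1(\ldots)\;\ge\;g_2(\ldots)\;\ge\;\cdots\;\ge\;g_k(\gamma)\;>\;0,
\end{eqnarray*}
valid for every $\x\in A$ and every $x\in\U_\delta(\x)$ such that, at each step $i$, the remaining arguments of $g_{i-1}$ form a region-regular tuple and the free coordinate $x_{\sigma(i)}$ lies outside $R_{g_{i-1}^{*\sigma(i)},\gamma_{\sigma(i)}}$. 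The point that remains is that the set of $x$ excluded by this requirement coincides with the region of uncertainty $R_{f,\gamma}(\x)$ of $f$ itself: if $x_{\sigma(i)}$ is within $\gamma_{\sigma(i)}$ of a critical point of $g_{i-1}^{*\sigma(i)}$, or if the other arguments are non-region-regular, then unwinding the chain produces a Cauchy sequence along which $|f|\to 0$ (Definition~\ref{def-critical-set}), so $x$ lies within $\gamma$ of $C_{f,\delta}(\x)$; conversely, off this excluded set the chain holds and $|f|\ge g_k(\gamma)>0$, so no critical point of $f$ can sit there. Hence $\varphi_f:=g_k$ restricted to $\GAL\subset\GAB$ is a valid, and by hypothesis positive, lower-bounding function (Lemma~\ref{lem-gk-non-decreasing} moreover shows it is non-decreasing, which is convenient once it is fed into the method of quantified relations).

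For the region bound $\chi_f$ I would read off a volume estimate from the same description of the excluded set. Fix all coordinates except $x_{\sigma(i)}$: for a region-regular choice $\xi$ of the others, the corresponding one-dimensional slice of $\U_\delta(\x)\setminus R_{f,\gamma}(\x)$ in direction $\sigma(i)$ is the complement of $R_{g_{i-1,\xi}^{*\sigma(i)},\gamma_{\sigma(i)}}$ inside the interval $\U_{g_{i-1,\xi}^{*\sigma(i)},\delta_{\sigma(i)}}(\x_{\sigma(i)})$, whose length is at least $\chi_{g_{i-1,\xi}^{*\sigma(i)}}(\gamma_{\sigma(i)})\ge\hat\chi_{g_{i-1}^{*\sigma(i)}}(\gamma_{\sigma(i)})$ by the definition of $\hat\chi$ as an infimum over $\xi$ in Formula~(\ref{for-def-chi-star}); the non-region-regular slices carry no admissible points, but by construction of the replacements they are swallowed by the excluded set of a later replacement. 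Since the $k$ replacements act on pairwise disjoint coordinates, the complement of $R_{f,\gamma}(\x)$ therefore contains a Cartesian product of such slices, and iterating exactly as in Formula~(\ref{for-beta-product-rule}) gives
\begin{eqnarray*}
  \mu\!\left(\U_\delta(\x)\setminus R_{f,\gamma}(\x)\right)
  \;\ge\;\prod_{j=1}^{k}\hat\chi_{g_{j-1}^{*\sigma(j)}}\!\left(\gamma_{\sigma(j)}\right)
  \;=\;\chi_f(\gamma),
\end{eqnarray*}
equivalently $\nu_f(\gamma):=\mu(U_\delta(\x))-\chi_f(\gamma)$ upper-bounds $\mu(R_{f,\gamma}(\x))$. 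Because $\chi_f$ is invertible on $\GAL$ by hypothesis and its supremum over $\GAL$ is $\mu(U_\delta(\x))=\prod_i 2\delta_i$, for every $p\in(0,1)$ one can choose $\gamma\in\GAL$ with $\chi_f(\gamma)\ge p\,\mu(U_\delta(\x))$, which is Formula~(\ref{for-defregsuit}) in its $\chi_f$-form (Remark~\ref{rem-def-region-suit}.4); hence $f$ is region-suitable, and together with the previous paragraph region-value-suitable with the stated bounds.

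I expect the genuine obstacle to be the two coupled claims just invoked: that the excluded set of the replacement chain is exactly the \emph{intrinsic} region of uncertainty $R_{f,\gamma}$, and that the Cartesian-product lower bound on its complement survives both (i) arguments of $g_{i-1}$ that mix original $x$-coordinates with already-introduced $\gamma$-coordinates and (ii) the non-region-regular slices, whose open neighbourhoods must be provably absorbed into the region of uncertainty of $g_{i-1}$ --- and so, inductively, into that of $f$. Making this bookkeeping precise, rather than the final product computation (which is routine), is where the work lies.
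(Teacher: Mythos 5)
Your skeleton---iterating Lemma~\ref{lem-sequence-of-lower-bounds} to get the chain $|f(x)|\ge g_k(\gamma)>0$ off an ``excluded set'', bounding the measure of the complement by the product $\prod_j\hat\chi_{g_{j-1}^{*\sigma(j)}}(\gamma_{\sigma(j)})$ via one-direction-at-a-time slicing, and noting that non-region-regular slices are absorbed by a later replacement---is essentially the mechanics of the paper's Parts~2--3. But the step you yourself flag as the crux, and on which your reduction to Definitions~\ref{def-region-suit} and~\ref{def-value-suit} rests, is wrong: the excluded set does \emph{not} coincide with the intrinsic region of uncertainty $R_{f,\gamma}(\x)$, and your argument for the forward inclusion (``unwinding the chain produces a Cauchy sequence along which $|f|\to0$, so $x$ lies within $\gamma$ of $C_{f,\delta}(\x)$'') fails. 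The reason is that $g_{i-1}$ is defined by infima in which the coordinates replaced at earlier steps range over their \emph{entire} perturbation intervals (conditions (C1),(C2) in Formula~(\ref{for-gi-inf-inf})); hence $g_{i-1}^{*\sigma(i)}$ being critical near $x_{\sigma(i)}$ only witnesses small values of $|f|$ at points whose already-replaced coordinates may be far from those of $x$. The paper's own $\inbox$ discussion (Example~\ref{ex-inbox-gi-dom-crit} and its continuation) makes this explicit: the top-down construction excludes $\gamma$-strips around the full lines through the top and bottom edges of the box, i.e.\ points far outside $R_{f,\gamma}$. Consequently the excluded set is strictly larger than $R_{f,\gamma}$ in general, your claimed equivalence is unavailable in either direction you need, and with the intrinsic $R_{f,\gamma}$ the bound $\varphi_f=g_k$ is not established on all of $\bar U_\delta(\x)\setminus R_{f,\gamma}(\x)$, since the chain only covers the complement of the larger excluded set.

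The paper never attempts this identification. Instead it exploits the freedom, stated in Section~\ref{sec-td-deeper-insight} (and already used in Theorem~\ref{theo-lower-bound}), to \emph{enlarge} the critical set: the excluded set itself is taken as the region of uncertainty, and the theorem's bounds are proved relative to it. Technically this is organized as a backward induction on statements of the form ``$|g_{i-1}|$ is at least $\varphi_{g_{i-1}}(\gamma)$ on an area of volume at least $\chi_{g_{i-1}}(\gamma)$'', with base case at $g_k$ on the box $\GASG$ of volume $\prod_j(\hat\gamma_j-\gamma_j)$---this is where Lemma~\ref{lem-gk-non-decreasing} is genuinely needed, not merely ``convenient'', because the already-replaced coordinates range over $\pi(\GASG)$ rather than being pinned to $\gamma$---and each induction step trades the factor $(\hat\gamma_{\sigma(i)}-\gamma_{\sigma(i)})$ for $\hat\chi_{g_{i-1}^{*\sigma(i)}}(\gamma_{\sigma(i)})$ using Formula~(\ref{for-def-chi-star}), while Lemma~\ref{lem-sequence-of-lower-bounds} carries the value bound; telescoping gives exactly $\varphi_f=g_k$ and $\chi_f=\prod_j\hat\chi_{g_{j-1}^{*\sigma(j)}}$. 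So to repair your proposal you should drop the coincidence claim, declare the excluded set to be the (enlarged) region of uncertainty---which is legitimate because $|f|\ge g_k(\gamma)>0$ outside it, so it contains all zeros and critical behavior relevant to the guards---and then your Fubini-style product estimate, together with a proof (not just an assertion) that survival of the $(i{+}1)$-st condition forces region-regularity at level $i$, yields the stated bounds; that last point and the role of $\GASG$ are precisely the bookkeeping you deferred.
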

\begin{proof}
  We prove the claim in three parts.
  First we show that
  there are certain bounding functions
  $\varphi_{g_k}$ and $\chi_{g_k}$
  for which $g_k$ is region-value-suitable.
  Afterwards we prove that,
  if the function $g_i$ has such bounding functions,
  then $g_{i-1}$ has also appropriate bounding functions.
  And in the end we deduce the claim of the theorem.

  Part~1 (basis).
  We assume that
  $g_k$ is positive on the open set $\,\GAB$, that means,
  we consider the function $g_k:\GAB\to\R_{>0}$.
  At first we decompose the domain in two parts
  (see Figure~\ref{fig-gamma-box-decomposition}).
  Let $\gamma\in\GAL$.
  We define the unique open axis-parallel box
  with opposite vertices $\gamma$ and\footnote{%
    Remember that we have introduced $\hat\gamma$
    to define $\GABG$ and $\GALG$.
    More information and the formal bound is given
    in Remark~\ref{rem-def-region-suit}.2
    on Page~\pageref{rem-def-region-suit}.}
  $\hat\gamma$ as
  \begin{eqnarray*}
    \GASG &:=&
      \left\{
        \gamma'\in\GAB :
	\text{$\gamma_i \le \gamma'_i$ for all $i\in I$}
      \right\}\!.
  \end{eqnarray*}
  We denote its complement within the $\GAB$ by
  \begin{eqnarray*}
    \GARG &:=& \GAB \setminus \GASG.
  \end{eqnarray*}
  We think of $\GARG$ as the region of uncertainty and
  $\GASG$ as the region
  whose floating-point numbers are guaranteed to evaluate fp-safe.
  \begin{figure}[t]\centering
    \includegraphics[width=.65\columnwidth]{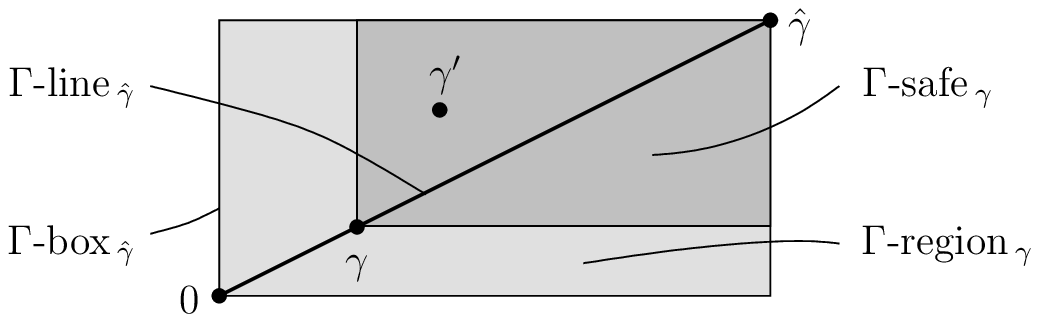}
    \caption{This is an exemplified 2-dimensional illustration
      of the decomposition of the $\GAB$
      into the sets $\GASG$ and $\GARG$
      for $\gamma\in\GAL$.}
    \label{fig-gamma-box-decomposition}
  \end{figure}
  We claim that $g_k$ is region-value-suitable on $\GAB$
  in the following sense:
  We set the bounding functions to
  \begin{eqnarray*}
    \varphi_{g_k}(\gamma) &:=& g_k(\gamma) \\
    \chi_{g_k}(\gamma)
      &:=&
        \prod_{j=1}^k \,
        \left( \hat\gamma_j - \gamma_j \right)
  \end{eqnarray*}
  and claim that two statements are fulfilled for every
  $\gamma\in\GAL$:
  \begin{enumerate}
    \item
      The absolute value of $g_k(\gamma')$ is at least $\varphi_{g_k}(\gamma)$
      for all points $\gamma'\in\GASG$.
    \item
      The volume of $\GASG$ is $\chi_{g_k}(\gamma)$.
  \end{enumerate}
  To prove the first statement,
  we consider the function value of $g_k$
  along a path of $k$ axis-parallel line segments
  from $\gamma$ to $\gamma'$.
  The path starts at
  $\gamma=(\gamma_1,\ldots,\gamma_k)$,
  connects
  the $(k-1)$ points
  $(\gamma'_1,\ldots,\gamma'_j,\gamma_{j+1}\ldots,\gamma_k)$
  with $1\le j<k$
  in ascending order of $j$
  and ends at
  $\gamma'=(\gamma'_1,\ldots,\gamma'_k)$.
  Along this path,
  the function value of $g_k$ is non-decreasing
  because of Lemma~\ref{lem-gk-non-decreasing}:
  For all $i\in I\!,$
  the function $g_k$
  is non-decreasing in its $i$-th argument
  $\gamma_i\in\PII(\GAB)$
  for fixed $\xi\in\PIN(\GAB)$.

  The proof of the second statement is straight forward:
  Because the box is axis-parallel,
  its volume is the product of its edge-lengths.
  We make the observation that the function $\chi_{g_k}(\gamma)$
  is strictly monotonically increasing on $\GAL$
  and hence must be invertible on this domain.

  We conclude the first part of the proof:
  \emph{For a given $\gamma\in\GAL$,
  we have shown
  that the function value of $g_k$
  is at least $\varphi_{g_k}(\gamma)$
  on an area of volume $\chi_{g_k}(\gamma)$.}
  This way we have found evidence
  that $g_k$ is region-value-suitable
  in the meaning above.

  Part~2 (induction).
  We claim:
  \emph{For $i\in I$ and $\gamma\in\GAL$,
  the function value of $g_{i-1}$
  is at least $\varphi_{g_{i-1}}(\gamma)$
  on an area of volume $\chi_{g_{i-1}}(\gamma)$ with}
  \begin{eqnarray}
    \varphi_{g_{i-1}}(\gamma)
      &:=& \varphi_{g_i} (\gamma)
      \label{for-proof-phi-star}
      \\
    \chi_{g_{i-1}}(\gamma)
      &:=& \chi_{g_{i}}(\gamma)
      \, \cdot \,
      \frac { \hat\chi_{g_{i-1}^{*\sigma(i)}} \left(\gamma_{\sigma(i)}\right) }
            {\hat\gamma_{\sigma(i)} - \gamma_{\sigma(i)}}.
      \label{for-proof-chi-star}
  \end{eqnarray}
  We prove the claim by mathematical induction for descending $i\in I$.
  Basis ($i=k$).
  Due to the first part,
  we can base the proof on the bounding functions
  $\varphi_{g_k}$ and $\chi_{g_k}$.
  Induction step ($i\in I$).
  We assume that the bounding functions
  are true for all $j\in I$ with $i\le j\le k$
  and prove the claim for $i-1$.
  This is what we do next.

  Remember the definition
  $g_i := \REP\!\left(g_{i-1}, x_{\sigma(i)} \to \gamma_{\sigma(i)}\right)$.
  In the step backwards
  from $g_i$ to $g_{i-1}$,
  we observe the following difference in their two axis-parallel domains
  due to condition (C2) of Formula~(\ref{for-gi-inf-inf}):
  The counterpart to the situation in which
  the \mbox{$\sigma(i)$-th} argument of $g_i$ lies in
  $\pi_{\sigma(i)}\left(\GASG\right)$
  is the situation in which
  the \mbox{$\sigma(i)$-th} argument of $g_{i-1}$ lies in
  \begin{eqnarray}\label{for-region-regular-gi-1}
    \U_{g^{*\sigma(i)}_{i-1,\delta_{\sigma(i)}}} \!\! \left(\x_{\sigma(i)}\right)
    \; \setminus \;
    R_{g^{*\sigma(i)}_{i-1,\gamma_{\sigma(i)}}} \!\! \left(\x_{\sigma(i)}\right)
  \end{eqnarray}
  and belongs to the region-regular case.
  Furthermore,
  the volume of this area is guaranteed to be at least
  $\hat\chi_{g^{*\sigma(i)}_{i-1}}\left(\gamma_{\sigma(i)}\right)$
  due to Formula~(\ref{for-def-chi-star}).
  Because the axis-parallel domains of $g_i$ and $g_{i-1}$
  do not differ in directions different to the $\sigma(i)$-th main axis,
  their volume (which is the product of edge lengths)
  solely differ in a factor.
  Therefore we can estimate the volume $\chi_{g_{i-1}}(\gamma)$
  at the product $\chi_{g_{i}}(\gamma)$
  where we replace the factor
  $({\hat\gamma_{\sigma(i)} - \gamma_{\sigma(i)}})$
  by $\hat\chi_{g_{i-1}^{*\sigma(i)}}\!(\gamma_{\sigma(i)})$;
  this validates Formula~(\ref{for-proof-chi-star}).

  Because of Lemma~\ref{lem-sequence-of-lower-bounds},
  the lower-bounding function $\varphi_{g_i}$
  is also a lower-bounding function
  on the volume of the area which is defined
  in Formula~(\ref{for-region-regular-gi-1}).
  This validates Formula~(\ref{for-proof-phi-star}).

  Part~3 (conclusion).
  So far we have shown that
  \emph{for a given $\gamma\in\GAL$,
  the function value of $f=g_0$
  is at least $\varphi_{f}(\gamma)$
  on an area of volume $\chi_{f}(\gamma)$ because}
  \begin{eqnarray*}
    \varphi_{f}(\gamma)
      &=&
    \varphi_{g_0}(\gamma)
      \;\; = \;\;
    \varphi_{g_1}(\gamma)
      \;\; = \;\;
    \cdots
      \;\; = \;\;
    \varphi_{g_k}(\gamma)
      \;\; = \;\;
    g_k(\gamma)
  \end{eqnarray*}
  and because
  \begin{eqnarray*}
    \chi_{f}(\gamma)
      &=&
        \chi_{g_{0}}(\gamma) \\
      &=&
        \chi_{g_{1}}(\gamma)
        \, \cdot \,
        \frac { \hat\chi_{g_{0}^{*\sigma(1)}} \left(\gamma_{\sigma(1)}\right) }
            {\hat\gamma_{\sigma(1)} - \gamma_{\sigma(1)}} \\
      &=&
        \chi_{g_{2}}(\gamma)
        \, \cdot \,
        \frac { \hat\chi_{g_{1}^{*\sigma(2)}} \left(\gamma_{\sigma(2)}\right) }
            {\hat\gamma_{\sigma(2)} - \gamma_{\sigma(2)}}
        \, \cdot \,
        \frac { \hat\chi_{g_{0}^{*\sigma(1)}} \left(\gamma_{\sigma(1)}\right) }
            {\hat\gamma_{\sigma(1)} - \gamma_{\sigma(1)}} \\[1.5ex]
      && \vdots \\[0.5ex]
      &=&
        \chi_{g_{k}}(\gamma)
	\, \cdot \,
	\prod_{i=1}^k \,
        \frac { \hat\chi_{g_{i-1}^{*\sigma(i)}} \left(\gamma_{\sigma(i)}\right) }
            {\hat\gamma_{\sigma(i)} - \gamma_{\sigma(i)}} \\
      &=&
        \prod_{j=1}^k \,
        \left( \hat\gamma_j - \gamma_j \right)
	\, \cdot \,
	\prod_{i=1}^k \,
        \frac { \hat\chi_{g_{i-1}^{*\sigma(i)}} \left(\gamma_{\sigma(i)}\right) }
            {\hat\gamma_{\sigma(i)} - \gamma_{\sigma(i)}} \\
      &=&
        \prod_{i=1}^k \,
        \left( \hat\gamma_{\sigma(i)} - \gamma_{\sigma(i)} \right)
	\, \cdot \,
	\prod_{i=1}^k \,
        \frac { \hat\chi_{g_{i-1}^{*\sigma(i)}} \left(\gamma_{\sigma(i)}\right) }
            {\hat\gamma_{\sigma(i)} - \gamma_{\sigma(i)}} \\
      &=&
	\prod_{i=1}^k \,
          { \hat\chi_{g_{i-1}^{*\sigma(i)}} \left(\gamma_{\sigma(i)}\right) }.
  \end{eqnarray*}
  If $\chi_f$ is in addition
  invertible on the domain $\,\GAL$,
  $f$ is region-value-suitable.
  This finishes the proof.
  \qed
\end{proof}
  One prerequisite in the last theorem is
  that $g_k$ is positive on the open $\GAB$.
  We make the observation that we cannot validate this property
  unless we have determined
  the entire sequence of replacements from $f=g_0$ down to $g_k$.
  That means,
  it is possible that the analysis fails at the end of the first phase.

  Furthermore, we make the observation
  that the bounding functions $\varphi_f$ and $\chi_f$
  are actually derived {bottom-up} in the the second phase
  of their derivation.
  That means,
  although we technically determine the sequence of functions $g_i$
  in a top-down manner on the surface,
  the validity of the formulas is derived bottom-up afterwards.
  We summarize the steps of the top-down approach
  in Figure~\ref{fig-analysis-3}.
  \begin{figure}[t]\centering
    \includegraphics[width=.95\columnwidth]{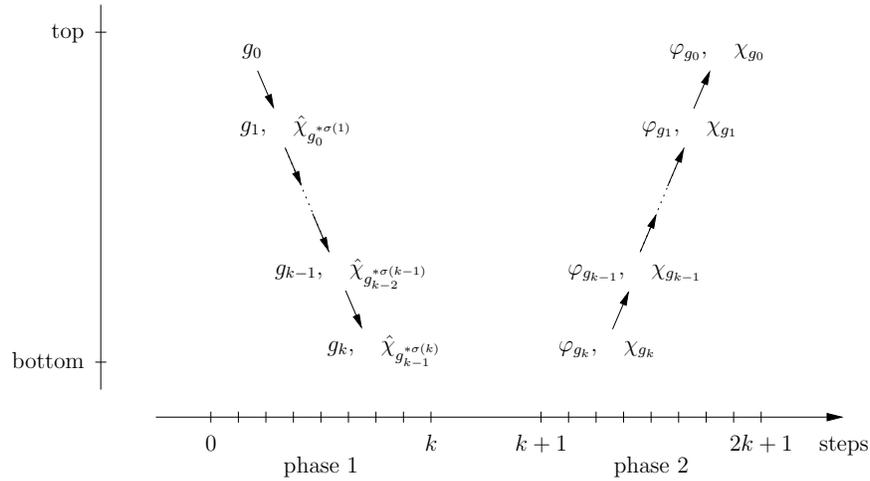}
    \caption{Instructions for performing the top-down approach.
      The illustration reflects the steps
      in which the quantities are determined
      according to Theorem~\ref{theo-top-down-first-version}.}
    \label{fig-analysis-3}
  \end{figure}

\subsection{Examples}
\label{sec-td-example}

\begin{example}\label{ex-inbox-topdown}
  We use the top-down approach to determine the bounding functions
  $\varphi_{\inbox}$ and $\chi_{\inbox}$
  for the predicate $\inbox$.
  Again, we assume that the box is fixed somehow and
  that the only argument of the predicate is the query point.
  (There is no much influence on the analysis by the remaining parameters.)
  The predicate can be realized, for example, by the function
  \begin{eqnarray*}
    f(x) &:=& \min_{1\le i\le k} \, \left\{ \ell_i^2-(x_i-c_i)^2 \right\}
  \end{eqnarray*}
  where $c\in\R^k$ is the center of the axis-parallel box
  and its edge lengths are given by $2\ell$.
  We eliminate the variables in ascending order from
  $x_1$ to $x_k$, that means, we set $\sigma(i):=i$ for all $1\le i\le k$.

  Part~1~($\varphi_{\inbox}$).
  To determine $\varphi_{\inbox}$
  we need $g_k$,
  to determine $g_k$
  we need the entire sequence of replacements,
  and to determine $g_i$
  we need to determine the value of the ``$\inf\inf$'' expression
  in dependence on $\gamma_i$
  in Formula~(\ref{for-gi-inf-inf}).
  This is what we do next.
  Because of the symmetry of $f$,
  the following discussion is valid for all coordinates $x_i$.

  To prepare the replacement of variables,
  we examine the function $f^{*i}$
  for the region-regular case (see Figure~\ref{fig-inbox-1}).
  \begin{figure}[t]\centering
    \includegraphics[width=.95\columnwidth]{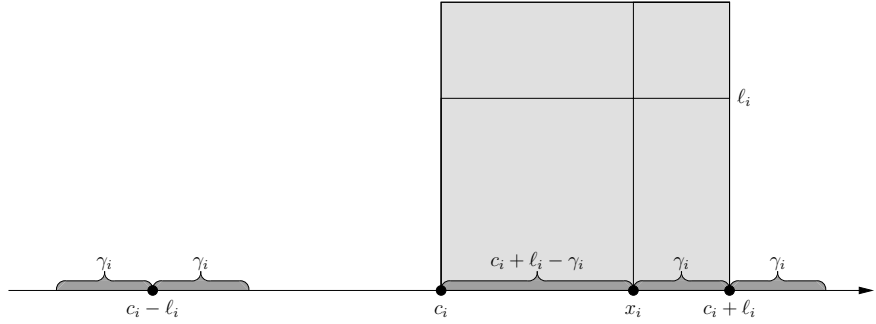}
    \caption{An illustration that supports
             the relation between the quantities of $f^{*i}$
             for the region-regular case of the predicate $\inbox$.}
    \label{fig-inbox-1}
  \end{figure}
  The critical set $C_{f^{*i}}$ contains two points,
  namely $c_i-\ell_i$ and $c_i+\ell_i$.
  By $\gamma_i$ we denote the minimal distance of $x_i$
  to a point in $C_{f^{*i}}$.
  (Again
  we assume that $\hat\gamma_i$ must be less than $\ell_i$;
  otherwise the interior of the box
  would be covered entirely by the region of uncertainty
  and the predicate would lose its meaning.)
  The absolute value of $f$ grows in the distance to $C_{f^{*i}}$.
  To determine a guaranteed lower bound on the absolute value of $f$,
  we assume that the distance of $x_i$ to $C_{f^{*i}}$ is exactly $\gamma_i$.
  In addition we make the observation
  that $|f|$ grows slower towards the interior of the box
  than away from the box;
  therefore we must also assume that $x_i$
  lies between $c_i-\ell_i$ and $c_i+\ell_i$
  to get a convincing bound.
  This leads to the worst-case consideration $|x_i-c_i|=|c_i+\ell_i-\gamma_i|$.
  We make use of the binomial theorem
  to derive the unequation
  \begin{eqnarray*}
    \left| \ell_i^2 - \left( x_i - c_i \right)^2 \right|
    &\ge&
      \left| \ell_i^2 - \left( c_i+\ell_i-\gamma_i \right)^2 \right| \\
    &=&
      \left| 2 \ell_i \gamma_i - \gamma_i^2\right| \\
    &=&
      \left| \left( 2 \ell_i - \gamma_i \right) \gamma_i \right|.
  \end{eqnarray*}
  Next we define the functions $g_i$ as
  \begin{eqnarray*}
    g_i (\gamma_1,\ldots,\gamma_i,x_{i+1},\ldots,x_{k})
    &:=&
      \min \;
      \Bigl(\!\!
      \begin{array}[t]{l}
        \bigl\{
	  (2\ell_j-\gamma_j)\gamma_j : 1\le j\le i
	\bigr\}
      \\[1ex]
	\cup
	\;
        \bigl\{
	  \ell_j^2 - \left( x_j-c_j \right)^2 : i < j \le k
	\bigr\}
      \Bigr)
      \end{array}
  \end{eqnarray*}
  and in the end, the sequence of replacements leads to
  \begin{eqnarray*}
    \varphi_\inbox (\gamma)
    &:=&
      g_k(\gamma) \\
    &=&
      \min_{1\le j\le k}
      \left( 2 \ell_j-\gamma_j \right) \gamma_j.
  \end{eqnarray*}
  Part~2~($\chi_{\inbox}$).
  Now we determine a bound on the volume of
  the complement of the region of uncertainty.
  For every $i\in I$,
  a valid bounding function is given by
  \begin{eqnarray*}
    \hat\chi_{g^{*i}_{i-1}}\left(\gamma_{i}\right)
      &=& 2 \delta_i - 4 \gamma_{i}.
  \end{eqnarray*}
  This results in the following bound on the total volume:
  \begin{eqnarray*}
    \chi_{\inbox}\left(\gamma\right)
      &=& \prod_{i=1}^k \,
          { \hat\chi_{g_{i-1}^{*i}} \left(\gamma_{i}\right) } \\
      &=& \prod_{i=1}^{k} \left( 2 \delta_i - 4 \gamma_{i} \right).
  \end{eqnarray*}
  Now that we have determined the bounding functions
  $\varphi_{\inbox}$ and $\chi_{\inbox}$,
  it would be possible to finish the analysis
  with the method of quantified relations---but
  this is not our interest in this section.
  \hfill$\bigcirc$
\end{example}
\begin{example}
  This is the continuation of Examples~\ref{ex-inbox-gi-dom-crit}
  and~\ref{ex-inbox-topdown}.
  Here we want to investigate the regions of uncertainty
  for the various functions $g_i$.
  More precisely,
  we are interested in the correlation between the regions
  which are defined bottom-up in the second phase of the approach.

  Figure~\ref{fig-crit-set-inbox-2}
  visualizes the regions of uncertainty for the functions $g_i$.
  The regions of uncertainty are light shaded
  whereas their complements are dark shaded.
  The decomposition is initiated by the choice of $\gamma\in\GAB$.
  Since each component $\gamma_i$ is positive,
  neighborhoods of the critical set are added to the region of uncertainty
  on the way back up to $g_0$.
  \begin{figure}[t]\centering%
    \includegraphics[width=.95\columnwidth]{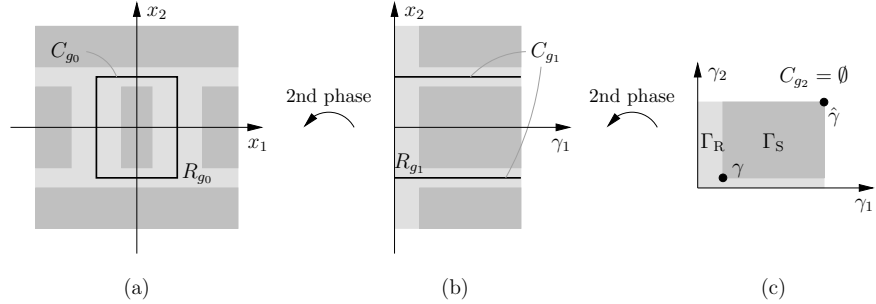}%
    \caption{Illustration of the regions of uncertainty
      for the various domains in the analysis of
      the 2-dimensional $\inbox$-predicate.}%
    \label{fig-crit-set-inbox-2}%
  \end{figure}%

  As we have seen in Example~\ref{ex-inbox-gi-dom-crit},
  the upper line segment of $C_{g_0}$
  causes the upper line of $C_{g_1}$.
  Conversely, we can now see that the upper line of $C_{g_1}$
  causes a region of uncertainty around the
  \emph{line which passes through} the upper line segment of $C_{g_0}$.
  Be aware
  that our top-down approach is designed such that this behavior
  is forced for all non-region-regular situations.
  This implies that our method does not need any kind of exceptional sets.
  In the contrary,
  there are no restrictions on the measure of the critical sets at all:
  The only thing that matters is the criterion
  if $f$ is region-suitable or not.
  \hfill$\bigcirc$
\end{example}%

\subsection{Further Remarks}
\label{sec-td-deeper-insight}

\newcommand{\CEXT}{{C^\text{\rm ext}}}
  A different concept of
  the top-down approach is published in \cite{MOS11}.
  Although both presentations rest upon the same motivation,
  there are some technical differences in the realization.
  To avoid misunderstandings in the presentation
  and gain a deeper insight into our approach,
  we end this section with selected questions.

  \emph{Does $f$ have to be (upper- or lower-) continuous
  to be top-down analyzable?}
  No, we do not assume any kind of continuity in our approach.
  Points of discontinuity may be critical,
  but they do not have to be critical.

  \emph{May we assume that $f$ is continuous?}
  No, the top-down approach is defined recursively and
  the auxiliary functions $g_i$ are not continuous in general.
  Consider for example the continuous polynomial
  $f(x_1,x_2) := x_1^2 + x_2^2 - 1$
  which is the planar ``in unit circle'' predicate.
  Then $g_1(x_1,\gamma_2)$ is not continuous in four points
  for fixed $\gamma_2$.
  The function is illustrated in Figure~\ref{fig-exa-in-circle}.
  That is the reason
  why the top-down approach \emph{must} work for discontinuous functions.
  \begin{figure}[t]\centering
    \includegraphics[width=.33\columnwidth]{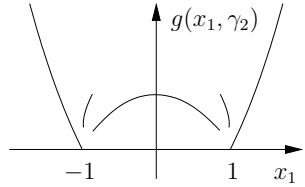}
    \caption{Exemplified drawing of the ``in unit circle'' predicate
      after the first replacement.
      The function values on the interval $[-1,1]$
      vary with $\gamma_2$.}
    \label{fig-exa-in-circle}
  \end{figure}

  \emph{Does a critical set of measure zero imply that $f$ is region-suitable?}
  No, not in general.
  A notorious example is the density of $\Q$ in $\R$.
  Let $A\subset\R$ be an interval.
  Although $A\cap\Q$ is a set of measure zero,
  there is no $\varepsilon>0$ such that the neighborhood
  $U_\varepsilon(A\cap\Q)$ has a volume smaller than $\mu(A)$.
  But the latter is a necessary criterion for region-suitability
  and the applicability of controlled perturbation.

  \emph{Does region-suitability imply a finite critical set?}
  No.
  A counter-example is the function $x\cdot\sin\left(\frac{1}{x}\right)$
  which is region-suitable
  although it has infinitely many zeros in any finite neighborhood of zero.
  (By the way, this function is also value-suitable.)
  We summarize:
  \emph{Critical sets of region-suitable functions are countable,
  but not every countable critical set implies region-suitability.}

  \emph{Is it possible to neglect isolated points in the analysis?}
  We may never exclude critical points from the analysis;
  they are always used to define the region of uncertainty.
  We may exclude less-critical points
  provided that we adjust the success-probability ``by hand''.
  We may neglect non-critical points
  provided that we still determine
  the correct inf-value-suitable bound $\varphi_{\inf f}$.
  (See also Section~\ref{sec-succ-prob}.)

  \emph{May we add additional points to the critical set?}
  Yes, we may add points to the critical set
  provided that $f$ is still guaranteed to be region-suitable.
  (See also Section~\ref{sec-succ-prob}.)

  \emph{Can we decide if $f$ is top-down analyzable
  without developing the sequence of replacements?}
  It is a necessary condition for the top-down analyzability of $f$
  that $g_k$ is positive everywhere.
  It is not clear
  how we can guarantee this property in general without deriving $g_k$.

\section[Determining the Lower Fp-safety Bound (1st Stage, s-suit)]{Determining the Lower Fp-safety Bound}
\label{sec-guards-safetybounds}

  Here we introduce the design of
  guards and fp-safety bounds.
  Guards are necessary to implement guarded evaluations in $\AG$.
  In Section~\ref{sec-fea-guards}
  we explain how guards can be implemented for a wide class of functions
  including polynomials.
  To analyze the behavior of guards,
  we introduce fp-safety bounds
  in Section~\ref{sec-fea-fpsafety}.
  We explain how we determine the fp-safety bound
  in the analysis
  (see Figure~\ref{fig-ana-fea}).
  Furthermore,
  we prove the fp-safety bounds
  which we have used in previous sections.
  \begin{figure}[h]\centering
    \includegraphics[width=.45\columnwidth]{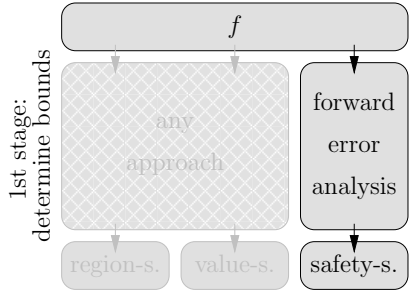}
    \caption{An error analysis is used
      to derive the bounding function for the safety-suitability
      in the first stage of the analysis.}
    \label{fig-ana-fea}
  \end{figure}

\subsection{Implementing Guarded Evaluations}
\label{sec-fea-guards}

  Our presentation of guarded evaluations
  is based on rounding error analyses
  following the approach in~\cite{F97,BFS01,MOS11}.
  A refinement is presented in the appendix of~\cite{MOS11}.

\subsection*{Rounding Error Analysis}

  The implementation of guards is based on maximum error bounds.
  To determine the error bounds we use rounding error analyses.
  Note that the error bound of a function $f$
  depends on the formula $E$ that realizes $f$
  and, especially, on the chosen \emph{sequence of evaluation}.
  In Table~\ref{def-fperrorbound-inline}
  we cite some rules to determine error bounds.
  Expressions $E$ that are composed of addition,
  subtraction, multiplication and absolute value can be bounded
  by the value $B_E$ in the last row of the table.
  This includes the evaluation of polynomials;
  for further operators see~\cite{F97,BFS01,MOS11}.
  The quantities $\IND_E$ and $\SUP_E$
  are derived according to the sequence of evaluation of $E$.
  The value $\IND_x$ is 0 if $x\in\FL$, and it is 1 if $x$ is rounded.
\begin{example}
  We determine the error bound for the expression
  \begin{eqnarray*}
    E(x_1,\ldots x_k)\RF
      &=& (((a\cdot x_1)\cdot x_2)\cdots x_k)\RF
  \end{eqnarray*}
  where $k\in\N$, $a\in\R$ is a coefficient and
  $x\in U_\delta(\x)\RG\subseteq[-2^\E,2^\E]^k$.
  A worst-case consideration leads to
  $\IND_{a}=1$ and $\SUP_{a}=\CARDI{a\RF}$ for the coefficient and
  $\IND_{x_i}=0$ and $\SUP_{x_i}=\CARDI{x_i\RF}$ for $1\le i\le k$.
  Then we obtain
  $\IND_{ax_1}=2$ and $\SUP_{ax_1}=\CARDI{ax_1\RF}$
  after the first multiplication.
  Taking all multiplications into account,
  we get
  $\IND_{E}=k+1$ and $\SUP_{E}=\CARDI{ax_1\cdots x_k\RF}$.
  According to Table~\ref{def-fperrorbound-inline}
  we obtain the \emph{dynamic error bound}
  \begin{eqnarray*}
    B_E(L,x) &=&
      (k+1) \cdot \CARDI{ax_1\cdots x_k\RF} \cdot 2^{-L}
  \end{eqnarray*}
  and the \emph{static error bound}
  \begin{eqnarray*}
    B_E(L) &=&
      (k+1) \cdot \CARDI{a\RF} \cdot 2^{k\E-L}
  \end{eqnarray*}
  where $2^\E$ is an upper bound on the absolute value of a perturbed input.
  \hfill$\bigcirc$
\end{example}
\begin{remark}\label{rem-errorbound-zero}
  We make the important observation that
  the bound $B_E(L)$ approaches zero when $L$ approaches infinity,
  that means,
  \begin{eqnarray*}
   \lim_{L\to\infty} B_E(L) &=& 0.
  \end{eqnarray*}
  Furthermore we observe
  that \emph{all error bounds which are derived from
  Table~\ref{def-fperrorbound-inline}
  have this property.}
  \hfill$\bigcirc$
\end{remark}
\begin{table}[t]
  \begin{eqnarray*}
    \begin{array}{|@{\qquad}c@{\qquad}|@{\qquad}c@{\qquad}|@{\qquad}c@{\qquad}|}\hline
    \tabrule E & \SUP_E & \IND_E
      \\\hline\hline
    \tabrule x & \CARDI{x\,\RF} & \text{0 or 1}
      \\\hline
    \tabrule E_1\pm E_2
      & (\SUP_{E_1}+\SUP_{E_2})\RF
      & 1+\max\left\{\IND_{E_1},\IND_{E_2}\right\}
      \\\hline
    \tabrule E_1 \cdot E_2
      & (\SUP_{E_1} \cdot \SUP_{E_2})\RF
      & 1 + \IND_{E_1} + \IND_{E_2}
      \\\hline
    \tabrule |E|
      & \SUP_{E}
      & \IND_E
      \\\hline\hline
    \multicolumn{3}{|c|}{\tabrule B_E := \IND_E \cdot \SUP_E \cdot 2^{-L}}
      \\\hline
    \end{array}
  \end{eqnarray*}
    \caption{This table reprints parts of Table~2.1
      in Funke~\cite[p.~11]{F97}.
      The row for $|E|$ is added by us.}
    \label{def-fperrorbound-inline}
\end{table}

\subsection*{Guarded Evaluation}

  In guarded algorithms $\AG$
  every predicate evaluation $f(x)\RF$
  must be protected by a guard $\GG_f(x)$ that verifies
  the sign of the result.
  Guards can be implemented
  using the dynamic
  (or the weaker static) error bounds.
  Let $B_f(L,x)$ be an upper bound on the rounding error of $f(x)\RF$
  for floating point arithmetic $\FL$,
  that means,
  \begin{eqnarray}\label{for-error-bound}
    B_f(L,x) &\ge& \CARDI{f(x)\RFL - f(x)}.
  \end{eqnarray}
  Then we can immediately derive the implication
  \begin{eqnarray}\label{for-guard-dynamic}
    \CARDI{f(x)\RF} > B_f(L,x)
    \FORMSEP &\Rightarrow& \FORMSEP
    \sign(f(x)\RFL) = \sign(f(x)).
  \end{eqnarray}
  We use the unequation on the left hand side to construct a
  \emph{guard $\GG_f$ for $f$} where
  \begin{eqnarray*}
    \GG_f(x)
      &:=&
        \big( \; \CARDI{f(x)\RF} > B_f(L,x) \, \big).
  \end{eqnarray*}
  If $\GG_f(x)$ is true, $f(x)$ has the correct sign.
  Note that this definition is in accordance with Definition~\ref{def-guard}
  on Page~\pageref{def-guard}.

\subsection{Analyzing Guards With Fp-safety Bounds}
\label{sec-fea-fpsafety}

  Now we explain how to analyze the behavior of guards
  according to~\cite{F97,BFS01,MOS11}.
  Remember that we perform the analysis in real space.
  The implication
  \begin{eqnarray}\label{for-fpsafety-static}
    \CARDI{f(x)} > 2B_f(L, x)
    \FORMSEP &\Rightarrow& \FORMSEP
    \CARDI{f(x)\RF} > B_f(L, x).
  \end{eqnarray}
  is true because of Formula~(\ref{for-error-bound}).
  The inequality on the left hand side
  is a relation that we can safely verify in real space.
  We can always use the static error bound
  to construct a \emph{fp-safety bound $\Sinf$ for $f$}
  \begin{eqnarray*}
    \Sinf(L) &:=& 2B_f(L)
  \end{eqnarray*}
  where $B_f(L)$ is the static error bound.
  Note that this definition is in accordance with
  Definition~\ref{def-fpsafetybound}
  on Page~\pageref{def-fpsafetybound}
  because
  the implications
  in Formulas~(\ref{for-guard-dynamic})
  and~(\ref{for-fpsafety-static})
  guarantee the desired implication in Formula~(\ref{for-fpsafety-final}).
  \emph{Because of Remark~\ref{rem-errorbound-zero},
  the fp-safety bound $\Sinf(L)$
  fulfills the safety-condition on page~\pageref{def-safety-cond}
  by construction.}
  Next we derive a \mbox{fp-safety} bound for univariate polynomials.
\begin{corollary}\label{col-unipolysafety}
  Let $f$ be a univariate polynomial
  \begin{eqnarray}\label{for-2-unipolyrep}
    f(x) &=& a_d \cdot x^d + a_{d-1} \cdot x^{d-1} +
               \ldots + a_1 \cdot x + a_0
  \end{eqnarray}
  of degree $d$.
  Then
  \begin{eqnarray}\label{for-sfl-univariate}
    \Sinf(L) &:=& (d+2) \cdot \max_{1\le i\le d} |a_i| \cdot 2^{\E(d+1)+1-L}
  \end{eqnarray}
  is a fp-safety bound for $f$ on $[-2^\E,2^\E]$ where $\E\in\N$.
\end{corollary}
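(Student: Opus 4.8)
The plan is to apply the general recipe of Section~\ref{sec-fea-fpsafety}: once we fix an evaluation expression $E$ for $f$ built only from $+$, $-$, $\cdot$ and $|\cdot|$, Table~\ref{def-fperrorbound-inline} furnishes a static error bound $B_f(L)=\IND_E\cdot\SUP_E\cdot 2^{-L}$, and then $\Sinf(L):=2B_f(L)$ is automatically a lower fp-safety bound for $f$ --- the implications in Formula~(\ref{for-guard-dynamic}) and Formula~(\ref{for-fpsafety-static}) compose to give exactly Formula~(\ref{for-fpsafety-final}), and $\Sinf$ is manifestly monotonically decreasing in $L$. Hence it suffices to exhibit an expression $E$ for $f$ with $\IND_E\le d+2$ and $\SUP_E\le\bigl(\max_{1\le i\le d}|a_i|\bigr)\cdot 2^{\E(d+1)}$ valid on $[-2^\E,2^\E]$: then $\Sinf(L)=2B_f(L)\le(d+2)\cdot\max_{1\le i\le d}|a_i|\cdot 2^{\E(d+1)+1-L}$, which is precisely Formula~(\ref{for-sfl-univariate}).

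First I would fix the evaluation order, and here the choice is load-bearing: the textbook Horner scheme is \emph{not} suitable, because its index $\IND$ grows to order $2d$. Instead I expand each power as a flat product $x^i=x\cdot x\cdots x$, form the monomials $a_i\cdot x^i$, and accumulate the sum \emph{starting from the constant term and proceeding in increasing order of degree}, that is
\begin{eqnarray*}
  E &:=& \bigl(\cdots\bigl((a_0 + a_1\cdot x) + a_2\cdot x\cdot x\bigr)\cdots\bigr) + a_d\cdot x\cdots x .
\end{eqnarray*}
Then I would tally $\IND_E$ and $\SUP_E$ with the rules of Table~\ref{def-fperrorbound-inline}. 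Since the argument of a fp-safety bound is itself a floating-point number --- Definition~\ref{def-fpsafetybound} quantifies over $x\in\U_\delta(A)\RFL$ --- we have $\IND_x=0$, and together with $\IND_{a_i}\le 1$ this gives $\IND_{a_i\cdot x^i}=i+1$. A short induction over the left-to-right summation then shows that once the degree-$i$ term ($i\ge1$) has been added the running index equals $i+2$, whence $\IND_E=d+2$ (for $d\ge1$; the case $d=0$ is trivial). For the magnitude, $\SUP_x\le 2^\E$ on $[-2^\E,2^\E]$ yields $\SUP_{a_i\cdot x^i}\le|a_i|\cdot 2^{i\E}$, hence $\SUP_E\le\sum_{i=0}^{d}|a_i|\,2^{i\E}\le\bigl(\max_{1\le i\le d}|a_i|\bigr)\sum_{i=0}^{d}2^{i\E}\le\bigl(\max_{1\le i\le d}|a_i|\bigr)\cdot 2^{(d+1)\E}$, the last step being the crude geometric-series estimate (valid for $\E\ge1$). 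Substituting into $B_f(L)=\IND_E\cdot\SUP_E\cdot 2^{-L}$ and doubling gives the asserted fp-safety bound; by Remark~\ref{rem-errorbound-zero} it moreover tends to $0$ as $L\to\infty$, consistent with the safety-condition.

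I do not expect a deep obstacle: the statement is essentially bookkeeping over Table~\ref{def-fperrorbound-inline}. The only genuinely delicate point is the evaluation order --- summing the monomials by increasing degree keeps $\IND$ at $d+2$, whereas Horner's scheme (or summing by decreasing degree) would inflate it to $2d+1$ --- together with the observation that the inputs are exact floating-point numbers, so that $\IND_x=0$. Everything else (the monomial-wise $\SUP$ bounds, the geometric-series estimate, and the final substitution) is routine.
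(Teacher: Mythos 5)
Your proposal is correct and follows essentially the same route as the paper's proof, which also evaluates the polynomial ``from the right to the left'' (i.e., accumulating the monomials in increasing order of degree) and reads off $\IND_f=d+2$ and $\SUP_f\le\max_{1\le i\le d}|a_i|\cdot 2^{\E(d+1)}$ from Table~\ref{def-fperrorbound-inline} before setting $\Sinf(L):=2B_f(L)$. Your write-up in fact supplies more detail than the paper does (the explicit induction on the running index and the observation, confirmed by the remark following the corollary, that Horner's scheme would not do better).
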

\begin{proof}
  We apply the error analysis of this section.
  We evaluate Formula~(\ref{for-2-unipolyrep}) from the right to the left.
  For a static error bound we get
  \begin{eqnarray*}
    B_f(L)
      &:=& \IND_f \cdot \SUP_f \cdot 2^{-L}
      \;\;=\;\;
      (d+2) \cdot \left(\max_{1\le i\le d} |a_i| \cdot 2^{\E(d+1)}\right) \cdot 2^{-L}.
  \end{eqnarray*}
  Finally we set the fp-safety bound to $\Sinf(L) := 2B_f(L)$.
  \qed
\end{proof}
  Multiplications usually cause larger rounding errors
  than additions.
  Surprisingly,
  the evaluation of univariate polynomials
  with the Horner scheme\footnote{%
    For Horner scheme see Hotz~\cite{H90}.}
  (which minimize the number of multiplications)
  does not lead to
  a smaller error bound than the one we have derived in the proof.
  Next we derive an error bound for $k$-variate polynomials.
  We define $x^\iota:=x_1^{\iota_1} \cdot \ldots \cdot x_k^{\iota_k}$
  for $\iota\in\N_0^k$ and $x\in\R^k$.
\begin{corollary}\label{cor-multipolysafety}
  Let $f$ be the $k$-variate polynomial ($k\ge 2$)
  \begin{eqnarray*}
    f(x) &:=& \sum_{\iota\in\Ind} a_\iota x^\iota
  \end{eqnarray*}
  where $\Ind\subset\N_0^k$ is finite
  and $a_\iota\in\R_{\neq 0}$ for all $\iota\in\Ind$.
  Let $d$ be the total degree of $f$ and
  let $\NT$ be the number of terms in $f$.
  Then
  \begin{eqnarray}\label{for-sfl-multivariate}
    \Sinf(L) &:=& (d+1+\lceil\log \NT\rceil) \cdot
              \NT \cdot \max_{\iota\in\Ind} |a_\iota| \cdot
	      2^{\E d+1-L}
  \end{eqnarray}
  is a fp-safety bound for $f$ on $[-2^\E,2^\E]^k$ where $\E\in\N$.
\end{corollary}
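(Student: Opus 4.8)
The plan is to mimic the proof of Corollary~\ref{col-unipolysafety}: carry out the rounding error analysis of Section~\ref{sec-fea-fpsafety} for a carefully chosen evaluation order of $f$, read off a static error bound $B_f(L)$ from the last row of Table~\ref{def-fperrorbound-inline}, and then set $\Sinf(L):=2B_f(L)$. By the construction in Section~\ref{sec-fea-fpsafety} this is automatically a fp-safety bound for $f$ on $[-2^\E,2^\E]^k$ (and it fulfils the safety-condition by Remark~\ref{rem-errorbound-zero}), so the only real work is to check that $2B_f(L)$ simplifies to the formula claimed in~(\ref{for-sfl-multivariate}).

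First I would fix the evaluation order. Write $f(x)=\sum_{\iota\in\Ind} a_\iota x^\iota$ and evaluate each monomial $m_\iota:=a_\iota x^\iota$ as a left-associated product of $a_\iota$ followed by the variables with their multiplicities, i.e.\ a product of $1+d_\iota$ factors with $d_\iota:=\sum_{j}\iota_j\le d$. Since the perturbed input lies in $\U_\delta(A)\RG\subseteq\F^k\cap[-2^\E,2^\E]^k$, we have $\IND_{x_j}=0$ and $\SUP_{x_j}=\CARDI{x_j\RF}\le 2^\E$, while $\IND_{a_\iota}\le 1$; applying the multiplication rule $d_\iota$ times then gives $\IND_{m_\iota}=d_\iota+1\le d+1$ and $\SUP_{m_\iota}\le\max_{\iota\in\Ind}|a_\iota|\cdot 2^{\E d_\iota}\le\max_{\iota\in\Ind}|a_\iota|\cdot 2^{\E d}$. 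The decisive point — and the one place where the evaluation order matters — is the summation of the $\NT$ monomials: I would add them in a balanced binary tree rather than linearly. Each level of the tree raises the index by one via $\IND_{E_1\pm E_2}=1+\max\{\IND_{E_1},\IND_{E_2}\}$, and a balanced tree over $\NT$ leaves has height $\lceil\log \NT\rceil$, so $\IND_f\le d+1+\lceil\log \NT\rceil$; a naive left-to-right sum would instead give $\IND_f=\Theta(d+\NT)$, which is why the tree is needed. For the magnitudes the $\SUP$-rule for $\pm$ adds up, hence $\SUP_f\le \NT\cdot\max_{\iota\in\Ind}|a_\iota|\cdot 2^{\E d}$.

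Feeding $\IND_f$ and $\SUP_f$ into the last row of Table~\ref{def-fperrorbound-inline} yields the static error bound
\begin{eqnarray*}
  B_f(L)
    &:=& \IND_f\cdot\SUP_f\cdot 2^{-L}
    \;\;\le\;\;
    \left(d+1+\lceil\log \NT\rceil\right)\cdot\NT\cdot\max_{\iota\in\Ind}|a_\iota|\cdot 2^{\E d-L},
\end{eqnarray*}
and setting $\Sinf(L):=2B_f(L)$ reproduces exactly Formula~(\ref{for-sfl-multivariate}). Together with the implications in Formulas~(\ref{for-guard-dynamic}) and~(\ref{for-fpsafety-static}) this gives the defining implication~(\ref{for-fpsafety-final}), so $\Sinf$ is a fp-safety bound for $f$ on $[-2^\E,2^\E]^k$, which finishes the proof.

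I expect the obstacles to be bookkeeping rather than conceptual: (i) verifying that the balanced-tree summation truly caps the index at $d+1+\lceil\log \NT\rceil$, where one has to be a little careful when $\NT$ is not a power of two, but the rounded-up logarithm absorbs the ragged bottom level; and (ii) handling the rounding-ups hidden in the $\SUP$-rules for $\cdot$ and $\pm$ and in $\CARDI{a_\iota\RF}$, which only inflate the estimates by benign factors that are comfortably swallowed by the factor $2$ in $\Sinf(L):=2B_f(L)$, just as in the univariate corollary.
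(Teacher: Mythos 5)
Your proposal is correct and follows essentially the same route as the paper's own proof: bound each monomial by $\IND_{a_\iota x^\iota}=d+1$ and $\SUP_{a_\iota x^\iota}\le\max_{\iota\in\Ind}|a_\iota|\cdot 2^{\E d}$ (using that the perturbed inputs lie in $\G\subset\F$ while only the coefficient may be rounded), sum the $\NT$ terms pairwise in a balanced tree of depth $\lceil\log \NT\rceil$, and set $\Sinf(L):=2B_f(L)$ with $B_f(L)=\IND_f\cdot\SUP_f\cdot 2^{-L}$. The paper does exactly this, including the pairwise-tree summation to keep $\IND_f=d+1+\lceil\log \NT\rceil$, so no further comment is needed.
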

\begin{proof}
  We begin with the determination of the error bound $B_f$.
  The maximum absolute value of the term $a_\iota x^\iota$
  is obviously upper-bounded by
  the product of a bound on $a_\iota$ and a bound on $x^\iota$.
  Because $|x_i|\le 2^\E$ for all $1\le i \le k$
  we have
  \begin{eqnarray*}
    \SUP_{a_\iota x^\iota}
      &\le& \max_{\iota\in\Ind} |a_\iota| \cdot 2^{\E d}.
  \end{eqnarray*}
  Since we know the number $\NT$ of terms in $f$\!,
  we can then upper-bound $\SUP_f$ by
  \begin{eqnarray*}
    \SUP_f
      &\le& \NT \cdot \max_{\iota\in\Ind} |a_\iota| \cdot 2^{\E d}.
  \end{eqnarray*}
  In addition, we have $\IND_{a_\iota x^\iota}=d+1$
  since we evaluate $d$ multiplications and
  only $a_\iota$ may not be in the set $\F$.
  (Remember that, because of the perturbation,
  the values $x_i$ belong to the grid $\G$
  which is a subset of $\F$.)

  To keep $\IND_f$ as small as possible,
  we sum up the $\NT$ terms pairwise such that the tree of evaluation
  has depth $\lceil\log \NT\rceil$.
  This leads to $\IND_f=d+1+\lceil\log \NT\rceil$.
  Therefore we conclude that
  \begin{eqnarray*}
    B_f(L) &=& \left(d+1+\lceil\log \NT\rceil\right) \cdot
              \left(\NT \cdot \max_{\iota\in\Ind} |a_\iota| \cdot 2^{\E d}\right)
	      \cdot 2^{-L}.
  \end{eqnarray*}
  As usual we set $\Sinf(L):=2 B_f(L)$.
  \qed
\end{proof}

\section{The Treatment of Range Errors (All Components)}
\label{sec-rational-function}

  In this section we address a floating-point issue that
  is caused by poles of rational functions.
  So far
  the implementation and analysis of functions
  is based on the fact that
  signs of floating-point evaluations
  are only non-reliable on certain environments of zero.
  Now we argue that signs of evaluations may
  also be non-reliable on environments of poles.
  We do this for the purpose to embed rational functions
  into our theory.
  In Section~\ref{sec-range-error-implement}
  we extend the previous implementation considerations such that
  they can deal with range errors.
  In Section~\ref{sec-range-error-analysis}
  we expand the analysis
  to range errors of the floating-point arithmetic $\F$.
  \emph{This is the first presentation that gains generality by
  the practical and theoretical treatment of range errors
  which, for example, are caused by poles of rational functions.}

\subsection{Extending the Implementation}
\label{sec-range-error-implement}

  We examine the simple rational function $f(x)=\frac{1}{x}$.
  It is well-known that the function value of $f$
  at the pole $x=0$ does not exist in $\R$
  (unless we introduce the unsigned symbolic value $\pm\infty$,
  see Forster~\cite{F06}).
  We make the important observation that
  we cannot determine the function value of $f$
  {in a neighborhood of a pole}
  with floating-point arithmetic $\FLK$
  because the absolute value of $f$ may be \emph{too large.}
  Moreover,
  we observe that the \emph{sign of $f$ may change}
  on a neighborhood of a pole.
  Both observations suggest that
  \emph{poles play a similar role like zeros
  in the context of controlled perturbation.}
  Now we extend the implementation
  such that it gets able to deal with range errors.

  We extend the implementation of guarded evaluations
  in the following way:
  If the absolute value of $f$ cannot be represented
  with the floating-point arithmetic $\FLK$ because it is too large,
  we abort $\AG$ with the notification of a \emph{range error}.
  We do not care about the source of the range error:
  It may be ``division by zero'' or ``overflow.''
  The implementation of the second guard per evaluation is straight forward.
  Some programming languages provide an exception handling that
  can be used for this objective.

  In addition we must change the implementation
  of the controlled perturbation algorithm $\ACP$.
  If $\AG$ fails because of a \emph{range error,}
  we increase the bit length $K$ of the exponent
  (instead of the precision $L$).
  Be aware that we talk about the exponent,
  that means,
  an additive augmentation of the bit length
  implies a multiplicative augmentation of the range.
  These simple changes guarantee that the floating-point arithmetic $\FLK$
  gets adjusted to the necessary dimensions in neighborhoods of poles
  or in regions where the function value is extremely large.

\subsection{Extending the Analysis of Functions}
\label{sec-range-error-analysis}

  For the purpose of dealing with range errors in the analysis,
  we need to adapt several parts of the analysis tool box.
  Below we present the necessary changes and extensions in the same order
  in which we have developed the theory.

\subsection*{Criticality and the region-suitability}

  The changes to deal with range errors
  affect the interface
  between the two stages of the analysis of functions.
  At first we extent the definition of criticality.
  We demand that certain points
  (e.g.~poles of rational functions)
  are {critical}, too, and
  refine Definition~\ref{def-critical-set} in the following way.
\begin{definition}[critical]\label{def-critical-set-second}
  Let $\PRED$ be a predicate description.
  We call a point $c\in\U_\delta(\x)$
  \emph{critical} if
  \begin{eqnarray*}
    \inf_{x\in U_\varepsilon(c)\setminus\{c\}} \; \left|f(x)\right|
    = 0
    \FORMSEP & \text{\rm or} & \FORMSEP
    \sup_{x\in U_\varepsilon(c)\setminus\{c\}} \; \left|f(x)\right|
    = \infty
  \end{eqnarray*}
  on a neighborhood $U_\varepsilon(c)$
  for infinitesimal
  small $\varepsilon>0$.
  Furthermore,
  we call $c$ \emph{less-critical}
  if $c$ is not critical,
  but $f(c)=0$ or $c$ is a pole.
  Points that are neither critical nor less-critical
  are called \emph{non-critical}.
\end{definition}
  For simplicity and as before,
  we define the
  \emph{critical set $C_{f,\delta}\label{def-crit-set-final-inline}$}
  to be the union of critical and less-critical points within $\U_\delta(\x)$.
  Be aware that the new definition of criticality
  may expand the region of uncertainty.
  As a consequence it affects the \emph{region-suitability}
  and the bound $\nu_f$, respectively $\chi_f$.
  Note that
  Definition~\ref{def-critical-set-second}
  guarantees that we exclude neighborhoods of poles from now on.
  Because we have integrated poles into the definition of criticality,
  we have implicitly adapted the region-suitability.

\subsection*{The sup-value-suitability}

  So far we have only considered $\inf|f|$
  outside of the region of uncertainty.
  But to get a quantified description of range issues in the analysis,
  we need to consider $\sup|f|$ as well.
  What we have called value-suitability so far
  is now called, more precisely, \emph{inf-value-suitability}.
  Its bounding function, that we have called $\varphi_f(\gamma)$ so far,
  is now called
  $\varphi_{\inf f}(\gamma)\label{def-phi-inf-final-inline}$.

  In addition to Definition~\ref{def-value-suit}
  we introduce \emph{sup-value-suitability},
  that means,
  there is an upper-bounding function
  $\varphisup(\gamma)\label{def-phi-sup-final-inline}$
  on the absolute value of $f$
  outside of the region of uncertainty $R_f$.
  We show how the new bound is determined with the bottom-up approach
  later on.
  Based on the new terminology,
  we call $f$ \emph{(totally) value-suitable}
  if $f$ is both: inf-value-suitable and sup-value-suitable.

\subsection*{The sup-safety-suitability and analyzability}

  We also extend Definition~\ref{def-safety-suit}.
  What we have called safety-suitability so far
  is now called, more precisely, \emph{inf-safety-suitability}.
  Its bounding function
  $\Sinf(L)\label{for-Sinf-final-inline}$
  is now called the \emph{lower fp-safety bound}.

  In addition
  we introduce \emph{sup-safety-suitability},
  that means,
  there is an invertible upper-bounding function $\Ssup(K)$
  on the absolute value of $f$
  with the following meaning:
  If we know that
  \begin{eqnarray*}
    |f(x)| &\le& \Ssup(K),
  \end{eqnarray*}
  then $f(x)\RF$ is definitely a finite number in $\FLK$.
  We call $\Ssup(K)$ the \emph{upper fp-safety bound}.
  Such a bound is trivially given by\footnote{%
    Firstly,
    the largest floating-point number that is representable with $\FLK$ is
    $(2-2^{-L}) 2^{2^{K-1}}$.
    Secondly,
    we must take the maximal floating-point rounding error into account.}
  \begin{eqnarray*}
    \Ssup(K)\label{for-Ssup-final-inline}
      &:=&
        2^{2^{K-1}} - \, \Sinf(L).
  \end{eqnarray*}
  Based on the new terminology,
  we call $f$ \emph{(totally) safety-suitable}
  if $f$ is both: inf-safety-suitable and sup-safety-suitable.
  As a consequence,
  we call $f$ \emph{analyzable}
  if $f$ is region-suitable, value-suitable (both subtypes)
  and safety-suitable (both subtypes).

\subsection*{The method of quantified relations}

  Next we extent the method of quantified relations
  such that the new bounds on the range of floating-point arithmetic
  are included into the analysis.
  In addition
  to the precision function $L_f(p)$,
  we determine the bounding function
  \begin{eqnarray*}\label{for-Kf-inline}
    K_f(p)
    &:=&
      \left\lceil
        \Ssup^{-1} \left(
	  \varphisup \left(
	    t\cdot\nu_f^{-1}\left({\varepsilon_{\nu}\left(p\right)}\right)
	\right)\right)
      \right\rceil.
  \end{eqnarray*}
  That means,
  we deduce the maximum absolute value of $f$
  outside of the region of uncertainty from the probability;
  afterwards
  we use the upper fp-safety bound
  to deduce the necessary bit length of the exponent.
  The derivation of $K_f(p)$ is absolutely analog to the derivation of $\LS(p)$
  in Steps~1--5 of the method of quantified relations.

  We summarize our results so far:
  If we have the bounding functions of the interface of the function analysis,
  we know that the floating-point arithmetic $\F_{L_f(p),K_f(p)}$
  is sufficient to safely evaluate $f$
  at a random grid point in the perturbation area
  with probability $p$.

  Furthermore,
  we can derive a probability function $p_f$
  if $f$ is analyzable and $\varphi_{\inf f}$ and $\varphisup$
  are both invertible.
  Analog to the definition of $\pinf(L)$
  in Remark~\ref{rem-meth-quan-rela}.4,
  we derive the additional bound on the probability
  \begin{eqnarray*}\label{for-supsafe-inline}
    \psup(K)
    &:=&
      \varepsilon_\nu^{-1}
        \left( \nu_f
          \left( \frac{1}{t} \cdot \varphisup^{-1}
            \left( \Ssup(K)
      \right) \right) \right)
  \end{eqnarray*}
  from $K_f(p)$.
  This leads to the final \emph{probability function\label{def-2-prob-func-inline}}
  $p_f:\N\times\N\to(0,1)$ where
  \begin{eqnarray*}\label{for-pfLK-inline}
    p_f(L,K) &:=&
      \min \left\{
        \pinf(L),
        \, \psup(K),
	\, \pgrid(L)
      \right\}
  \end{eqnarray*}
  for parameter $t\in(0,1)$.

\subsection*{The bottom-up approach}

  Now we extend the calculation rules of the bottom-up approach
  to also derive the bounding function $\varphisup(\gamma)$
  from simpler sup-value-suitable functions.
  At first we replace the lower-bounding rule in Theorem~\ref{theo-lower-bound}
  by the following sandwich-rule.
\begin{theorem}[sandwich]\label{theo-sandwich}
  Let $\PREDL$ be a predicate description.
  If there is a region-value-suitable function
  $g:\U_\delta(A)\to\R$ and $c_1,c_2\in\R_{>0}$ where
  \begin{eqnarray*}
    c_1 \, |g(x)|
      & \le &
    |f(x)|
      \;\; \le \;\;
    c_2 \, |g(x)|,
  \end{eqnarray*}
  then $f$ is also region-value-suitable
  with the following bounding functions:
  \begin{eqnarray*}
    \nu_f(\gamma) &:=& \nu_g(\gamma) \\
    \varphi_{\inf f}(\gamma) &:=& c_1 \varphi_{\inf g}(\gamma) \\
    \varphi_{\sup f}(\gamma) &:=& c_2 \varphi_{\sup g}(\gamma).
  \end{eqnarray*}
  If $f$ is in addition safety-suitable, $f$ is analyzable.
\end{theorem}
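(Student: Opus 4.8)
The statement is a two-sided refinement of the lower-bounding rule (Theorem~\ref{theo-lower-bound}), so the plan is to reuse that proof almost verbatim for everything concerning the critical set and the region- and inf-value-suitability, and then add one short paragraph for the new upper bound. Concretely, I would structure the proof in three parts mirroring the three bounding functions claimed.

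First, \textbf{region-suitability.} I would argue exactly as in the proof of Theorem~\ref{theo-lower-bound}: if $(a_i)_{i\in\N}$ is a sequence in $U_\delta(\x)$ with $\lim_{i\to\infty} f(a_i)=0$, then the left inequality $c_1|g(x)|\le|f(x)|$ forces $\lim_{i\to\infty} g(a_i)=0$, so critical points of $f$ are critical points of $g$. Hence we may set $C_f(\x):=C_g(\x)$, which makes $R_f(\x)=R_g(\x)$ and justifies $\nu_f(\gamma):=\nu_g(\gamma)$ (and, since $\nu_g$ is invertible because $g$ is region-suitable, $\nu_f$ is invertible too, so $f$ is region-suitable). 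I should note that one has to check that no \emph{new} critical points are introduced by the upper bound; but since $|f(x)|\le c_2|g(x)|$ only makes $|f|$ smaller where $|g|$ is finite, and since $g$ being region-value-suitable already rules out poles in $\U_\delta(A)$, the upper inequality contributes nothing to $C_f$. So $C_f(\x)=C_g(\x)$ stands.

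Second, \textbf{value-suitability (both subtypes).} On $\bar U_\delta(\x)\setminus R_f(\x)=\bar U_\delta(\x)\setminus R_g(\x)$ we have $\varphi_{\inf g}(\gamma)\le|g(x)|$ by value-suitability of $g$, hence $|f(x)|\ge c_1|g(x)|\ge c_1\varphi_{\inf g}(\gamma)=\varphi_{\inf f}(\gamma)$; this gives inf-value-suitability. For the new direction, since $g$ is (totally) value-suitable it is in particular sup-value-suitable, so there is $\varphi_{\sup g}(\gamma)\ge|g(x)|$ on the same set, and then $|f(x)|\le c_2|g(x)|\le c_2\varphi_{\sup g}(\gamma)=\varphi_{\sup f}(\gamma)$, which is exactly sup-value-suitability with the claimed bound. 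Thus $f$ is totally value-suitable.

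Third, \textbf{analyzability.} If $f$ is in addition safety-suitable (both subtypes), then it is region-suitable, value-suitable and safety-suitable, hence analyzable by the extended definition; this part is trivial. The only place where I expect any friction — the ``main obstacle'', though it is a mild one — is making precise the claim in the first part that the hypothesis ``$g$ is region-value-suitable'' already excludes poles of $g$ (so that $c_2|g(x)|$ is a genuine real upper bound and no pole of $f$ escapes into a region where $g$ is regular); this is where the phrasing of the new Definition~\ref{def-critical-set-second} and the fact that $g$'s region-suitability forbids an open critical set must be invoked carefully. Everything else is a direct transcription of the proof of Theorem~\ref{theo-lower-bound}.
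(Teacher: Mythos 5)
Your proof is correct and follows essentially the same route as the paper, which simply reuses the proof of Theorem~\ref{theo-lower-bound} for region- and inf-value-suitability and notes that sup-value-suitability follows by the analogous one-line estimate $|f(x)|\le c_2|g(x)|\le c_2\varphi_{\sup g}(\gamma)$ outside $R_f=R_g$. Your extra care about pole-type critical points under the extended Definition~\ref{def-critical-set-second} is sound (the upper inequality transfers any unboundedness of $|f|$ to $|g|$, so $C_f\subseteq C_g$ still holds) and goes slightly beyond what the paper spells out.
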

\begin{proof}
  The region-suitability and inf-value-suitability
  follows from the proof of Theorem~\ref{theo-lower-bound}.
  The sup-value-suitability is proven similar to Part~2
  of the mentioned proof.
  \qed
\end{proof}
  Next we extent the product rule in Theorem~\ref{theo-product}.
  We just add the assignment
  \begin{eqnarray*}
    \varphi_{\sup f}(\gamma) &:=&
      \varphi_{\sup g}(\gamma_1,\ldots,\gamma_\ell) \cdot
      \varphi_{\sup h}(\gamma_{j+1},\ldots,\gamma_k).
  \end{eqnarray*}
  after Formula~(\ref{for-phi-prod-rule}).
  Its proof follows Part~1 of the proof of Theorem~\ref{theo-product}.

  At last we extent the min-rule and the max-rule
  in Theorem~\ref{theo-ruleminmax}.
  We add the two assignments
  \begin{eqnarray*}
    \varphi_{\sup \fmin}(\gamma) &:=&
      \min \{ \varphi_{\sup g}(\gamma_1,\ldots,\gamma_\ell),
        \varphi_{\sup h}(\gamma_{j+1},\ldots,\gamma_k)\} \\
    \varphi_{\sup \fmax}(\gamma) &:=&
      \max \{ \varphi_{\sup g}(\gamma_1,\ldots,\gamma_\ell),
        \varphi_{\sup h}(\gamma_{j+1},\ldots,\gamma_k)\}.
  \end{eqnarray*}
  after Formula~(\ref{for-phi-max-rule}).
  Again, its proof follows Part~1 of the proof of Theorem~\ref{theo-product}.

\subsection*{The top-down approach}

  Similar to the functions $\varphi_{\inf g_i}$,
  which are simply called $\varphi_{g_i}$
  in the overview in Figure~\ref{fig-analysis-3},
  we determine
  the functions $\varphi_{\sup g_i}$ in the second phase
  of the pseudo-top-down approach
  in a bottom-up fashion.

  This completes the integration of the range considerations
  into the analysis tool box.
  Be aware that all changes presented in this section
  do not restrict the applicability of the analysis tool box
  in any way.
  On the contrary,
  {they are necessary for the correctness and generality
  of the tool box.}

\section{The Analysis of Rational Functions}
\label{sec-ana-rational-func}

  We have just solved the arithmetical issues
  that occur in the implementation and
  analysis of rational functions.
  Besides we must solve technical issues in the implementation of guards
  and, moreover, provide a general technique to derive a quantitative
  analysis for rational functions.
  \emph{This is the first presentation that contains
  the implementation and analysis of rational functions.}

  Let $f:=\frac{g}{h}$
  be a rational function,
  that means, let $g$ and $h$ be multivariate polynomials.
  Let $k$ be the number of arguments of $f$,
  i.e., we consider $f(x)$ where $x=(x_1,\ldots,x_k)$.
  The arguments of $g$ and $h$ may be any subsequence of $x$,
  but each $x_i$ is at least an argument of $g$ or an argument of $h$.
  We know that $g$ and $h$ are analyzable
  (see Section~\ref{sec-multivariate-poly}).

  At first we discuss the implementation of guards for rational functions.
  We make the important observation that---independent
  of the evaluation sequences of $g$ and $h$---the
  \emph{division} of the value of $g$
  by the value of $h$
  is the \emph{very last operation} in the evaluation of $f$.
  Because of the standardization of floating-point arithmetic
  (e.g., see \cite{IEEE08}),
  the sign of $f$ is computed correctly
  if the signs of $g$ and $h$ are computed correctly.
  Therefore it is sufficient for an implementation
  of a predicate that branches on the sign of a rational function $f$
  to use the guard
  $\GG_f := \left(\GG_g \wedge \GG_h\right)$.

  But how do we analyze this predicate,
  that means, how can we relate the known quantities?
  Let $x$ be given.
  In the case that
  the (dependent) arguments of $g$ and $h$
  lie outside of their region of uncertainty,
  we can deduce the relation
  \begin{eqnarray}\label{for-rat-func-limits}
    \frac{S_{\inf g}}{S_{\sup h}}
    \;\; \le \;\;
    f(x)
    \;\; \le \;\;
    \frac{S_{\sup g}}{S_{\inf h}}.
  \end{eqnarray}
  Unfortunately this is not what we need.
  This way,
  we can only deduce the value of $f$ from the values of $g$ and $h$,
  but not vice versa:
  If $f(x)$ fulfills Formula~(\ref{for-rat-func-limits}),
  we cannot deduce that the guards $\GG_g$ and $\GG_h$ are true.
  For example, assume that $f(x)=1$;
  then we know that the values of $g$ and $h$ are equal,
  but we do not know if their values are fp-safe or close to zero.

  Therefore we choose a different way
  to analyze the behavior of guard $\GG_f$.
  Since $g$ and $h$ are multivariate polynomials,
  we can analyze the behavior of $\GG_g$ and $\GG_h$
  and derive the precision functions $L_g(p)$ and $L_h(p)$
  as we have seen in earlier sections.
  If we demand that $g$ and $h$ evaluate successfully with probability
  $\frac{1+p}{2}$ each,
  $f$ evaluates successfully with probability $p$
  since the sum of the failure probability of $g$ and $h$
  is at most $(1-p)$.
  This leads to the precision function
  \begin{eqnarray*}
    L_f(p)
      &:=&
        \max \left\{
	  L_g\left( \frac{1+p}{2}\right),
	  L_h\left( \frac{1+p}{2}\right)
	\right\}
  \end{eqnarray*}
  which reflects the behavior of $\GG_f$
  and therefore analyzes the behavior of an implementation of 
  the rational function evaluation of $f$.

\section{General Analysis of Algorithms (Composition)}
\label{sec-ana-algo}

  So far we have only presented components of the tool box
  which are used
  to analyze functions.
  Now we introduce the components
  which are used
  to analyze controlled-perturbation algorithms $\ACP$.
  Figure~\ref{fig-illu-ana-algo} illustrates the analysis of algorithms.
  Similar to the analysis of functions,
  the algorithm analysis has two stages.
  The \emph{interface} between the stages is introduced
  in Section~\ref{sec-nec-con-algo}.
  It consists of necessary algorithm properties
  (to the left of the dashed line)
  and the analyzability of the used predicates
  (to the right of the dashed line).
  There we also show
  how to determine the bounds associated with the algorithm properties.
  In Section~\ref{sec-algo-prop}
  we give an overview of algorithm properties.
  The \emph{method of distributed probability}
  represents the actual analysis of algorithms
  and is presented in Section~\ref{sec-meth-distri-prob}.
  \begin{figure}[h]\centering
    \includegraphics[width=.95\columnwidth]{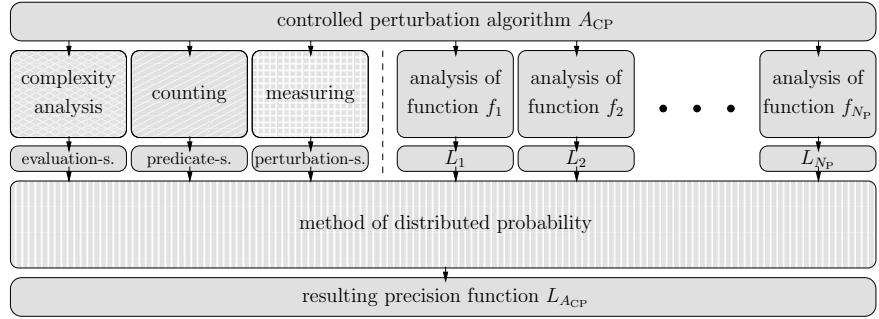}
    \caption{Illustration of the analysis
      of controlled-perturbation algorithms.}
    \label{fig-illu-ana-algo}
  \end{figure}

\subsection{Necessary Conditions for the Analysis of Algorithms}
\label{sec-nec-con-algo}

  Next we introduce several properties of controlled-perturbation algorithms.
  Sometimes we use the same names for algorithm and function properties
  to emphasize the analog.
  We describe to which algorithms we can \emph{apply} controlled perturbation,
  for which we can \emph{verify} that they terminate,
  and which we can \emph{analyze} in a quantitative way
  because they are \emph{suitable} for the analysis.
  In particular,
  three properties are necessary for the analyzability of algorithms:
  evaluation-, predicate- and perturbation-suitability.
  However,
  the three conditions are not sufficient for the analysis of algorithms
  since there are also prerequisites on the used predicates.
  In this section
  we define the various properties of controlled-perturbation algorithms,
  explain how we obtain the bounding functions
  that are associated with the necessary conditions,
  and show how the algorithm properties are related with each other.
\begin{definition}%
  \label{def-algo-prop}
  Let $\ACP$ be a controlled perturbation algorithm.
  \begin{itemize}
  \item\label{def-lacp}
    (applicable).
    We call $\ACP$ \emph{applicable}
    if there is a precision function
    $L_{\ACP}:(0,1)\times\N\to\N$
    and $\eta\in\N$
    with the property:
    At least one from $\eta$ runs
    of the embedded guarded algorithm $\AG$
    is expected to terminate successfully
    for a randomly perturbed input of size $n\in \N$
    with probability at least $p\in(0,1)$
    for every precision $L\in\N$ with $L\ge L_{\ACP}(p,n)$.
  \item
    (verifiable).
    We call $\ACP$ \emph{verifiable}
    if the following conditions are fulfilled:\\
      1. All used predicates are verifiable. \\
      2. The perturbation area ${\cal U}_{\ACP,\delta}(\y)$
        contains an open neighborhood of $\y$.\\
      3. The total number of predicate evaluations is bounded. \\
      4. The number of predicate types is bounded.
  \item\label{def-eval-suit}
    (evaluation-suitable).
    We call $\ACP$ \emph{evaluation-suitable}
    if the total number of predicate evaluations
    is upper-bounded by a function $\NE:\N\to\N$
    in dependence on the input size $n$.
  \item\label{def-pred-suit}
    (predicate-suitable).
    We call $\ACP$ \emph{predicate-suitable}
    if the number of different predicates
    is upper-bounded by a function $\NP:\N\to\N$
    in dependence on the input size $n$.
  \item\label{def-algo-pert-suit}
    (perturbation-suitable).
    Let ${\cal U}_{\ACP,\delta}(\y)$
    be the perturbation area of $\ACP$
    around $\y$;
    we assume that ${\cal U}_{\ACP,\delta}(\y)$
    is scalable with parameter $\delta$
    and that it has a fixed shape,
    e.g., cube, box, sphere, ellipsoid, etc.
    We call $\ACP$ \emph{perturbation-suitable}
    if there is a bounding function $V\!:\R_{>0}^k\to\R_{>0}$
    with the property
    that there is an open axis-parallel box $U_{\ACP,\delta}(\y)$
    with volume at least $V(\delta)$
    and $U_{\ACP,\delta}(\y) \subset {\cal U}_{\ACP,\delta}(\y)$.
  \item
    (analyzable).
    We call $\ACP$ \emph{analyzable}
    if the following conditions are fulfilled:\\
    1. All used predicates are analyzable. \\
    2. $\ACP$ is evaluation-suitable,
       predicate-suitable and perturbation-suitable.
  \end{itemize}
\end{definition}
\begin{remark}\label{rem-algo-prop}
  We add some remarks on the definitions above.

  1. The applicability of an algorithm has a strong meaning:
    For every arbitrarily large success probability $p\in(0,1)$ and
    for every arbitrarily large input size $n\in\N$
    there is still a \emph{finite} precision
    that fulfills the requirements.
    As a matter of fact,
    a controlled perturbation algorithm reaches this precision
    after \emph{finite} many steps.
    Because in addition the success probability
    is monotonically growing during the execution of $\ACP$,
    we conclude:
    If the algorithm $\ACP$ is applicable,
    its execution is guaranteed to terminate.

  2. In the definition of applicability,
    we define the precision function $L_{\ACP}(p,n)$
    as a function in the desired success probability $p$
    and the input size $n$.
    Naturally,
    the bound also depends on other quantities like
    the perturbation parameter $\delta$,
    an upper bound on the absolute input values
    or the maximum rounding-error.
    However,
    the latter quantities have some influence in the determination
    of the bounding functions in the analysis of functions.
    Here they occur as parameters in formula $L_{\ACP}$
    and are not mentioned as arguments.

  3. We remark on the perturbation-suitability that
    we allow any shape of the perturbation area ${\cal U}_{\ACP,\delta}$
    in practice
    which fulfills the condition in the definition.
    As opposed to that
    we have assumed that the perturbation area $U_{f,\delta}$
    in the analysis of functions is an axis-parallel box.
    This looks contradictorily and needs further explanation.

    As a matter of fact,
    there is just one perturbation $y\in {\cal U}_{\ACP,\delta}(\y)\RGLK$
    of the input
    before we try to evaluate the whole sequence of predicates.
    We assume that the random perturbation
    is chosen from a discrete uniform distribution
    in a subset of the $n$-dimensional space.
    Opposed to that,
    function $f_i$ has just $k_i \ll n$ arguments $x=(x_1,\ldots,x_{k_i})$.
    Mathematically speaking,
    we determine the input $x$ of $f_i$ by an orthogonal projection of
    $y$ onto a $k_i$-dimensional plane.
    Now we make the following important observation:
    \emph{If we examine the orthogonal projection onto a
    $k_i$-dimensional plane,
    the projected points do not occur with the same probability in general.}
    We refer to Figure~\ref{fig-distri-square}
    and Figure~\ref{fig-distri-sphere}.
    Despite of this observation,
    we prove in Section~\ref{sec-meth-distri-prob}
    that there is an implementation of $\ACP$
    that we can analyze---presumed that we know the bounding function $V$
    that is mentioned in the definition above.
\begin{figure}[t]\centering
  \includegraphics[width=.95\columnwidth]{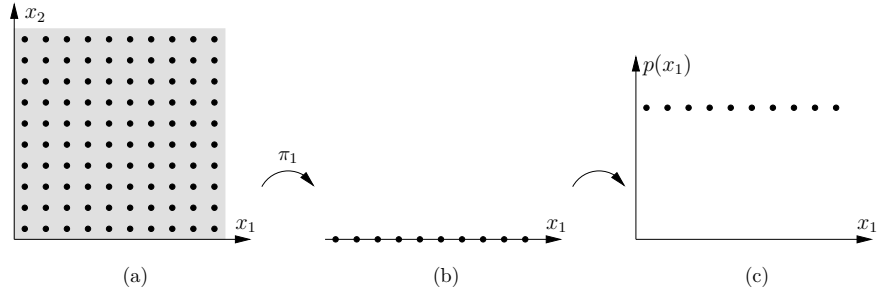}
  \caption{(a) The original perturbation area ${\cal U}_{\ACP,\delta}$
    is an axis-parallel box.
    (b) Its projection is uniformly distributed.
    (c) The points in the projection are chosen with the same probability.}
  \label{fig-distri-square}
\end{figure}
\begin{figure}[t]\centering
  \includegraphics[width=.95\columnwidth]{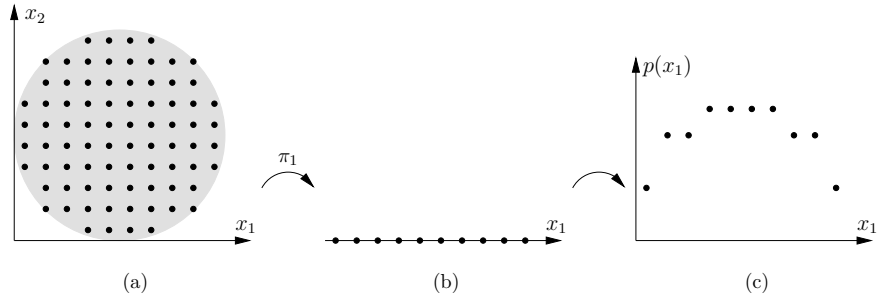}
  \caption{(a) The original perturbation area ${\cal U}_{\ACP,\delta}$
    is a sphere.
    (b) Its projection is uniformly distributed.
    (c) The points in the projection are not chosen with the same probability.}
  \label{fig-distri-sphere}
\end{figure}

  4. We remark on the predicate-suitability
    that the number $\NP\in\N$ of different predicates is usually fixed
    for a geometric algorithm.
    Anyway,
    since we will see that the analysis can also be performed
    for a function $\NP(n)$
    we keep the presentation as general as possible.
  \hfill$\bigcirc$
\end{remark}
  Next we explain
  how we determine the three bounding functions
  which are associated with the three necessary algorithm properties.
  We refer to Figure~\ref{fig-illu-ana-algo}.
  If the number $\NP$ of used predicates is fixed, we just count them;
  otherwise we perform a complexity analysis to determine the bounding function
  $\NP(n)$.
  We usually determine the bounding function $\NE(n)$
  on the number of predicate evaluations
  with a complexity analysis, too.
  The bound $\eta$ results from a geometric consideration:
  We only need to determine the real volume of the solid perturbation area.
  If the perturbation area has an ordinary shape,
  its computation is straight forward.
  We consider an example of this.
\begin{example}
  Let the input of $\ACP$ be $m$ points in the plain,
  that means, $n=2m$.
  In addition
  let the perturbation area for each point be a disc of radius $\delta$.
  Then the axis-parallel square of maximum volume inside of such a disc
  has edge length $\delta\sqrt{2}$.
  We obtain:
  \begin{eqnarray*}
    \eta &:=&
     \left\lceil
      \frac{\mu({\cal U}_{\delta})}{\mu(U_\delta)}
     \right\rceil \\
    &=&
     \left\lceil
      \frac{\mu(\text{$m$ discs of radius $\delta$})}
        {\mu(\text{$m$ cubes of edge length $\delta\sqrt{2}$})}
     \right\rceil \\
    &=&
     \left\lceil
      \frac{m\cdot \pi \delta^2}{m\cdot2\delta^2}
     \right\rceil \\
    &=&
     2
  \end{eqnarray*}
  We observe that the bound $\eta$ does not depend on $m$ (or $n$).
  \hfill$\bigcirc$
\end{example}
\noindent
  Now we state and prove the implications of algorithm properties.
\begin{lemma}\label{lem-algo-ana-is-veri}
Let algorithm $\ACP$ be analyzable. Then $\ACP$ is verifiable.
\end{lemma}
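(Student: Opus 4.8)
The plan is to show that each of the four defining conditions of verifiability (Definition~\ref{def-algo-prop}) follows from the two defining conditions of analyzability. This is essentially a bookkeeping argument that unwinds the definitions and invokes the already-established function-level implication from Lemma~\ref{lem-ana-is-veri}, together with the trivial observation that a bound of the form $f(n)$ on a quantity implies that this quantity ``is bounded'' in the qualitative sense. The structure mirrors the proof of Lemma~\ref{lem-ana-is-veri} on the function side.

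First I would handle condition~1 of verifiability (``all used predicates are verifiable''). By hypothesis $\ACP$ is analyzable, so condition~1 of analyzability gives that all used predicates are analyzable. By Lemma~\ref{lem-ana-is-veri}, every analyzable function is verifiable, and since predicates are decided by the sign of such functions, all used predicates are verifiable. Next, condition~3 of verifiability (``the total number of predicate evaluations is bounded''): analyzability gives evaluation-suitability, i.e.\ the total number of predicate evaluations is upper-bounded by the function $\NE(n)$; for any fixed input size $n$ this is a finite bound, so the number of predicate evaluations is bounded. Similarly, condition~4 of verifiability (``the number of predicate types is bounded'') follows from predicate-suitability, which supplies the bounding function $\NP(n)$.

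The only condition that requires a tiny bit of care is condition~2 of verifiability (``the perturbation area ${\cal U}_{\ACP,\delta}(\y)$ contains an open neighborhood of $\y$''). Here I would invoke perturbation-suitability: it guarantees an open axis-parallel box $U_{\ACP,\delta}(\y)$ with $U_{\ACP,\delta}(\y)\subset{\cal U}_{\ACP,\delta}(\y)$ and positive volume $V(\delta)>0$. A nonempty open box is an open neighborhood of each of its points; since it is contained in ${\cal U}_{\ACP,\delta}(\y)$ and the box is centered appropriately (by the definition of perturbation area, ${\cal U}_{\ACP,\delta}(\y)$ is scalable around $\y$), we conclude that ${\cal U}_{\ACP,\delta}(\y)$ contains an open neighborhood of $\y$. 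Having verified all four conditions, $\ACP$ is verifiable, which completes the proof.

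I do not expect any genuine obstacle here — the lemma is deliberately a ``glue'' statement recording that the quantitative algorithm conditions specialize to the qualitative ones, exactly as Lemma~\ref{lem-ana-is-veri} did for functions. The only place where one must be slightly attentive is making sure the open box furnished by perturbation-suitability is actually a \emph{neighborhood of $\y$} and not merely some open box sitting somewhere inside ${\cal U}_{\ACP,\delta}(\y)$; this is handled by the ``fixed shape, scalable with parameter $\delta$'' stipulation in the definition of perturbation-suitability, which forces the box to be centered at $\y$. So the proof is short:
\begin{proof}
  Since $\ACP$ is analyzable, all used predicates are analyzable;
  by Lemma~\ref{lem-ana-is-veri} they are verifiable.
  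This settles condition~1 of verifiability.
  Analyzability also gives evaluation-suitability,
  so the total number of predicate evaluations is bounded by $\NE(n)$,
  which is condition~3.
  Likewise, predicate-suitability bounds the number of predicate types
  by $\NP(n)$, which is condition~4.
  Finally, perturbation-suitability provides an open axis-parallel box
  $U_{\ACP,\delta}(\y)$ of positive volume $V(\delta)$
  with $U_{\ACP,\delta}(\y)\subset{\cal U}_{\ACP,\delta}(\y)$;
  being centered at $\y$, it is an open neighborhood of $\y$,
  hence ${\cal U}_{\ACP,\delta}(\y)$ contains an open neighborhood of $\y$,
  which is condition~2.
  Thus $\ACP$ is verifiable.
  \qed
\end{proof}
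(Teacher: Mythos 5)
Your proposal is correct and matches the paper, which simply records this implication with the one-line proof ``This is trivially true''; your version just spells out the definition-unwinding (condition~1 via Lemma~\ref{lem-ana-is-veri}, conditions~3 and~4 via the bounds $\NE(n)$ and $\NP(n)$, condition~2 via the open box from perturbation-suitability). Nothing is missing, and your extra care about the box being a neighborhood of $\y$ is a reasonable elaboration of what the paper leaves implicit.
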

\begin{proof}
This is trivially true.
  \qed
\end{proof}
\begin{lemma}\label{lem-algo-veri-is-app}
Let algorithm $\ACP$ be verifiable. Then $\ACP$ is applicable.
\end{lemma}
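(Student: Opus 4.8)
The plan is to lift the applicability of the predicates --- which verifiability already grants through Lemma~\ref{lem-veri-is-app} --- to applicability of $\ACP$, by a union bound over the finitely many predicate evaluations of a single run, and to reconcile the $n$-dimensional random perturbation with the low-dimensional uniform model under which predicate applicability is phrased. Fix a target probability $p\in(0,1)$ and an input size $n$. Since $\ACP$ is verifiable, the total number $N$ of predicate evaluations in a run is bounded, the number of predicate types is bounded (so their arities are bounded by some $k_{\max}$) and every type is verifiable, hence applicable; moreover the perturbation area ${\cal U}_{\ACP,\delta}(\y)$ contains an open neighborhood of $\y$ and therefore a closed axis-parallel box $U^\circ$ of positive volume.

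First I would analyze one run, focusing on a single predicate $f_i$ of arity $k_i\le k_{\max}$. The random perturbation $y\in{\cal U}_{\ACP,\delta}(\y)\RGL$ induces, after the orthogonal projection onto the $k_i$ coordinates read by $f_i$, some distribution on the projected grid; the point is that the open box $U^\circ$ keeps this projection non-degenerate and that the induced distribution is dominated, up to a factor depending only on the (fixed) shape of ${\cal U}_{\ACP,\delta}$ and on $k_{\max}$, by the uniform distribution on the corresponding projected box. Because $f_i$ is applicable, the uniform failure probability on that box tends to $0$ as $L\to\infty$, hence so does the actual failure probability of the evaluation of $f_i$, up to that bounded factor. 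Choosing $L$ large enough --- simultaneously for the finitely many predicate types --- makes each of the $N$ evaluations fail with probability at most $(1-p)/N$, so by the union bound a single run is successful with probability at least $p$ once $L\ge L_{\ACP}(p,n)$ for a suitable threshold; this already gives the definition of applicability (Definition~\ref{def-algo-prop}) with $\eta=1$.

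The remaining work --- and the reason this argument is only qualitative here --- is the geometric fact that an orthogonal projection of a general perturbation area need not be uniform, and may even concentrate on lower-dimensional sets (Issue~5 and Remark~\ref{rem-algo-prop}.3); the open-neighborhood clause of verifiability is precisely what excludes the degenerate case, but controlling the domination factor uniformly in $n$ and keeping the run count a single natural number $\eta$ independent of $n$, as demanded in Definition~\ref{def-algo-prop}, is exactly what the method of distributed probability of Section~\ref{sec-meth-distri-prob} achieves by distributing the admissible failure probability over the $\eta$ runs and over the coordinates. Thus the main obstacle is not the logic of the reduction but this quantitative bookkeeping, which I would either carry out directly via the density-domination estimate sketched above or simply cite from Section~\ref{sec-meth-distri-prob}; in either case $\ACP$ is applicable.
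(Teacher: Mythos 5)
There is a genuine gap, and it sits exactly where you put the asterisk. Your skeleton agrees with the paper's: bound the number of evaluations $\NE$ and of predicate types $\NP$ by verifiability, give each evaluation a failure budget $(1-p)/\NE$, and take the maximum of the precision functions $L_{f_i}$ that exist because every verifiable predicate is applicable (Lemma~\ref{lem-veri-is-app}). What you change is the key device that reconciles the $n$-dimensional perturbation with the per-predicate uniform model: you claim $\eta=1$ via a density-domination estimate, asserting that the projection of the uniform grid distribution on ${\cal U}_{\ACP,\delta}(\y)$ onto the $k_i$ coordinates of $f_i$ is dominated by the uniform distribution on the projected box up to a factor ``depending only on the shape and $k_{\max}$.'' That claim is unsubstantiated and in general false as stated: for non-product shapes (e.g.\ an $n$-dimensional ball, which Definition~\ref{def-algo-prop} explicitly allows) the projected density concentrates and the domination constant grows with $n$; moreover one must show the constant stays bounded as the grid is refined (i.e.\ uniformly in $L$), which you do not address. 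Your fallback of ``simply citing'' Section~\ref{sec-meth-distri-prob} does not close the gap either: the method of distributed probability assumes $\ACP$ is \emph{analyzable}, a strictly stronger hypothesis than the verifiability you are given, and it contains no domination argument at all --- it relies on the very mechanism you are trying to avoid.

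The paper's proof resolves this step differently and more simply. Verifiability guarantees that ${\cal U}_{\ACP,\delta}(\y)$ contains an open neighborhood of $\y$, hence an open axis-parallel box $U_\delta(\y)\subset{\cal U}_{\ACP,\delta}(\y)$; one sets
$\eta:=\bigl\lceil \mu({\cal U}_{\ACP,\delta}(\y))/\mu(U_\delta(\y)) \bigr\rceil$,
so that among $\eta$ uniform grid draws from ${\cal U}_{\ACP,\delta}(\y)\RGLK$ at least one is \emph{expected} to land in the box --- which is all that Definition~\ref{def-algo-prop} asks for. Conditioned on landing in the box, the input is uniform on the box grid, so every coordinate projection is uniform (the box case of Remark~\ref{rem-algo-prop}.3 and Figure~\ref{fig-distri-square}), and Definition~\ref{def-func-app} applies verbatim with budget $\varrho=(1-p)/\NE$ and precision $\max_{1\le i\le\NP} L_{f_i}(1-\varrho)$. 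If you want to keep your $\eta=1$ route, you would have to actually prove a finite domination constant (uniformly in $L$, for the possibly $n$-dependent value it takes), which is considerably more work than the inscribed-box argument; as written, the proposal leaves its central step as an acknowledged sketch.
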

\begin{proof}
\newcommand{\LPN}{{{\cal L}_{p,n}}}
  To show that $\ACP$ is applicable,
  we prove the following existence.
  \emph{There is $\eta\in\N$
  such that
  for every $p\in(0,1)$ and every $n\in\N$
  there is a precision $\LPN$
  with the property:
  For a randomly perturbed input of size $n$,
  at least one from $\eta$ runs of $\AG$
  is expected to terminate successfully
  with probability at least $p$
  for every precision $L\in\N$ with $L\ge \LPN$.}
  Then the function $L_{\ACP}(p,n) := \LPN$
  has the desired property
  which proves the claim.

  At first we show that there is an appropriate $\eta\in\N$.
  Because $\ACP$ is verifiable,
  the perturbation area
  ${\cal U}_{\ACP,\delta}(\y)$
  contains an \emph{open} set around $\y$.
  Therefore there is an open axis-parallel box around $\y$ with
  $U_\delta(\y)\subset{\cal U}_{\ACP,\delta}(\y)$.
  Then there is also a natural number
  \begin{eqnarray*}
    \eta &:=&
      \left\lceil
        \frac{\mu\left({\cal U}_{\ACP,\delta}(\y)\right)}
        {\mu\left(U_\delta(\y)\right)}
      \right\rceil.
  \end{eqnarray*}
  That means,
     if we randomly choose $\eta$
     points from a uniformly distributed grid in
     ${\cal U}_{\ACP,\delta}(\y)\RGLK$,
     we may expect that at least one point lies also inside of $U_\delta(\y)$.

     Let $p\in(0,1)$ and let $n\in\N$.
     In addition
     let $y\in U_\delta(\y)\RGLK$
     be randomly chosen.
     Since $\ACP$ is verifiable,
     there is an upper-bound $\NE\in\N$
     on the total number of predicate evaluations.
     Therefore we can distribute the total failure probability $(1-p)$
     among the $\NE$ predicate evaluations.
     Hence there is a probability
     \begin{eqnarray*}
     \varrho &:=&
     \frac{1-p}{\NE}.
     \end{eqnarray*}
     Obviously
     $\AG(y)$ is successful with probability $p$
     if every predicate evaluation fails with probability at most $\varrho$.

     Let $\NP\in\N$ be the number of different predicates in $\AG$
     which are decided by the functions $f_1,\ldots,f_\NP$.
     Because $\ACP$ is verifiable,
     all used predicates are verifiable and thus applicable.
     Then Definition~\ref{def-func-app}
     implies the existence of
     precision functions $L_{f_1},\ldots,L_{f_\NP}$.
     Therefore there is a precision
     \begin{eqnarray*}
{\cal L}_{p,n}
&:=&
\max_{1\le i \le \NP} \;
L_{f_i} \left(1 - \varrho\right)
     \end{eqnarray*}
     which has the desired property
     because of Definition~\ref{def-func-app}.
     This finishes the proof.
  \qed
\end{proof}
  As a consequence of Lemma~\ref{lem-algo-ana-is-veri}
  and Lemma~\ref{lem-algo-veri-is-app}
  the controlled perturbation implementation $\ACP$
  terminates with certainty and
  yields the correct result for the perturbed input
  if $\ACP$ is analyzable.

\subsection{Overview: Algorithm Properties}
\label{sec-algo-prop}

  An overview of the defined algorithm properties
  is shown in Figure~\ref{fig-algo-prop}.
  The meanings are:
  A controlled perturbation algorithm $\ACP$ is guaranteed to terminate
  if $\ACP$ is \emph{applicable}
  (see Remark~\ref{rem-algo-prop}.1).
  If $\ACP$ is \emph{verifiable},
  we can prove that $\ACP$ terminates---even
  if we are not able to analyze its performance.
  And finally,
  we can give a quantitative analysis of the performance of $\ACP$
  if $\ACP$ is analyzable.

  The implications are:
  An evaluation-, perturbation- and predicate suitable algorithm
  that uses solely analyzable predicates is analyzable
  (see Definition~\ref{lem-algo-ana-is-veri}).
  An analyzable algorithm is also verifiable
  (see Lemma~\ref{lem-algo-ana-is-veri}).
  And a verifiable algorithm is also applicable
  (see Lemma~\ref{lem-algo-veri-is-app}).
\begin{figure}[h]\centering
  \includegraphics[width=.95\columnwidth]{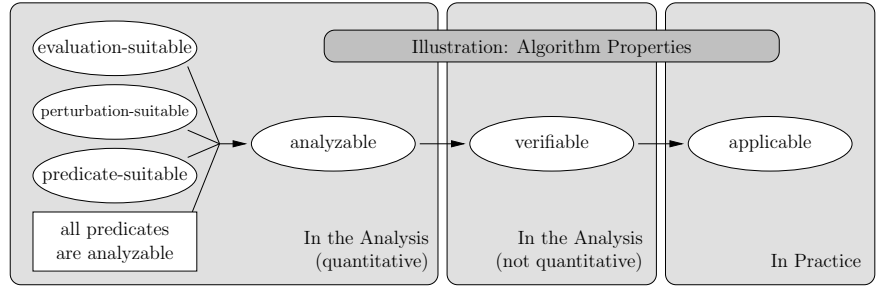}
  \caption{The illustration summarizes the implications of
    the various algorithm properties that we have defined in this section.}
  \label{fig-algo-prop}
\end{figure}

\subsection{The Method of Distributed Probability}
\label{sec-meth-distri-prob}

  Here we state the main theorem of this section.
  The proof contains the method of distributed probability
  which is used to analyze complete algorithms.
  Figure~\ref{fig-illu-ana-algo}
  shows the component and its interface.
\begin{theorem}[distributed probability]
  Let $\ACP$ be analyzable.
  Then there is a general method
  to determine a precision function
  $L_{\ACP}:(0,1)\times\N\to\N$ and
  $K_{\ACP}:(0,1)\times\N\to\N$
  and $\eta\in\N$ with the property:
  At least one from $\eta$ runs
  of the embedded guarded algorithm $\AG$
  is expected to terminate successfully for a randomly perturbed input of size $n$
  with probability at least $p\in(0,1)$
  for every arithmetic $\FLK$
  where $L\ge L_{\ACP}(p,n)$ and $K\ge K_{\ACP}(p,n)$.
\end{theorem}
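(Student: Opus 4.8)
The plan is to prove the statement by a two-level distribution of probability, following the structure of the proof of Lemma~\ref{lem-algo-veri-is-app} but now tracking the quantitative bounds supplied by the function analyses. First I would fix the constant $\eta$ by a purely geometric consideration: since $\ACP$ is perturbation-suitable there is an axis-parallel box $U_{\ACP,\delta}(\y)\subseteq{\cal U}_{\ACP,\delta}(\y)$ with $\mu(U_{\ACP,\delta}(\y))\ge V(\delta)$, and I set
\begin{eqnarray*}
  \eta &:=& \left\lceil \frac{\mu\left({\cal U}_{\ACP,\delta}(\y)\right)}{V(\delta)} \right\rceil,
\end{eqnarray*}
so that when $\eta$ fresh grid points are drawn uniformly from ${\cal U}_{\ACP,\delta}(\y)\RGLK$ we expect at least one of them to land in $U_{\ACP,\delta}(\y)$. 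It then suffices to analyse one single run of $\AG$ on a uniformly random grid point $y\in U_{\ACP,\delta}(\y)\RGLK$: if that run is successful with probability at least $p$, then at least one of the $\eta$ runs is expected to terminate successfully with probability at least $p$.

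The crucial step — and the reason for passing to the axis-parallel box — is that the geometry is now manageable. Because $U_{\ACP,\delta}(\y)$ is an axis-parallel box and the grid $\G$ is uniform, the orthogonal projection of the uniform distribution on $U_{\ACP,\delta}(\y)\RGLK$ onto the $k_i\ll n$ coordinates feeding a particular predicate evaluation is again uniform on the corresponding sub-box $U_{f_i,\delta}(\x)\RGLK$ (every lower-dimensional grid point has the same number of preimages, in contrast to Figure~\ref{fig-distri-sphere}). Thus the hypothesis under which the function analysis was carried out in Section~\ref{sec-meth-quan-rela} — a uniformly random grid point in an axis-parallel perturbation area, together with the validation machinery of Sections~\ref{sec-succ-prob} and~\ref{sec-validation} — is met for every single predicate evaluation inside the run, with the per-predicate perturbation parameters obtained by restricting the global $\delta$ to the relevant coordinates.

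Next I would distribute the failure budget across the evaluations. By evaluation-suitability a run of $\AG$ on an input of size $n$ performs at most $\NE(n)$ predicate evaluations, so setting $\varrho := (1-p)/\NE(n)$ and demanding that every predicate evaluation fail with probability at most $\varrho$ makes the whole run fail with probability at most $1-p$ by a union bound over the at most $\NE(n)$ evaluations. By predicate-suitability the run uses at most $\NP(n)$ distinct predicates, decided by analyzable functions $f_1,\ldots,f_{\NP(n)}$; Theorem~\ref{theo-quan-rela} together with its range-error extension in Section~\ref{sec-range-error-analysis} supplies, for each $f_i$, precision functions $L_{f_i}$ and $K_{f_i}$ such that a guarded evaluation of $f_i$ at a uniformly random grid point succeeds with probability at least $1-\varrho$ as soon as $L\ge L_{f_i}(1-\varrho)$ and $K\ge K_{f_i}(1-\varrho)$. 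I would therefore set
\begin{eqnarray*}
  L_{\ACP}(p,n) &:=& \max_{1\le i\le\NP(n)} L_{f_i}\!\left(1-\frac{1-p}{\NE(n)}\right), \\
  K_{\ACP}(p,n) &:=& \max_{1\le i\le\NP(n)} K_{f_i}\!\left(1-\frac{1-p}{\NE(n)}\right),
\end{eqnarray*}
so that for any $\FLK$ with $L\ge L_{\ACP}(p,n)$ and $K\ge K_{\ACP}(p,n)$ every predicate evaluation fails with probability at most $\varrho$, hence the run is successful with probability at least $p$, which by the first paragraph yields the claim.

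I expect the main obstacle to be the middle step: making rigorous that restricting the perturbation to the axis-parallel box $U_{\ACP,\delta}(\y)$ turns each per-predicate projected perturbation into a genuinely uniform distribution on its axis-parallel, grid-restricted perturbation area — exactly the situation presupposed by Sections~\ref{sec-succ-prob}, \ref{sec-validation} and~\ref{sec-meth-quan-rela} — together with the bookkeeping that the global parameter $\delta$ restricts correctly to the parameters used in the function analyses, and that the grid unit condition of Theorem~\ref{theo-validation} holds simultaneously for all $\NP(n)$ predicates once $L\ge L_{\ACP}(p,n)$. The probability accounting (the union bound over $\NE(n)$ evaluations, the $\NP(n)$-fold maximum) and the definition of $\eta$ from the volume ratio are then routine.
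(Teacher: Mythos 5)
Your proposal is correct and follows essentially the same route as the paper's proof: the same three steps of fixing $\eta$ from the ratio of the full perturbation area to the inscribed axis-parallel box, distributing the failure budget $\varrho=(1-p)/\NE(n)$ over the evaluations by a union bound, and taking $L_{\ACP}$ and $K_{\ACP}$ as the maxima of $L_{f_i}\bigl(1-\varrho\bigr)$ and $K_{f_i}\bigl(1-\varrho\bigr)$ over the $\NP(n)$ predicates. Your extra care about the uniformity of the projected grid distribution on the axis-parallel box matches what the paper only records in Remark~\ref{rem-algo-prop}.3, and your ratio $\bigl\lceil\mu({\cal U}_{\ACP,\delta}(\y))/V(\delta)\bigr\rceil$ is the intended reading of the paper's Step~1 (whose displayed fraction is written upside down there, cf.\ Lemma~\ref{lem-algo-veri-is-app}).
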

\begin{proof}
  We prove the claim in three steps:
  At first we derive $\eta\in\N$
  from the shape of the region of uncertainty.
  Then we determine a bound on the
  failure probability of each predicate evaluation.
  And finally we analyze each predicate type
  to determine the worst-case precision.
  An overview of the steps is given in Table~\ref{tab-steps-dp}.
  \begin{table}[b]
  \centerline{\fbox{
  \begin{minipage}{0.9\columnwidth}\medskip
    Step 1: determine ``in axis-parallel box'' probability (define $\eta$)\\
    Step 2: determine ``per evaluation'' probability (define $\rho$)\\
    Step 3: compose precision function
      (define $L_{\ACP}$ and $K_{\ACP}$)\medskip
  \end{minipage}}}
    \caption{Instructions for performing the method of distributed probability.}
    \label{tab-steps-dp}
  \end{table}

Step~1 (define $\eta$).
  We define $\eta$ as the ratio
  \begin{eqnarray*}
    \eta
    &=&
      \left\lceil
        \frac{V(\delta)}{\mu(U_\delta(\y))}
      \right\rceil.
  \end{eqnarray*}
  That means, 
  if we randomly choose $\eta$
  points from a uniformly distributed grid in
  ${\cal U}_{\ACP,\delta}(\y)\RGLK$,
  we may expect that at least one point lies also inside of
  $U_{\ACP,\delta}(\y)$.

Step~2 (define $\rho$).
  Let $p\in(0,1)$
  be the desired success probability of the guarded algorithm $\AG$.
  Then $(1-p)$ is the failure probability of $\AG$.
  There are at most $\NE(n)$
  predicate evaluations for an input of size $n$.
  That means, the guarded algorithm succeeds
  if and only if we evaluate all predicates successfully in a row
  for \emph{the same} perturbed input.
  We observe that the evaluations do not have to be independent.
  Therefore we define the failure probability of each predicate evaluation
  as the function
  \begin{eqnarray*}
    \varrho(p,n) &:=& \frac{1-p}{\NE(n)}
  \end{eqnarray*}
  in dependence on $p$ and $n$.

Step~3 (define $L_{\ACP}$ and $K_{\ACP}$).
  There are at most $\NP(n)$ different predicates.
  Let $f_1,\ldots,f_{\NP(n)}$
  be the functions that realize these predicates.
  Since all functions are analyzable,
  we determine their precision function $L_{f_i}$
  with the presented methods of our analysis tool box.
  Then we define the precision function for the algorithm as
  \begin{eqnarray*}
    L_{\ACP} (p,n)
      &:=& \max_{1\le i\le \NP(n)}
        \; L_{f_i} (1-\varrho(p,n)) \\
      &=& \max_{1\le i\le \NP(n)}
        \; L_{f_i}
        \left(1-\frac{1-p}{\NE(n)}\right).
  \end{eqnarray*}
  Analogically we define
  \begin{eqnarray*}
    K_{\ACP} (p,n)
      &=& \max_{1\le i\le \NP(n)}
        \; K_{f_i}
        \left(1-\frac{1-p}{\NE(n)}\right).
  \end{eqnarray*}
  Then every arithmetic $\FLK$
  with $L\ge L_{\ACP}(p,n)$ and $K\ge K_{\ACP}(p,n)$
  has the desired property by construction.
  \qed
\end{proof}

\section{General Controlled Perturbation Implementations}
\label{sec-gen-cp-imple}

  We present a general way
  to implement controlled perturbation algorithms $\ACP$
  to which we can apply our analysis tool box.
  The algorithm template is illustrated as Algorithm~\ref{algo-cp}.
  It is important to see that
  all statements which are necessary for the controlled perturbation management
  are simply wrapped around the function call of $\AG$.
  \begin{algorithm}
    \caption{: $\ACP(\AG, \y, \UU_\delta, \psi, \eta)$}
    \label{algo-cp}
    \begin{algorithmic}
      \STATE \emph{/* initialization */}
      \STATE $L \leftarrow$ precision of built-in floating-point arithmetic
      \STATE $K \leftarrow$ exponent bit length of built-in floating-point arithmetic
      \STATE $\E \leftarrow$ determine upper bound $2^\E$ on $|\y_i|+\delta$
     \medskip
     \REPEAT
      \STATE \emph{/* run guarded algorithm */}
      \FOR{$i=1$ \TO $\eta$}
        \STATE $y \leftarrow$ random point in $\UUU_\delta(\y)\RGLKE$
        \STATE $\omega \leftarrow \AG(y,\FLK)$
        \IF{$\AG$ succeeded}
	  \STATE leave the for-loop
	\ENDIF
      \ENDFOR
     \medskip
      \STATE \emph{/* adjust parameters */}
      \IF{$\AG$ failed}
	\IF {floating point overflow error occurred}
	  \STATE \emph{/* guard failed because of range error */}
          \STATE $K \leftarrow K + \psi_K$
	\ELSE
	  \STATE \emph{/* guard failed because of insufficient precision */}
          \STATE $L \leftarrow \lceil\psi_L \cdot L\rceil$
	\ENDIF
      \ENDIF
     \UNTIL{$\AG$ succeeded}
     \medskip
      \STATE \emph{/* return perturbed input $y$ and result $\omega$ */}
      \RETURN $(y,\omega)$
    \end{algorithmic}
  \end{algorithm}

  Remember that the original perturbation area is
  $\UUU_\delta(\y)\RG$.
  The implementation of a uniform perturbation
  seems to be a non-obvious task for most shapes.
  Therefore we propose axis-parallel perturbation areas
  in applications.
  (For example, we can replace spherical perturbation areas with cubes
  that are contained in them.)
  For axis-parallel areas there is the special bonus
  that the perturbation is composed of random integral numbers
  as we have explained in Remark~\ref{perturb-implem-inline}.

  \label{intext-def-psi}
  An argument of the controlled perturbation implementation
  is the tuple $\psi=(\psi_L, \psi_K)\in\R\times\N$ of constants
  which are used for the augmentation of $L$ and $K$.
  The real constant $\psi_L>1$
  is used for a multiplicative augmentation of $L$,
  and the natural number $\psi_K$
  is used for an additive augmentation of $K$.

  \label{intext-algo-ana-vari-delta}
  We remark that there is a variant of Algorithm~\ref{algo-cp}
  that also allows the increase of perturbation parameter $\delta$.
  Beginning with $\delta=\DMIN\in\R_{>0}^k$,
  we augment the perturbation parameter $\delta$
  by a real factor $\psi_\delta>1$
  each time we repeat the for-loop.
  When we leave the for-loop, we reset $\delta$ to $\DMIN$.
  We observe that this strategy implies an upper-bound on
  the perturbation parameter by $\DMAX:=\DMIN \cdot \psi_{\delta}^{\eta-1}$.
  This is the bound that we use in the analysis.
  To keep the presentation clear,
  we do not express variable perturbation parameters explicitly in the code.

  A variable precision floating-point arithmetic
  is necessary for an implementation of $\ACP$.
  When we increase the precision
  in order to evaluate complex expressions successfully,
  the evaluation of simple expressions
  start suffering from the wasteful bits.
  Therefore we suggest floating-point filters
  as they are used in interval arithmetic.
  That means,
  we use a multi-precision arithmetic that refines the precision
  on demand up to the given $L$.
  If it is necessary to exceed $L$, $\AG$ fails.
  In the analysis we use
  this threshold on the precision.

\section{Perturbation Policy}
\label{sec-pertub-policy}

  The meaning of perturbation is introduced
  in Section~\ref{sec-basic-quantities} and
  its implementation is explained in Remark~\ref{perturb-implem-inline}
  on Page~\pageref{perturb-implem-inline}.
  So far we have considered the original input
  to be the point $\y\in\R^n$
  which is the concatenation of \emph{all} coordinates of
  \emph{all} input points
  for the geometric algorithm $\ACP$.
  In contrast to that,
  we now care for the geometric interpretation of the input
  and consider it as a sequence of geometric objects
  $\GO_1, \ldots, \GO_m$.
  Then a perturbation of the input is the sequence of perturbed objects.
  In this section we define two different perturbation policies:
  The \emph{pointwise} perturbation
  in Section~\ref{sec-pert-indi}
  and the \emph{object-preserving} perturbation
  in Section~\ref{sec-pert-robust}.
  The latter has the property
  that the topology of the input object is preserved.
  \emph{This is the first presentation that
  integrates object-preserving perturbations
  in the controlled-perturbation theory.}

\subsection{Pointwise Perturbation}
\label{sec-pert-indi}

  For pointwise perturbations
  we assume that the geometric object is
  given by a sequence of points.
  A circle in the plain, for example, is given by three points.
  Another example is the polygon
  in Figure~\ref{fig-pert-point}(a)
  which is represented by the sequence of four vertices
  $abcd$.
  \begin{figure}[h]\centering
    \includegraphics[width=.95\columnwidth]{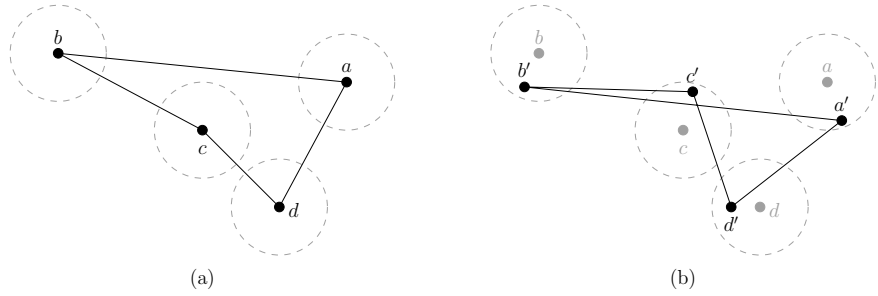}
    \caption{Example of a \emph{pointwise perturbation} in the plane:
      (a) original input and
      (b) perturbed input.}
    \label{fig-pert-point}
  \end{figure}
  The \emph{pointwise perturbation} of a geometric object
  is the sequence of individually perturbed points
  of its description,
  i.e., randomly chosen points of their neighborhoods.
  Figure~\ref{fig-pert-point}(b)
  shows a pointwise perturbed polygon $a'b'c'd'$ for our example.
  Because the perturbations are independent of each other,
  this policy is quite easy to implement.
  But we observe that pointwise perturbations do not preserve
  the structure of the input object in general:
  The original polygon $abcd$
  is simple whereas the perturbed polygon $a'b'c'd'$ in our example is not.
  And the orientation of a circle
  that is defined by three perturbed points
  may differ from the orientation of the circle
  that is defined by the original points.
  Be aware
  that our analysis is particularly designed for pointwise perturbations.
  We suggest to apply this perturbation policy to inputs
  that are disturbed by nature, e.g., scanned data.

\subsection{Object-preserving Perturbation}
\label{sec-pert-robust}

  For object-preserving perturbations
  we assume that the geometric object is
  given by an anchor point and a sequence of fixed measurements.\footnote{%
    The measurements may be given explicitly or implicitly.
    Both is fine.}
  A circle in the plain, for example,
  is given by a center (anchor point)
  and a radius (fixed measurement).
  Another example is the polygon $abcd$
  in Figure~\ref{fig-pert-object}(a)
  which is given by an anchor point, say $a$,
  and implicitly by the sequence of vectors (the measurements)
  pointing from $a$ to $b$, from $a$ to $c$, and from $a$ to $d$.
  \begin{figure}[h]\centering
    \includegraphics[width=.95\columnwidth]{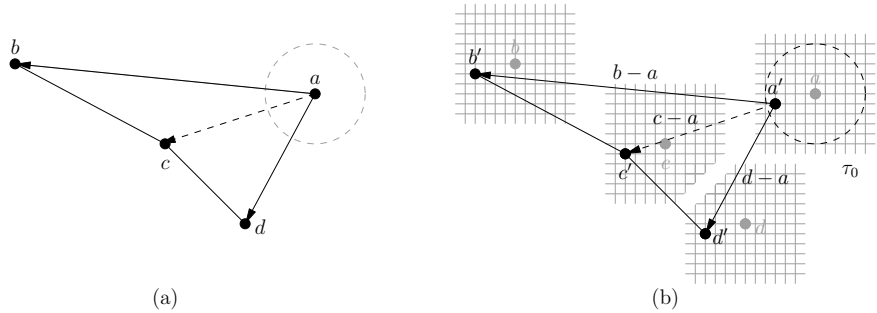}
    \caption{Example of an \emph{object-preserving perturbation}
      in the plane:
      (a) original input and
      (b) perturbed input.}
    \label{fig-pert-object}
  \end{figure}
  The \emph{object-preserving perturbation} of a geometric object
  is a pointwise perturbation of its anchor point
  while maintaining all given measurements.
  Figure~\ref{fig-pert-object}(b)
  shows polygon $a'b'c'd'$
  that results from an object-preserving perturbation.
  There we have $b':=a'+{b}-{a}$, etc.
  We observe that this perturbation is actually a translation
  of the object and hence preserves
  the structure of the input object in any respect:
  its orientation, measurements and angles.
  The object-preserving perturbation of a circle,
  for example,
  changes its location but not its radius.

  The input must provide further information
  to support object-preserving perturbations.
  For the explicit representation,
  this policy requires a \emph{labeling} of input values
  as \emph{anchor points} (perturbable)
  or \emph{measurements} (constant).
  For the implicit representation,
  the policy requires the subdivision of the input
  into single objects;
  then we make one of these points the anchor point
  and derive the measurements for the remaining points.
  To allow the object-preserving perturbation,
  the implementation must offer the labeling of values or
  the distinction of input objects.

  In this context
  it is pleasant to observe that our perturbation area $\UUU_\delta(A)\RG$
  supports object-preservation because it is composed of a regular grid.
  \emph{If the original object is represented without rounding error,
  the perturbed object is represented exactly as well.}
  Of course,
  {we can always apply object-preserving perturbations
  to finite-precision input objects.}\footnote{%
    This is true
    because we can derive a sufficient grid unit
    from the given fixed-precision input.}
  We suggest to apply this policy to inputs
  that result from computer-aided design (CAD):
  By design, the measurements are often multiples of a certain unit
  which can be used as an upper bound on the grid unit.

  How can we analyze object-preserving perturbations?
  We consider the analysis of function $f$ that realizes a predicate.
  For pointwise perturbations
  we demand in Section~\ref{sec-basic-quantities}
  that $f$ only depends on input values.
  For object-preserving perturbations
  we only allow dependencies on anchor points:
  Every other point in the description of the object
  must be replaced in the formula by an expression that
  depends on the anchor point of the affected object.
  Be aware that these expressions can be resolved error-free
  due to the fixed-point grid $\G$.
  Then the new formula,
  depends only on anchor points (variables) and measurements (constants).
  The dependency of the function on the variables is analyzed as before.
  Finally we remark that we do not recommend perturbation policies
  that are based on scaling, stretching, sheering or rotation
  since the perturbed input cannot be represented error-free in general.

\clearpage
\section{Appendix: List of Identifiers}
\label{sec-append-identifiers}

\newcommand{\Index}[3]{%
  \noindent%
  \begin{minipage}[t]{.15\columnwidth}\begin{flushleft}{#1}\end{flushleft}\end{minipage}\hspace*{\fill}%
  \begin{minipage}[t]{.71\columnwidth}{#2}\end{minipage}\hspace*{\fill}%
  \begin{minipage}[t]{.10\columnwidth}\begin{flushright}{#3}\end{flushright}\end{minipage}\vspace{1.4ex}}%
\newcommand{\Indexsec}[1]%
  {\subsection*{{#1}\hspace*{\fill}Page}}%

  Page numbers refer to definitions of the identifiers.
  References to preliminary definitions are parenthesized.\\

\Indexsec{Algorithms}

  \Index{$\Alg$}{%
    the given geometric algorithm $\Alg(\y)$.}{-}

  \Index{$\AG$}{%
    the guarded version $\AG(y,\FLK)$ of algorithm $\Alg$,
    i.e., all predicate evaluations are guarded.}{\pageref{def-guarded-algo}}

  \Index{$\ACP$}{%
    the controlled perturbation version $\ACP(\AG,\y,\delta,\psi)$
    of algorithm $\Alg$.
    The implementation of $\ACP$ makes usage of $\AG$.}{\pageref{algo-cp}}

\Indexsec{Sets and Number Systems}

  \Index{$\C$}{the set of complex numbers.}{-}

  \Index{$\FLK$}{1. the set of floating point numbers
  with radix 2
  whose precision has up to $L$ digits and
  whose exponent has up to $K$ digits.\\
  2. the floating point arithmetic that is induced this way.}{\pageref{def-fp-numbers}}

  \Index{$\GLKE$}{the set of grid points.
  They are a certain subset of the floating point numbers $\FLK$
  within the interval $[-2^\E,2^\E]$.}{\pageref{def-grid-points}}

  \Index{$\N$; $\N_0$}{the set of natural numbers;
  set of natural numbers including zero.}{-}

  \Index{$\Q$}{the set of rational numbers.}{-}

  \Index{$\R$; $\R_{>0}$; $\R_{\neq 0}$}{the set of real numbers;
  set of positive real numbers;
  set of real numbers excluding zero.}{-}

  \Index{$\Z$}{the set of integer numbers.}{-}

  \Index{$X\RFLK$}{the restriction of a set $X$ to points in $\FLK$.}{\pageref{def-fp-numbers}}

  \Index{$X\RGLKE$}{the restriction of a set $X$ to points in $\GLKE$.}{\pageref{def-grid-points}}

\Indexsec{Identifiers of the Analysis}

  \Index{$A$}{the set of valid projected arguments $\x$ for $f$.}{\pageref{def-A-inline}}

  \Index{$B_E(L)$}
    {a floating point error bound on the arithmetic expression $E$.}
    {\pageref{def-fperrorbound-inline}}

  \Index{$C_f(\cdot)$}{the critical set of $f$.}
    {(\pageref{def-critical-set}), \pageref{def-crit-set-final-inline}}

  \Index{$\GG_f$}{a guard for $f$ on the domain $X$.}{\pageref{def-guard}}

  \Index{$K$}{the bit length of the exponent (see $\FLK$).}{\pageref{def-K-inline}}

  \Index{$K_f(p)$}{a lower bound on the bit length of the exponent.}
  {\pageref{for-Kf-inline}}

  \Index{$L$}{the bit length of the precision (see $\FLK$).}{\pageref{def-L-inline}}

  \Index{$L_{\ACP}(p,n)$}{the precision function of $\ACP$.}{\pageref{def-lacp}}

  \Index{$L_f(p)$}{the precision function of $f$.}{\pageref{def-lf-final}}

  \Index{$\LG$}{a bound on the precision; caused by the grid unit condition.}{(\pageref{for-def-lgrid}), \pageref{for-lgrid-in-proof}}

  \Index{$\LS$}{a bound on the precision; caused by the region- and safety-condition.}{\pageref{for-def-lsafe}}

  \Index{$\NE(n)$}{an upper-bound on the number of predicate evaluations.}{\pageref{def-eval-suit}}

  \Index{$\NP(n)$}{an upper-bound on the number of different predicates.}{\pageref{def-pred-suit}}

  \Index{$R_{f,\gamma}(\cdot)$}{the region of uncertainty of $f$.}
    {\pageref{def-region-uncertainty}}

  \Index{$R_{f,\AUG(\gamma)}(\cdot)$}{the augmented region of uncertainty of $f$.}
    {\pageref{inline-def-aug-rou}}

  \Index{$\Sinf(L)$}{the lower fp-safety bound.}
    {\pageref{def-fpsafetybound}, (\pageref{for-Sinf-final-inline})}

  \Index{$\Ssup(K)$}{the upper fp-safety bound.}
    {\pageref{for-Ssup-final-inline}}

  \Index{$U_{f,\delta}(\cdot)$}{the perturbation area of $f$;
    its shape is an axis-parallel box.}{\pageref{def-U-inline}}

  \Index{$U_{\ACP,\delta}(\cdot)$}{the perturbation area of $\ACP$;
    its shape is an axis-parallel box.}{\pageref{def-algo-pert-suit}}

  \Index{${\cal U}_{\ACP,\delta}(\cdot)$}{the perturbation area of $\ACP$;
    it may have any shape.}{\pageref{def-algo-pert-suit}}

  \Index{$\E$}{the input value parameter
    (see Formula~(\ref{for-e-min})).}{\pageref{for-e-min}}

  \Index{$f$}{the real-valued function $f:\U_\delta(A)\to\R$
    under consideration.
    We assume that the sign of $f$ decides a geometric predicate.}
    {\pageref{def-f-inline}}

  \Index{$k$}{the arity of $f$.}
    {\pageref{def-f-inline}}

  \Index{$n$}{the size of input $\y$.}
    {\pageref{def-n-inline}}

  \Index{$p_f(L,K)$}{the probability function of $f$.}
    {(\pageref{def-prob-func-inline}), \pageref{for-pfLK-inline}}

  \Index{$\pgrid(L)$}
    {a bound on the probability; caused by the grid unit condition.}
    {\pageref{for-pgrid-inline}}

  \Index{$\pinf(L)$}
    {a bound on the probability; caused by the region- and inf-safety-condition.}
    {\pageref{for-pinf-inline}}

  \Index{$\psup(K)$}
    {a bound on the probability; caused by the sup-safety-condition.}
    {\pageref{for-supsafe-inline}}

  \Index{$\PR(f\RG)$}{the least probability that
    a guarded evaluation of $f$ is successful for inputs in $\G$
    under the arithmetic $\F$.}{\pageref{for-prob-rest-G}}

  \Index{$\frac{1}{t}$}{the augmentation factor for the region of uncertainty.}
    {\pageref{def-t-inline}}

  \Index{$\x$}{the arguments of $f$; projection of $\y$.}
    {\pageref{def-xbar-inline}}

  \Index{$x$}{the perturbed arguments of $f$; projection of $y$.}
    {\pageref{def-x-inline}}

  \Index{$\y$}{the original input to the algorithm.}
    {\pageref{def-ybar-inline}}

  \Index{$y$}{the perturbed input $y\in U_\delta(\y)$.}
    {\pageref{def-y-inline}}

  \Index{$\delta$}{the perturbation parameter
  which bounds the maximum amount of perturbation componentwise.}{\pageref{def-pert-para-inline}}

  \Index{$\gamma$}{the tuple of
  componentwise distances to the critical set.}{\pageref{def-gamma-inline}}

  \Index{$\Gamma$}{the set of valid augmented $\gamma$.}{\pageref{def-Gamma-inline}}

  \Index{$\GAB$}{like $\Gamma$; the set is an axis parallel box.}{\pageref{def-GAB-inline}}

  \Index{$\GAL$}{like $\Gamma$; the set is a line.}{\pageref{def-GAB-inline}}

  \Index{$\nu_f(\gamma)$}{an upper-bound
    on the volume of $R_{f,\gamma}$.}
    {\pageref{def-nu-inline}}

  \Index{$\tau$}{the grid unit.}{\pageref{def-tau-inline}}

  \Index{$\varphiinf(\gamma)$}
    {a lower-bound
    on the absolute value of $f$ outside of $R_{f,\gamma}$.}
    {\pageref{def-varphi-inline}, (\pageref{def-phi-inf-final-inline})}

  \Index{$\varphisup(\gamma)$}
    {an upper-bound
    on the absolute value of $f$ outside of $R_{f,\gamma}$.}
    {\pageref{def-phi-sup-final-inline}}

  \Index{$\chi_f(\gamma)$}{a lower-bound on the complement
    of $\nu_f$ within the perturbation area.}{\pageref{def-chi-region-suit}}

  \Index{$\psi$}{the tuple $\psi=(\psi_L, \psi_K)\in\R\times\N$ is used
    for the augmentation of $L$ and $K$.}{\pageref{intext-def-psi}}

\Indexsec{Miscellaneous}

  \Index{$\mu(\cdot)$}{the Lebesgue measure.}{\pageref{def-mu-inline}}

  \Index{$\pi(\cdot)$}
    {the projection of points and sets, e.g., $\PII$, $\PIL$, $\PIG$, $\PIN$.}
    {\pageref{def-projection-pi}}

  \Index{$\Lex$}{the reverse lexicographic order.}{\pageref{def-lex-inline}}

  \Index{$\Lex_\sigma$}{the reverse lexicographic order
    after the permutation of the operands.}{\pageref{def-lexafter-inline}}

\clearpage
\addcontentsline{toc}{chapter}{Bibliography}


\begin{thebibliography}{99}

\bibitem{ABS97}
D.~Avis, D.~Bremner and R.~Seidel.
\newblock {How Good Are Convex Hull Algorithms?}
\newblock In {\em Computational Geometry: Theory and Applications}, Vol.~7, pp.~265--301, 1997.

\bibitem{ABD97}
F.~Avnaim, J.-D.~Boissonnat, O.~Devillers, F.~P.~Preparata and M.~Yvinec.
\newblock {Evaluating Signs of Determinants Using Single-Precision Arithmetic.}
\newblock In {\em Algorithmica}, Vol.~17(2), pp.~111-132, 1997.

\bibitem{BEH05}
E.~Berberich, A.~Eigenwillig, M.~Hemmer, S.~Hert, L.~Kettner, K.~Mehlhorn, J.~Reichelt, S.~Schmitt, E.~Schömer and Nicola Wolpert.
\newblock{EXACUS: Efficient and Exact Algorithms for Curves and Surfaces.}
\newblock In {\em 13th Annual European Symposium on Algorithms,} pp.~155--166, 2005.

\bibitem{BKOS00}
M.~de~Berg, O.~Cheong, M.~van~Kreveld and M.~Overmars.
\newblock {\em Computational Geometry: Algorithms and Applications.}
\newblock Springer-Verlag, 3nd edition, 2008.


\bibitem{BM00}
H.~Br{\"o}nnimann and M.~Yvinec.
\newblock {Efficient Exact Evaluation of Signs of Determinants.}
\newblock In {\em Algorithmica}, Vol.~27(1), pp.~21--56, 2000.

\bibitem{Bu96}
Ch.~Burnikel.
\newblock{\em Exact computation of Voronoi diagrams and line segment intersections.}
\newblock PhD Thesis, Max-Planck-Institut für Informatik, {Universit{\"a}t des Saarlandes}, 1996.

\bibitem{BFS01}
Ch.~Burnikel, St.~Funke, and M.~Seel.
\newblock {Exact Geometric Computation Using Cascading.}
\newblock In {\em International Journal of Computational Geometry and Applications}, pp.~245--266, 2001;
\newblock preliminary version {\em Symposium on Computational Geometry}, pp.~175--183, 1998.

\bibitem{BMS94}
Ch.~Burnikel, K.~Mehlhorn and St.~Schirra.
\newblock {On Degeneracy in Geometric Computations.}
\newblock In {\em Symposium on Discrete Algorithms}, pp.~16--23, 1994.

\bibitem{C07}
M.~Caroli.
\newblock {\em Evaluation of a Generic Method for Analyzing Controlled-Per\-tur\-ba\-tion Algorithms.}
\newblock Master's Thesis, Universität des Saarlandes, 2007.

\bibitem{CGAL}
\newblock {\em {\sc Cgal} - User and Reference Manual: All Parts.}
\newblock Release~3.9, 2011. \\
\newblock {\tt http://www.cgal.org/Manual/latest/doc\_pdf/cgal\_manual.pdf}

\bibitem{CLR90}
T.~H.~Cormen and C.~E.~Leiserson and R.~L.~Rivest and C.~Stein.
\newblock{\em Introduction to Algorithms.}
\newblock The MIT Press and McGraw-Hill, 1990.

\bibitem{D04}
O.~Deiser.
\newblock {\em Einführung in die Mengenlehre.}
\newblock Springer-Verlag, 2.~Auflage, 2004.

\bibitem{DH02}
P.~Deuflhard and A.~Hohmann.
\newblock{\em Numerische Mathematik I: Eine algorithmisch orientierte Einf{\"u}hrung.}
\newblock{de Gruyter Lehrbuch,} 3.~Auflage, 2002.

\bibitem{EM90}
H.~Edelsbrunner and E.~P.~Mücke.
\newblock {Simulation of simplicity: A technique to cope with degenerate cases in geometric algorithms.}
\newblock In {\em ACM Transactions on Graphics}, Vol.~9(1), pp.~66--104, 1990.

\bibitem{EC95}
I.~Z.~Emiris and J.~F.~Canny.
\newblock {A General Approach to Removing Degeneracies.}
\newblock In {\em SIAM Journal on Computing}, Vol.~24(3), pp.~650--664, 1995.

\bibitem{ECS97}
I.~Z.~Emiris, J.~F.~Canny and R.~Seidel.
\newblock {Efficient Perturbations for Handling Geometric Degeneracies.}
\newblock In {\em Algorithmica}, Vol.~19(1), pp.~219--242, 1997.

\bibitem{Euklid}
Euklid von Alexandria.
\newblock {\em Die Elemente. Bücher I-XIII.}
\newblock {Ostwalds Klassiker der Exakten Wissenschaften, Band 235.} Verlag Harri Deutsch, 3.~Auflage, 1997.

\bibitem{FGK00}
A.~Fabri, G.-J.~Giezeman, L.~Kettner, St.~Schirra and S.~Schönherr.
\newblock{On the design of CGAL a computational geometry algorithms library}
\newblock {\em Software Practice and Experience}, Vol.~30(11), pp.~1167--1202.

\bibitem{FL05}
W.~Fischer and I.~Lieb.
\newblock {\em Funktionentheorie – komplexe Analysis in einer Veränderlichen.}
\newblock Vieweg Studium, 9.~Auflage, 2005.

\bibitem{F06}
O.~Forster.
\newblock {\em Analysis 1: Differential- und Integralrechnung einer Veränderlichen.}
\newblock Vieweg-Verlag, 8.~Auflage, 2006.

\bibitem{F11}
O.~Forster.
\newblock {\em Analysis 3: Maß- und Integrationstheorie,
  Integralsätze im $\R^n$ und Anwendungen.}
\newblock Vieweg+Teubner, 6.~Auflage, 2011.

\bibitem{F70}
G.~E.~Forsythe.
\newblock {Pitfalls in Computation, or why a Math Book isn't Enough.}
\newblock In {\em The American Mathematical Monthly},
Vol.~77(9), 931--956, 1970.
\newblock Or in {\em Technical Report No. CS 147},
\newblock Computer Science Department, School of Humanities and Sciences, Stanford University, 1970.

\bibitem{FW96}
S.~Fortune and C.~van~Wyk.
\newblock Static analysis yields efficient exact integer arithmetic for computational geometry.
\newblock In {\em ACM Transactions on Graphics}, Vol.~15, pp.~223--248, 1996;
\newblock preliminary version in {7th ACM Conference on Computational Geometry}, pp.~163--172, 1993.

\bibitem{F97}
St.~Funke.
\newblock {\em Exact Arithmetic using Cascaded Computation},
\newblock {Master's Thesis, Universit{\"a}t des Saarlandes}, 1997.

\bibitem{FKMS05}
St.~Funke, Ch.~Klein, K.~Mehlhorn, and S.~Schmitt.
\newblock Controlled perturbation for Delaunay triangulations.
\newblock In {\em Symposium on Discrete Algorithms}, pp.~1047--1056, 2005.

\bibitem{G91}
D.~Goldberg.
\newblock{What Every Computer Scientist Should Know About Floating-Point Arithmetic.}
\newblock In {\em ACM Computing Surveys}, Vol.~23(1), pp.~5--48, 1991.

\bibitem{HK05}
P.~Hachenberger and L.~Kettner.
\newblock {Boolean operations on 3D selective Nef complexes: optimized implementation and experiments.}
\newblock In {\em Symposium on Solid and Physical Modeling}, pp.~163--174, 2005.

\bibitem{HL04}
D.~Halperin and E.~Leiserowitz.
\newblock Controlled perturbation for arrangements of circles.
\newblock In {\em International Journal of Computational Geometry and Applications}, Vol.~14(4), pp.~277--310, 2004.

\bibitem{HR99}
D.~Halperin and S.~Raab.
\newblock Controlled perturbation for arrangements of polyhedral surfaces with
  application to swept volumes.
\newblock In {\em Symposium on Computational Geometry}, pp.~163--172, 1999.

\bibitem{HS98}
D.~Halperin and Ch.~R.~Shelton.
\newblock A perturbation scheme for spherical arrangements with application to
  molecular modeling.
\newblock In {\em Computational Geometry: Theory and Applications}, Vol.~10, pp.~183--192, 1998.

\bibitem{He01}
M.~Held.
\newblock{VRONI: An engineering approach to the reliable and efficient computation of Voronoi diagrams of points and line segments.}
\newblock In {\em Computational Geometry: Theory and Applications}, Vol.~18(2), pp.~95--123, 2001.

\bibitem{H90}
G.~Hotz.
\newblock {\em Einführung in die Informatik.}
\newblock {Leitf{\"a}den und Monographien der Informatik}, Teubner, 1990.

\bibitem{IEEE08}
\newblock {\em IEEE Standard 754-2008 for Floating-Point Arithmetic.} 2008.

\bibitem{Im96}
T.~Imai.
\newblock{A topology oriented algorithm for the Voronoi diagram of polygons.}
\newblock In {\em Proceeding of the 8th Canadian Conference on Computational Geometry}, Carleton University Press, Ottawa, Canada, pp.~107--112, 1996.

\bibitem{J01}
K.~Jänich.
\newblock {\em Topologie.}
\newblock Springer-Verlag, 7.~Auflage, 2001.

\bibitem{JRZ91}
M.~J{\"u}nger, G.~Reinelt, and D.~Zepf.
\newblock Computing correct Delaunay triangulations.
\newblock In {\em Computing}, Vol.~47, pp.~43--49, 1991.

\bibitem{KLN91}
M.~Karasick, D.~Lieber, and L.R.~Nackman.
\newblock Efficient Delaunay triangulation using rational arithmetic.
\newblock In {\em ACM Transactions on Graphics}, Vol.~10(1), pp.~71--91, 1991.

\bibitem{KMPSY08}
L.~Kettner, M.~Mehlhorn, S.~Pion, St.~Schirra, and C.-K.~Yap.
\newblock Classroom Examples of Robustness Problems in Geometric Computations.
\newblock In {\em Computational Geometry: Theory and Applications}, Vol.~40,
pp.~702--713, 2008.

\bibitem{KN04}
L.~Kettner and St.~Näher.
\newblock {Two Computational Geometry Libraries: LEDA and CGAL.}
\newblock In Jacob E. Goodman and Joseph O'Rourke, editors, {\em Handbook of Discrete and Computational Geometry,} second edition, pp.~1435-1463, 2004.

\bibitem{K04}
Ch.~Klein.
\newblock {\em Controlled Perturbation for Voronoi Diagrams.}
\newblock Master's Thesis, {Universit{\"a}t des Saarlandes}, 2004.

\bibitem{L93}
E.~Lamprecht.
\newblock {\em Lineare Algebra I und II.}
\newblock Birkhäuser, 1993.

\bibitem{CORE02}
C.~Li, C.~Yap, S.~Pion and Z.~Du.
\newblock {\em Core Library Tutorial.}
\newblock Courant Institute of Mathematical Sciences, New York University, 2002.\\
\newblock {\tt http://www.cs.nyu.edu/exact/core/doc/tutorial.ps.gz}

\bibitem{MN94}
K.~Mehlhorn and S.~N\"aher.
\newblock The Implementation of Geometric Algorithms.
\newblock In {\em Proceedings of the 13th {International Federation for Information Processing} World Computer Congress},
  Vol.~1, pp.~223--231, Elsevier, 1994.

\bibitem{LEDA99}
K.~Mehlhorn and S.~N\"aher.
\newblock {\em The LEDA Platform for Combinatorial and Geometric Computing.}
\newblock Cambridge University Press, 1999.\\
\newblock {\tt http://www.mpi-inf.mpg.de/$\sim$mehlhorn/LEDAbook.html}

\bibitem{MOS06}
K.~Mehlhorn, R.~Osbild and M.~Sagraloff.
\newblock{Reliable and Efficient Computational Geometry via Controlled Perturbation.}
\newblock In {\em International Colloquium on Automata, Languages and Programming},
\newblock Vol.~4051 of LNCS, pp.~299--310, 2006.


\bibitem{MOS11}
K.~Mehlhorn, R.~Osbild and M.~Sagraloff.
\newblock{A General Approach to the Analysis of Controlled Perturbation Algorithms.}
\newblock In {\em Computational Geometry: Theory and Applications}, Vol.~44(9), pp.~507--528, 2011.

\bibitem{M95}
D.~Michelucci.
\newblock {An epsilon-Arithmetic for Removing Degeneracies.}
\newblock In {\em Proceedings of the 12th Symposium on Computer Arithmetic}, pp.~230-- 1995.

\bibitem{MPFI}
\newblock {\em MPFI 1.0 - Multiple Precision Floating-Point Interval Library.}
\newblock SPACES, INRIA Lorraine and Arenaire, INRIA Rhone-Alpes, 2002.\\
\newblock {\tt http://perso.ens-lyon.fr/nathalie.revol/mpfi\_toc.html}

\bibitem{GMPFR11}
The MPFR team.
\newblock {\em GNU MPFR - The Multiple Precision Floating-Point Reliable Library.}
\newblock Edition~3.1.0, 2011.\\
\newblock {\tt http://www.mpfr.org/mpfr-current/mpfr.pdf}

\bibitem{SY11}%
M.~Sagraloff and C.-K.~Yap.
\newblock {A simple but exact and efficient algorithm for complex root isolation.}
\newblock In {\em The International Symposium on Symbolic and Algebraic Computation}, pp.~353--360, 2011.

\bibitem{Se96}
M.~Seel.
\newblock {\em An Accurate Arithmetic Implementation of Line Segment AVDs.}
\newblock Technical Report, Max-Planck-Institut für Informatik, 1996.

\bibitem{S98}
R.~Seidel.
\newblock {The Nature and Meaning of Perturbations in Geometric Computing.}
\newblock In {\em Discrete and Computational Geometry}, Vol.~19(1), pp.~1--17, 1998.

\bibitem{Sh97}
J.~R.~Shewchuk.
\newblock {Adaptive Precision Floating-Point Arithmetic and Fast Robust Geometric Predicates.}
\newblock In {Discrete and Computational Geometry}, Vol.~18(3), pp.~305--368, 1997.

\bibitem{SI92}
K.~Sugihara and M.~Iri.
\newblock{Construction of the Voronoi diagram for ``one million'' generators in single-precision arithmetic.}
\newblock In {\em Proceedings of the IEEE}, Vol.~80(9), pp.~1471--1484, 1992.

\bibitem{SII00}
K.~Sugihara, M.~Iri, H.~Inagaki and T.~Imai.
\newblock{Topology-Oriented Implementation - An Approach to Robust Geometric Algorithms.}
\newblock In {\em Algorithmica}, Vol.~27(1), pp.~5--20, 2000.

\bibitem{Y90b}
C.-K.~Yap.
\newblock{Geometric Consistency Theorem for a Symbolic Perturbation Scheme}.
\newblock In {\em Journal of Computer and System Sciences}, Vol.~40(1), pp.~2--18, 1990.

\bibitem{Y90a}
C.-K.~Yap.
\newblock{Symbolic Treatment of Geometric Degeneration.}
\newblock In {\em Journal of Symbolic Computation}, Vol.~10(3), pp.~349--370, 1990.

\bibitem{Y97}
C.-K.~Yap.
\newblock Towards exact geometric computation.
\newblock In {\em Computational Geometry: Theory and Applications}, Vol.~7(1), pp.~3--23, 1997.

\end{thebibliography}
\end{document}